\definecolor{mygreen}{rgb}{0,0.6,0}  
\definecolor{mygray}{rgb}{0.5,0.5,0.5}
\definecolor{mymauve}{rgb}{0.58,0,0.82}
\theoremstyle{plain}
\newtheorem{thm}{Theorem}
\newtheorem{lem}[thm]{Lemma}
\newtheorem{prop}[thm]{Proposition}
\newtheorem{cor}[thm]{Corollary}
\newtheorem{fact}[thm]{Fact}
\theoremstyle{definition}
\newtheorem{defn}{Definition}
\newtheorem{exmp}{Example}
\theoremstyle{remark}
\newtheorem{rem}{Remark}
\newcommand{\plam}{\Lambda_{\oplus}}
\newcommand{\elam}{\Lambda_{\oplus}^\emptyset}
\newcommand{\glam}{\Lambda_{\oplus}^\Gamma}
\newcommand{\clam}{\mathsf{C}\plam}
\newcommand{\alam}{\mathsf{A}\plam}
\newcommand{\genlam}{\mathsf{G}\plam}
\newcommand{\inter}[1]{\llbracket  {#1} \rrbracket}
\newcommand{\R}[2]{  {#1} \ \mathcal{R} \ {#2}}
\newcommand{\wt}[1]{\widetilde{#1}}
\newcommand{\h}{\mathrm{HNF}}
\newcommand{\he}{\mathrm{HNF}^\emptyset}
\newcommand{\hg}{\mathrm{HNF}^\Gamma}
\newcommand{\neu}{\mathrm{NEUT}}
\newcommand{\dhe}{\mathrm{\widetilde{HNF}}}
\newcommand{\ww}[1]{\widetilde{#1}}
\newcommand{\M}[1]{\mathcal{P}_{\oplus}({#1})}
\newcommand{\dist}{\mathfrak{D}}
\newcommand{\cont}[1]{\mathcal{E}[ {#1}]}
\newcommand{\contrat}[1]{\lambda \vec{x}. {#1}\vec{L}}
\newcommand{\mes}[1]{\Vert {#1}\Vert}
\begin{document}

\title[]{The Benefit of Being Non-Lazy\\ in Probabilistic $\lambda$-calculus}
\subtitle{Applicative Bisimulation is Fully Abstract for Non-Lazy Probabilistic Call-by-Name}                     


\author{Gianluca Curzi}
\affiliation{
  \institution{University of Turin}            
  \city{Turin}
  \country{Italy}                    
}
\email{gcurzi@unito.it}          

\author{Michele Pagani}
\affiliation{
  \institution{IRIF UMR 8243, Universit\'e de Paris, CNRS}           
  \city{Paris}
  \country{France}                   
}
\email{pagani@irif.fr}         

\begin{abstract}
We consider the probabilistic applicative bisimilarity (PAB) --- a coinductive relation comparing the applicative behaviour of probabilistic untyped $\lambda$-terms according to a specific operational semantics. This notion has been studied by Dal Lago et al. with respect to the two standard parameter passing policies, call-by-value (cbv) and call-by-name (cbn), using a lazy reduction strategy not reducing within the body of a function. In particular, PAB has been proven to be fully abstract with respect to the contextual equivalence in cbv \cite{crubille2014probabilistic} but not in lazy cbn \cite{dal2014coinductive}. 

We overcome this issue of cbn by relaxing the laziness constraint: we prove that PAB is fully abstract with respect to the standard head reduction contextual equivalence. Our proof is based on Leventis' Separation Theorem \cite{leventis2018probabilistic}, using probabilistic Nakajima trees as a tree-like representation of the contextual equivalence classes.

Finally, we prove also that the inequality full abstraction fails, showing that the probabilistic applicative similarity is strictly contained in the contextual preorder. 
\end{abstract}

\begin{CCSXML}
<ccs2012>
<concept>
<concept_id>10011007.10011006.10011039.10011311</concept_id>
<concept_desc>Software and its engineering~Semantics</concept_desc>
<concept_significance>500</concept_significance>
</concept>
<concept>
<concept_id>10003752.10010124.10010131</concept_id>
<concept_desc>Theory of computation~Program semantics</concept_desc>
<concept_significance>500</concept_significance>
</concept>
</ccs2012>
\end{CCSXML}

\ccsdesc[500]{Software and its engineering~Semantics}
\ccsdesc[500]{Theory of computation~Program semantics}

\keywords{Probabilistic lambda calculus,
Bisimilarity,
Full abstraction,
Observational equivalence,
Separation}  

\maketitle

\section{Introduction}

The probabilistic $\lambda$-calculus $\plam$ extends the pure untyped $\lambda$-calculus with a sum $M\oplus N$, evaluating to $M$ or $N$ with equal probability $0.5$. 
The operational semantics gives then a function mapping a term $M$ to a probability distribution $\inter M$ of values. Exactly as in standard $\lambda$-calculus, different design choices may affect the meaning $\inter M$ of a term. 


First, one has to decide how to evaluate a \emph{$\beta$-redex}, i.e.~the application of a function $\lambda x.M$ to an argument $N$. There are two main evaluation mechanisms: the \emph{call-by-value policy} (cbv) consists first in evaluating $N$ to some value $V$ and then replacing the parameter $x$ in $M$ with $V$, while the \emph{call-by-name policy} (cbn) replaces $x$ with $N$ as it is, before any evaluation. It is well-known that the two policies give rise to different results, especially in a probabilistic setting. Consider for example the term $(\lambda vz. vv)(\mathbf{T} \oplus \mathbf{F})$, where $\mathbf{T}= \lambda xy.x$ and $\mathbf{F}= \lambda xy.y$. In cbv, we first evaluate  $\mathbf{T} \oplus \mathbf{F}$, yielding either  $\mathbf{T}$ or $\mathbf{F}$ with equal probability, and then we pass the result to the function $\lambda vz. vv$, producing either $\lambda z.\mathbf{T}\mathbf{T}$ or $\lambda z.\mathbf{F}\mathbf{F}$, both with probability $0.5$.  By contrast, in cbn we pass the whole term $\mathbf{T} \oplus \mathbf{F}$ to the function before evaluating it,  obtaining $\lambda z.(\mathbf{T} \oplus \mathbf{F})(\mathbf{T} \oplus \mathbf{F})$ with probability $1$.   

Second, one has to define which redexes to evaluate in a term, if any. Also in this case, there are various choices in $\lambda$-calculus: the \emph{lazy strategy}, forbidding any reduction in the body of a function, so that $\lambda x.M$ is a value whatever $M$ is, or the \emph{head reduction}, consisting in reducing the redex in head position, which is at the left of any application. Again, the choice gives rise to different meanings, the meaning of a term w.r.t. the head reduction is a distribution of head normal forms. 

By the way, let us remark here that some variants of the standard head reduction have been considered in the literature, as for example the \emph{head spine reduction} that, given a $\beta$-redex $(\lambda x.M)N$, first evaluates the body of $M$ and then evaluates the outermost redex according to cbn. A side result of our paper is that the head and head spine strategies are actually equivalent, even in a probabilistic setting (Theorem~\ref{thm: equivalence head e head spine nel paper}). 

Comparing terms by their operational semantics is too narrow,  as higher-order normal forms differ often by syntactical details that are inessential with respect to their computational behaviour. Contextual equivalence is usually considered: two terms $M,N$ are contextually equivalent  ($M=_{\mathrm{cxt}} N$ in symbols) whenever they ``behave'' the same in any  possible ``programming context''. This definition depends on the notion of \emph{context} and on that of \emph{observable behavior}. In $\plam$, a context $\mathcal{C}$ is a term with a special variable $[\cdot]$, the \emph{hole}, and what we observe is the total mass of the distribution $\inter{\mathcal{C}[M]}$, i.e.~the total probability of getting a result from the evaluation of the term $\mathcal{C}[M]$ obtained by replacing the hole with $M$. The definition of $=_{\mathrm{cxt}}$ depends therefore on the chosen operational semantics but it is more canonical than the latter. 

Proving that two terms are contextually equivalent is rather difficult since we have to consider \emph{all} contexts, hence the quest for more tractable equivalences comparable with $=_{\mathrm{cxt}}$. We say in particular that an equivalence $\equiv$ over $\lambda$-terms  is \emph{sound} with respect to $=_{\mathrm{cxt}}$  whenever the former implies the latter (i.e.~$\equiv \, \subseteq\,  =_{\mathrm{cxt}}$), it is \emph{complete} if the converse holds (i.e.~$=_{\mathrm{cxt}}\, \subseteq\, \equiv$) and it is \emph{fully abstract} if it is both sound and complete, i.e.~the two relations coincide. 

In probabilistic $\lambda$-calculus, the first results in this line of research have been achieved in the setting of the denotational semantics of the $\plam$ head reduction. In particular, Ehrhard et al. prove that the equivalence $\equiv_{\mathcal D^\infty}$ induced by the reflexive object $\mathcal D^\infty$ of the cartesian closed category of probabilistic coherence spaces \cite{EhrPagTas11} (as well as of the weighted relations \cite{LairdMMP13}) is sound. More recently, Leventis proves a fundamental separation theorem, giving as a consequence that the probabilistic Nakajima tree equality is complete \cite{leventis2018probabilistic}. From the  latter result, Clairambault and Paquet derive a fully abstract game model of $\plam$ and as a corollary also the full abstraction of $\mathcal D^\infty$ \cite{ClairambaultP18}. The latter result has been also achieved independently by Leventis and Pagani \cite{leventis2019strong}. 

All the above results deal with the head reduction, i.e.~ a non-lazy cbn operational semantics. For lazy strategies, a different approach is available, based on the notion of 
 \emph{applicative bisimulation}, which is the true object of this paper. The idea dates back to \cite{abramsky1990lazy} and consists in looking at the operational semantics as a transition system having $\lambda$-terms as states and transitions given by the evaluation of the application between $\lambda$-terms. The benefit of this setting is to transport into $\lambda$-calculus the whole theory of bisimilarity and its associated coinductive reasoning, which is a fundamental tool for comparing processes in concurrency theory. Basically, two terms $M$ and $N$ are applicative bisimilar (in symbols $M\sim N$) whenever their applications $MP$ and $NP$ reduce to applicative bisimilar values for any argument $P$. 

This approach has been lifted to the probabilistic $\lambda$-calculus in a series of works by Dal Lago et al. \cite{dal2014coinductive,crubille2014probabilistic,crubille2015applicative}, introducing the notion of \emph{probabilistic applicative bisimilarity} (PAB) for lazy semantics. In particular, PAB  is proven to be sound with the contextual equivalence in both cbv and cbn, but only cbv PAB is fully abstract. In case of lazy cbn, we have terms like: 
\begin{align}\label{ex:no_cbn_fa}
M&\triangleq  \lambda xy.(x\oplus y)&
N&\triangleq (\lambda xy.x)\oplus (\lambda xy.y)
\end{align}
such that $M =_{\mathrm{cxt}} N$ but $M\not\sim N$.
 In fact, lazy PAB is able to discriminate between a term where a choice can be performed \emph{before} any interaction, like $N$, and a term that needs to interact in order to trigger a choice, like $M$.
Notice that this difference is caught also by cbv contextual semantics, as these two terms are distinguished by the context $\mathcal{C}=(\lambda v. (v\mathbf I \mathbf\Omega)(v\mathbf I \mathbf\Omega))[\cdot]$ in cbv (the total mass of $\inter{\mathcal{C}[M]}_{\mathrm{cbv}}$ is $0.25$, while that of $\inter{\mathcal{C}[N]}_{\mathrm{cbv}}$ is $0.5$), but not in cbn (namely, $\inter{\mathcal{C}[M]}_{\mathrm{cbn}}=\inter{\mathcal{C}[N]}_{\mathrm{cbn}}$ has mass $0.25$).

In \cite{dal2014coinductive} the authors analyse this example remarking that the cbn policy misses the ``capability to copy a term \emph{after} having evaluated it''. This is indeed a fundamental primitive in probabilistic programming: when implementing a probabilistic algorithm we need often to toss a coin and then to pass \emph{the result} of this tossing to several subroutines. It is so common to extend a probabilistic language with a \texttt{let-in} constructor, often called \emph{sampling}, evaluating a choice \emph{before} passing it to a function even in a cbn semantics. As expected, it is shown~\cite{kasterovic2019discriminating} that such an extension recovers cbn PAB full abstraction, as terms like \eqref{ex:no_cbn_fa} become contextually different. 

Let us remark that we are here in front of two disconcerting facts. First,  it has been proven that in simply typed languages the presence of the \texttt{let-in} constructor does not affect the discriminating power of the contextual equivalence, for example in probabilistic PCF the lazy cbn contextual equivalence coincides with the equality in the model of probabilistic coherence spaces \cite{EhrPagTas14,EhrhardPT18}, with or without a sampling primitive. Why this striking difference with an untyped framework? Second, we have already mentioned several denotational models of $\plam$ which are fully abstract with respect to a pure cbn contextual equivalence, so without this ``capability to copy a term \emph{after} having evaluated it''. Is it really so necessary for getting a fully abstract PAB?

The first question can be easily answered by focussing on the laziness constraint of the operational semantics. Every $\lambda$-abstraction is a value for a lazy semantics. This does not affect the set of observables in a simply typed setting (as PCF), because this is defined on ground types (booleans, numerals, etc). By contrast, every term is a function in an untyped setting, so the laziness radically changes what we can observe in the behaviour of a term. The goal of this paper is to show that also the second question deals with laziness: we prove that PAB is fully abstract for the head reduction (Theorem~\ref{thm: completeness}). This is unexpected: non-lazy semantics seems not needing the sampling primitives in order to have fully abstract PAB, even with a cbn policy and an untyped setting.

On a more technical side, we stress that our proofs of soundness and completeness follow a different reasoning than the one used in probabilistic lazy semantics \cite{crubille2014probabilistic,crubille2015applicative,kasterovic2019discriminating}. First, the soundness ($\sim\,  \subseteq \, =_{\mathrm{cxt}}$) does not need an Howe lifting \cite{howe1996proving}, as we prove a Context Lemma (Lemma~\ref{lem: context lemma}) for $=_{\mathrm{cxt}}$ and an applicative property of $\sim$ (Lemma~\ref{lem: context lemma 4}), the latter using the notion of  probabilistic assignments as in \cite{dal2014coinductive}. Second, and more fundamental, the proof of completeness ($=_{\mathrm{cxt}}\, \subseteq \, \sim$) is not achieved by transforming PAB into a testing equivalence using a theorem by van Breugel et al. \cite{VANBREUGEL2005}. Rather, we use Leventis' Separation property \cite{leventis2018probabilistic} to prove that  the contextual equivalence  is a probabilistic applicative bisimulation and so contained in PAB by definition (Theorem~\ref{thm: completeness}). 



What about inequalities? All equivalences so far introduced have an asymmetric version: the contextual preorder and the probabilistic applicative similarity (PAS). We prove also that PAS is sound but not  complete with respect to the contextual inequality. A counterexample to the full abstraction in the asymmetric case is given in Section~\ref{sec4} and it is further discussed in the Conclusion.

Many proofs are postponed in the Appendix.

\paragraph{Notation.}
We write $\mathbb{N}$ for the set of natural numbers, $\mathbb{R} $ for the set of real numbers and $[0,1]$ for the unit interval of $\mathbb{R} $.

 A \textit{subprobability distribution over a countable set $X$} is a function $f: X\to [0,1]$ such that $\sum_{x \in X}f(x)\leq 1$.  Distributions are ranged over by $\mathscr{D}, \mathscr{E}, \mathscr{F}, \ldots$ and $\mathcal{D}(X)$ denotes the set of all subprobability distributions over $X$.  Given a distribution $\mathscr{D}\in \mathcal{D}(X)$, its \textit{support} $\mathrm{supp}(\mathscr{D})$ is the subset of all elements in $X$ such that $\mathscr{D}(x)>0$, its \textit{mass} $\sum \mathscr{D}$ is simply $\sum_{x \in X} \mathscr{D}(x)$.     Given $x_1, \ldots, x_n\in X$, the expression $p_1 x_1+ \ldots + p_nx_n $  is used to denote the distribution $\mathscr{D}\in \mathcal{D}(X)$ with finite support $\lbrace x_1, \ldots, x_n\rbrace$ such that $\mathscr{D}(x_i)=p_i$, for every $i \leq n$. Notice that, in this case,  $\sum \mathscr{D} =\sum^n_{i=1}p_i$. The symbol $\perp$ denotes the empty distribution and $x$ can denote both an element in $X$ and the distribution having all its mass on $x$.
Given a (possibly infinite) index set $I$, a family $\lbrace r_i \rbrace_{i \in I}$ of positive real numbers such that $\sum_{i\in I}r_i\leq 1$ and a family $\lbrace \mathscr{D}_i\rbrace_{i \in I}$ of distributions,  the distribution $\sum_{i\in I} r_i \cdot \mathscr{D}_i$  is defined,  for all $x \in X$, by $(\sum_{i\in I} r_i \cdot \mathscr{D}_i)(x)= \sum_{i\in I} r_i \cdot \mathscr{D}_i(x)$.

 A relation  $\mathcal{R}$ over a set $X$ is a subset of $X \times X$. Given a relation $\mathcal{R}$ over a set $X$ and $Y \subseteq X$, $\mathcal{R}(Y)$ denotes the image of  $Y$ under $\mathcal{R}$, i.e.~the set $ \lbrace x \ \vert \ \exists y \in Y \ (y, x) \in \mathcal{R} \rbrace$,  $\mathcal{R}^{op}$ represents the converse of $\mathcal{R}$, i.e.~$ \lbrace (x,y) \ \vert \ (y, x)\in \mathcal{R} \rbrace$, and $\mathcal{R}^*$ the reflexive and transitive closure of $\mathcal{R}$.
Moreover, if $\mathcal{R}$ is an equivalence relation, then $X/\mathcal{R}$ stands for the set of all equivalence classes of $X$ modulo $\mathcal{R}$.

\section{Preliminaries} \label{sec2}
This section introduces the fundamental notions of the paper. We first present   the syntax and the operational semantics of the probabilistic $\lambda$-calculus $\Lambda_\oplus$, on top of which we  shall consider  the   contextual equivalence and the   contextual preorder relations.  Then, we recall Larsen and Skou's  probabilistic (bi)similarity on  labelled Markov chains~\cite{larsen1991bisimulation} and,  in the spirit of Abramsky's work on applicative (bi)similarity~\cite{abramsky1990lazy} and following~\cite{dal2014coinductive,crubille2014probabilistic,crubille2015applicative,kasterovic2019discriminating}, we apply it to the operational semantics of $\Lambda_\oplus$, getting the probabilistic applicative (bi)similarity.

\subsection{The Probabilistic  $\lambda$-calculus $\plam$}\label{chap 4 sec 1 subsec 1}
The set $\Lambda_\oplus$ of  \emph{probabilistic $\lambda$-terms} over a given set $\mathcal{V}$  of variables is generated by the following grammar:
\begin{equation}
M,N := x  \ \vert \ \lambda x. M \ \vert \ (MN) \ \vert \ M\oplus N
\end{equation}
where $x \in\mathcal{V}$.  We consider the usual conventions as in~\cite{barendregt1984lambda}, so for example application is left-associative and has higher precedence than $\lambda$-abstraction. Parenthesis can be omitted when clear from the context.  A term is in (or is a) \emph{head normal form}, or \emph{hnf} for short, if it is of the form $\lambda x_1 \ldots x_n. yN_1 \ldots N_m$, for some $n,m\in \mathbb{N}$.   If $n=0$ then the term  is also called  \emph{neutral}.  Head normal forms are ranged over by metavariables like $H$.  The set of all hnfs will be denoted by $\h$, the set  of  all neutral terms will be denoted by $\neu$. 

 Terms are considered modulo renaming of bound variables.  The set $FV(M)$ of the free variables of a term $M$  and the capture-free substitution $M[N/x]$ of $N$  for the free occurrences of $x$ in $M$ are defined in the standard way. Finite subsets of $\mathcal{V}$  are ranged over by $\Gamma$. Given  $\Gamma$,  the set of terms (resp.~head normal forms) whose free variables are within $\Gamma$ is denoted $\glam$ (resp.~$\hg$).  
\begin{exmp}\label{exmp: examples of probabilistic terms} 
Useful terms are the identity $\mathbf I\triangleq \lambda x.x$, the boolean values $\mathbf T \triangleq \lambda xy.x $ and $\mathbf F\triangleq \lambda xy.y$, the duplicator $\mathbf \Delta\triangleq \lambda x.xx$, the Turing fixed-point combinator $\mathbf \Theta\triangleq (\lambda x. \lambda y. (y(xxy))) (\lambda x. \lambda y. (y(xxy)))$ and the ever looping  term $\mathbf \Omega\triangleq \mathbf \Delta\mathbf \Delta$. An example of probabilistic $\lambda$-term that does not belong to the standard $\lambda$-calculus is   $\mathsf{hid} \triangleq  \mathbf  I \oplus\mathbf  \Omega$.
\end{exmp}

Let $\dist(\h)$ be the set of all subprobability distributions over $\h$, called \emph{head distributions}. Let $\mathscr{D}\in\dist(\h)$, we define  $\lambda x. \mathscr{D}$ as  $(\lambda x. \mathscr{D})(H) \triangleq \mathscr{D}(H')$, if $H= \lambda x. H'$, for some $H' \in \h$, otherwise $(\lambda x. \mathscr{D})(H) \triangleq 0$.
If $X \subseteq \h$, we let $\mathscr{D}(X)\triangleq\sum_{H \in X}\mathscr{D}(H)$. We may also write $\mathscr{D}(X)$ for a generic subset $X\subseteq \plam$ of terms, meaning in fact $\mathscr{D}(X\cap\h)$.


Subprobability distributions allow us to model divergence and to look at some distributions as \enquote{approximations} of others. To  formally define this, we lift the canonical order on $\mathbb{R}$ pointwise: we set $\mathscr{D} \leq_{\dist} \mathscr{E}$ if and only if $\forall H \in \h$, $\mathscr{D}(H)\leq \mathscr{E}(H)$. Notice that $\leq_\dist$ is a directed-complete partial order over $\dist(\h)$, $\bot$ being the least element.

\begin{figure*}[ht]
\vspace{-.4cm}
\centering
\begin{framed}
\begin{mathpar} 
\inferrule*[Right=$s1$]{\\ }{M \Downarrow \bot}\and
\inferrule*[Right= $s2$]{\\ }{x \Downarrow x} \and
\inferrule*[Right=$s3$]{M \Downarrow \mathscr{D}}{\lambda x. M \Downarrow \lambda x. \mathscr{D}}\\  
\mprset{vskip=0.5ex}
\inferrule*[Right=$s4$]{M \Downarrow \mathscr{D}\\ \lbrace H[N/x]\Downarrow \mathscr{E}_{H, N} \rbrace_{\lambda x. H\, \in\,  \mathrm{ supp}(\mathscr{D})}}{MN \Downarrow \sum_{\lambda x. H\,  \in \, \mathrm{ supp}(\mathscr{D})} \mathscr{D}(\lambda x. H)\cdot \mathscr{E}_{H, N} + \sum_{  H \, \in \,\mathrm{ supp}(\mathscr{D})\, \cap \, \neu} \mathscr{D}(H)\cdot H N}\and 
\mprset{vskip=0.5ex}
\inferrule*[Right=$s5$]{M \Downarrow \mathscr{D} \\ N \Downarrow \mathscr{E}}{M \oplus N \Downarrow \frac{1}{2} \cdot \mathscr{D}+ \frac{1}{2} \cdot \mathscr{E}}
\end{mathpar}
\vspace{-.6cm}
\caption{Big-step approximation.}
\label{fig: big-step approximation}
\end{framed}
\end{figure*}
We now endow $\Lambda_\oplus$ with a big-step probabilistic operational semantics  in two stages, following  Dal Lago and Zorzi~\cite{dal2012probabilistic}. First, the rules of Figure~\ref{fig: big-step approximation} define a  big-step approximation relation $M \Downarrow \mathscr{D}$ between a  term $M$ and a head distribution $\mathscr{D}$. This relation is not a function: many different head distributions can be put in correspondence with the same term $M$, because of the rule $s1$ that allows one to \enquote{give up} while looking for a distribution of a term. The big-step semantics is then the supremum of all such finite approximations:
\begin{align}\label{eq: big-step semantics}
\inter M&\triangleq \sup\{\mathscr{D}\,\vert\, M \Downarrow \mathscr{D}\}
\end{align}
Observe that this supremum is guaranteed to exist since $\lbrace \mathscr{D} \in \dist(\h) \ \vert \ M \Downarrow \mathscr{D} \rbrace$ is a directed set, as can be proved by induction on $M$.

\begin{exmp}\label{exmp: lambda xx T+F} 
Consider the term $M\triangleq\mathbf  \Delta (\mathbf T \oplus\mathbf  F)$.  One can easily check that the rules in Figure~\ref{fig: big-step approximation}  allow us to derive $M \Downarrow \mathscr{D}$ for any $\mathscr{D}$ in the following set  $\Big\{  \bot,\ \frac{1}{4}\cdot  \lambda y. \mathbf T,\ \frac{1}{4}\cdot \lambda y. \mathbf F,\  \frac{1}{2}\cdot\mathbf  I,\ \frac{1}{4}\cdot \lambda y. \mathbf T+ \frac{1}{4}\cdot \lambda y.\mathbf  F, \ \frac{1}{4}\cdot \lambda y.\mathbf  T+\frac{1}{2}\cdot\mathbf  I,\  \frac{1}{4}\cdot \lambda y.\mathbf  F +\frac{1}{2}\cdot \mathbf I,\ \frac{1}{4}\cdot \lambda y.\mathbf  T + \frac{1}{4}\cdot \lambda y.\mathbf  F+ \frac{1}{2}\cdot\mathbf  I \Big\} $. The latter head distribution is the supremum of this set and so it defines the semantics of $M$.
\end{exmp}
Example~\ref{exmp: lambda xx T+F} is about normalizing terms, which means here terms $M$ with semantics of total mass $\sum \llbracket M \rrbracket =1$ and such that there exists a unique finite derivation giving $M \Downarrow \llbracket M \rrbracket$. Standard non-converging terms gives partiality:
\begin{exmp}\label{exmp: partial} 
By inspection on the rule s4 in Figure~\ref{fig: big-step approximation}, one can check that $\mathbf \Omega \Downarrow \mathscr{D}$ only if $\mathscr{D}=\bot$, so $\llbracket\mathbf  \Omega \rrbracket=\bot$. As a consequence we also have, e.g.~$\llbracket\mathbf  \Omega \oplus\mathbf  I \rrbracket = \frac{1}{2}\cdot \mathbf I$. 
\end{exmp}
The probabilistic $\lambda$-calculus allows us also for \emph{almost sure terminating} terms, namely  terms $M$ such that $\sum \llbracket M \rrbracket=1$ but \textit{without} finite derivations  of  $M \Downarrow \llbracket M \rrbracket$:
\begin{exmp}\label{exmp: almost sure termination} Consider the derivation of $ MM\Downarrow \sum ^n_{i=1}\frac{1}{2^i}\cdot y$  depicted in Figure~\ref{fig: example quasi termination}, where $M\triangleq \lambda x. (y \oplus xx)$. Any such  finite approximation of $\inter{MM}$ gives a head distribution of the form $\sum ^n_{i=1}\frac{1}{2^i}\cdot y$, for some $n \geq 1$, but only the limit sum $\sup^n_{i=1} \sum \frac{1}{2^i}\cdot y$  is equal to $y$, thus yielding $\llbracket MM \rrbracket=y$.
\end{exmp}
\begin{figure*}[t]
\vspace{-.4cm}
\centering
\begin{framed}
\def\defaultHypSeparation{\hskip .2cm}
\def\ScoreOverhang{3pt}
\AxiomC{}
\RightLabel{$s2$}
\UnaryInfC{$y \Downarrow y$}
\AxiomC{}
\RightLabel{$s2$}
\UnaryInfC{$x \Downarrow x $}
\RightLabel{$s4$}
\UnaryInfC{$xx \Downarrow  xx$}
\RightLabel{$s5$}
\BinaryInfC{$y \oplus xx \Downarrow  \frac{1}{2} \cdot y + \frac{1}{2}\cdot  xx$}
\RightLabel{$s3$}
\UnaryInfC{$M  \Downarrow   \frac{1}{2}\cdot  \lambda x. y + \frac{1}{2} \cdot \mathbf \Delta$}
\AxiomC{}
\RightLabel{$s2$}
\UnaryInfC{$y \Downarrow y$}
\AxiomC{\vdots}
\noLine
\UnaryInfC{$M  \Downarrow   \frac{1}{2}\cdot  \lambda x. y + \frac{1}{2} \cdot \mathbf \Delta$}
\AxiomC{}
\RightLabel{$s2$}
\UnaryInfC{$y \Downarrow y$}
\AxiomC{\vdots}
\noLine
\UnaryInfC{$M  \Downarrow   \frac{1}{2}\cdot  \lambda x. y + \frac{1}{2} \cdot \mathbf \Delta$}
\AxiomC{}
\RightLabel{$s2$}
\UnaryInfC{$y \Downarrow y$}
\AxiomC{}
\RightLabel{$s1$}
\UnaryInfC{$MM \Downarrow \bot$}
\RightLabel{$s4$}
\TrinaryInfC{$MM \Downarrow \frac{1}{2}\cdot y$}
\noLine
\UnaryInfC{$\vdots$}
\RightLabel{$s4$}
\TrinaryInfC{$MM\Downarrow \sum^{n-1}_{i=1} \frac{1}{2^i}\cdot y$}
\RightLabel{$s4$}
\TrinaryInfC{$MM\Downarrow \sum^n_{i=1}\frac{1}{2^i} \cdot y $}
\DisplayProof
\vspace{-.2cm}
\caption{A derivation in the big-step semantics of $MM\Downarrow\sum^n_{i=1}\frac{1}{2^i} \cdot y$, where  $M\triangleq \lambda x. (y \oplus xx)$ and $\mathbf \Delta= \lambda x.xx$.}
\label{fig: example quasi termination}
\end{framed}
\end{figure*}
The operational semantics can be defined inductively as follows:
\begin{prop} \label{prop: the semantics is invariant under reduction} For every $M, N \in \plam$ and $H \in \h$:
\begin{enumerate}[(1)]
\item  \label{enum: invariance beta}$\llbracket (\lambda x. H)N \rrbracket = \llbracket H[N/x] \rrbracket$.
\item \label{enum: invariance abs} $\inter{\lambda x. M}= \lambda x. \inter{M}$.
\item  \label{lem: invariance beta general case}  $\inter{MN}$ is equal to the following distribution:
 \begin{equation*}
\begin{split}
& \sum_{\lambda x. H\,  \in\, \mathrm{ supp}(\inter{M})} \inter{M}(\lambda x.H)\cdot \inter{H[N/x]} \\
& + \sum_{ H \, \in\, \mathrm{ supp}(\inter{M})\,\cap\, \neu} \inter{M}(H)\cdot H N .
\end{split}
\end{equation*}
\item \label{enum: invariance sum} $\llbracket M \oplus N \rrbracket = \frac{1}{2}\llbracket M \rrbracket + \frac{1}{2} \llbracket N \rrbracket$.
\end{enumerate}
Moreover, for every $H \in\h$,  $\inter{H}=H$.
\end{prop}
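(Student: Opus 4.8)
The plan is to prove the five assertions by exploiting the definition of $\inter{M}$ as the supremum of the directed set $\{\mathscr{D}\mid M\Downarrow\mathscr{D}\}$, reducing each claim to a statement about the finite approximants generated by the rules of Figure~\ref{fig: big-step approximation}. The general strategy for each equality is to show that the two sides are bounded by one another, by taking an arbitrary approximant of one side and reconstructing from it, via the inference rules, an approximant of the other side with the same (or larger) mass on every hnf; since both sides are suprema of directed sets, this yields equality. I would prove item~\ref{enum: invariance sum} first, as it is immediate: any derivation of $M\oplus N\Downarrow\mathscr{F}$ must end with rule $s1$ (giving $\mathscr{F}=\bot$) or rule $s5$ (giving $\mathscr{F}=\tfrac12\mathscr{D}+\tfrac12\mathscr{E}$ with $M\Downarrow\mathscr{D}$, $N\Downarrow\mathscr{E}$), and conversely such $\mathscr{D},\mathscr{E}$ recombine via $s5$; commuting the supremum past the continuous operations $\mathscr{D}\mapsto\tfrac12\mathscr{D}+\tfrac12\mathscr{E}$ gives $\inter{M\oplus N}=\tfrac12\inter M+\tfrac12\inter N$. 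Item~\ref{enum: invariance abs} is analogous, using that the only rules deriving $\lambda x.M\Downarrow\mathscr{F}$ are $s1$ and $s3$, and that $\mathscr{D}\mapsto\lambda x.\mathscr{D}$ is monotone and commutes with directed sups.

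For item~\ref{lem: invariance beta general case} I would argue as follows. Given an approximant $MN\Downarrow\mathscr{G}$, inspect the last rule: either it is $s1$ ($\mathscr{G}=\bot$, trivially below the displayed distribution) or it is $s4$, with premises $M\Downarrow\mathscr{D}$ and, for each $\lambda x.H\in\mathrm{supp}(\mathscr{D})$, a derivation $H[N/x]\Downarrow\mathscr{E}_{H,N}$, yielding
\[
\mathscr{G}=\sum_{\lambda x.H\in\mathrm{supp}(\mathscr{D})}\mathscr{D}(\lambda x.H)\cdot\mathscr{E}_{H,N}+\sum_{H\in\mathrm{supp}(\mathscr{D})\cap\neu}\mathscr{D}(H)\cdot HN.
\]
Since $\mathscr{D}\leq_\dist\inter M$ and $\mathscr{E}_{H,N}\leq_\dist\inter{H[N/x]}$, this $\mathscr{G}$ is pointwise below the displayed distribution, so $\inter{MN}$ is below it. For the converse inequality, take finite approximants $\mathscr{D}\leq_\dist\inter M$ and, for finitely many relevant $H$, $\mathscr{E}_{H,N}\leq_\dist\inter{H[N/x]}$ that together come within $\varepsilon$ of the displayed distribution (using that a subprobability distribution is determined by finitely much of its mass up to $\varepsilon$, and that $\{\mathscr{D}\mid M\Downarrow\mathscr{D}\}$ is directed so such a common $\mathscr{D}$ with $\mathrm{supp}(\mathscr{D})$ large enough exists); assembling them with rule $s4$ gives an approximant of $MN$ of mass within $\varepsilon$ of the target on any fixed finite set of hnfs, whence $\inter{MN}$ dominates the displayed distribution. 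Item~\ref{enum: invariance beta} is then the special case $M=\lambda x.H$ with $\mathrm{supp}(\inter{\lambda x.H})\subseteq\{\lambda x.H\}$: by item~\ref{enum: invariance abs} and the final clause $\inter H=H$ we get $\inter{\lambda x.H}=\lambda x.H$, so the displayed sum in item~\ref{lem: invariance beta general case} collapses to the single term $1\cdot\inter{H[N/x]}$. The final clause $\inter H=H$ for $H\in\h$ is proved by induction on the structure of the hnf $H=\lambda x_1\ldots x_n.yN_1\ldots N_m$: by item~\ref{enum: invariance abs} it reduces to the neutral case $n=0$, and a neutral term $yN_1\ldots N_m$ reduces only by $s1$, $s2$ (when $m=0$) or $s4$ with $M=yN_1\ldots N_{m-1}$ neutral and $\mathrm{supp}(\inter M)\subseteq\neu$, forcing $\inter{yN_1\ldots N_m}=yN_1\ldots N_m$ by the induction hypothesis.

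The main obstacle is the converse inequality in item~\ref{lem: invariance beta general case}: one must carefully handle the fact that $\inter M$ may have infinite support and that the families $\{\mathscr{E}_{H,N}\}$ are themselves suprema, so a single finite derivation of $MN$ can only approximate a finite truncation of the infinite sum. This is exactly the directedness argument underpinning the existence of $\inter M$ (the remark after \eqref{eq: big-step semantics}); I would isolate it as an auxiliary observation: for any finite set $X\subseteq\h$ and any $\varepsilon>0$ there is a single $\mathscr{D}$ with $M\Downarrow\mathscr{D}$ such that $\inter M(H)-\mathscr{D}(H)<\varepsilon$ for all $H\in X$, and similarly, simultaneously, finite approximants for the finitely many $\mathscr{E}_{H,N}$ indexed by $\mathrm{supp}(\mathscr{D})$. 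The bookkeeping of $\varepsilon$'s across these nested approximations, while routine, is where the proof needs care; everything else is a direct reading-off of the rules in Figure~\ref{fig: big-step approximation}.
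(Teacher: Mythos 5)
Your proposal is correct and follows essentially the same route as the paper's proof: inspect the last rule of each finite derivation $M\Downarrow\mathscr{D}$ to bound the approximants in one direction, recombine approximants via the rules of Figure~\ref{fig: big-step approximation} for the converse, derive item~\ref{enum: invariance beta} as the special case $M=\lambda x.H$ of item~\ref{lem: invariance beta general case}, and prove $\inter{H}=H$ by structural induction on the hnf. Your $\varepsilon$-bookkeeping for commuting the directed suprema past the sum in item~\ref{lem: invariance beta general case} is in fact spelled out more carefully than in the paper, which leaves that step implicit.
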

Note that,  if $M$ is deterministic, i.e.~a term without the probabilistic sum $\oplus$, then either $M$ has a unique head normal form  $H$ and $\inter{M}(H)=1$, or $M$ is a diverging term and $\inter{M}=\bot$. So $\inter{\cdot}$ generalises the usual deterministic semantics.

\subsection{The Head Spine Reduction is Equivalent to the Head Reduction}

The rules  in Figure \ref{fig: big-step approximation} do not correspond to the standard head reduction of the $\lambda$-calculus, but implement a variant of it, called \emph{head spine} reduction in \cite{sestoft2002demonstrating}. Let us see the difference on a deterministic $\lambda$-term, e.g. $M\triangleq (\lambda x.(\lambda y.x)y) z$. The (small-step) head reduction first evaluates the outermost redex of $M$, getting $(\lambda y.z)y$, and then the latter term, terminating in the hnf $z$. The small-step reduction relation associated with Figure \ref{fig: big-step approximation} is given in Appendix~\ref{appequiv}, but just the inspection of the rule s$4$ may convince the reader that this reduction will first evaluate the body of $\lambda x.(\lambda y.x)y$ to an hnf, so getting the term $\lambda x.x$ and then it fires the application of the latter to the variable $z$, getting $z$. The two reduction sequences are different but they give the same result (and actually with the same number of reduction steps). We prove in Theorem \ref{thm: equivalence head e head spine nel paper} that this is always the case, even in a probabilistic setting\footnote{To the best of our knowledge, this result does not appear in the earlier literature, even in the deterministic case.}. 

We decided to consider the head spine reduction as it has a compact big-step presentation and it fits perfectly into the $\plam$-Markov chain definition  (see Remark~\ref{rk:spine_markov}). Also, it allows us for a simpler proof of the soundness property  (Theorem~\ref{thm: soundness new}). On the other side, the equivalence with the head reduction makes available the separation property (here Theorem \ref{thm: probabilistic separation}) that Leventis proved for the head reduction strategy \cite{leventis2018probabilistic} and that will play a crucial role for completeness. 

In order to state Theorem \ref{thm: equivalence head e head spine nel paper} let us define precisely the probabilistic head reduction operational semantics $\mathcal H^\infty$. Following~\cite{danos2011probabilistic, EhrPagTas11}, we define it as the limit of the small-step transition matrix $\mathcal{H}$ over $\plam$. For $M,N\in\plam$ we set:
\begin{equation*}
\mathcal{H}(M, N)\triangleq  \begin{cases}1 &\text{if }M=\cont{(\lambda y. P)Q} \text{ and } N=\cont{P[Q/y]}\\
\frac{1}{2}&\text{if } M=\cont{P_1\oplus P_2}, \, P_1 \not = P_2\, \text{and}\,  N=\cont{P_i}\\
1&\text{if } M=\cont{P\oplus P},\text{ and } N=\cont{P}\\
1&\text{if } M=N\in\h\\
0&\text{otherwise}
 \end{cases}
 \end{equation*}
 where $\mathcal{E}$ is a \textit{head context}, i.e. a special one-hole context of the form $\lambda x_1\ldots x_n.[\cdot]L_1\ldots L_m$, with $n, m \geq 0$ and $L_i \in \Lambda_\oplus$ (we slightly anticipate from Subsection \ref{sect:context}).  The matrix $\mathcal{H}$ is stochastic, i.e.~for any $M$, $\sum_{N \in \plam} \mathcal{H}(M, N)=1$. 
 
Intuitively, the entry  $\mathcal{H}^n(M, N)$ of the $n$-th power  $\mathcal{H}^n$ of the matrix $\mathcal{H}$ describes the probability that $M$ reduces to $N$ after at most $n$ steps of head reduction. Notice that the head normal forms are absorbing states of the process, so for $M \in \plam$ and $H \in \h$, the sequence $(\mathcal{H}^n(M, H))_{n \in \mathbb{N}}$ is monotone increasing and bounded by $1$, so it converges. We define its limit by: 
\begin{align}\label{eq:def_hnf}
\mathcal{H}^\infty(M, H)&\triangleq \sup_{n \in \mathbb{N}} \mathcal{H}^n(M, H)&\forall M \in \plam, \forall H \in \h.
\end{align}
 This quantity gives the total probability of $M$ to reduce to the hnf $H$ in an arbitrary number of head reduction steps.
\begin{thm}\label{thm: equivalence head e head spine nel paper}
Let $M \in \plam$, $H \in \h$, we have:
\[
	\inter{M}(H)=\mathcal{H}^\infty(M, H).
\]
\end{thm}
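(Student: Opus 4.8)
The plan is to prove the two inequalities $\inter M(H)\leq\mathcal H^\infty(M,H)$ and $\mathcal H^\infty(M,H)\leq\inter M(H)$ separately, each by a suitable induction, after collecting a handful of structural facts about the head-reduction matrix $\mathcal H$, all of which follow by routine manipulation of single head steps and of the powers $\mathcal H^n$. These are: \emph{(i) commutation with abstraction}: $\mathcal H^\infty(\lambda x.M,\lambda x.H)=\mathcal H^\infty(M,H)$, and $\mathcal H^\infty(\lambda x.M,H)=0$ when $H$ is not an $x$-abstraction, since head steps lift through the head context $\lambda x.[\cdot]$ and back; \emph{(ii) stability under substitution}: if $P\notin\h$ and a head step sends $P$ to $P'$, then it also sends $P[Q/y]$ to $P'[Q/y]$ --- a substitution neither creates nor destroys the head redex or head choice before an hnf is reached --- whence $\mathcal H^\infty(P[Q/y],K)=\sum_{H\in\h}\mathcal H^\infty(P,H)\cdot\mathcal H^\infty(H[Q/y],K)$; \emph{(iii) decomposition of an application}: writing $\mathrm{wh}(M)$ for the head evaluation of $M$ stopped as soon as an abstraction or a neutral hnf is produced, $\mathcal H^\infty(MN,K)=\sum_{\lambda x.P}\mathrm{wh}(M)(\lambda x.P)\cdot\mathcal H^\infty(P[N/x],K)+\sum_{H\in\neu}\mathrm{wh}(M)(H)\cdot\mathcal H^\infty(HN,K)$, because the head redex of $MN$ lies inside $M$ until $M$ turns into an abstraction, at which point the outermost redex fires. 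Combining (i)--(iii), and re-indexing via $\sum_{\lambda x.P}\mathrm{wh}(M)(\lambda x.P)\,\mathcal H^\infty(P,H)=\mathcal H^\infty(M,\lambda x.H)$, shows that $M\mapsto\mathcal H^\infty(M,-)$ satisfies verbatim the equations of Proposition~\ref{prop: the semantics is invariant under reduction}.

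For $\inter M(H)\leq\mathcal H^\infty(M,H)$ I would show, by induction on a derivation of $M\Downarrow\mathscr D$ in Figure~\ref{fig: big-step approximation}, that $\mathscr D\leq_\dist\mathcal H^\infty(M,-)$: rules $s1$ and $s2$ are immediate (hnfs are absorbing for $\mathcal H$), rule $s3$ uses (i), rule $s5$ uses the $\oplus$-clause of $\mathcal H$, and rule $s4$ is exactly clause~(3) of the invariance equations for $\mathcal H^\infty$ applied to the sub-derivations after the induction hypothesis, together with monotonicity of the distribution operations. Taking the supremum over approximants gives $\inter M\leq_\dist\mathcal H^\infty(M,-)$. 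For the converse I would prove $\sum_{H\in\h}\mathcal H^n(M,H)\cdot H\leq_\dist\inter M$ by induction on $n$: the base $n=0$ uses $\inter H=H$ for $H\in\h$; for $n+1$ one inspects which clause of the definition of $\mathcal H$ fires on $M$ and appeals to the invariance of $\inter{\cdot}$ under a single head step, i.e.\ $\inter{\cont{(\lambda y.P)Q}}=\inter{\cont{P[Q/y]}}$, $\inter{\cont{P_1\oplus P_2}}=\tfrac12\inter{\cont{P_1}}+\tfrac12\inter{\cont{P_2}}$, and $\inter{H}=H$. Passing to the supremum over $n$ yields $\mathcal H^\infty(M,H)\leq\inter M(H)$, and the two inequalities close the proof.

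The main obstacle is establishing the invariance of $\inter{\cdot}$ under an arbitrary head step $\cont{(\lambda y.P)Q}\red\cont{P[Q/y]}$ \emph{when $P$ is not yet a head normal form}: this is precisely the point at which head and head-spine reduction part ways, since head-spine first normalises $P$ and only then fires the outer redex. I would derive it from the call-by-name substitution lemma $\inter{P[Q/y]}=\sum_{H\in\h}\inter P(H)\cdot\inter{H[Q/y]}$ (proved by induction on $P$, the application case using Proposition~\ref{prop: the semantics is invariant under reduction}), followed by an induction on the head context $\mathcal E=\lambda\vec x.[\cdot]L_1\ldots L_m$ that repeatedly pushes $\mathcal E$ past the redex via clauses (2) and (3) of Proposition~\ref{prop: the semantics is invariant under reduction}. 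The symmetric delicacy on the operational side is the verification of (ii) and (iii) --- that the head evaluation of $M$ inside $MN$ genuinely commutes with the subsequent firing of the outer redex; everything else is bookkeeping with monotone limits of sub-distributions.
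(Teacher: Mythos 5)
Your proposal is correct in outline, but it follows a genuinely different route from the paper. The paper never works with $\mathcal H^\infty$ directly against $\inter\cdot$: it introduces the head \emph{spine} reduction as a second probabilistic transition relation $\dashrightarrow$, proves a step-by-step commutation (Lemma~\ref{lem: commutation diagram head spine}: one spine step from $M$ to $M'$ is matched by $n_0+1$ head steps from $M$ and $n_0$ deterministic head steps from $M'$ reconverging to a common $M_0$), deduces the exact identity $\mathcal H^n=\mathcal S^n$ for every $n$ (Theorem~\ref{thm: head  equal to heas spine}), and only then relates $\Downarrow$ to $\mathcal S^\infty$ --- which is easy because the big-step rules of Figure~\ref{fig: big-step approximation} \emph{are} the spine strategy. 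You instead bypass $\dashrightarrow$ entirely and show that $\mathcal H^\infty$ satisfies the recursive equations of Proposition~\ref{prop: the semantics is invariant under reduction}, with the call-by-name substitution identity $\mathcal H^\infty(P[Q/y],K)=\sum_{H}\mathcal H^\infty(P,H)\cdot\mathcal H^\infty(H[Q/y],K)$ doing the work that the commutation lemma does in the paper; your two concluding inductions then mirror Lemmas~\ref{lem: equivalence part 1} and~\ref{lem: equivalence part 2}. Your route is more direct for the stated theorem; the paper's buys the strictly stronger information that the two strategies agree on the probability of reaching each hnf in \emph{exactly} $n$ steps, and isolates the head/head-spine divergence in a single syntactic lemma.

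Two soft spots, neither fatal. First, your fallback derivation of the substitution identity ``by induction on $P$, the application case using Proposition~\ref{prop: the semantics is invariant under reduction}'' does not close: in the case $P=P_1P_2$ the induction hypothesis would have to be applied to $H_1[Q/y]$ for $H_1$ ranging over $\mathrm{supp}(\inter{P_1})$, and these are not subterms of $P$. The identity must be proved as you state it in fact~\emph{(ii)}, by induction on the number of head steps (this is exactly Lemma~\ref{lem: properties of S infty for one direction}.(3) in the paper), so you should drop the structural induction and rely on~\emph{(ii)} alone. Second, the one-step substitution property $P\rightarrow_p P'\Rightarrow P[Q/y]\rightarrow_p P'[Q/y]$ needs a caveat when the head choice is $\mathcal E[P_1\oplus P_2]$ with $P_1\neq P_2$ but $P_1[Q/y]=P_2[Q/y]$: the two half-probability steps collapse into a single probability-one step, so the transfer must be stated at the level of summed probabilities into a given target rather than step by step. (The paper's own Lemma~\ref{lem: reduction properties head spine}.(2) has the same wrinkle, so this is a shared debt rather than a gap specific to your argument.)
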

Hence, our definition of $\inter{\cdot}$ is just another way of presenting  the operational semantics generated by the head reduction and discussed, for example, in~\cite{EhrPagTas11, leventis2018probabilistic, leventis2019strong}

\subsection{Contextual Equivalence}\label{sect:context}

A \textit{context} of $\Lambda_{\oplus}$ is a term containing a unique occurrence of a special variable  $[\cdot ]$, called the hole. This is generated by:
\begin{equation}\label{eqn: grammar context}
\mathcal{C}:= [\cdot] \ \vert  \   \lambda x. \mathcal{C}  \ \vert \  \mathcal{C}M  \ \vert \ M \mathcal{C} \ \vert \ \mathcal{C}\oplus M \ \vert \ M \oplus \mathcal{C} \enspace.
\end{equation}
We denote by $\clam$ the set of all contexts. Given $\mathcal{C}\in\clam$ and $M\in\plam$, then $\mathcal{C}[M]$ denotes a term obtained by substituting the unique hole in $\mathcal{C}$ with $M$ allowing the possible capture of free variables of $M$.

The typical observation in $\plam$ is the probability of converging to a value. Since values are hnfs,  \emph{contextual preorder} $ \leq_{\mathrm{cxt}}$ and \emph{contextual equivalence} $=_{\mathrm{cxt}}$ can be defined as follows:
\begin{align}
\label{eqn: contextual preorder}M \leq_{\mathrm{cxt}} N&\text{ iff } \forall\mathcal{C} \in \clam,  \sum \llbracket \mathcal{C}[M] \rrbracket \leq \sum \llbracket \mathcal{C}[N] \rrbracket\enspace,\\
\label{eqn: contextual equivalence}M =_{\mathrm{cxt}} N&\text{ iff } \forall\mathcal{C} \in \clam,   \sum \llbracket \mathcal{C}[M] \rrbracket = \sum \llbracket \mathcal{C}[N]\rrbracket\enspace.
\end{align}
Note that $M=_{\mathrm{cxt}} N$ if and only if $M \leq_{\mathrm{cxt}} N$ and $N \leq_{\mathrm{cxt}} M$. 
\begin{exmp} \label{exmp: zx +zy and z(x+y)} Consider the terms $M \triangleq \lambda xyz. z(x \oplus y) $ and $N \triangleq  \lambda xyz. (zx \oplus zy)$. They can be discriminated by the context $\mathcal{C}\triangleq [\cdot]\mathbf \Omega\mathbf  I \mathbf   \Delta$, where $\mathbf \Omega$, $\mathbf I$, and $\mathbf \Delta$ are as in Example~\ref{exmp: examples of probabilistic terms}. In Figure~\ref{fig: example context inequivalence} we show that  $\sum \llbracket \mathcal{C}[M]\rrbracket=\frac{1}{4}$ and $\sum \llbracket \mathcal{C}[N]\rrbracket= \frac{1}{2}$.
\end{exmp}
\begin{figure*}[t]
\centering
\begin{framed}
$
\begin{matrix}
\def\defaultHypSeparation{\hskip .2cm}
\def\ScoreOverhang{2pt}
\AxiomC{\vdots}
\noLine
\UnaryInfC{$M \Downarrow M$}
\AxiomC{\vdots}
\noLine
\UnaryInfC{$\lambda yz. z(\mathbf \Omega \oplus y)\Downarrow \lambda yz. z(\mathbf \Omega\oplus y)$}
\AxiomC{\vdots}
\noLine
\UnaryInfC{$\lambda z. z\, \mathsf{hid} \Downarrow \lambda z.z\, \mathsf{hid} $}
\AxiomC{\vdots}
\noLine
\UnaryInfC{$\mathbf \Delta \Downarrow\mathbf  \Delta$}
\AxiomC{}
\RightLabel{$s1$}
\UnaryInfC{$\mathbf \Omega \Downarrow \bot$}
\AxiomC{\vdots}
\noLine
\UnaryInfC{$\mathbf I \Downarrow\mathbf  I$}
\RightLabel{$s5$}
\BinaryInfC{$\mathsf{hid}  \Downarrow \frac{1}{2}\cdot\mathbf  I$}
\AxiomC{\vdots}
\noLine
\UnaryInfC{$\mathsf{hid}  \Downarrow \frac{1}{2}\cdot\mathbf  I$}
\RightLabel{$s4$}
\BinaryInfC{$\mathsf{hid} \, \mathsf{hid} \Downarrow \frac{1}{4}\cdot\mathbf  I$}
\RightLabel{$s4$}
\BinaryInfC{$\mathbf \Delta \, \mathsf{hid} \Downarrow \frac{1}{4}\cdot \mathbf I$}
\RightLabel{$s4$}
\BinaryInfC{$(\lambda z.z\, \mathsf{hid} )\mathbf  \Delta\Downarrow \frac{1}{4}\cdot\mathbf  I$}
\RightLabel{$s4$}
\BinaryInfC{$(\lambda yz. z(\mathbf \Omega \oplus y)) \mathbf I\mathbf  \Delta\Downarrow \frac{1}{4}\cdot\mathbf  I$}
\RightLabel{$s4$}
\BinaryInfC{$M\mathbf \Omega\mathbf  I\mathbf  \Delta \Downarrow \frac{1}{4}\cdot\mathbf  I$}
\DisplayProof
\\  \qquad \\ 

\def\defaultHypSeparation{\hskip .2cm}
\def\ScoreOverhang{1pt}
\AxiomC{}
\RightLabel{$s2$}
\UnaryInfC{$ z \Downarrow  z $}
\RightLabel{$s4$}
\UnaryInfC{$ zx \Downarrow  zx $}
\AxiomC{}
\RightLabel{$s2$}
\UnaryInfC{$ z \Downarrow  z $}
\RightLabel{$s4$}
\UnaryInfC{$zy   \Downarrow  zy $}
\RightLabel{$s5$}
\BinaryInfC{$zx \oplus zy \Downarrow \frac{1}{2}\cdot zx + \frac{1}{2}\cdot zy$}
\doubleLine
\RightLabel{$s3$}
\UnaryInfC{$N \Downarrow \frac{1}{2}\cdot \lambda xyz. zx + \frac{1}{2}\cdot \lambda xyz. zy$}


\AxiomC{}
\RightLabel{$s1$}
\UnaryInfC{$(\lambda yz. z \mathbf \Omega )\mathbf  I \Delta\Downarrow  \bot$}
\AxiomC{\vdots}
\noLine
\UnaryInfC{$\lambda yz. z y \Downarrow \lambda yz. zy$}
\AxiomC{\vdots}
\noLine
\UnaryInfC{$\lambda z. z\mathbf I \Downarrow \lambda z.z\mathbf I$}
\AxiomC{\vdots}
\noLine
\UnaryInfC{$\mathbf \Delta \Downarrow \mathbf \Delta$}
\AxiomC{\vdots}
\noLine
\UnaryInfC{$\mathbf{I}\Downarrow \mathbf{I}$}
\AxiomC{\vdots}
\noLine
\UnaryInfC{$\mathbf{I}\Downarrow \mathbf{I}$}
\RightLabel{$s4$}
\BinaryInfC{$\mathbf I \mathbf  I \Downarrow  \mathbf I $}
\RightLabel{$s4$}
\BinaryInfC{$\mathbf \Delta\mathbf  I  \Downarrow  \mathbf I$}
\RightLabel{$s4$}
\BinaryInfC{$(\lambda z. z \mathbf I )  \mathbf \Delta\Downarrow  \mathbf I$}
\RightLabel{$s4$}
\BinaryInfC{$(\lambda yz. zy)\mathbf I\mathbf  \Delta \Downarrow  \mathbf I$}
\RightLabel{$s4$}
\TrinaryInfC{$N\mathbf \Omega\mathbf  I \Delta \Downarrow \frac{1}{2}\cdot \mathbf I$}
\DisplayProof

%
%
%
%
%
\end{matrix}
$
\caption{The derivations in the big-step semantics of $M\mathbf \Omega\mathbf  I\mathbf  \Delta \Downarrow \frac{1}{4}\cdot\mathbf  I$ and $N\mathbf \Omega\mathbf  I\mathbf  \Delta \Downarrow \frac{1}{2}\cdot\mathbf  I$, where $M \triangleq \lambda xyz. z(x \oplus y)$,  $N \triangleq\lambda xyz. (zx \oplus zy)$, $\mathbf \Delta= \lambda x.xx$, and  $\mathsf{hid}= \mathbf \Omega \oplus\mathbf  I$. The double inference line means multiple applications of the same rule.} 
\label{fig: example context inequivalence}
\end{framed}
\end{figure*}
Contexts enjoy the following monotonicity property:
\begin{lem}\label{lem: operational semantics monotonicity contexts} Let $M, N \in \Lambda_\oplus$. If $\inter{M}\leq_{\dist}\inter{N}$ then  $\forall \mathcal{C}\in \mathsf{C}\Lambda_\oplus$ $\inter{\mathcal{C}[M]}\leq_{\dist}\inter{\mathcal{C}[N]}$.
\end{lem}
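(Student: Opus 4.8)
The plan is to argue by induction on the structure of the context $\mathcal{C}$, relying only on the inductive presentation of $\inter{\cdot}$ from Proposition~\ref{prop: the semantics is invariant under reduction} and on the obvious monotonicity, with respect to $\leq_\dist$, of the operations on head distributions that occur there. Write $\mathscr{D}\triangleq\inter{M}$, $\mathscr{E}\triangleq\inter{N}$, so $\mathscr{D}\leq_\dist\mathscr{E}$ by hypothesis. For $\mathcal{C}=[\cdot]$ there is nothing to prove. For $\mathcal{C}=\lambda x.\mathcal{C}'$, Proposition~\ref{prop: the semantics is invariant under reduction}(\ref{enum: invariance abs}) gives $\inter{\mathcal{C}[M]}=\lambda x.\inter{\mathcal{C}'[M]}$ and likewise for $N$, and since $\mathscr{F}\mapsto\lambda x.\mathscr{F}$ is monotone by definition, the induction hypothesis on $\mathcal{C}'$ suffices. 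For $\mathcal{C}=\mathcal{C}'\oplus L$ or $L\oplus\mathcal{C}'$, Proposition~\ref{prop: the semantics is invariant under reduction}(\ref{enum: invariance sum}) exhibits $\inter{\mathcal{C}[M]}$ as a fixed convex combination of $\inter{\mathcal{C}'[M]}$ and $\inter{L}$, and we conclude by the induction hypothesis and pointwise monotonicity of convex combination.

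The two application cases are handled via Proposition~\ref{prop: the semantics is invariant under reduction}(\ref{lem: invariance beta general case}). If $\mathcal{C}=\mathcal{C}'L$ (hole in function position), then $\inter{\mathcal{C}[M]}$ is the image of $\inter{\mathcal{C}'[M]}$ under the operation $\mathrm{app}_L\colon\mathscr{F}\mapsto\sum_{\lambda x.H\in\mathrm{supp}(\mathscr{F})}\mathscr{F}(\lambda x.H)\cdot\inter{H[L/x]}+\sum_{H\in\mathrm{supp}(\mathscr{F})\cap\neu}\mathscr{F}(H)\cdot HL$, which is monotone for $\leq_\dist$: if $\mathscr{F}\leq_\dist\mathscr{F}'$ then $\mathrm{supp}(\mathscr{F})\subseteq\mathrm{supp}(\mathscr{F}')$ and each coefficient grows, so the weighted sum of the fixed distributions $\inter{H[L/x]}$ and $HL$ grows pointwise; the induction hypothesis on $\mathcal{C}'$ then does it.

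The case $\mathcal{C}=L\mathcal{C}'$ (hole in argument position) is the crux. By Proposition~\ref{prop: the semantics is invariant under reduction}(\ref{lem: invariance beta general case}), $\inter{L\mathcal{C}'[M]}$ is a convex combination, with the fixed weights $\inter{L}(\lambda x.H)$ and $\inter{L}(H)$, of the distributions $\inter{H[\mathcal{C}'[M]/x]}$ and $H\,\mathcal{C}'[M]$; since the induction hypothesis gives $\inter{\mathcal{C}'[M]}\leq_\dist\inter{\mathcal{C}'[N]}$, it is enough to establish the substitutivity property that $\inter{Q}\leq_\dist\inter{Q'}$ implies $\inter{R[Q/x]}\leq_\dist\inter{R[Q'/x]}$ for all terms $R$ (used with $R=H$, and with $R=Hx$ for fresh $x$, for the two summands). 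I would prove this by induction on the size of a finite big-step approximation derivation $R[Q/x]\Downarrow\mathscr{G}$, treating $R[Q/x]$ as a multi-hole context filled with $Q$: rule $s1$ is trivial, and rules $s2$, $s3$, $s5$ propagate the inequality using Proposition~\ref{prop: the semantics is invariant under reduction} together with, at the point where an occurrence of $Q$ surfaces, the hypothesis $\inter{Q}\leq_\dist\inter{Q'}$ read against the whole of $\inter{Q'}$ on the right-hand side. The difficult subcase is $s4$: the head of the exposed application is itself in head normal form; when it is an abstraction one recurses on the strictly smaller subderivation, again viewed as a multicontext filled with $Q$, and when it is neutral the exposed applications are already in head normal form and have to be compared directly. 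This substitutivity lemma, and in particular the bookkeeping of the head normal forms freshly produced along the $s4$ reductions --- which is exactly what prevents a naive induction on the structure of $R$ --- is the main obstacle; with it in hand, the induction on $\mathcal{C}$ closes.
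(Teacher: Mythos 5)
Your handling of the cases $[\cdot]$, $\lambda x.\mathcal{C}'$, the two $\oplus$ cases and $\mathcal{C}'L$ matches the paper's structural induction, and you are right that $\mathcal{C}=L\mathcal{C}'$ is where all the difficulty sits. But the substitutivity lemma you propose to close that case with --- $\inter{Q}\leq_\dist\inter{Q'}$ implies $\inter{R[Q/x]}\leq_\dist\inter{R[Q'/x]}$ --- is false, and no induction on approximation derivations will prove it. Take $R=yx$, $Q=\mathbf{\Omega}\oplus\mathbf{I}$ and $Q'=\mathbf{I}$: then $\inter{Q}=\frac{1}{2}\cdot\mathbf{I}\leq_\dist\mathbf{I}=\inter{Q'}$, yet $\inter{R[Q/x]}$ and $\inter{R[Q'/x]}$ are the distributions concentrated on the two \emph{distinct} head normal forms $y(\mathbf{\Omega}\oplus\mathbf{I})$ and $y\mathbf{I}$, which are incomparable for the pointwise order $\leq_\dist$. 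This is exactly the "compare directly" step you flag for the neutral summands $H\,\mathcal{C}'[M]$ versus $H\,\mathcal{C}'[N]$: once the hole gets frozen inside a head normal form, the two sides live on syntactically different hnfs and pointwise comparison breaks down.

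In fact this shows that the statement you were asked to prove is itself false: $\mathcal{C}=y[\cdot]$ with $M=\mathbf{\Omega}\oplus\mathbf{I}$ and $N=\mathbf{I}$ is a counterexample to the lemma as written. The paper's own proof does not fare better --- it dismisses $\mathcal{C}=L\mathcal{C}'$ as "similar" to $\mathcal{C}'L$, which it is not: in $\mathcal{C}'L$ the residual hnfs $HL$ produced by neutral heads are literally the same on both sides, whereas in $L\mathcal{C}'$ they are not. What is true, and what is all that is ever used downstream (Proposition~\ref{prop: same small-step implies context equivalent}, Lemma~\ref{lem: abstraction congruence for obs and app}, inequation~\eqref{enum: preliminary counterex 1}), is the total-mass version $\sum\inter{\mathcal{C}[M]}\leq\sum\inter{\mathcal{C}[N]}$: there each Dirac summand contributes the same mass $\inter{L}(H)$ on both sides no matter which hnf carries it, and your substitution induction, carried out on masses only (and best organised, as in Lemma~\ref{lem: fundamental step toward context lemma}, as a double induction on the number of reduction steps and the size of a multi-hole context), does go through. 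So: right diagnosis of where the crux is, but the key lemma you rest on cannot be salvaged in pointwise form; the conclusion has to be weakened to total masses before the argument closes.
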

An immediate consequence of Lemma~\ref{lem: operational semantics monotonicity contexts} is the soundness of the operational semantics:
\begin{prop} \label{prop: same small-step implies context equivalent} 
Let $M, N \in \plam$: if $\llbracket M \rrbracket \leq_\dist \llbracket N \rrbracket$ (resp. $\inter{M}= \inter{N}$) then $M \leq_{\mathrm{cxt}} N$ (resp. $M =_{\mathrm{cxt}} N$).
\end{prop}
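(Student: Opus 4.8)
The plan is to read the statement off directly from Lemma~\ref{lem: operational semantics monotonicity contexts}, the only additional ingredient being that passing to the total mass of a head distribution is monotone for $\leq_\dist$. First I would handle the preorder case. Assume $\inter{M}\leq_\dist\inter{N}$ and fix an arbitrary context $\mathcal{C}\in\clam$. By Lemma~\ref{lem: operational semantics monotonicity contexts} we get $\inter{\mathcal{C}[M]}\leq_\dist\inter{\mathcal{C}[N]}$, i.e.\ $\inter{\mathcal{C}[M]}(H)\leq\inter{\mathcal{C}[N]}(H)$ for every $H\in\h$. Summing these inequalities over the countable set $\h$ — which is legitimate since all summands are non-negative reals and both series are bounded by $1$ — yields $\sum\inter{\mathcal{C}[M]}\leq\sum\inter{\mathcal{C}[N]}$. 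As $\mathcal{C}$ was arbitrary, this is exactly the definition \eqref{eqn: contextual preorder} of $M\leq_{\mathrm{cxt}}N$.

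For the equational case, assume $\inter{M}=\inter{N}$. Then in particular $\inter{M}\leq_\dist\inter{N}$ and $\inter{N}\leq_\dist\inter{M}$, so applying the previous paragraph twice gives $M\leq_{\mathrm{cxt}}N$ and $N\leq_{\mathrm{cxt}}M$; by the remark following \eqref{eqn: contextual equivalence} these two together are equivalent to $M=_{\mathrm{cxt}}N$. This completes the argument.

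I do not expect any genuine obstacle in this proof: all the work has been pushed into Lemma~\ref{lem: operational semantics monotonicity contexts}, which is where a careful induction is needed. For completeness of the picture, that lemma would be proved by induction on the structure of $\mathcal{C}$, using Proposition~\ref{prop: the semantics is invariant under reduction}(2)--(4) to compute $\inter{\mathcal{C}[-]}$ compositionally: abstraction and probabilistic sum act monotonically in an obvious way, and the only mildly delicate case is application, $\mathcal{C}M$ or $M\mathcal{C}$, where the formula of Proposition~\ref{prop: the semantics is invariant under reduction}(3) mixes the abstraction part and the neutral part of the head semantics; monotonicity still propagates there because the substitution map $P\mapsto\inter{H[P/x]}$ is itself monotone, which is again an instance of the context-monotonicity being proved (so the induction must be set up on a suitable measure of $\mathcal{C}$ rather than naively on its size). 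Granting Lemma~\ref{lem: operational semantics monotonicity contexts}, the Proposition is then immediate as above.
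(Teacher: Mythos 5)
Your proof is correct and takes exactly the paper's route: the paper derives Proposition~\ref{prop: same small-step implies context equivalent} as an immediate consequence of Lemma~\ref{lem: operational semantics monotonicity contexts}, precisely by observing that pointwise domination of distributions is preserved under every context and hence under taking total mass. Your additional remarks on how the Lemma itself would be proved are not needed for the statement at hand, and the paper's own induction (in the appendix) in fact works on the approximation relation $\Downarrow$ rather than on $\inter{\cdot}$ directly, which is how it sidesteps the circularity you flag.
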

Thanks to Proposition~\ref{prop: same small-step implies context equivalent}, one can prove that quite different terms are indeed contextually equivalent, as the following example shows:
\begin{exmp}  
The term $MM$ in Example~\ref{exmp: almost sure termination} and $y$ are contextually  equivalent, i.e.~$ MM =_{\mathrm{cxt}} y $, since     $\inter{ MM }=y$.

However, not all contextually equivalent terms have the same semantics: the term $\lambda x.x$ and its $\eta$-expansion $\lambda xy. xy$ are contextually equivalent but $\inter{\lambda x.x}= \lambda x.x \neq \lambda xy. xy= \inter{\lambda xy. xy} $.
\end{exmp}
Proving contextual equivalence might be rather difficult since its definition quantifies over the set of \textit{all} contexts. Fortunately, various other tools can be deployed to show the equivalence of terms. An example is bisimilarity,  we shall discuss in the next subsection. Checking that two terms are bisimilar requires the  \textit{existence} of a particular relation, called \enquote{bisimulation}. Proving that bisimilarity and contextual  equivalence actually coincide would imply that the latter can be established using the much more tractable operational techniques coming from bisimilarity. 

\subsection{Probabilistic Applicative (Bi)Similarity}\label{chap 4 sec 3 subsec 1}
We recall here the main definitions and basic properties given in \cite{dal2014coinductive}, as these do not depend on a specific operational semantics.  First, we introduce labelled Markov chains and its associated probabilistic (bi)similarity \cite{larsen1991bisimulation}. Then, we apply these notions to the operational semantics of $\plam$, getting the probabilistic applicative (bi)similarity.

\medskip
A \emph{labelled Markov chain} is a triple $\mathcal{M}= (\mathcal{S}, \mathcal{L}, \mathcal{P})$, where $\mathcal{S}$ is  a countable set of states, $\mathcal{L}$ is a set of labels (actions) and $\mathcal{P}$ is a transition probability matrix, i.e.~a function $\mathcal{P}: \mathcal{S}\times \mathcal{L}\times \mathcal{S}\longrightarrow [0, 1]$ satisfying the following condition: 
\begin{equation*}
\forall s \in \mathcal{S},\, \forall l \in \mathcal{L}: \qquad \sum_{t \in \mathcal{S}}\mathcal{P}(s, l, t )\leq 1\enspace.
\end{equation*}
If $X \subseteq \mathcal{S}$, we let  $\mathcal{P}(s, l, X)$  denote $\sum_{t \in X}\mathcal{P}(s, l, t)$.   

A  \textit{probabilistic simulation} $\mathcal{R}$ in $\mathcal M$ is a preorder over $\mathcal{S}$ s.t.:
 \begin{equation}\label{eq:sim_gen}
    \forall (s, t)\in \mathcal{R}, \forall X \subseteq \mathcal{S}, \forall l \in \mathcal{L},\, \mathcal{P}(s, l, X)\leq \mathcal{P}(t, l, \mathcal{R}(X))
\end{equation} 
A \textit{probabilistic bisimulation} $\mathcal{R}$ is an equivalence over $\mathcal{S}$ s.t.:
\begin{equation}\label{eq:bisim_gen}
\forall (s, t) \in \mathcal{R},  \forall E \in \mathcal{S}/ \mathcal{R}, \forall l \in \mathcal{L},\, \mathcal{P}(s, l , E)= \mathcal{P}(t, l, E)
\end{equation}

The \emph{probabilistic similarity $\precsim$} (resp.~\emph{probabilistic bisimilarity $\sim$})  is the union of all probabilistic simulations (resp.~bisimulations). For all $ s, t\in \mathcal{S}$: 
\begin{align}
 s \preceq t &\Leftrightarrow \exists \mathcal{R}  \text{ probabilistic simulation s.t. }  s\ \mathcal{R}\ t\label{enum: similarity}, \\
 s \sim t &\Leftrightarrow \exists \mathcal{R}  \text{ probabilistic bisimulation s.t. }  s\ \mathcal{R}\ t \label{enum: bisimilarity}     .
\end{align}
\begin{prop}[e.g. \cite{dal2014coinductive}] \label{prop: properties dal lago bisimilarity}  The relation $\precsim$  (resp.~$\sim$) is a probabilistic simulation (resp.~bisimulation). Moreover,   it holds that ${\sim} =  {\precsim} \cap {\precsim}^{op}$.
\end{prop}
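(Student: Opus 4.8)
The plan is to prove the three assertions in turn, exploiting only the definitions \eqref{eq:sim_gen}, \eqref{eq:bisim_gen}, \eqref{enum: similarity}, \eqref{enum: bisimilarity} together with elementary facts about unions of relations, so that nothing specific to the $\plam$-Markov chain is needed. First I would show that $\precsim$ is itself a probabilistic simulation. The union of a family of preorders need not be a preorder, so the key preliminary step is to check that $\precsim$ is a preorder: reflexivity holds because the identity relation is a probabilistic simulation, and for transitivity one verifies that if $\mathcal R$ and $\mathcal S$ are probabilistic simulations then their relational composition $\mathcal S\circ\mathcal R$ is again one (it is a preorder when both are, and the simulation inequality chains: $\mathcal P(s,l,X)\leq\mathcal P(t,l,\mathcal R(X))\leq\mathcal P(u,l,\mathcal S(\mathcal R(X)))$, using monotonicity of $Y\mapsto\mathcal P(t,l,Y)$), whence $\precsim\circ\precsim\subseteq\precsim$. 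Then for the simulation clause itself: given $(s,t)\in\precsim$, pick a probabilistic simulation $\mathcal R$ with $s\,\mathcal R\,t$; for any $X\subseteq\mathcal S$ and $l\in\mathcal L$ we have $\mathcal P(s,l,X)\leq\mathcal P(t,l,\mathcal R(X))\leq\mathcal P(t,l,\precsim(X))$, the last step because $\mathcal R\subseteq\precsim$ implies $\mathcal R(X)\subseteq\precsim(X)$ and $Y\mapsto\mathcal P(t,l,Y)$ is monotone. Hence $\precsim$ satisfies \eqref{eq:sim_gen}.

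Next I would treat $\sim$. Here the subtlety is that the union of equivalence relations need not be an equivalence, so one must argue that $\sim$ is an equivalence and simultaneously that the bisimulation clause survives the union. Reflexivity and symmetry are easy (identity is a bisimulation, and $\mathcal R$ a bisimulation implies $\mathcal R^{op}=\mathcal R$); for transitivity one shows the transitive closure of a union of bisimulations is again a bisimulation. Concretely, let $\mathcal R$ be the transitive (and reflexive, symmetric) closure of the union of all probabilistic bisimulations; one checks that if $s\,\mathcal R\,t$ then there is a chain $s=s_0\,\mathcal R_1\,s_1\,\mathcal R_2\,\cdots\,\mathcal R_k\,s_k=t$ of bisimulations, each $\mathcal R_i$-equivalence class is a union of $\mathcal R$-classes, so the equalities $\mathcal P(s_{i-1},l,E)=\mathcal P(s_i,l,E)$ for $E\in\mathcal S/\mathcal R$ propagate along the chain (a block $E\in\mathcal S/\mathcal R$ is a union of $\mathcal R_i$-classes, and $\mathcal P$ is additive over disjoint unions). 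This shows $\mathcal R$ is itself a probabilistic bisimulation, hence $\mathcal R\subseteq\sim$; conversely $\sim\subseteq\mathcal R$ since $\sim$ is the union and $\mathcal R$ contains it, so $\sim=\mathcal R$ is an equivalence and a bisimulation.

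Finally, for ${\sim}={\precsim}\cap{\precsim}^{op}$: the inclusion $\subseteq$ is immediate since $\sim$ is a bisimulation, hence a simulation (the bisimulation clause \eqref{eq:bisim_gen} implies \eqref{eq:sim_gen}: summing $\mathcal P(s,l,E)=\mathcal P(t,l,E)$ over the $\sim$-classes contained in $X$ gives $\mathcal P(s,l,X)=\mathcal P(t,l,\sim(X))$ since $\sim(X)$ is the union of the $\sim$-classes meeting $X$), and it is symmetric, so $\sim\subseteq\precsim$ and $\sim\subseteq\precsim^{op}$. For $\supseteq$ one sets $\mathcal R=\precsim\cap\precsim^{op}$, which is manifestly an equivalence, and verifies it is a probabilistic bisimulation: given $(s,t)\in\mathcal R$ and $E\in\mathcal S/\mathcal R$, applying the simulation property of $\precsim$ in both directions yields $\mathcal P(s,l,E)\leq\mathcal P(t,l,\precsim(E))$ and $\mathcal P(t,l,E)\leq\mathcal P(s,l,\precsim(E))$; the main obstacle, and the crux of this direction, is to show $\mathcal P(t,l,\precsim(E))=\mathcal P(t,l,E)$, i.e.\ that for the purpose of these masses one may replace the up-set $\precsim(E)$ by the $\mathcal R$-class $E$ — this is the classical Larsen--Skou argument, carried out by ordering the $\mathcal R$-classes inside a single $\precsim$-equivalence class and doing induction, using that on elements of $\precsim(E)\cap\precsim^{op}(E)=E$ the two simulation inequalities pinch to an equality. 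Once that is established, $\mathcal R$ is a bisimulation, so $\mathcal R\subseteq\sim$, completing the proof. I expect this last pinching argument for $\mathcal P(t,l,\precsim(E))=\mathcal P(t,l,E)$ to be the only genuinely delicate point; everything else is bookkeeping with monotonicity and additivity of $\mathcal P$.
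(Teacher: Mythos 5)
The paper recalls this proposition from \cite{dal2014coinductive} without proof, so there is no in-paper argument to compare against; judging your plan on its own terms, the first two parts are correct and standard. Closure of simulations under relational composition gives that $\precsim$ is a preorder and a simulation, and the transitive closure of the union of all bisimulations handles $\sim$ (your phrase ``each $\mathcal R_i$-class is a union of $\mathcal R$-classes'' is backwards, but the statement you actually use --- each $E\in\mathcal S/\mathcal R$ is a union of $\mathcal R_i$-classes --- is the right one). Likewise, in the inclusion ${\sim}\subseteq{\precsim}\cap{\precsim}^{op}$ the displayed identity ``$\mathcal P(s,l,X)=\mathcal P(t,l,{\sim}(X))$'' should be the chain $\mathcal P(s,l,X)\le\mathcal P(s,l,{\sim}(X))=\mathcal P(t,l,{\sim}(X))$, but the conclusion you draw is correct.

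The genuine gap is in the converse inclusion ${\precsim}\cap{\precsim}^{op}\subseteq{\sim}$: the identity you isolate as the crux, $\mathcal P(t,l,{\precsim}(E))=\mathcal P(t,l,E)$, is \emph{false} in general. It asserts that $t$ gives zero mass to ${\precsim}(E)\setminus E$, which fails as soon as $t$ can move with positive probability to a state strictly above $E$: take states with $c\precsim b$ but $b\not\precsim c$, let $E=\{c\}$, and let $t$ move to $b$ and to $c$ with probability $\frac{1}{2}$ each; then $\mathcal P(t,l,{\precsim}(E))=1\neq\frac{1}{2}=\mathcal P(t,l,E)$. No pinching of the two simulation inequalities rescues this equation (they compare masses of \emph{different} sets), and the induction you sketch over the $\mathcal R$-classes above $E$ breaks down when there are infinitely many with no maximal one. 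The classical Larsen--Skou argument proves something else: for $(s,t)\in\mathcal R={\precsim}\cap{\precsim}^{op}$ and any ${\precsim}$-up-closed set $Y$ (i.e.\ ${\precsim}(Y)=Y$), the two simulation inequalities give $\mathcal P(s,l,Y)\le\mathcal P(t,l,{\precsim}(Y))=\mathcal P(t,l,Y)$ and symmetrically, hence $\mathcal P(s,l,Y)=\mathcal P(t,l,Y)$. One then checks that \emph{both} ${\precsim}(E)$ and ${\precsim}(E)\setminus E$ are up-closed --- the latter because $e\precsim x\precsim y$ with $e,y\in E$ forces $x\in E$ --- and recovers $\mathcal P(s,l,E)=\mathcal P(t,l,E)$ by subtracting the two equal finite masses. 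With that replacement for your ``crux'' step, the rest of your plan goes through.
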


In order to apply these notions to $\plam$, we need to preset its operational semantics as a labelled Markov chain (Definition~\ref{def:plam_markov}). Intuitively,  terms are seen as states, while labels are of two kinds: one can either \textit{evaluate} a term (this kind of transition will be labelled by $\tau$), obtaining a distribution of hnfs, or \textit{apply} a hnf to a term $M$ (this kind of transition will be labelled by $M$).  
For technical reasons, it is useful to consider only closed terms and to consider for each closed hnf $H=\lambda x. H'$ two distinct representations, depending on the way we consider it: either as a term  or properly as a normal form, and in the latter case we indicate it as  $\ww{H}\triangleq \nu x. H' $ to stress the difference. Consequently, we define $\dhe$ as the set of all  \enquote{distinguished} closed hnfs,  namely $\lbrace \ww{H} \ \vert \ H \in \he  \rbrace$.  More in general, if $X \subseteq \he$, we define $\ww{X}\triangleq \lbrace \ww{H} \ \vert \ H \in X\rbrace$.

\begin{defn}\label{def:plam_markov}
The  \textit{$\plam$-Markov chain} is the triple $(\elam \uplus \dhe,\, \elam \uplus \lbrace \tau \rbrace,\, \mathcal{P}_\oplus )$, where the set of states is the disjoint union of the set of closed terms and  the set of \enquote{distinguished} closed hnfs, labels (actions) are either closed terms or the $\tau$ action, and the transition probability matrix $\mathcal{P}_\oplus$ is defined in the following way:
 \begin{enumerate}[(i)]
 \item \label{enum: plam markov 1}for every closed term $M$ and distinguished hnf $\nu x. H$:
 \begin{equation*}
 \mathcal{P}_\oplus(M, \tau , \nu x. H)\triangleq  \inter{M}(\lambda x. H) \enspace,
 \end{equation*}
 \item \label{enum: plam markov 2} for every closed term $M$ and distinguished hnf $\nu x. H$:
 \begin{equation*}
 \mathcal{P}_\oplus(\nu x. H, M , H[M/x])\triangleq1 \enspace,
 \end{equation*}
 \item in all other cases, $\mathcal{P}_\oplus$ returns $0$.
 \end{enumerate}
 
 A \textit{probabilistic applicative (bi)simulation} is a probabilistic (bi)simulation of the $\plam$-Markov chain.  The \textit{probabilistic applicative similarity}, $\mathrm{PAS}$ for short, and the \textit{probabilistic applicative bisimilarity}, $\textrm{PAB}$ for short, are defined as in~\eqref{enum: similarity} and~\eqref{enum: bisimilarity}. From now on, with  $\precsim$ (resp. $\sim$) we mean probabilistic \textit{applicative} similarity (resp.~bisimilarity).
 \end{defn}
 
\begin{rem}\label{rk:spine_markov}
In the  $\plam$-Markov chain, a term $M$ can be  thought at the head of a (potentially infinite) stack of applications, where at each time we first evaluate  the head of the stack until we reach a head normal form $H$  (point~\ref{enum: plam markov 1}),  and then we apply $H$ to the next term of the stack (point~\ref{enum: plam markov 2}). This is exactly the behaviour of  the head spine reduction on an application $MN_1\ldots N_n$.
 Lemma~\ref{lem: context lemma 4} formalizes these intuitions.
\end{rem}

The notions of $\textrm{PAS}$ and $\textrm{PAB}$ are defined on closed terms. We extend them to open terms $M, N \in \Lambda_\oplus^{\lbrace x_1, \ldots, x_n \rbrace}$, by:
\begin{align}
M \precsim N &\Leftrightarrow  \lambda x_1 \ldots x_n.  M \precsim \lambda x_1\ldots x_n. N\label{eqn: open term simil}\enspace , \\
 M \sim N &\Leftrightarrow \lambda x_1\ldots x_n.  M \sim \lambda x_1\ldots x_n. N \label{eqn: open term bisimil}\enspace .
\end{align}
One can notice that the order of the abstractions in the term closure does not affect the obtained relation. 

The following proposition is  analogous to Proposition~\ref{prop: same small-step implies context equivalent}, stating the soundness of  the operational semantics with respect to both  $\mathrm{PAS}$ and $\mathrm{PAB}$.
\begin{prop}\label{prop: operational semantics is sound w.r.t. bisimilarity} Let $M, N \in \plam$: if $\inter{M}\leq_\dist \inter{N}$ (resp.~$\inter{M}= \inter{N}$) then $M \precsim N$ (resp.~$M \sim N$).
\end{prop}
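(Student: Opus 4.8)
The plan is to reduce the statement to the monotonicity lemma and then to construct the relevant applicative (bi)simulation directly. For the similarity case, assume $\inter M \leq_\dist \inter N$. I would first close $M$ and $N$ by abstracting their free variables $x_1,\dots,x_n$; by Proposition~\ref{prop: the semantics is invariant under reduction}(\ref{enum: invariance abs}) we have $\inter{\lambda\vec x.M} = \lambda\vec x.\inter M \leq_\dist \lambda\vec x.\inter N = \inter{\lambda\vec x.N}$, so it suffices to treat the closed case. Then I would exhibit an explicit probabilistic simulation in the $\plam$-Markov chain containing the pair $(M,N)$.

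The natural candidate is the relation $\mathcal R$ on $\elam \uplus \dhe$ that relates closed terms $P,Q$ whenever $\inter P \leq_\dist \inter Q$, and relates distinguished hnfs $\nu x.H$, $\nu x.H'$ whenever $H = H'$ (together with the identity on all states, to make it a preorder). I would check that $\mathcal R$ is indeed a preorder (reflexivity and transitivity of $\leq_\dist$, plus the obvious handling of the hnf component) and then verify the simulation condition~\eqref{eq:sim_gen} for each label. For the $\tau$ label at a term-state $P \mathcal R Q$: for any $X \subseteq \elam \uplus \dhe$, $\mathcal P_\oplus(P,\tau,X) = \sum_{\nu x.H \in X} \inter P(\lambda x.H) \leq \sum_{\nu x.H \in X} \inter Q(\lambda x.H) = \mathcal P_\oplus(Q,\tau,X) \leq \mathcal P_\oplus(Q,\tau,\mathcal R(X))$, using $\inter P \leq_\dist \inter Q$ pointwise and monotonicity of $\mathcal R(X) \supseteq X$. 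For a term-label $L$ at an hnf-state: if $\nu x.H \mathcal R \nu x.H'$ then $H = H'$, so both sides transition with probability $1$ to $H[L/x] = H'[L/x]$, which is in $\mathcal R(X)$ as soon as it is in $X$ (by reflexivity). Term-states have no outgoing $L$-transitions and hnf-states have no outgoing $\tau$-transitions, so the remaining cases are vacuous. Hence $\mathcal R$ is a probabilistic simulation with $M \mathrel{\mathcal R} N$, giving $M \precsim N$, and by the open-term extension~\eqref{eqn: open term simil} the general case follows. The bisimilarity case is identical except one symmetrizes: assuming $\inter M = \inter N$, take $\mathcal R$ to relate $P,Q$ with $\inter P = \inter Q$ and $\nu x.H$ with itself-class by $H = H'$; this is an equivalence, and the bisimulation condition~\eqref{eq:bisim_gen} over $\mathcal R$-classes follows from the same pointwise (in)equalities read as equalities.

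The main obstacle — really the only subtle point — is being careful about the interplay between the two kinds of states and the fact that $\mathcal R$ must be a \emph{preorder} (resp.~equivalence) on the \emph{whole} state space $\elam \uplus \dhe$, not just a relation on terms: one has to package the term-level relation $\inter{\cdot}\leq_\dist\inter{\cdot}$ together with the diagonal on $\dhe$ (or the kernel of $\nu x.H \mapsto H$) and check that the combined relation still satisfies the simulation clause at hnf-states, where the crucial observation is that $\mathcal P_\oplus(\nu x.H, L, -)$ is a point mass, so the clause degenerates to $H[L/x]$ being $\mathcal R$-related to $H'[L/x]$, which holds because $H = H'$ forces these to be literally equal. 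Once this bookkeeping is in place the verification is routine, and in fact Lemma~\ref{lem: operational semantics monotonicity contexts} is not even needed here — it is the analogue used for the contextual case in Proposition~\ref{prop: same small-step implies context equivalent}.
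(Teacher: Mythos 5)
Your proof is correct and follows essentially the same route as the paper: both construct the explicit simulation consisting of term pairs with pointwise-dominated semantics together with the diagonal on distinguished hnfs, verify the three label cases, and reduce open terms to closed ones via abstraction. The only cosmetic difference is that you re-run the construction for the bisimulation case, whereas the paper derives it immediately from the similarity case via ${\sim} = {\precsim} \cap {\precsim}^{op}$.
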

\begin{proof}
We prove only the inequality soundness, as the equality one is an immediate consequence by Proposition~\ref{prop: properties dal lago bisimilarity}. Moreover, the proof is for closed terms, as the case of open terms follows from Proposition~\ref{prop: the semantics is invariant under reduction}.\ref{enum: invariance abs}.

Let $M, N \in \elam$ be such that $\inter{M}\leq_{\dist} \inter{N}$, and consider the relation $\mathcal{R}= \lbrace(P, Q) \in \elam \times \elam  \ \vert \ \inter{P}\leq_\dist \inter{Q} \rbrace \cup \lbrace (\nu x.H, \nu x.H) \in \dhe \times \dhe \rbrace$.  If we show that $\mathcal{R}$ is a PAS, then $\mathcal{R}\subseteq \, \precsim$, and hence $M \precsim N$.  Clearly, $\mathcal{R}$ is a preorder. Now, let $(P,Q), (\nu x.H,\nu x.H) \in \mathcal{R}$, and let $X \subseteq \elam \cup  \dhe$.  It is straightforward that $\mathcal{P}_\oplus(\nu x.H, l, X)\leq \mathcal{P}_\oplus(\nu x.H, l, \mathcal{R}(X))$, for all $l \in \elam \cup \lbrace \tau \rbrace$.  Moreover, for all $F \in \elam$ we have $0=\mathcal{P}_\oplus(P, F, X)\leq \mathcal{P}_\oplus(Q, F, \mathcal{R}(X))$. Last:
\allowdisplaybreaks
\begin{align*}
 \mathcal{P}_\oplus(P, \tau, X)&= \sum_{\nu x.H \in X} \mathcal{P}_\oplus(P, \tau , \nu x.H)=\inter{P}(X\cap\h)
\\
 &\leq\inter{Q}(X\cap\h)= \mathcal{P}_\oplus(Q, \tau , \mathcal{R}(X)).
\end{align*}
Hence, for all $l \in \elam \cup \lbrace \tau \rbrace$ and $X \subseteq \elam \cup \dhe$,  we have $\mathcal{P}_\oplus(P, l, X)\leq \mathcal{P}_\oplus(Q, l, \mathcal{R}(X))$.
\end{proof}

\begin{exmp} 
Let us show that $\mathbf I \sim \lambda xy.xy$ so that, from the soundness (Theorem~\ref{thm: soundness new}), one can infer $\mathbf I=_{\mathrm{cxt}} \lambda xy.xy$. 

Let us  define $\mathcal{R}_1\triangleq\big\{ (\mathbf I,\lambda xy.xy), ( \lambda xy.xy, \mathbf I) \big\}$, as well as  $\mathcal{R}_2\triangleq 
  \big\{ ( \ww{\mathbf I}, \nu x.\lambda y.xy), (\nu x.\lambda y.xy,\ww{\mathbf I}) \big\}
 $ and $\mathcal{R}_3 \triangleq {\sim}$. Let $\mathcal{R}\triangleq {(\mathcal{R}_1\cup \mathcal{R}_2 \cup \mathcal{R}_3)^*}$. Since $\mathcal{R}_1\cup \mathcal{R}_2 \cup \mathcal{R}_3$ is a symmetric relation, then its reflexive and transitive closure $\mathcal{R}\triangleq {(\mathcal{R}_1\cup \mathcal{R}_2 \cup \mathcal{R}_3)^*}$ is an equivalence. Let us prove that it is a probabilistic bisimulation.  

We have to prove that $\mathcal{P}_\oplus(M, l, E)= \mathcal{P}_\oplus(N, l, E)$,  $\forall(M, N ) \in \mathcal{R}$, $\forall E \in (\elam \cup \dhe)/ \mathcal{R}$, $\forall l \in \elam \cup \lbrace \tau \rbrace$. Notice that, if this holds for $(M, N)\in {(\mathcal{R}_1\cup \mathcal{R}_2 \cup \mathcal{R}_3)}$, then we are done. Indeed,  suppose $(M, N)\in \mathcal{R}$. Then there exists $n \geq 0$ and $P_0, \ldots, P_n \in \elam \cup \dhe$ such that $P_0=M$, $P_n=N$ and $P_{i-1}\mathcal{R}_{j_i}P_{i}$ for every $1 \leq i \leq n$, where $1 \leq j_i \leq 3$. Hence,   we have $\mathcal{P}_\oplus(M, l, E)=\mathcal{P}_\oplus(P_0, l, E)= \ldots = \mathcal{P}_\oplus(P_n, l, E)=\mathcal{P}_\oplus(N, l, E)$, $\forall E \in (\elam \cup \dhe)/ \mathcal{R}$, $\forall l \in \elam \cup \lbrace \tau \rbrace$.

Let us now show the case $(M, N)\in {(\mathcal{R}_1\cup \mathcal{R}_2 \cup \mathcal{R}_3)}$. If $(M, N)\in \mathcal{R}_3$ we just apply Proposition~\ref{prop: properties dal lago bisimilarity}. Otherwise,   it suffices to consider  $(\mathbf{I}, \lambda xy.xy)$ and $(\ww{\mathbf I}, \nu x.\lambda y.xy)$.  Recall that, by Definition~\ref{def:plam_markov},   $\mathcal{P}_\oplus(M, N , E)= 0$  and $\mathcal{P}_\oplus(\ww{H}, \tau , E)= 0$,   for all $M, N \in \elam$, $\ww{H}\in \dhe$ and $E \in (\elam \cup \dhe)/ \mathcal{R}$.    On the one hand, since  $ (\ww{\mathbf I}, \nu x.\lambda y.xy)\in \mathcal{R}$, we have $\ww{\mathbf I} \in E$ if and only if $\nu x.\lambda y. xy \in E$, for all $E \in (\elam \cup \dhe)/ \mathcal{R}$. This implies   $\mathcal{P}_\oplus(\mathbf I, \tau, E) = \mathcal{P}_\oplus(\lambda xy.xy, \tau, E)$, for all $E \in (\elam \cup \dhe)/ \mathcal{R}$.  On the other hand, since  terms are considered modulo renaming of bound variables, by Proposition~\ref{prop: the semantics is invariant under reduction} we have $\inter{N}= \inter{\lambda y.Ny}$, for all $N \in \elam$ (notice that this equality may fail if $N$ has free variables).  By Proposition~\ref{prop: operational semantics is sound w.r.t. bisimilarity}, $N \sim \lambda y. N y$, and hence $N \in E$ if and only if $\lambda y. N y\in E$, for all $E \in (\elam \cup \dhe)/ \mathcal{R}$.  This implies   $\mathcal{P}_\oplus(\ww{\mathbf I}, N, E)=\mathcal{P}_\oplus(\nu x.\lambda y. xy, N, E)$, for all $N \in  \elam$ and for all $E \in (\elam \cup \dhe)/ \mathcal{R}$. 
\end{exmp}
\begin{exmp} We show that the terms $M\triangleq \lambda xyz. z(x\oplus y)$ and $N \triangleq \lambda xyz. (zx\oplus zy)$ in Example~\ref{exmp: zx +zy and z(x+y)} are not bisimilar. Indeed, suppose for the sake of contradiction that a probabilistic bisimulation $\mathcal{R}$ such that $(M, N ) \in \mathcal{R}$ exists. By definition $\mathcal{R}$ is an equivalence relation.  Let $E \in (\elam \cup \dhe)/ \mathcal{R}$ be such that $ \nu x. \lambda yz. z(x\oplus y) \in E$. Then it must be that $\mathcal{P}_\oplus(M, \tau, E)=1= \mathcal{P}_\oplus(N, \tau, E)$, and it follows that both   $\nu x. \lambda yz. zx$ and $ \nu x. \lambda yz. zy$ are in $E$, so that $(\nu x.\lambda yz. z(x\oplus y), \nu x. \lambda yz.zx)\in \mathcal{R}$.  Then it must be that $\mathcal{P}_\oplus(\nu x.\lambda yz. z(x\oplus y),\mathbf  \Omega, E_1 )=1= \mathcal{P}_\oplus(\nu x. \lambda yz.zx,\mathbf  \Omega, E_1)$, for some $E_1\in (\elam \cup \dhe)/ \mathcal{R}$ containing both $\lambda yz. z(\mathbf \Omega \oplus y)$ and $ \lambda yz. z\mathbf \Omega \in E_1$, which implies $(\lambda yz. z(\mathbf \Omega \oplus y), \lambda yz. z\mathbf \Omega)\in \mathcal{R}$. By a similar reasoning, we get  that $\mathcal{R}$ contains the pairs $(\nu y.\lambda z. z(\mathbf \Omega \oplus y),\nu y. \lambda z. z\mathbf \Omega)$, $(\lambda z.z(\mathbf \Omega \oplus\mathbf  I), \lambda z. z\mathbf \Omega )$,  and $(\nu z.z(\mathbf \Omega \oplus\mathbf  I), \nu z. z\mathbf \Omega )$. Now, let $E_2$ be an equivalence class containing $\mathbf \Omega \oplus\mathbf  I$. From $ \mathcal{P}_\oplus(\nu z. z(\mathbf \Omega \oplus\mathbf  I), \mathbf I, E_2)=1= \mathcal{P}_\oplus(\nu z.z\mathbf \Omega,\mathbf  I, E_2)$ we get that $\mathbf \Omega \in  E_2$, i.e.~$(\mathbf \Omega \oplus\mathbf  I, \mathbf\Omega)\in\mathcal{R}$. Finally, if $E_3$ is an equivalence class such that $\nu x.x \in E_3$, then $\mathcal{P}_\oplus(\mathbf \Omega \oplus\mathbf  I, \tau, E_3)=\frac{1}{2}= \mathcal{P}_\oplus(\mathbf \Omega, \tau, E_3)$. This is a contradiction, since $\mathcal{P}_\oplus(\mathbf \Omega, \tau, E_3)=0$. Therefore, the terms $M$ and $N$ are not bisimilar.
\end{exmp}

\section{Soundness}\label{sec3}
A fundamental technique to establish the soundness of applicative (bi)similarity is based on \textit{Howe's lifting} \cite{howe1996proving}. This  method shows that applicative bisimilarity is a \textit{congruence}, i.e.~an equivalence relation that respects the structure of terms, which is the hard part  in the soundness proof.  This technique has been used in e.g.~\cite{dal2014coinductive,crubille2014probabilistic} for, respectively, the lazy cbn and cbv semantics of $\plam$. We consider here a different approach.  Following the reasoning by Abramsky and Ong~\cite{abramsky1993full},  we shall first prove that $\precsim$  is included in $ \leq_{\mathrm{app}}$ (Lemma~\ref{lem: context lemma 4}), which requires a technical Key Lemma (Lemma~\ref{lem: context lemmna 3}) specific to the probabilistic framework and then we conclude by applying a Context Lemma (Lemma~\ref{lem: context lemma}). The latter result says that the computational behaviour of the contextual semantics is \textit{functional}. This property has also been called \textit{operational extensionality} in Bloom~\cite{bloom1990can}. Milner~\cite{milner1977fully} proved a similar result in the case of simply typed combinatory algebra. To the best of our knowledge, the Context Lemma lacks a corresponding formulation in the probabilistic $\lambda$-calculus $\plam$, so we prove it in the following subsection. 

\subsection{Context Lemma}\label{sect:context_lemma}
The Context Lemma states that only the subset of applicative contexts \enquote{really matter} in establishing contextual equivalence. We define  an \textit{applicative context}  as a context $\mathcal{E}\in \clam$ of the form $(\lambda x_1\ldots x_n.[\cdot ])P_1\ldots P_m$, where $n, m \in \mathbb{N}$ and $P_1\ldots P_m \in \elam$.  We denote by $\alam$ the set of all applicative contexts. 

The \emph{applicative contextual preorder} $\leq_{\mathrm{app}}$ (resp.~\emph{applicative contextual equivalence} $=_{\mathrm{app}}$) is defined by restricting the quantifier $\forall\mathcal{C}$ to the subset $\alam$ of $\clam$ in the contextual  preorder (resp.~equivalence) definition \eqref{eqn: contextual preorder}  (resp.~\eqref{eqn: contextual equivalence}).
\begin{lem} \label{lem: abstraction congruence for obs and app} Let $M, N \in \plam^{\Gamma \cup \lbrace x \rbrace}$. Then:
\begin{enumerate}[(1)]
\item \label{eqn: abstraction congruence for obs} If $M \leq _{\mathrm{app}}N$ then  $\lambda x. M \leq_{\mathrm{app}}\lambda x. N$.
\item  \label{eqn: abstraction congruence for app} If $\lambda x.M \leq _{\mathrm{cxt}}\lambda x.N$ then $M \leq_{\mathrm{cxt}}N$.
\item \label{eqn: application congruence for obs} If $M \leq _{\mathrm{cxt}}N$ then, for all $L \in \plam$,  $ML\leq _{\mathrm{cxt}}NL$. 
\end{enumerate}
\end{lem}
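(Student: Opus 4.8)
The plan is to prove items (1) and (3) directly from the definitions and to obtain (2) from (3). For item (3), the point is that $\leq_{\mathrm{cxt}}$ is stable under plugging into an arbitrary context: for $\mathcal{D}\in\clam$ and any $\mathcal{C}\in\clam$, the composite context $\mathcal{C}\circ\mathcal{D}$ (obtained by replacing the hole of $\mathcal{C}$ with $\mathcal{D}$) is again in $\clam$ and satisfies $(\mathcal{C}\circ\mathcal{D})[P]=\mathcal{C}[\mathcal{D}[P]]$, so $M\leq_{\mathrm{cxt}}N$ yields $\sum\inter{\mathcal{C}[\mathcal{D}[M]]}\le\sum\inter{\mathcal{C}[\mathcal{D}[N]]}$ for every $\mathcal{C}$, i.e. $\mathcal{D}[M]\leq_{\mathrm{cxt}}\mathcal{D}[N]$. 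Item (3) is then the instance $\mathcal{D}\triangleq[\cdot]L$.

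For item (1), the key observation is that an applicative test of $\lambda x.M$ \emph{is literally} an applicative test of $M$. Let $\mathcal{E}=(\lambda y_1\ldots y_n.[\cdot])P_1\ldots P_m\in\alam$; renaming bound variables we may assume $x\notin\{y_1,\ldots,y_n\}$. Then
\[
\mathcal{E}[\lambda x.M]=(\lambda y_1\ldots y_n.\lambda x.M)P_1\ldots P_m=(\lambda y_1\ldots y_n x.M)P_1\ldots P_m=\mathcal{E}'[M],
\]
where $\mathcal{E}'\triangleq(\lambda y_1\ldots y_n x.[\cdot])P_1\ldots P_m\in\alam$, and likewise $\mathcal{E}[\lambda x.N]=\mathcal{E}'[N]$. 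Hence $M\leq_{\mathrm{app}}N$ gives $\sum\inter{\mathcal{E}[\lambda x.M]}=\sum\inter{\mathcal{E}'[M]}\le\sum\inter{\mathcal{E}'[N]}=\sum\inter{\mathcal{E}[\lambda x.N]}$, and as $\mathcal{E}$ ranges over $\alam$ this is exactly $\lambda x.M\leq_{\mathrm{app}}\lambda x.N$.

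For item (2), I would first record the reduction-invariance fact $\inter{(\lambda x.M)x}=\inter M$ for every $M\in\plam$. It follows from Proposition~\ref{prop: the semantics is invariant under reduction}: by part~\ref{enum: invariance abs} we have $\inter{\lambda x.M}=\lambda x.\inter M$, so $\mathrm{supp}(\inter{\lambda x.M})$ consists of abstractions $\lambda x.H$ with $H\in\mathrm{supp}(\inter M)$ and contains no neutral term; plugging this into part~\ref{lem: invariance beta general case} applied to $\inter{(\lambda x.M)x}$, the neutral summand vanishes and, using $H[x/x]=H$ and $\inter H=H$ (as $H\in\h$), the remaining summand collapses to $\sum_{H\in\mathrm{supp}(\inter M)}\inter M(H)\cdot H=\inter M$. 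By Proposition~\ref{prop: same small-step implies context equivalent} this yields $(\lambda x.M)x=_{\mathrm{cxt}}M$ and $(\lambda x.N)x=_{\mathrm{cxt}}N$. Now assume $\lambda x.M\leq_{\mathrm{cxt}}\lambda x.N$; applying item (3) with $L\triangleq x$ gives $(\lambda x.M)x\leq_{\mathrm{cxt}}(\lambda x.N)x$, and chaining with the two equivalences and the transitivity of $\leq_{\mathrm{cxt}}$ gives $M\leq_{\mathrm{cxt}}N$.

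None of the three items is deep once the earlier results are available; the care required is concentrated in two routine but delicate spots: the variable-capture bookkeeping implicit in the context-substitution identities of items (1) and (3) (one must check that the occurrence of $x$, respectively the argument $L$, is captured by exactly the binders that would capture it after the substitution is performed), and the small reduction-invariance computation $\inter{(\lambda x.M)x}=\inter M$ underpinning item (2).
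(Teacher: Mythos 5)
Your proof is correct, and for items (1) and (3) it is essentially the paper's argument: the paper likewise passes from an applicative context $\mathcal{E}$ to $\mathcal{E}[\lambda x.[\cdot]]$ for (1), and composes with $[\cdot]L$ for (3) (stated contrapositively in (1), but that is immaterial). For item (2) the paper also reduces to the invariance $\inter{(\lambda x.M)x}=\inter{M}$, but it gets there by forming the context $\mathcal{C}[[\cdot]x]$ explicitly and invoking its Corollary on invariance of $\inter{\cdot}$ under a full head-reduction step $(\lambda x.M)N\rightarrow M[N/x]$ --- a corollary whose proof rests on the head/head-spine equivalence theorem --- whereas you derive $\inter{(\lambda x.M)x}=\inter{M}$ directly from the inductive characterization of $\inter{\cdot}$ (Proposition~\ref{prop: the semantics is invariant under reduction}) and then chain through item (3), Proposition~\ref{prop: same small-step implies context equivalent} and transitivity. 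Your route is marginally more self-contained, since it does not lean on the head/head-spine equivalence; both are equally short, and the capture-of-$x$ bookkeeping you flag is handled (silently) in the same way in the paper.
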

In order to prove the Context Lemma more easily, we shall adopt a slightly more general notion of context, allowing multiple holes.  A \textit{generalized context} of $\Lambda_{\oplus}$ is a term containing holes $[\cdot ]$, generated by the following grammar:
\begin{equation}\label{eqn: generalized grammar context}
\mathcal{C}:= x  \ \vert  \ [\cdot] \ \vert \   \lambda x. \mathcal{C}  \ \vert \  \mathcal{C}\mathcal{C}  \ \vert \ \mathcal{C}\oplus \mathcal{C}  \enspace .
\end{equation}
We denote by $\genlam$ the set of all generalized contexts. If $\mathcal{C}\in\genlam$ and $M\in\plam$, then $\mathcal{C}[M]$ denotes the term obtained by substituting every hole in $\mathcal{C}$ with $M$ allowing the possible capture of free variables of $M$.

\begin{lem}\label{lem: fundamental step toward context lemma} Let $M ,N \in \elam$ be such that  $M \leq_{\mathrm{app}} N$. Then  $\sum \inter{\mathcal{C}[M]}\leq \sum \inter{\mathcal{C}[N]}$, for all $ \mathcal{C} \in \genlam$. 
\end{lem}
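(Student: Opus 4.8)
The statement to prove is Lemma~\ref{lem: fundamental step toward context lemma}: if $M, N \in \elam$ and $M \leq_{\mathrm{app}} N$, then $\sum \inter{\mathcal{C}[M]} \leq \sum \inter{\mathcal{C}[N]}$ for every generalized context $\mathcal{C} \in \genlam$. The natural route is an induction on the structure of $\mathcal{C}$. Since generalized contexts may have free variables and multiple holes, a direct structural induction on $\mathcal{C}$ alone will not close: when $\mathcal{C} = \lambda x.\mathcal{C}'$ the inductive hypothesis talks about $\mathcal{C}'[M]$, which is generally open, so I cannot directly invoke $\leq_{\mathrm{app}}$ on it. The fix is to strengthen the statement so that the induction carries enough information about the free variables; the usual trick (as in the classical Context Lemma arguments of Milner and Abramsky--Ong) is to close the context by substituting closed terms for its free variables, and to simultaneously substitute \emph{related} tuples for those variables. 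Concretely, I would prove: for all $\mathcal{C} \in \genlam$ with free variables among $x_1,\dots,x_k$, and for all closed $P_1,\dots,P_k,Q_1,\dots,Q_k \in \elam$ with $P_i \leq_{\mathrm{app}} Q_i$ and $M \leq_{\mathrm{app}} N$, one has $\sum \inter{\mathcal{C}[M][\vec P/\vec x]} \leq \sum \inter{\mathcal{C}[N][\vec Q/\vec x]}$.

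\textbf{The induction.} With that strengthened statement, the structural induction on $\mathcal{C}$ runs through the five clauses of the grammar~\eqref{eqn: generalized grammar context}. The base case $\mathcal{C} = [\cdot]$ gives $\mathcal{C}[M][\vec P/\vec x] = M$ and $\mathcal{C}[N][\vec Q/\vec x] = N$ (both closed), so the claim is exactly the hypothesis $M \leq_{\mathrm{app}} N$, and we use that $M \leq_{\mathrm{app}} N$ implies $\sum\inter{M} \leq \sum\inter{N}$ (take the trivial applicative context $[\cdot]$). The case $\mathcal{C} = x_i$ gives the terms $P_i$ and $Q_i$, handled identically using $P_i \leq_{\mathrm{app}} Q_i$. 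For $\mathcal{C} = \lambda x.\mathcal{C}'$, we have $\inter{\lambda x.(\cdots)} = \lambda x.\inter{\cdots}$ by Proposition~\ref{prop: the semantics is invariant under reduction}.\ref{enum: invariance abs}, hence the masses are unchanged, and we apply the IH to $\mathcal{C}'$ (whose free variables now include $x$, for which we substitute any fixed closed term, say $\mathbf I$, on both sides --- or simply enlarge $\vec x$). For $\mathcal{C} = \mathcal{C}_1 \oplus \mathcal{C}_2$, use $\sum\inter{A \oplus B} = \tfrac12\sum\inter{A} + \tfrac12\sum\inter{B}$ (Proposition~\ref{prop: the semantics is invariant under reduction}.\ref{enum: invariance sum}) and add the two instances of the IH. The only genuinely delicate clause is application $\mathcal{C} = \mathcal{C}_1 \mathcal{C}_2$.

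\textbf{The application case (main obstacle).} Here I want to compare $\inter{A_1 A_2}$ with $\inter{B_1 B_2}$ where $A_i = \mathcal{C}_i[M][\vec P/\vec x]$, $B_i = \mathcal{C}_i[N][\vec Q/\vec x]$, and by IH $\sum\inter{A_i} \leq \sum\inter{B_i}$. I would proceed in two steps. First, compare $\inter{A_1 A_2}$ with $\inter{B_1 A_2}$: by Lemma~\ref{lem: abstraction congruence for obs and app}.\ref{eqn: application congruence for obs} it suffices to know $A_1 \leq_{\mathrm{cxt}} B_1$ --- but wait, the IH only gives mass comparison, not $\leq_{\mathrm{cxt}}$, so I should instead reason directly. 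Using the decomposition of $\inter{A_1 A_2}$ from Proposition~\ref{prop: the semantics is invariant under reduction}.\ref{lem: invariance beta general case}: $\sum\inter{A_1 A_2} = \sum_{\lambda x.H \in \mathrm{supp}(\inter{A_1})} \inter{A_1}(\lambda x.H)\cdot\sum\inter{H[A_2/x]} + \sum_{H \in \mathrm{supp}(\inter{A_1})\cap\neu} \inter{A_1}(H)\cdot\sum\inter{HA_2}$. I bound each $\sum\inter{H[A_2/x]}$ and $\sum\inter{HA_2}$ by $1$ and get $\sum\inter{A_1 A_2} \leq \sum\inter{A_1} \leq \sum\inter{B_1}$ --- too lossy. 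The correct idea is rather to recognize that $A_1$ is a \emph{generalized context} applied to $M$ (and substituted), and to use the applicative preorder more cleverly: I regard $[\cdot]A_2$ as an applicative context (after closing $A_2$) and invoke $M \leq_{\mathrm{app}} N$ directly on the \emph{whole} composite, peeling the context $\mathcal{C}$ one constructor at a time into applicative form. In practice this means strengthening the induction further so that it talks about $\sum\inter{\mathcal{E}[\mathcal{C}[M][\vec P/\vec x]]}$ for an arbitrary applicative context $\mathcal{E}$ wrapping the result; then the application clause $\mathcal{C} = \mathcal{C}_1\mathcal{C}_2$ reduces to the clause for $\mathcal{C}_1$ with the enlarged applicative context $\mathcal{E}[[\cdot]\,\mathcal{C}_2[N][\vec Q/\vec x]]$ after first using the IH on $\mathcal{C}_2$ together with Lemma~\ref{lem: abstraction congruence for obs and app}.\ref{eqn: application congruence for obs} to replace $A_2$ by $B_2$. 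I expect the bookkeeping of which terms are closed, which substitutions commute with which, and ensuring the applicative-context wrapper stays in $\alam$ throughout, to be the main source of friction; the mathematical content of each individual step is routine given Propositions~\ref{prop: the semantics is invariant under reduction} and the congruence facts of Lemma~\ref{lem: abstraction congruence for obs and app}.
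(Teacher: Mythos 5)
Your structural induction on the context cannot close, and the place where it breaks is precisely the one you flag as ``the main obstacle''. Two distinct problems arise in the application case. First, the step ``replace $A_2$ by $B_2$'' puts the induction hypothesis for $\mathcal{C}_2$ to a use it cannot support: that hypothesis (even in your strengthened form) only yields $A_2 \leq_{\mathrm{app}} B_2$, i.e.\ information about \emph{applicative} contexts surrounding $A_2$, whereas the occurrence of $A_2$ in $\mathcal{E}[A_1 A_2]$ sits in \emph{argument} position, under the non-applicative context $\mathcal{E}[A_1[\cdot]]$. Deducing $\sum\inter{A_1A_2}\leq\sum\inter{A_1B_2}$ from $A_2\leq_{\mathrm{app}}B_2$ is an instance of the very implication the Context Lemma is meant to establish: unfolding $\inter{A_1A_2}$ via Proposition~\ref{prop: the semantics is invariant under reduction}.\ref{lem: invariance beta general case} reduces it to $\sum\inter{H[A_2/x]}\leq\sum\inter{H[B_2/x]}$ for hnfs $H$, which is the lemma itself for the generalized context obtained from $H$ by putting holes at the occurrences of $x$. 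Lemma~\ref{lem: abstraction congruence for obs and app}.\ref{eqn: application congruence for obs} does not help, since it is a congruence in \emph{function} position and assumes the full $\leq_{\mathrm{cxt}}$, which is the conclusion, not the hypothesis, of the Context Lemma. Second, even setting circularity aside, no structural measure on $\mathcal{C}$ decreases: when a head redex $(\lambda x.\mathcal{C}'[\,\cdot\,])\,\mathcal{C}_1[\,\cdot\,]$ fires, the argument context (which contains holes) is substituted into the body context (which also contains holes), producing a strictly larger generalized context. This also infects your abstraction case as soon as $\lambda x.\mathcal{C}'$ occurs in function position; and note that substituting $\mathbf I$ for the freshly bound $x$ there is not mass-preserving, e.g.\ $\sum\inter{x\mathbf\Omega}=1$ while $\sum\inter{\mathbf I\mathbf\Omega}=0$. (A further, minor, mismatch: $\mathcal{E}[[\cdot]B_2]$ is not an applicative context in the paper's sense when $\mathcal{E}$ has leading abstractions, since applicative contexts require the hole immediately under the $\lambda$'s.)

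The missing idea is a step-indexed induction tied to the operational semantics. The paper proves, by induction on the pair $(n,\vert\mathcal{C}\vert)$ with the number $n$ of head-reduction steps as the dominant component, that $\sum_{H\in\h}\mathcal{H}^n(\mathcal{C}[M],H)\leq\sum_{H\in\h}\mathcal{H}^\infty(\mathcal{C}[N],H)$, writing $\mathcal{C}=\mathcal{C}_0\mathcal{C}_1\ldots\mathcal{C}_k$ and casing on $\mathcal{C}_0$. Firing the head redex (or resolving a $\oplus$) costs one step, so $n$ decreases and the blow-up of the context is harmless; the hypothesis $M\leq_{\mathrm{app}}N$ is invoked exactly once, in the case $\mathcal{C}_0=[\cdot]$, where $\mathcal{C}[M]=M\,\mathcal{C}_1[M]\ldots\mathcal{C}_k[M]$ and the surrounding context is applicative. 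Taking the supremum over $n$ and invoking Theorem~\ref{thm: equivalence head e head spine nel paper} then gives the statement. Without some such well-founded measure on reductions, the application case of your induction cannot be completed.
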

\begin{proof}[Proof (sketch)] 
By  Theorem~\ref{thm: equivalence head e head spine nel paper}  it is enough to show that, for all $n \in \mathbb{N}$ and for all generalized contexts $\mathcal{C}\in \genlam$:
\begin{equation}\label{eqn: equantion to prove for context lemma}
 \sum_{H \in \h} \mathcal{H}^n (\mathcal{C}[M], H)\leq   \sum_{H \in \h}\mathcal{H}^\infty(\mathcal{C}[N], H) \enspace .
\end{equation}
The proof is by induction on  $(n, \vert \mathcal{C}\vert )$, where $\vert \mathcal{C}\vert$ is the size  of $\mathcal{C}\in \genlam$, i.e.~the number of nodes in the syntax tree of $\mathcal{C}$. Since  $\mathcal{C}$ must be of the form  $\mathcal{C}_0\mathcal{C}_1\ldots \mathcal{C}_k$, for some $k \in \mathbb{N}$, we  proceed by case analysis, looking at the structure of $\mathcal{C}_0$.
\end{proof}
\begin{lem}[Context Lemma]\label{lem: context lemma} Let $M, N \in \plam$. Then:
\begin{enumerate}[(1)]
\item \label{eqn: context lemma leq} $M \leq_{\mathrm{cxt}} N$ if and only if $M \leq_{\mathrm{app}} N$.
\item \label{eqn: context lemma equal} $M =_{\mathrm{cxt}} N$ if and only if $M =_{\mathrm{app}} N$.
\end{enumerate}
\end{lem}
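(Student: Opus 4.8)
The plan is to derive the Context Lemma from the three preliminary results already established: the congruence facts in Lemma~\ref{lem: abstraction congruence for obs and app}, the generalized-context bound in Lemma~\ref{lem: fundamental step toward context lemma}, and the operational-semantics invariance in Proposition~\ref{prop: the semantics is invariant under reduction}. Since $=_{\mathrm{cxt}}$ is the symmetric core of $\leq_{\mathrm{cxt}}$ (and likewise $=_{\mathrm{app}}$ of $\leq_{\mathrm{app}}$), part~\eqref{eqn: context lemma equal} follows immediately from part~\eqref{eqn: context lemma leq} by applying it to both $M \leq_{\mathrm{cxt}} N$ and $N \leq_{\mathrm{cxt}} M$. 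So the whole burden is on~\eqref{eqn: context lemma leq}.

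The direction $M \leq_{\mathrm{cxt}} N \Rightarrow M \leq_{\mathrm{app}} N$ is trivial, because $\alam \subseteq \clam$: if the mass inequality $\sum\inter{\mathcal{C}[M]} \leq \sum\inter{\mathcal{C}[N]}$ holds for every context it holds in particular for every applicative context. The real content is the converse, $M \leq_{\mathrm{app}} N \Rightarrow M \leq_{\mathrm{cxt}} N$. First I would reduce to the case of closed terms: given arbitrary $M, N \in \plam$ with free variables in some $\Gamma = \{x_1,\dots,x_n\}$, the open-term relation unfolds (by the definition of $\leq_{\mathrm{app}}$ on open terms, mirroring~\eqref{eqn: open term simil}) to $\lambda x_1\ldots x_n. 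M \leq_{\mathrm{app}} \lambda x_1\ldots x_n. N$ between closed terms; conversely, once we know closed $\leq_{\mathrm{app}}$ implies closed $\leq_{\mathrm{cxt}}$, we get $\lambda \vec x. M \leq_{\mathrm{cxt}} \lambda \vec x. N$, and then $n$ applications of Lemma~\ref{lem: abstraction congruence for obs and app}\eqref{eqn: abstraction congruence for app} strip off the abstractions to yield $M \leq_{\mathrm{cxt}} N$. So assume $M, N \in \elam$ with $M \leq_{\mathrm{app}} N$. Now take an arbitrary context $\mathcal{C} \in \clam$; I must show $\sum\inter{\mathcal{C}[M]} \leq \sum\inter{\mathcal{C}[N]}$. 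Since an ordinary one-hole context is in particular a generalized context (the grammar~\eqref{eqn: grammar context} is a sub-grammar of~\eqref{eqn: generalized grammar context}), and since with a single hole the substitutions $\mathcal{C}[M]$ and $\mathcal{C}[N]$ in the two readings coincide, Lemma~\ref{lem: fundamental step toward context lemma} applies directly and delivers exactly the desired inequality.

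The one subtlety — and the step I expect to be the main obstacle — is the free-variable bookkeeping in the reduction to closed terms: the context $\mathcal{C}$ may bind the free variables of $M$ and $N$, and $M, N$ may have free variables outside $\Gamma$ that also get captured, so one has to be a little careful that closing under $\lambda x_1\ldots x_n$ does not interfere with the capture behaviour of $\mathcal{C}$. The clean way is to observe that for any $\mathcal{C} \in \clam$ one can build a generalized context $\mathcal{C}'$ that first $\lambda$-abstracts the closure variables and then re-applies them inside, so that $\mathcal{C}[M] = \mathcal{C}'[\lambda \vec x. M]$ up to the operational equalities in Proposition~\ref{prop: the semantics is invariant under reduction}\eqref{enum: invariance beta} — i.e.\ $\mathcal{C}'$ feeds the bound variables back as arguments via $\beta$-redexes $(\lambda \vec x.\,[\cdot])\,x_1\ldots x_n$, which reduce back to the original substitution instance without changing $\inter{\cdot}$. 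Once this translation is in place, Lemma~\ref{lem: fundamental step toward context lemma} applied to $\mathcal{C}'$ and the closed terms $\lambda \vec x. M \leq_{\mathrm{app}} \lambda \vec x. N$ closes the argument. Everything else is routine: no new induction is needed, as the heavy lifting was done in Lemma~\ref{lem: fundamental step toward context lemma}.
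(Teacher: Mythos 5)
Your proof is correct and follows essentially the same route as the paper: part (2) from part (1), the trivial direction from $\alam\subseteq\clam$, the closed case from Lemma~\ref{lem: fundamental step toward context lemma} (one-hole contexts being generalized contexts), and the open case via Lemma~\ref{lem: abstraction congruence for obs and app}.\ref{eqn: abstraction congruence for obs} and \ref{lem: abstraction congruence for obs and app}.\ref{eqn: abstraction congruence for app}. Your final paragraph on capture is redundant, since that bookkeeping is exactly what Lemma~\ref{lem: abstraction congruence for obs and app}.\ref{eqn: abstraction congruence for app} already packages (note also that $\leq_{\mathrm{app}}$ on open terms is defined by restricting the context quantifier, not by closure as in~\eqref{eqn: open term simil}, which is why the closure step needs Lemma~\ref{lem: abstraction congruence for obs and app}.\ref{eqn: abstraction congruence for obs} rather than being definitional).
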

\begin{proof}
Point~\ref{eqn: context lemma equal}  follows directly from point~\ref{eqn: context lemma leq}. Lemma~\ref{lem: fundamental step toward context lemma} gives us point~\ref{eqn: context lemma leq} for $M, N \in \elam$. We extend it to open terms by applying Lemma~\ref{lem: abstraction congruence for obs and app}.\ref{eqn: abstraction congruence for obs} and Lemma~\ref{lem: abstraction congruence for obs and app}.\ref{eqn: abstraction congruence for app}. 
\end{proof}	

\subsection{The Soundness Theorem} 
We start with some preliminary lemmas.
  \begin{lem} \label{lem: context lemma 2} Let $H, H'\in \h^{\lbrace x \rbrace}$. Then, the following are equivalent statements:
 \begin{enumerate}[(1)]
 \item \label{enum: context lemma 2 1}$ \lambda x. H \precsim \lambda x. H' ,$
 \item \label{enum: context lemma 2 2}$\nu x. H \precsim \nu x. H',$
 \item \label{enum: context lemma 2 3}$\forall P \in \elam, \ H[P/x]\precsim H'[P/x]\enspace .$
 \end{enumerate}
 \end{lem}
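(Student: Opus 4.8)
The plan is to show the cycle $(1)\Rightarrow(3)\Rightarrow(2)\Rightarrow(1)$, relying on the structure of the $\plam$-Markov chain (Definition~\ref{def:plam_markov}) and the semantic invariance facts from Proposition~\ref{prop: the semantics is invariant under reduction}. The key observation is that for a closed hnf $\lambda x.H$, the two representations $\lambda x.H$ (as a term) and $\nu x.H$ (as a distinguished hnf) are linked in the Markov chain by $\mathcal{P}_\oplus(\lambda x.H, \tau, \nu x.H) = \inter{\lambda x.H}(\lambda x.H) = 1$, since $\inter{H}=H$; and from the distinguished hnf, the only available transitions are the application moves $\mathcal{P}_\oplus(\nu x.H, P, H[P/x]) = 1$ for $P\in\elam$. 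So a simulation relating $\lambda x.H$ and $\lambda x.H'$ is forced, after the (deterministic) $\tau$-step, to relate $\nu x.H$ and $\nu x.H'$, and then after each application move to relate $H[P/x]$ and $H'[P/x]$.

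For $(1)\Rightarrow(3)$: assume $\lambda x.H \precsim \lambda x.H'$ via the simulation $\precsim$ itself (Proposition~\ref{prop: properties dal lago bisimilarity}). Since $\mathcal{P}_\oplus(\lambda x.H,\tau,\{\nu x.H\}) = 1$, the simulation condition \eqref{eq:sim_gen} forces $\mathcal{P}_\oplus(\lambda x.H', \tau, \precsim(\{\nu x.H\})) = 1$; but the only distinguished hnf reachable from $\lambda x.H'$ by $\tau$ is $\nu x.H'$ with mass $1$, so $\nu x.H' \in \precsim(\{\nu x.H\})$, i.e.\ $\nu x.H \precsim \nu x.H'$ — which already gives $(1)\Rightarrow(2)$ along the way. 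Now for any $P\in\elam$, apply \eqref{eq:sim_gen} with label $P$ and $X = \{H[P/x]\}$: since $\mathcal{P}_\oplus(\nu x.H, P, \{H[P/x]\}) = 1$ and the unique $P$-successor of $\nu x.H'$ is $H'[P/x]$, we get $H[P/x] \precsim H'[P/x]$, which is $(3)$ (note $H[P/x]$ and $H'[P/x]$ are closed, so this is literally PAS on closed terms).

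For $(3)\Rightarrow(2)$ and $(2)\Rightarrow(1)$: for $(2)\Rightarrow(1)$, take the relation $\mathcal{R} = {\precsim} \cup \{(\lambda x.H, \lambda x.H')\} \cup \{(\nu x.H, \nu x.H')\}$, or more carefully the preorder it generates, and check the simulation clause \eqref{eq:sim_gen} at the new pair $(\lambda x.H, \lambda x.H')$: its only $\tau$-successor is $\nu x.H$, and $\mathcal{R}(\{\nu x.H\})$ contains $\nu x.H'$ by hypothesis $(2)$, so the inequality holds; there are no other transitions out of $\lambda x.H$; and on the remaining pairs $\mathcal{R}$ restricts to $\precsim$, which is already a simulation. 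For $(3)\Rightarrow(2)$, build $\mathcal{R} = {\precsim} \cup \{(\nu x.H,\nu x.H')\}$ (again take the preorder closure) and check \eqref{eq:sim_gen} at $(\nu x.H,\nu x.H')$: the only transitions are the application moves, and for each $P\in\elam$ the unique $P$-successor pair is $(H[P/x], H'[P/x])$, which lies in $\precsim \subseteq \mathcal{R}$ by hypothesis $(3)$; everywhere else $\mathcal{R}$ is $\precsim$. One must verify that augmenting $\precsim$ with a single extra pair and taking the reflexive–transitive (preorder) closure still yields a simulation — this is a routine check because the new pairs have ``deterministic'' outgoing transitions landing inside $\precsim$.

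The main subtlety — not really an obstacle but the point requiring care — is the bookkeeping around closures and the two representations of an hnf: one must be careful that adding pairs to $\precsim$ and closing under transitivity does not break the simulation inequality \eqref{eq:sim_gen} for $\mathcal{P}_\oplus(t, l, \mathcal{R}(X))$, since $\mathcal{R}(X)$ grows. This is handled by noting that all the newly added transitions are deterministic (probability $1$) and that their targets already sit in $\precsim$, so enlarging the image set only helps. A secondary point is remembering that $H[P/x]$ is closed (as $H \in \h^{\{x\}}$ and $P\in\elam$), so clause $(3)$ is genuinely a statement about PAS on $\elam$ and no further open-term closure \eqref{eqn: open term simil} is needed.
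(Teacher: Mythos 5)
Your proof is correct and follows essentially the same route as the paper: the forward implications by applying the simulation inequality \eqref{eq:sim_gen} to the singleton sets $\{\nu x.H\}$ and $\{H[P/x]\}$ (using that these transitions are deterministic and that $\inter{\lambda x.H}=\lambda x.H$), and the backward implications by exhibiting an enlarged relation containing $\precsim$ and checking it is a simulation. The only cosmetic difference is that the paper adds \emph{all} pairs satisfying the relevant condition at once (so the relation is a preorder by construction), whereas you add a single pair and take the preorder closure; your observation that the closure still satisfies \eqref{eq:sim_gen} (since $\mathcal{R}(\mathcal{R}(X))\subseteq\mathcal{R}^+(X)$ and $X\subseteq\mathcal{R}^*(X)$) is indeed routine and makes the two variants equivalent.
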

 \begin{proof}[Proof (sketch)]  The implication \ref{enum: context lemma 2 1} $\Rightarrow$ \ref{enum: context lemma 2 2} $\Rightarrow$ \ref{enum: context lemma 2 3} is  by definition and by Proposition~\ref{prop: properties dal lago bisimilarity}. To prove \ref{enum: context lemma 2 3}  $\Rightarrow$ \ref{enum: context lemma 2 2}, it suffices to show that  the  relation $\mathcal{R}\triangleq \lbrace (\nu x. H, \nu x. H' )\in  \dhe \times \dhe \ \vert \ \forall P \in \elam, \, H[P/x]\precsim H'[P/x] \rbrace  \cup  {\precsim} $ is a probabilistic applicative simulation. Similarly, \ref{enum: context lemma 2 2} $\Rightarrow$  \ref{enum: context lemma 2 1} holds by showing that the   relation
 $ \mathcal{R}\triangleq \lbrace (\lambda  x. H, \lambda x. H' )\in \h \times 	\h \ \vert \ \nu x.H \precsim \nu x. H' \rbrace  \cup  {\precsim}$ is a probabilistic applicative simulation. 
  \end{proof}
 
Let us recall that, given $X\subseteq \h$, ${\precsim} (X)$ denotes the image of $X$ under $\precsim$. Moreover,  given $X\subseteq \h^{\lbrace x \rbrace}$, $\nu x. {\precsim} (X)$ denotes the set of distinguished hnfs $\lbrace \nu x. H \ \vert \ H \in {\precsim} (X) \rbrace$, while $\lambda  x. {\precsim} (X)$ denotes the set of terms $\lbrace \lambda x. M \ \vert \   M \in  {\precsim} (X) \rbrace$.

 \begin{lem}\label{lem: commutation abstraction precsim}  
 Let $X \subseteq \h^{\lbrace x \rbrace}$. We have:
 \begin{align*}
{\precsim}(\lambda x.X)\cap\he&=\lambda x. {\precsim} (X)\cap\he,\\
{\precsim}(\nu x.X)&= \nu x. {\precsim} (X)\enspace.
 \end{align*}
 \end{lem}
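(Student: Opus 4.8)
The statement to prove is Lemma~\ref{lem: commutation abstraction precsim}, asserting two set equalities relating the image under probabilistic applicative similarity $\precsim$ to abstraction/$\nu$-binding operations:
\begin{align*}
{\precsim}(\lambda x.X)\cap\he&=\lambda x. {\precsim} (X)\cap\he,\\
{\precsim}(\nu x.X)&= \nu x. {\precsim} (X)\enspace,
\end{align*}
for $X \subseteq \h^{\{x\}}$.

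\textbf{The plan.} I would prove each equality by double inclusion, and in each case the crux is a single characterization that follows from Lemma~\ref{lem: context lemma 2}: for $H \in \h^{\{x\}}$, membership of $\lambda x.H'$ (resp.\ $\nu x.H'$) in $\precsim(\lambda x.X)$ (resp.\ $\precsim(\nu x.X)$) is equivalent to the existence of some $H \in X$ with $\lambda x.H \precsim \lambda x.H'$, which by Lemma~\ref{lem: context lemma 2} is equivalent to $\nu x.H \precsim \nu x.H'$ and to $\forall P\in\elam,\ H[P/x] \precsim H'[P/x]$. So the two equalities are really ``the same fact viewed through the two representations of a closed hnf.''

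\textbf{Second equality first.} For ${\precsim}(\nu x.X) = \nu x.{\precsim}(X)$: an element of $\precsim(\nu x.X)$ is, by Proposition~\ref{prop: properties dal lago bisimilarity} (which says $\precsim$ is itself a probabilistic simulation, hence acts only between states of the same kind in the $\plam$-Markov chain), necessarily a distinguished hnf $\nu x.H'$ (note all hnfs in $\nu x.X$ have outermost binder $x$, and since the matrix entry $\mathcal P_\oplus(\nu x.H,\tau,-)$ is $0$ while $\mathcal P_\oplus(\nu x.H, M, H[M/x])=1$, a simulation relating $\nu x.H$ to $t$ forces $t = \nu y.H''$ with the same bound-variable role — up to renaming we take it $\nu x.H''$). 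So we must show $\nu x.H' \in \precsim(\nu x.X) \iff \exists H\in X,\ H' \in \precsim(H)$, where on the right I use $\nu x.{\precsim}(X) = \{\nu x.H' \mid H' \in \precsim(X)\}$ by the definition recalled just before the lemma. The forward direction: $\nu x.H' \in \precsim(\nu x.X)$ means $\nu x.H \precsim \nu x.H'$ for some $H\in X$; by Lemma~\ref{lem: context lemma 2}(\ref{enum: context lemma 2 2})$\Leftrightarrow$(\ref{enum: context lemma 2 3}) this gives $\forall P,\ H[P/x]\precsim H'[P/x]$, and then... here is the subtle point: I need $H \precsim H'$ as \emph{open} terms, which by the definition \eqref{eqn: open term simil} means $\lambda x.H \precsim \lambda x.H'$; and Lemma~\ref{lem: context lemma 2}(\ref{enum: context lemma 2 1})$\Leftrightarrow$(\ref{enum: context lemma 2 2}) closes this loop. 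So $H' \in \precsim(H) \subseteq \precsim(X)$, giving $\nu x.H' \in \nu x.\precsim(X)$. The reverse inclusion runs the same chain of equivalences backwards.

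\textbf{First equality.} For ${\precsim}(\lambda x.X)\cap\he = \lambda x.{\precsim}(X)\cap\he$: again by Proposition~\ref{prop: properties dal lago bisimilarity}, $\precsim$ relates closed terms only to closed terms, and if $P \in \precsim(\lambda x.X)$ with $P$ a hnf, then since $\lambda x.H \Downarrow \lambda x.\inter{H} = \lambda x.H$ puts all its $\tau$-mass on the distinguished hnf $\nu x.H$, the simulation condition forces $\inter{P}$ to have mass on some $\nu x.H'$, hence (as $P\in\he$) $P = \lambda x.H'$ up to $\alpha$-equivalence. Then $\lambda x.H \precsim \lambda x.H'$ for some $H\in X$ is, via the same Lemma~\ref{lem: context lemma 2} equivalences and the open-term definition \eqref{eqn: open term simil}, exactly $H' \in \precsim(H) \subseteq \precsim(X)$, so $\lambda x.H' \in \lambda x.\precsim(X) \cap \he$; and conversely. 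The intersection with $\he$ on the left-hand side is what lets us restrict attention to hnfs $\lambda x.H'$ rather than arbitrary closed terms in the image; on the right it just records that $\lambda x.H'$ is indeed a closed hnf when $H'\in\h^{\{x\}}$.

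\textbf{Main obstacle.} The genuinely delicate point — and the one I would write out carefully — is the interplay between the three presentations of ``similar hnfs'': as abstractions $\lambda x.H \precsim \lambda x.H'$ (the closed-term view), as distinguished hnfs $\nu x.H \precsim \nu x.H'$ (the state-in-the-Markov-chain view), and as the substitution-instance condition $\forall P,\ H[P/x]\precsim H'[P/x]$ (the applicative view). Lemma~\ref{lem: context lemma 2} gives precisely the cycle of equivalences connecting all three, so the real work is just bookkeeping: checking that the image operations $\precsim(-)$, $\lambda x.(-)$, $\nu x.(-)$ commute through this cycle, and being careful that $\precsim$ on open terms is \emph{defined} via closure by abstraction \eqref{eqn: open term simil}, not directly. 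A secondary routine check is that $\precsim$, being a simulation of the $\plam$-Markov chain (Proposition~\ref{prop: properties dal lago bisimilarity}), cannot relate a distinguished hnf to a plain term or vice versa, which is what guarantees the images land in the right sets; this is immediate from the shape of $\mathcal P_\oplus$ in Definition~\ref{def:plam_markov} (the $\tau$-action is enabled only at plain terms, every other action only at distinguished hnfs).
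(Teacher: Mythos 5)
Your proposal is correct and follows essentially the same route as the paper: both arguments reduce each membership statement to the cycle of equivalences in Lemma~\ref{lem: context lemma 2} together with the open-term definition \eqref{eqn: open term simil}, and both isolate as the one non-trivial side condition that ${\precsim}(\nu x.X)$ contains only distinguished hnfs (which the paper, like you, derives from $\precsim$ being a simulation and the shape of $\mathcal{P}_\oplus$). The only cosmetic difference is that for the first equality the paper simply observes that every element of $\he$ is already of the form $\lambda x.H'$, so your extra simulation argument for the shape of $P$ is unnecessary but harmless.
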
 
 \begin{lem}\label{lem: context lemmna 1}
 Let $M, N \in \elam$. For all $X \subseteq \he$, $\inter{M}(X) \leq \inter{N}({\precsim} (X))$ if and only if  $M \precsim N$.
 \end{lem}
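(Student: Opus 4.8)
\textbf{Proof proposal for Lemma~\ref{lem: context lemmna 1}.}
The plan is to prove the two implications separately, since this is a biconditional characterising $M \precsim N$ in terms of a single inequality between distributions. For the direction from $M \precsim N$ to $\inter{M}(X) \leq \inter{N}({\precsim}(X))$ for all $X \subseteq \he$, I would simply unfold the definitions at the level of the $\plam$-Markov chain. By Definition~\ref{def:plam_markov}.\ref{enum: plam markov 1}, for a closed term $M$ we have $\mathcal{P}_\oplus(M, \tau, \nu x.H) = \inter{M}(\lambda x. H)$, and hence $\mathcal{P}_\oplus(M,\tau,\ww{X}) = \inter{M}(X)$ for $X \subseteq \he$ (using the notation $\ww{X} = \{\ww H \mid H \in X\}$). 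Since $M \precsim N$ and $\precsim$ is itself a probabilistic simulation (Proposition~\ref{prop: properties dal lago bisimilarity}), the simulation condition~\eqref{eq:sim_gen} with $l = \tau$ and the set $\ww{X}$ gives $\mathcal{P}_\oplus(M, \tau, \ww{X}) \leq \mathcal{P}_\oplus(N, \tau, {\precsim}(\ww{X}))$. It then remains to observe, via Lemma~\ref{lem: commutation abstraction precsim} (the clause ${\precsim}(\nu x. X) = \nu x. {\precsim}(X)$, suitably packaged), that $\mathcal{P}_\oplus(N,\tau,{\precsim}(\ww X)) = \inter{N}({\precsim}(X))$, which yields the claimed inequality.

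For the converse direction, I would exhibit an explicit probabilistic applicative simulation containing the pair $(M,N)$. The natural candidate is
\[
\mathcal{R} \triangleq \bigl\{ (P, Q) \in \elam \times \elam \;\bigm|\; \forall X \subseteq \he,\ \inter{P}(X) \leq \inter{Q}({\precsim}(X)) \bigr\} \;\cup\; {\precsim},
\]
and I would check it is a preorder and satisfies the simulation condition~\eqref{eq:sim_gen}. Reflexivity is immediate (take $X$ and note $X \subseteq {\precsim}(X)$ since $\precsim$ is reflexive); transitivity follows by composing the defining inequalities and using that $\precsim$ is transitive together with monotonicity of $\inter{\cdot}(\cdot)$ in its argument. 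For the simulation clause: on pairs already in $\precsim$ there is nothing to do; on a fresh pair $(P,Q) \in \mathcal{R}$, both are closed terms, so the only nonzero transitions are the $\tau$-labelled ones leading into $\dhe$, and the required inequality $\mathcal{P}_\oplus(P,\tau,Y) \leq \mathcal{P}_\oplus(Q,\tau,\mathcal{R}(Y))$ for $Y \subseteq \elam \uplus \dhe$ reduces — after intersecting with $\dhe$ and writing $Y \cap \dhe = \ww{X}$ — precisely to the hypothesis $\inter{P}(X) \leq \inter{Q}({\precsim}(X))$, once one checks that $\mathcal{R}(\ww X) \supseteq {\precsim}(\ww X) = \nu x. {\precsim}(X)$ by Lemma~\ref{lem: commutation abstraction precsim} and the fact that $\precsim \subseteq \mathcal{R}$. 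Since then $\mathcal{R}$ is a PAS containing $(M,N)$, we get $M \precsim N$ by definition of $\precsim$ as the union of all probabilistic simulations.

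The main obstacle I anticipate is the bookkeeping around the two representations of head normal forms — the term $\lambda x. H$ versus the distinguished form $\nu x. H$ — and making the translation between "$\inter{\cdot}$ applied to a set of hnfs" and "$\mathcal{P}_\oplus(\cdot, \tau, \cdot)$ applied to the corresponding set of distinguished hnfs" fully rigorous, in particular matching the image ${\precsim}(\ww X)$ in the Markov chain with the set $\nu x. {\precsim}(X)$ supplied by Lemma~\ref{lem: commutation abstraction precsim}. A subtle point is that $\precsim$ relates states of the Markov chain, which includes both closed terms and distinguished hnfs, so one must be careful that ${\precsim}(\ww X)$ — computed inside the full state space — indeed lands inside $\dhe$ and coincides with the intended set; this is exactly what the second equation of Lemma~\ref{lem: commutation abstraction precsim} secures. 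Once that correspondence is pinned down, both directions are short.
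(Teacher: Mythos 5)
Your proposal is correct and follows essentially the same route as the paper: the easy direction is the simulation property of $\precsim$ itself (Proposition~\ref{prop: properties dal lago bisimilarity}) unfolded on $\tau$-transitions, and the converse exhibits exactly the same candidate relation $\mathcal{R}$, checks it is a preorder (transitivity via ${\precsim}({\precsim}(X)) \subseteq {\precsim}(X)$), and verifies the simulation clause by translating between $\inter{\cdot}$ on sets of hnfs and $\mathcal{P}_\oplus(\cdot,\tau,\cdot)$ on distinguished hnfs using Lemma~\ref{lem: commutation abstraction precsim}. The bookkeeping issue you flag about $\lambda x.H$ versus $\nu x.H$ is precisely the point the paper's proof spends its displayed chain of (in)equalities on, and your resolution via the second equation of Lemma~\ref{lem: commutation abstraction precsim} is the one used there.
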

The forthcoming Lemma~\ref{lem: context lemmna 3} describes the applicative behaviour of $\precsim$ and it  requires an auxiliary result about the so-called \enquote{probabilistic assignments}. Probabilistic   assignments were first  introduced in this setting by~\cite{dal2014coinductive} to prove the soundness of PAS in the lazy cbn. 
\begin{defn}[Probabilistic assignments] \label{defn: probabilistic assignment} A \textit{probabilistic assignment} is defined as a pair $(\lbrace p_i \rbrace_{1 \leq i \leq n}, \lbrace r_I \rbrace_{I \subseteq \lbrace 1, \ldots, n \rbrace})$, with all $p_i$, $r_I$ in $[0,1]$, such that, for all $I \subseteq \lbrace 1, \ldots, n \rbrace$:
\begin{equation}\label{eqn: condition probabilistic assignment}
\sum_{i \in I}p_i \leq \sum_{\substack{J\subseteq \lbrace 1, \ldots, n \rbrace  \\ \text{s.t. }J \cap I \not = \emptyset}} r_J\enspace .  
\end{equation}
\end{defn} 
\begin{lem}[\cite{dal2014coinductive}] \label{lem:  probabilistic assignment entanglement} Let $(\lbrace p_i \rbrace_{1 \leq i \leq n}, \lbrace r_I \rbrace_{I \subseteq \lbrace 1, \ldots, n \rbrace})$  be a probabilistic assignment. Then for every $I \subseteq \lbrace 1, \ldots, n \rbrace  $ and for every $k \in I$ there is $s_{k, I}\in[0, 1]$ such that:
\begin{enumerate}[(1)]
\item \label{enum: condition 2 entanglement} $\forall j \in \lbrace 1, \ldots, n \rbrace$,  $p_j \leq \underset{\substack{J\subseteq \lbrace 1, \ldots, n \rbrace   \\ \text{s.t. }j \in J}}{\sum}s_{j, J}\cdot r_{J}$.
\item \label{enum: condition 1 entanglement}  $\forall J\subseteq \lbrace 1, \ldots, n \rbrace $, $\underset{\substack{j \in \lbrace 1, \ldots, n \rbrace \\\text{s.t. }j \in J}}{\sum}s_{j,J}\leq 1$.
\end{enumerate}
\end{lem}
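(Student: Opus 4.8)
The statement is a combinatorial/linear-programming fact: given the "upper bound" inequalities~\eqref{eqn: condition probabilistic assignment} relating the masses $p_i$ to the sums of the $r_J$, one wants to produce a single system of "distribution weights" $s_{k,I}$ that simultaneously realizes every $p_j$ from below (condition~\ref{enum: condition 2 entanglement}) while being a subprobability allocation of each $r_J$ among its members (condition~\ref{enum: condition 1 entanglement}). The natural tool is the max-flow/min-cut theorem (equivalently, Hall's marriage theorem in its deficiency/weighted form). I would build a bipartite flow network with a source $\mathsf{s}$, a sink $\mathsf{t}$, one node $\ell_i$ on the left for each index $i\in\{1,\dots,n\}$ with capacity $p_i$ on the edge $\mathsf{s}\to\ell_i$, one node $\rho_J$ on the right for each nonempty $J\subseteq\{1,\dots,n\}$ with capacity $r_J$ on the edge $\rho_J\to\mathsf{t}$, and an edge $\ell_i\to\rho_J$ of infinite (or capacity $1$) whenever $i\in J$. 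The desired $s_{i,J}$ will be the flow along the edge $\ell_i\to\rho_J$, rescaled if necessary; condition~\ref{enum: condition 2 entanglement} says the flow saturates every source edge, and condition~\ref{enum: condition 1 entanglement} says the flow into each $\rho_J$ from the left (which, after an appropriate normalization, becomes $\sum_{i\in J}s_{i,J}$ against $r_J$) respects the allocation.

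The key step is to show the network admits a flow of value $\sum_{i} p_i$, i.e.\ that it saturates all the source edges. By max-flow/min-cut it suffices to show every $\mathsf{s}$--$\mathsf{t}$ cut has capacity at least $\sum_i p_i$. A finite-capacity cut cannot include any $\ell_i\to\rho_J$ edge, so it is determined by a choice of a set $A\subseteq\{1,\dots,n\}$ of left-nodes whose source edges are cut and a set $\mathcal B$ of right-nodes whose sink edges are cut; finiteness forces $\mathcal B$ to contain $\rho_J$ for every $J$ with $J\cap(\{1,\dots,n\}\setminus A)\neq\emptyset$. Hence the cut capacity is $\sum_{i\in A}p_i+\sum_{J:\,J\cap(A^c)\neq\emptyset}r_J$, and writing $I=A^c$ the hypothesis~\eqref{eqn: condition probabilistic assignment} gives exactly $\sum_{i\in I}p_i\le\sum_{J:\,J\cap I\neq\emptyset}r_J$, so the cut capacity is $\ge\sum_{i\in A}p_i+\sum_{i\in I}p_i=\sum_i p_i$. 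Thus the minimum cut is $\sum_i p_i$, and the max-flow matches it.

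From a maximal integral-free (real-valued) flow $f$ with $f(\mathsf s\to\ell_i)=p_i$ for all $i$, set $s_{i,J}\triangleq f(\ell_i\to\rho_J)/r_J$ when $r_J>0$ (and $s_{i,J}\triangleq 0$, say, when $r_J=0$, noting that then no flow passes through $\rho_J$). Flow conservation at $\ell_i$ gives $p_i=\sum_{J\ni i}f(\ell_i\to\rho_J)=\sum_{J\ni i}s_{i,J}r_J$, which is~\ref{enum: condition 2 entanglement} (indeed with equality). Flow conservation at $\rho_J$ and the sink-edge capacity give $\sum_{i\in J}f(\ell_i\to\rho_J)=f(\rho_J\to\mathsf t)\le r_J$, i.e.\ $\sum_{i\in J}s_{i,J}\le 1$, which is~\ref{enum: condition 1 entanglement}; and each $s_{i,J}\in[0,1]$ since $0\le f(\ell_i\to\rho_J)\le f(\rho_J\to\mathsf t)\le r_J$. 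The main obstacle — really the only nontrivial point — is the cut analysis above: one must check that every finite cut is of the stated "downward-closed-in-$J$" form, so that the hypothesis applies verbatim; the rest is bookkeeping with flow conservation. (Alternatively, one could bypass flows entirely and run an explicit greedy/iterative redistribution argument, but the min-cut computation is the cleanest way to see why condition~\eqref{eqn: condition probabilistic assignment} is exactly what is needed.)
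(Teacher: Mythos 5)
Your max-flow/min-cut argument is correct: the cut analysis (finite cuts cannot sever the infinite-capacity middle edges, forcing exactly the form to which hypothesis~\eqref{eqn: condition probabilistic assignment} applies with $I=A^c$) and the extraction of $s_{i,J}=f(\ell_i\to\rho_J)/r_J$ both check out. Note that the paper itself gives no proof of this lemma --- it imports it from \cite{dal2014coinductive} --- and the argument given there is precisely this reduction to the Max-Flow Min-Cut Theorem, so your proposal reconstructs the intended proof.
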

Following essentially the same ideas of~\cite{dal2014coinductive}, we shall use the above property to decompose and recombine distributions in the proof of the following lemma.
\begin{lem}[Key Lemma]\label{lem: context lemmna 3} Let $M, N \in \elam$. If $M \precsim N$ then, for all $P \in \elam$, $MP \precsim NP$.
\end{lem}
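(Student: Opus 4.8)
The statement to prove is the Key Lemma: if $M \precsim N$ then $MP \precsim NP$ for all $P \in \elam$. The plan is to exhibit an explicit relation $\mathcal{R}$ containing all pairs of the form $(MP, NP)$ with $M\precsim N$ (and enough extra pairs to make it a preorder and to close it under the dynamics), and to verify that $\mathcal{R}$ is a probabilistic applicative simulation; by definition of $\precsim$ as the largest such simulation, $\mathcal{R}\subseteq {\precsim}$ follows, which gives the claim. A natural candidate is
\[
\mathcal{R} \triangleq \{(MP, NP) \in \elam\times\elam \mid M \precsim N,\ P \in \elam\} \cup {\precsim},
\]
possibly closed under reflexivity and transitivity so that it is a preorder (note $\precsim$ is already a preorder, so one only needs to absorb the new pairs). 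The transition labels to check are $\tau$ on term-states and applications on distinguished-hnf states; since the states $MP$, $NP$ are plain terms, the only nontrivial obligation is the $\tau$-transition, i.e. comparing the head distributions $\inter{MP}$ and $\inter{NP}$, and by Lemma~\ref{lem: context lemmna 1} this reduces to showing $\inter{MP}(X) \leq \inter{NP}({\precsim}(X))$ for every $X \subseteq \he$ (after quotienting appropriately; care is needed because $\mathcal{R}$ strictly contains $\precsim$, so one actually has to track $\mathcal{R}(X)$ rather than ${\precsim}(X)$, but the extra pairs only identify terms already related under $\precsim$ up to the head-distribution semantics, so this should go through).

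The core computation unfolds $\inter{MP}$ and $\inter{NP}$ via Proposition~\ref{prop: the semantics is invariant under reduction}(\ref{lem: invariance beta general case}): $\inter{MP}$ is a convex combination, weighted by $\inter{M}$, of the distributions $\inter{H[P/x]}$ over abstraction hnfs $\lambda x.H \in \mathrm{supp}(\inter M)$ plus the $HP$ contributions from neutral hnfs $H\in \mathrm{supp}(\inter M)\cap\neu$, and similarly for $\inter{NP}$ with $\inter N$. From $M\precsim N$ and Lemma~\ref{lem: context lemmna 1} we know $\inter{M}(Y) \leq \inter{N}({\precsim}(Y))$ for all $Y\subseteq\he$; from Lemma~\ref{lem: context lemma 2} we know that $\lambda x.H \precsim \lambda x.H'$ (equivalently $\nu x.H \precsim \nu x.H'$) entails $H[P/x] \precsim H'[P/x]$; and neutral hnfs are only related by $\precsim$ to neutral hnfs with head-compatible structure, so $HP \precsim H'P$ follows from $H\precsim H'$ via the application clause. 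The difficulty is that $\inter M$ and $\inter N$ do not relate abstraction-hnfs to abstraction-hnfs one-to-one: a single $\lambda x.H$ in the support of $\inter M$ may be $\precsim$-below a whole set of $\lambda x.H'$ in the support of $\inter N$, with the probability mass split among them in a way constrained only by the inequalities $\inter M(Y)\leq\inter N({\precsim}(Y))$. This is exactly the combinatorial "entanglement" situation abstracted by Definition~\ref{defn: probabilistic assignment}: setting $p_i$ to be the masses $\inter M$ assigns to the classes of $\precsim$-equivalent abstraction-hnfs in its support and $r_I$ to be the masses $\inter N$ assigns, the hypothesis $M\precsim N$ (in the form given by Lemma~\ref{lem: context lemmna 1}) yields precisely inequality~\eqref{eqn: condition probabilistic assignment}, so Lemma~\ref{lem:  probabilistic assignment entanglement} supplies coefficients $s_{k,I}$ that let one rewrite the mass of $\inter M$ on each abstraction-hnf as a subconvex combination of the masses $\inter N$ puts on $\precsim$-larger ones.

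With the $s_{k,I}$ in hand, the final step is bookkeeping: for a fixed $X\subseteq\he$, bound $\inter{MP}(X)$ by replacing, summand by summand, each $\inter M(\lambda x.H)\cdot\inter{H[P/x]}(X)$ using $\inter{H[P/x]}(X) \leq \inter{H'[P/x]}({\precsim}(X))$ (from $H[P/x]\precsim H'[P/x]$ and Lemma~\ref{lem: context lemmna 1} again) together with the reweighting by the $s_{k,I}$, and likewise for the neutral contributions $\inter M(H)\cdot HP$; then resum to recognize the result as $\leq \inter{NP}({\precsim}(X))$, whence $\leq \inter{NP}(\mathcal{R}(X))$ since ${\precsim}\subseteq\mathcal{R}$. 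Condition~(\ref{enum: condition 1 entanglement}) of Lemma~\ref{lem:  probabilistic assignment entanglement} is what guarantees no mass of $\inter N$ is over-counted during the resummation. I expect the main obstacle to be precisely this resummation/reindexing: matching the entanglement coefficients against the partition of $\mathrm{supp}(\inter N)$ into $\precsim$-classes, handling the neutral-hnf summands uniformly with the abstraction ones, and checking that the extra pairs added to make $\mathcal{R}$ a preorder do not break the estimate (i.e. that quotienting by $\mathcal{R}$ rather than by $\precsim$ still lets the inequalities close). Everything else — the $\tau$-clause on distinguished hnfs, the zero entries of $\mathcal{P}_\oplus$ on the "wrong" label/state combinations, the preorder check — is routine and follows the template already used in the proof of Proposition~\ref{prop: operational semantics is sound w.r.t. bisimilarity} and Lemma~\ref{lem: context lemma 2}.
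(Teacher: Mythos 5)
Your proposal is correct and follows essentially the same route as the paper: reduce to the distribution inequality $\inter{MP}(X)\leq\inter{NP}({\precsim}(X))$ via Lemma~\ref{lem: context lemmna 1}, unfold the application semantics with Proposition~\ref{prop: the semantics is invariant under reduction}, and use Lemma~\ref{lem:  probabilistic assignment entanglement} to disentangle the overlapping sets ${\precsim}(\lambda x.H_i)$ before resumming. Two small remarks: the explicit relation $\mathcal{R}$ is superfluous, since Lemma~\ref{lem: context lemmna 1} already characterizes $MP\precsim NP$ directly (so no quotienting issue arises), and since Lemma~\ref{lem:  probabilistic assignment entanglement} is stated for \emph{finite} families, one must run the argument on a finite big-step approximant $M\Downarrow\mathscr{E}$ (whose support is finite) rather than on $\inter{M}$ itself, passing to the supremum over approximants of $\inter{MP}$ at the end, which is exactly how the paper arranges it; finally, since $M$ is closed, $\mathrm{supp}(\inter{M})$ contains no neutral hnfs, so the neutral summands you worry about never occur.
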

\begin{proof}[Proof (sketch)]
By Lemma~\ref{lem: context lemmna 1} it suffices to  prove that, for all $X \subseteq \he$  and for all  $\mathscr{D}\in \dist{(\h)}$ such that $MP \Downarrow \mathscr{D}$,  it holds that $\mathscr{D}(X) \leq \inter{NP}({\precsim}( X))$. The non-trivial case is when  the last rule of $MP \Downarrow \mathscr{D}$ is $s4$, i.e.~when:
\begin{align}\label{eq:d}
\mathscr{D}(X)&=\sum_{\lambda x. H\,  \in\, \mathrm{ supp}(\mathscr{E})} \mathscr{E}(\lambda x.H)\cdot \mathscr{F}_{H, P}(X)
\end{align}
for $M\Downarrow\mathscr{E}$ and $H[P/x]\Downarrow\mathscr{F}_{H, P}$. Notice that $\mathrm{supp}(\mathscr{E})$ is finite, say $\mathrm{supp}(\mathscr{E})=\lbrace \lambda z. H_1, \ldots, \lambda z. H_n\rbrace$.  

Proposition~\ref{prop: the semantics is invariant under reduction} gives us:
\begin{align}\label{eq:NP}
\inter{NP}({\precsim} (X))=\!\!\sum_{\lambda x. H} \inter{N}(\lambda x. H)\cdot \inter{H[P/x]}({\precsim} (X))
\end{align}
One would be then tempted to compare the sums \eqref{eq:d} and \eqref{eq:NP} term by term. In fact, by hypothesis we know that for every $\lambda x. H$, $\mathscr{E}(\lambda x. H)\leq \inter{N}({\precsim} \{\lambda x. H\})$. This gives that every term $\mathscr{E}(\lambda x.H)\cdot \mathscr{F}_{H, P}({\precsim}( X))$ of \eqref{eq:d} is smaller than $\sum_{\lambda x.H'\in{\precsim}(\lambda x.H)}\inter N(\lambda x.H')\cdot \inter{H'[P/x]}({\precsim}( X))$. Unfortunately we cannot conclude, as different hnfs $\lambda x.H$ do not always generate disjoint ${\precsim}(\lambda x.H)$ (e.g. think about $\eta$-equivalent hnfs), so that we cannot factor \eqref{eq:NP} according to ${\precsim} (\lambda x.H_1)$,\dots, ${\precsim}(\lambda x.H_n)$. Here is where Lemma~\ref{lem:  probabilistic assignment entanglement} on probabilistic assignments plays a role, permitting to ``disentangle'' the different quantities $\inter N({\precsim}\{\lambda x.H_1\}),\dots,\inter N({\precsim}\{\lambda x.H_n\})$. In fact, one can prove that for all $\lambda z.H' \in  \bigcup_{i \in I} {\precsim} \{\lambda z. H_i\}$ (notice that, since $N \in \elam$, $\inter{N}(\bigcup_{i \in I} {\precsim}\{\lambda z. H_i\})= \inter{N}(\bigcup_{i \in I} {\precsim}\{\lambda z. H_i\}\cap \he)$), we can apply Lemma~\ref{lem:  probabilistic assignment entanglement} and get $s^{H'}_{1}$, \ldots, $s_{n}^{H'}\in[0,1]$ such that:
\allowdisplaybreaks
\begin{enumerate}[(1)]
\item $\forall i\leq n$, $\mathscr{E}(\lambda z. H_i)\leq \underset{\lambda z.H' \in {\precsim}( \lambda z.H _i)}{\sum} s_{i}^{H'}$,

\item $\forall \lambda z.H' \in   \bigcup_{i \in I} {\precsim}(\lambda z. H_i)$, $\inter{N}(\lambda z.H')\geq \sum_{i=1}^n s_{i}^{H'}$.
\end{enumerate}
From this, we have:
\begin{align*}
\mathscr{D}(X)
&\leq \sum^n_{i=1}\Bigg( \sum_{\lambda z.H' \in {\precsim}(\lambda z. H_i)} s_{i}^{H'} \Bigg) \cdot \mathscr{F}_{H_i, P} (X)\\
&\leq  \sum^n_{i=1}\sum_{\substack{ \lambda z.H' \in   {\precsim}(\lambda z. H_i)}} s_{i}^{H'}\cdot \inter{H'[P/z]}({\precsim}  (X))\\
&\leq  \sum_{\substack{ \lambda z.H' \in\,\bigcup^n_{i=1}  {\precsim}(\lambda z.H_i)}} \bigg(\sum^n_{i=1} s_{i}^{H'}\bigg)\cdot   \inter{H'[P/z]} ({\precsim}(X)) \\
&\leq  \sum_{\lambda z.H'}\inter{N}(\lambda z.H') \cdot   \inter{H'[P/z]} ({\precsim}(X))
=\inter{NP}({\precsim}(X))
\end{align*}
and hence $\mathscr{D}(X)\leq \inter{NP}({\precsim}(X))$.
\end{proof}
\begin{lem}\label{lem: context lemma 4} Let $M, N \in \elam$. If $M \precsim N$ then $M \leq_{\mathrm{app}}N$.
\end{lem}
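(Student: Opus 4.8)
The plan is to unfold an arbitrary applicative context on both sides, observe that on closed terms it produces either an iterated abstraction or an iterated application, and then transfer the inequality to $\precsim$ using the Key Lemma. Fix $\mathcal{E}=(\lambda x_1\ldots x_n.[\cdot])P_1\ldots P_m\in\alam$ with $P_1,\ldots,P_m\in\elam$; we must show $\sum\inter{\mathcal{E}[M]}\le\sum\inter{\mathcal{E}[N]}$. Since $M$ and $N$ are closed, none of $x_1,\ldots,x_n$ occurs in them, so the leftmost $\beta$-redexes of $\mathcal{E}[M]=(\lambda x_1\ldots x_n.M)P_1\ldots P_m$ can be fired in turn, producing by head reduction the closed term $M'\triangleq MP_{n+1}\ldots P_m$ when $m\ge n$, and $M'\triangleq\lambda x_{m+1}\ldots x_n.M$ when $m<n$. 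By iterated use of Proposition~\ref{prop: the semantics is invariant under reduction} (equivalently, stability of $\inter{\cdot}$ under head reduction, Theorem~\ref{thm: equivalence head e head spine nel paper}) we get $\inter{\mathcal{E}[M]}=\inter{M'}$, and likewise $\inter{\mathcal{E}[N]}=\inter{N'}$ for the analogous closed term $N'$.

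Next I would record the auxiliary fact that $\precsim$ never increases mass: if $Q,Q'\in\elam$ and $Q\precsim Q'$ then $\sum\inter{Q}\le\sum\inter{Q'}$. This follows from Lemma~\ref{lem: context lemmna 1} instantiated at $X=\he$: it yields $\inter{Q}(\he)\le\inter{Q'}({\precsim}(\he))$, and since the support of $\inter{Q}$ lies in $\he$ we have $\inter{Q}(\he)=\sum\inter{Q}$, while $\inter{Q'}({\precsim}(\he))\le\inter{Q'}(\h)=\sum\inter{Q'}$. (Alternatively, directly in the $\plam$-Markov chain: $\sum\inter{Q}=\mathcal{P}_\oplus(Q,\tau,\dhe)$, and any simulation $\mathcal{R}$ with $Q\,\mathcal{R}\,Q'$ satisfies $\mathcal{P}_\oplus(Q,\tau,\dhe)\le\mathcal{P}_\oplus(Q',\tau,\mathcal{R}(\dhe))$ with $\mathcal{R}(\dhe)\subseteq\dhe$, since a state that simulates a distinguished hnf must itself be a distinguished hnf.)

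It then remains to conclude by the above dichotomy. If $m<n$, then $M'=\lambda x_{m+1}\ldots x_n.M$ and $N'=\lambda x_{m+1}\ldots x_n.N$; since abstraction preserves total mass (Proposition~\ref{prop: the semantics is invariant under reduction}.\ref{enum: invariance abs}) we get $\sum\inter{M'}=\sum\inter{M}$ and $\sum\inter{N'}=\sum\inter{N}$, and the hypothesis $M\precsim N$ together with the recorded fact gives $\sum\inter{M}\le\sum\inter{N}$; hence $\sum\inter{\mathcal{E}[M]}\le\sum\inter{\mathcal{E}[N]}$. If $m\ge n$, then $M'=MP_{n+1}\ldots P_m$ and $N'=NP_{n+1}\ldots P_m$, and applying the Key Lemma (Lemma~\ref{lem: context lemmna 3}) successively to $M\precsim N$ with the closed arguments $P_{n+1},\ldots,P_m$ yields $M'\precsim N'$ (each intermediate term is closed); the recorded fact then gives $\sum\inter{M'}\le\sum\inter{N'}$, i.e.\ $\sum\inter{\mathcal{E}[M]}\le\sum\inter{\mathcal{E}[N]}$. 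As $\mathcal{E}$ was arbitrary, $M\leq_{\mathrm{app}}N$.

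The substance of the argument is carried entirely by the Key Lemma, which we already have; the rest is bookkeeping. The one point to be careful about is using closedness of $M$ and $N$ to legitimately discard or push the dummy abstractions $\lambda x_1\ldots x_n$, and keeping the two cases $m\ge n$ and $m<n$ straight: in the first case the extra arguments $P_{n+1},\ldots,P_m$ are consumed by iterating the Key Lemma, while in the second case they never appear and the leftover abstractions are handled by mass-invariance of $\lambda$.
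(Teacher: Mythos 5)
Your proof is correct and follows essentially the same route as the paper: the paper argues by induction on the number of applied arguments, using Lemma~\ref{lem: context lemmna 1} (your ``mass never increases under $\precsim$'' fact, instantiated at $X=\he$) for the base case and the Key Lemma~\ref{lem: context lemmna 3} for the inductive step, exactly as you iterate the Key Lemma and then take masses. Your treatment is in fact slightly more explicit than the paper's sketch in discharging the leading dummy abstractions of the applicative context via closedness and the two cases $m\ge n$ versus $m<n$.
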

\begin{proof}
 We have to show that $M \precsim N$ implies $\sum \inter{MP_1 \ldots P_n}\leq \sum \inter{N P_1\ldots P_n}$,  for any sequence $P_1, \ldots, P_n \in \elam$. The proof is by  induction on $n$, using Lemma~\ref{lem: context lemmna 1} for the base case and Lemma~\ref{lem: context lemmna 3} for the induction step. 
\end{proof}

%
\begin{thm}[Soundness] \label{thm: soundness new}Let $M, N \in \plam$. Then:
\begin{enumerate}[(1)]
\item \label{eqn: soundness leq} $M \precsim N$ implies $M \leq_{\mathrm{cxt}}N$.
\item  \label{eqn: soundness equal} $M \sim N$ implies $M =_{\mathrm{cxt}}N$.
\end{enumerate}
\end{thm}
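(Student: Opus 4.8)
The plan is to derive the Soundness Theorem essentially for free from the machinery already assembled in this section, so the ``proof'' is mostly a matter of chaining together the Context Lemma and the applicative characterization of $\precsim$. First I would handle point~\ref{eqn: soundness leq}. Assume $M \precsim N$. Since $\precsim$ on open terms is defined in \eqref{eqn: open term simil} by closing under abstractions, we have $\lambda \vec{x}.M \precsim \lambda \vec{x}.N$ for $\vec{x}$ the free variables of $M,N$, and both closed terms lie in $\elam$. Lemma~\ref{lem: context lemma 4} then gives $\lambda \vec{x}.M \leq_{\mathrm{app}} \lambda \vec{x}.N$, hence by the Context Lemma (Lemma~\ref{lem: context lemma}.\ref{eqn: context lemma leq}) also $\lambda \vec{x}.M \leq_{\mathrm{cxt}} \lambda \vec{x}.N$. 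Finally, $n$ applications of Lemma~\ref{lem: abstraction congruence for obs and app}.\ref{eqn: abstraction congruence for app} peel off the abstractions one at a time to yield $M \leq_{\mathrm{cxt}} N$, which is what point~\ref{eqn: soundness leq} asserts.

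For point~\ref{eqn: soundness equal}, I would simply invoke Proposition~\ref{prop: properties dal lago bisimilarity}, which tells us ${\sim} = {\precsim} \cap {\precsim}^{op}$. So if $M \sim N$ then both $M \precsim N$ and $N \precsim M$; applying point~\ref{eqn: soundness leq} twice gives $M \leq_{\mathrm{cxt}} N$ and $N \leq_{\mathrm{cxt}} M$, and since $=_{\mathrm{cxt}}$ is by definition the conjunction of $\leq_{\mathrm{cxt}}$ and its converse, we conclude $M =_{\mathrm{cxt}} N$. There is no independent work to do here beyond citing the earlier result.

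Since essentially all the weight has already been carried by the preceding lemmas, there is no real obstacle left in this final step; the genuinely hard part of soundness was the Key Lemma (Lemma~\ref{lem: context lemmna 3}), where the probabilistic-assignment disentanglement argument is needed, and the Context Lemma, whose proof rests on the delicate induction on $(n,\vert\mathcal{C}\vert)$ in Lemma~\ref{lem: fundamental step toward context lemma}. One small point of care in writing up the present theorem is to make sure the open-term case is handled cleanly: one must check that the free-variable closure used in \eqref{eqn: open term simil} is compatible with the closure implicitly used in the statements of the Context Lemma and Lemma~\ref{lem: context lemma 4}, but this is immediate because the order of abstractions in the closure is irrelevant (as already remarked after \eqref{eqn: open term bisimil}) and the Context Lemma is stated for arbitrary $M,N \in \plam$. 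So the final proof is short: establish $\leq_{\mathrm{cxt}}$ from $\precsim$ via $\leq_{\mathrm{app}}$ and the Context Lemma, then obtain $=_{\mathrm{cxt}}$ from $\sim$ by symmetry.
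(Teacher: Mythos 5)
Your proof is correct and follows essentially the same route as the paper: closed terms via Lemma~\ref{lem: context lemma 4} and the Context Lemma, open terms by closing under abstractions and then peeling them off with Lemma~\ref{lem: abstraction congruence for obs and app}.\ref{eqn: abstraction congruence for app}, and point~(2) from point~(1) via ${\sim}={\precsim}\cap{\precsim}^{op}$. No substantive differences.
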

\begin{proof}
Point~\ref{eqn: soundness equal}  follows  from point~\ref{eqn: soundness leq} since it holds that   ${\sim}=  {\precsim} \cap {\precsim}^{op}$ (Proposition~\ref{prop: properties dal lago bisimilarity}) and $=_{\mathrm{cxt}}$  is ${\leq_{\mathrm{cxt}}} \cap  {(\leq_{\mathrm{cxt}})}^{op}$. Concerning point~\ref{eqn: soundness leq}, we first prove it  for closed terms. So, let   $M, N \in \elam$ be such that  $M \precsim N$. By Lemma~\ref{lem: context lemma 4}, it holds that $M \leq_{\mathrm{app}}N$. By Lemma~\ref{lem: context lemma}, this implies $M \leq_{\mathrm{cxt}}N$. Now, let $M, N \in \plam^{\lbrace x_1, \ldots, x_n \rbrace}$ be such that  $M \precsim N$. From~\eqref{eqn: open term simil}, we have that  $\lambda x_1\ldots x_n. M \precsim \lambda x_1\ldots x_n. N$.  Because these are closed terms, we obtain  $\lambda x_1\ldots x_n. M \leq_{\mathrm{cxt}} \lambda x_1\ldots x_n. N$.   By repeatedly applying Lemma~\ref{lem: abstraction congruence for obs and app}.\ref{eqn: abstraction congruence for app}, we conclude $M \leq _{\mathrm{cxt}}N$.
\end{proof}

  \section{Full Abstraction}\label{sec4}
We prove that PAB is complete, hence fully abstract (Theorem~\ref{thm: completeness}), while PAS is not, giving a countexemple to PAS completeness in Section~\ref{chap 4 sec 5 subsec 2}.

As mentioned in the Introduction, the completeness property is usually achieved by transforming PAB into a testing semantics defined by Larsen and Skou \cite{larsen1991bisimulation}, proven equivalent to probabilistic bisimulation by van Breugel et al. \cite{VANBREUGEL2005}, and then showing that every test is definable by a context in the language, see e.g.~\cite{crubille2014probabilistic,kasterovic2019discriminating}.
This reasoning is not so simple to implement in our setting, as the testing definability needs a kind of sampling primitive, which is not clear if representable in a call-by-name semantics (see the discussion in the Introduction). 

Fortunately, we succeed in following a different path, based on Leventis' Separation Theorem \cite{leventis2018probabilistic}. The idea is to prove that (a trivial extension of) the contextual equivalence is a probabilistic applicative bisimulation, hence contained in $\sim$ by definition (Eq.~\eqref{enum: bisimilarity}). Basically, this amounts to check that for any contextual equivalence class $E$ of hnfs and any $M=_{\mathrm{cxt}} N$, we have $\inter{M}(E)=\inter{N}(E)$ (see Eq.~\eqref{eq:bisim_gen}). How to prove it? We associate terms with a kind of infinitary, extensional normal forms, the so-called probabilistic Nakajima trees (Section \ref{chap 4 sec 2 subsec 1}). The Separation Theorem states that two terms $M$ and $N$ share the same Nakajima tree whenever they are contextually equivalent (Theorem \ref{thm: probabilistic separation}), so that we can use such trees as representatives of the contextual equivalence classes. Lemma \ref{lem: inf limit} shows that the quantity $\inter{M}(E)$ depends only on the Nakajima tree of $M$ and that of $E$, so we can conclude with Lemma~\ref{lem: simeq implies same classes with same probability} giving $\inter{M}(E)=\inter{N}(E)$ and hence the full abstraction result Theorem~\ref{thm: completeness}.


On the other hand, the counterexample to the completeness of PAS (Eq.~\eqref{eqn: counterexample similarity}) uses the Context Lemma.

\subsection{Probabilistic Nakajima Trees} \label{chap 4 sec 2 subsec 1}
A \emph{B\"{o}hm tree} \cite{barendregt1984lambda} is a labelled tree describing a kind of infinitary normal form of a deterministic $\lambda$-term.  In more details, the B\"{o}hm tree $BT(M)$ of a $\lambda$-term $M$ can be given co-inductively as follows:
\begin{itemize}
\item If the head reduction of $M$ terminates into the hnf $\lambda x_1\ldots x_n. yM_1\ldots M_m $, then:
\[
\begin{tikzpicture}
\node[](root){$\lambda x_1\ldots x_n\textbf{.} y$};
\draw($(root)+(-3, -.5)$) node(e){$BT(M)\triangleq $};
\node[below =of root](c){};
\node[left =of c](1){$BT(M_1)$};
\node[ right =of c](m){$BT(M_m)$};
\draw[-](root)to node[below, right]{$\ \ \  \ \ \ldots$} (1);
\draw[-](root)to (m);
\end{tikzpicture}
\]
where $BT(M_1)$, \ldots, $BT(M_m)$ are the   B\"{o}hm trees of the subterms $M_1, \ldots, M_m$ of the hnf of $M$.
\item Otherwise, the tree is a node labelled by $\mathbf \Omega$. 
\end{itemize}

The  notion of B\"{o}hm tree is not sufficient to characterize    contextual equivalence because it lacks extensionality: the terms $y$ and $\lambda z.yz$ have different B\"{o}hm trees and yet $y=_{\mathrm{cxt}}\lambda z.yz$ holds. To recover extensionality, we need the so-called \emph{Nakajima trees} \cite{Nakajima}, which are infinitely $\eta$-expanded representations of the B\"{o}hm trees. The Nakajima tree $BT^\eta(H)$ of a hnf $H= \lambda x_1\ldots x_n\textbf{.}yM_1\ldots M_m$ is the infinitely branching tree:
\begin{equation*}
\begin{tikzpicture}
\node[](root){$ \lambda x_1\ldots x_n   x_{n+1}\ldots \,\textbf{.}y$};
\draw($(root)+(-2.5,-1.5)$) node(1){$BT^\eta(M_1)$};
\draw($(root)+(3,-1.5)$) node(x){$BT^\eta(x_{n+1})$};
\draw($(x)+(-2, 0)$) node (m){$BT^\eta(M_m)$};
\draw[-](root) to node[below,right]{$\ \ \ldots$}(x);
\draw[-](root)to node[below, right]{$\ \ \ \ \ \  \ldots$} (1);
\draw[-](root)to (m);
\draw($(root)+(-3.5, -.65)$) node(e){$BT^\eta(H)\triangleq$};
\end{tikzpicture}
\end{equation*}
where  $x_1\ldots x_n   x_{n+1}\ldots$ is an infinite sequence of  pairwise distinct  variables  and, for $i>n$, the $x_i$'s are fresh. 

Nakajima trees represent infinitary $\eta$-long hnfs. Every hnf $H= \lambda x_1\ldots x_n. yM_1\ldots M_m$ $\eta$-expands into the head normal form $\lambda x_1\ldots x_{n+k}.yM_1\ldots M_m x_{n+1}\ldots x_{n+k}$ for any $k \in \mathbb{N}$ and $x_{n+1}\ldots x_{n+k}$ fresh: Nakajima trees are, intuitively, the asymptotical representations of  these $\eta$-expansions. 

To generalize such a construction to probabilistic terms we define by mutual recursion the tree associated with a hnf and the tree of an arbitrary term $M$  as a subprobability distribution over the trees of the hnfs $M$ reduces to. Hence, strictly speaking, a probabilistic Nakajima tree is not properly a tree. 

Following Leventis~\cite{leventis2018probabilistic} we shall  give an inductive, \enquote{level-by-level} definition of the probabilistic Nakajima trees.

The set $\mathcal{PT}^\eta_{\ell}$ of \textit{probabilistic Nakajima trees} with level at most $\ell\in \mathbb{N}$ is the set of subprobability distributions over \textit{value Nakajima trees} $\mathcal{VT}^\eta_\ell$. These sets are defined by mutual recursion as follows:
\allowdisplaybreaks
\begin{align*}
\mathcal{VT}^\eta_0&\triangleq \emptyset\\
  \mathcal{VT}^\eta_{\ell+1}&\triangleq \lbrace \lambda x_1 x_2 \ldots\, \textbf{.}y\, T_1, T_2, \ldots\ \vert \ T_i \in \mathcal{\mathcal{PT}^\eta_\ell} , \ \forall i \geq 1 \rbrace,\\
\mathcal{PT}^\eta_0&\triangleq \lbrace \perp \rbrace, \\
 \mathcal{PT}^\eta_{\ell+1}&\triangleq\lbrace T : \mathcal{VT}^\eta _{\ell+1} \to [0, 1]\ \vert \ \sum_{t \in\, \mathcal{VT}^\eta_{\ell+1}} T(t)\leq 1 \rbrace.
\end{align*}
where $\perp$ represents the zero distribution. Value Nakajima trees are ranged  over by $t$, and probabilistic Nakajima trees are ranged over by $T$.


 Let $\ell \in \mathbb{N}$. By mutual recursion we define a function  $VT^\eta_{\ell+1}$ associating  with each $H\in \h$  its value Nakajima tree $VT^\eta_{\ell+1}(H)$ of level $\ell+1$, and a function $PT^\eta _\ell$ associating with each $M \in \Lambda_{\oplus}$ its   probabilistic Nakajima tree $PT^\eta _\ell(M) $ of level $\ell$:
 \begin{itemize}
 \item If $H=\lambda x_1\ldots x_n.yM_1\ldots M_m$, then  $VT^\eta_{\ell+1}(H)$ is:\\
\begin{tikzpicture}[node distance=1cm]
\node[](root){$ \lambda x_1\ldots x_n   x_{n+1}\ldots \,\textbf{.}y$};
\node[ below left=of root](1){$PT^\eta_{\ell} (M_1)$};
\node[below right =of root](x){$PT^\eta _\ell (x_{n+1})$};
\draw($(x)+(-2, 0)$) node (m){$PT^\eta _\ell(M_m)$};
\draw[-](root) to node[below,right]{$\ \ \ \ \   \ldots$}(x);
\draw[-](root)to node[below, right]{$\ \ \  \ \ \ \ \  \ \ \ldots$} (1);
\draw[-](root)to (m);
\end{tikzpicture}
where  $x_1\ldots x_n   x_{n+1}\ldots$ is an infinite sequence of  pairwise distinct  variables  and, for $i>n$, the $x_i$'s are fresh;
\item $PT^\eta_{\ell} (M)\triangleq  
	\begin{cases}   
		t \mapsto \sum_{H \in (VT^\eta_{\ell})^{-1}(t)}  \inter{M}(H)  &\text{if }\ell>0 \\
		\bot &\text{otherwise}. 
	\end{cases}$
 \end{itemize}

We say that $M$ and $N$ \textit{have the same Nakajima tree}, and we write $M =_{\mathrm{PT}^\eta} N$, if $PT^\eta_\ell(M)= PT^\eta_\ell(N)$ holds  for all $\ell \in \mathbb{N}$. 

Theorem~\ref{thm: equivalence head e head spine nel paper} assures that the above definition based on the operational semantics $\inter{\cdot}$ given in \eqref{eq: big-step semantics} is equivalent to the one given by Leventis in~\cite{leventis2018probabilistic}, based on the head reduction.
 \begin{exmp} Figure~\ref{fig: exmp bohm trees} depicts the Nakajima trees of  level, respectively,  $1$ and $2$ associated with  term $\mathbf \Theta(\lambda f. (y\oplus yf))$, where $\mathbf \Theta$ is the Turing fixed-point combinator (Example~\ref{exmp: examples of probabilistic terms}).  Distributions are represented by barycentric sums, depicted as $\oplus$ nodes whose outgoing edges are weighted by probabilities. Notice that the more the level $\ell$ increases, the more the top-level distribution's support grows. 
 \end{exmp}
 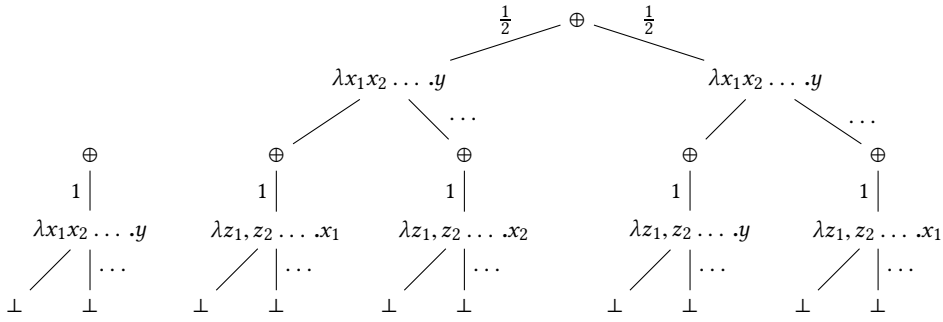
\begin{figure*}[t]
 \centering
 \begin{framed}
  \begin{tikzpicture}[node distance=0.5cm]
 \draw(0, 0) node (center){$\oplus$};
 \draw($(center)+(0, -1)$) node (a){$\lambda x_1 x_2\ldots\,\textbf{.} y$};
 \draw($(a)+(-1, -1)$) node (b1){$\bot$};
 \draw($(a)+(0, -1)$) node (b2){$\bot$};
 
 \draw[-] (center) to node[left]{$1$}(a);
 \draw[-] (a) to (b1);
 \draw[-] (a) to node[right]{$\ldots$}(b2);
  \end{tikzpicture}
\quad 
  \begin{tikzpicture}[node distance=0.5cm]
 \draw(0, 0) node (center){$\oplus$};
 \draw($(center)+(-2.5, -.8)$) node (la){$\lambda x_1 x_2\ldots\,\textbf{.} y$};
  
 \draw($(la)+(-1.5 ,-1)$ )node (llb){$\oplus$};
 \draw($(llb)+(0, -1)$ )node (llc){$\lambda z_1, z_2\ldots \, \textbf{.}x_1$};
 \draw($(llc)+(-1, -1)$ )node (lld1){$\bot$};
 \draw($(llc)+(0, -1)$ )node (lld2){$\bot$};
  
 \draw($(la)+(1 ,-1)$ )node (lb){$\oplus$};
 \draw($(lb)+(0, -1)$ )node (lc){$\lambda z_1, z_2\ldots \, \textbf{.}x_2$};
 \draw($(lc)+(-1, -1)$ )node (ld1){$\bot$};
 \draw($(lc)+(0, -1)$ )node (ld2){$\bot$};

 \draw($(center)+(+2.5, -.8)$ )node (ra){$\lambda x_1 x_2\ldots\,\textbf{.} y$};
  
 \draw($(ra)+(-1 ,-1)$ )node (rb){$\oplus$};
 \draw($(rb)+(0, -1)$ )node (rc){$\lambda z_1, z_2\ldots \, \textbf{.}y$};
 \draw($(rc)+(-1, -1)$ )node (rd1){$\bot$};
 \draw($(rc)+(0, -1)$ )node (rd2){$\bot$};
  
 \draw($(ra)+(1.5 ,-1)$ )node (rrb){$\oplus$};
 \draw($(rrb)+(0, -1)$ )node (rrc){$\lambda z_1, z_2\ldots \, \textbf{.}x_1$};
 \draw($(rrc)+(-1, -1)$ )node (rrd1){$\bot$};
 \draw($(rrc)+(0, -1)$ )node (rrd2){$\bot$};

 \draw[-] (center) to node[left, above]{$\frac{1}{2}$}(la); 
 
 \draw[-] (la) to (llb);
 \draw[-] (llb) to node[left]{$1$}(llc);
 \draw[-] (llc) to (lld1);
 \draw[-] (llc) to node[right]{$\ldots$}(lld2);
 
 \draw[-] (la) to node[right]{$\ \ \ldots$}(lb);
 \draw[-] (lb) to node[left]{$1$}(lc);
 \draw[-] (lc) to (ld1);
 \draw[-] (lc) to node[right]{$\ldots$} (ld2);

 \draw[-] (center) to node[right, above]{$\frac{1}{2}$}(ra); 

\draw[-] (ra) to node[right]{$\ \ \ldots$}(rrb);
 \draw[-] (rrb) to node[left]{$1$}(rrc);
 \draw[-] (rrc) to (rrd1);
 \draw[-] (rrc) to node[right]{$\ldots$} (rrd2);
 
 \draw[-] (ra) to (rb);
 \draw[-] (rb) to node[left]{$1$}(rc);
 \draw[-] (rc) to (rd1);
 \draw[-] (rc) to node[right]{$\ldots$}(rd2);
  \end{tikzpicture}
  
 \caption{From left,  the  Nakajima trees $PT^\eta_{1}(\mathbf \Theta(\lambda f. (y\oplus yf)))$ and $PT^\eta_{2}(\mathbf \Theta(\lambda f. (y\oplus yf)))$.}
 \label{fig: exmp bohm trees}
 \end{framed}
 \end{figure*}
\begin{prop}[\cite{leventis2018probabilistic}]\label{prop: equal bohm tree at depth d implies equal at less depth} Let  $M,N \in \Lambda_\oplus$.  If $PT^\eta_\ell(M)=PT^\eta_\ell(N)$ for some $\ell\in \mathbb{N}$, then $PT^\eta_{\ell'}(M)=PT^\eta_{\ell'}(N)$ for all $\ell' \leq \ell$.
\end{prop}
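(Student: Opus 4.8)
The plan is to show that each $PT^\eta_\ell$ factors through $PT^\eta_{\ell+1}$ via a canonical one-level truncation of probabilistic Nakajima trees; the statement then follows by iterating this factorisation. Concretely, I will define truncation maps and prove that they send $PT^\eta_{k+1}(M)$ to $PT^\eta_k(M)$ for every term $M$; composing them yields, for $\ell'\le\ell$, an expression of $PT^\eta_{\ell'}(M)$ as a fixed function of $PT^\eta_{\ell}(M)$ alone, so that $PT^\eta_\ell(M)=PT^\eta_\ell(N)$ forces $PT^\eta_{\ell'}(M)=PT^\eta_{\ell'}(N)$.

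First I would define, by mutual induction on $k\in\mathbb{N}$, a value-tree truncation $\rho_k\colon \mathcal{VT}^\eta_{k+1}\to\mathcal{VT}^\eta_k$ and a probabilistic-tree truncation $\tau_k\colon\mathcal{PT}^\eta_{k+1}\to\mathcal{PT}^\eta_k$. For $k=0$, set $\tau_0$ to be the constant map to $\perp$ (recall $\mathcal{PT}^\eta_0=\lbrace\perp\rbrace$), and no $\rho_0$ is needed. For $k\ge 1$, let $\rho_k$ send $\lambda x_1 x_2\ldots\,\textbf{.}y\, T_1,T_2,\ldots$ to $\lambda x_1 x_2\ldots\,\textbf{.}y\, \tau_{k-1}(T_1),\tau_{k-1}(T_2),\ldots$ (this is well defined since $\tau_{k-1}(T_i)\in\mathcal{PT}^\eta_{k-1}$, so the result lies in $\mathcal{VT}^\eta_k$), and let $\tau_k(T)$ be the pushforward of the subdistribution $T$ along $\rho_k$, i.e. $\tau_k(T)(s)\triangleq\sum_{t\in\rho_k^{-1}(s)}T(t)$ for $s\in\mathcal{VT}^\eta_k$; pushing forward partitions the support into fibres, so the total mass is preserved and $\tau_k(T)$ is again a subprobability distribution.

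The heart of the argument is then the single identity
\begin{equation*}
\tau_k\bigl(PT^\eta_{k+1}(M)\bigr)=PT^\eta_k(M)\qquad\text{for all }M\in\plam,\ k\in\mathbb{N},
\end{equation*}
which I would prove by induction on $k$. The case $k=0$ is immediate from $PT^\eta_0(\cdot)=\perp$. For $k\ge 1$, unfolding the definitions gives $\tau_k(PT^\eta_{k+1}(M))(s)=\sum_{t\in\rho_k^{-1}(s)}\sum_{H\in(VT^\eta_{k+1})^{-1}(t)}\inter{M}(H)=\sum_{H\,:\,\rho_k(VT^\eta_{k+1}(H))=s}\inter{M}(H)$, so it suffices to check $\rho_k\circ VT^\eta_{k+1}=VT^\eta_k$ on every $H\in\h$. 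For $H=\lambda x_1\ldots x_n.\,yM_1\ldots M_m$ this says precisely that applying $\tau_{k-1}$ to each branch $PT^\eta_k(M_i)$, and to each $PT^\eta_k(x_{n+j})$, produces $PT^\eta_{k-1}(M_i)$, resp.\ $PT^\eta_{k-1}(x_{n+j})$ — which is exactly the induction hypothesis applied to the subterms. Given the identity, iterating it telescopes to $PT^\eta_{\ell'}(M)=(\tau_{\ell'}\circ\tau_{\ell'+1}\circ\cdots\circ\tau_{\ell-1})(PT^\eta_\ell(M))$ for all $\ell'\le\ell$, and the right-hand side is a function of $PT^\eta_\ell(M)$ only, which proves Proposition~\ref{prop: equal bohm tree at depth d implies equal at less depth}.

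I do not expect a genuine obstacle here: the proof uses nothing about the operational semantics $\inter{\cdot}$ beyond the level-by-level defining equations of $PT^\eta$ and $VT^\eta$. The only mildly delicate points are bookkeeping ones — keeping the index shift right between value trees and probabilistic trees ($\mathcal{VT}^\eta_{k+1}$ branches into $\mathcal{PT}^\eta_k$, which branches into $\mathcal{VT}^\eta_k$, \dots), checking that the mutual recursion defining $\rho_k,\tau_k$ is well founded, and noting that $\tau_k$ indeed lands in (sub)probability distributions — none of which is substantive.
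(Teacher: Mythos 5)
Your proof is correct. Note that the paper does not actually prove this proposition --- it is imported from Leventis \cite{leventis2018probabilistic} as a black box --- so there is no in-paper argument to compare against; your level-by-level truncation maps $\rho_k,\tau_k$ together with the identity $\tau_k(PT^\eta_{k+1}(M))=PT^\eta_k(M)$ (proved by induction on $k$, quantified over all terms so that the hypothesis applies to the subterms $M_i$ and the fresh variables $x_{n+j}$) constitute the standard and essentially the intended argument. The only point worth making explicit in a write-up is that the root labels $\lambda x_1\ldots x_n x_{n+1}\ldots\,\textbf{.}y$ of $VT^\eta_{k+1}(H)$ and $VT^\eta_k(H)$ must use the same canonical choice of fresh variables for the identity $\rho_k\circ VT^\eta_{k+1}=VT^\eta_k$ to hold on the nose; this is part of the bookkeeping you already flagged.
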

 \begin{thm}[Separation~\cite{leventis2018probabilistic}] \label{thm: probabilistic separation} $\!$Let $M, N \in \plam$. If  $M =_{\mathrm{cxt}} N$ then $M =_{\mathrm{PT}^\eta} N$. 
 \end{thm}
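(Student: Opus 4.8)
The plan is to establish the contrapositive: assuming $M \neq_{\mathrm{PT}^\eta} N$, I would build a context $\mathcal{C}$ with $\sum\inter{\mathcal{C}[M]} \neq \sum\inter{\mathcal{C}[N]}$, so that $M \neq_{\mathrm{cxt}} N$. This is the probabilistic analogue of the classical B\"ohm-out argument for Nakajima trees, following Leventis. First, by Proposition~\ref{prop: equal bohm tree at depth d implies equal at less depth} there is a least $\ell$ with $PT^\eta_\ell(M) = PT^\eta_\ell(N)$ and $PT^\eta_{\ell+1}(M) \neq PT^\eta_{\ell+1}(N)$, and I would argue by induction on this $\ell$. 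Since $PT^\eta_{\ell+1}(M)$ and $PT^\eta_{\ell+1}(N)$ are subprobability distributions over the countable set $\mathcal{VT}^\eta_{\ell+1}$, they disagree somewhere, and peeling one level this disagreement gets located: either the two induced distributions over \emph{head profiles} (the head variable together with the genuine arity $m$ and the number $n$ of leading abstractions, read up to $\eta$) differ, or, for a fixed head profile, one of the conditional distributions over the $i$-th level-$\ell$ subtree already disagrees — the latter case is exactly where the induction hypothesis at level $\ell$ enters, via a context that first routes into the $i$-th argument position and then applies the separating context for the subterms.

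Next I would set up the B\"ohm-out machinery. The central obstruction is that value Nakajima trees are infinitely branching and infinitely $\eta$-expanded, whereas a context can feed only finitely many arguments to a head variable; so I need a finiteness lemma stating that a level-$(\ell{+}1)$ value tree is determined by its head profile together with the finite tuple of \emph{genuine} subtrees, the remaining branches being pure $\eta$-expansions of fresh variables, and hence that the support of $PT^\eta_{\ell+1}(M)$ is partitioned into countably many cylinder classes, each specified by finitely much data. Using Theorem~\ref{thm: equivalence head e head spine nel paper} and Proposition~\ref{prop: the semantics is invariant under reduction} I can then compute $\inter{\mathcal{C}[M]}$ by summing over $\mathrm{supp}(\inter{M})$, and I would construct, by induction on $\ell$, \emph{probe} contexts $\mathcal{E}_A$ indexed by a cylinder class $A$, of applicative shape $[\cdot]\,P_1\cdots P_k\,\mathbf{I}\cdots\mathbf{I}$, whose arguments $P_j$ use the combinators $\mathbf{T},\mathbf{F},\mathbf{I},\mathbf{\Omega}$ and tuple/projection encodings (as in Example~\ref{exmp: examples of probabilistic terms}) to: supply enough identity-like arguments to absorb any $\eta$-expansion so that arities match, test the head variable, recursively apply the level-$\ell$ probes to the recorded subterms, and finally reduce to $\mathbf{I}$ precisely when the hnf's value tree lies in $A$ and to $\mathbf{\Omega}$ otherwise. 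The head(-spine) reduction of Figure~\ref{fig: big-step approximation} is what makes this routing compute the intended boolean.

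Because $\inter{\cdot}$ is affine in $\oplus$ (Proposition~\ref{prop: the semantics is invariant under reduction}.\ref{enum: invariance sum}) and the rule $s4$ expands over the support of the evaluated function, the total mass $\sum\inter{\mathcal{E}_A[M]}$ equals, up to harmless bookkeeping, $PT^\eta_{\ell+1}(M)(A)$, and likewise for $N$; choosing $A$ to be a cylinder class on which the two distributions differ — which exists by the finiteness lemma and the located disagreement above — yields $\sum\inter{\mathcal{E}_A[M]} \neq \sum\inter{\mathcal{E}_A[N]}$, contradicting $M =_{\mathrm{cxt}} N$. The main difficulty I anticipate is precisely the coordination between the two infinitary phenomena: controlling the $\eta$-expansion (proving the finiteness lemma and choosing uniformly how many identity arguments $k$ to supply so that each probe is genuinely $\eta$-insensitive) simultaneously with the probabilistic superposition (ensuring that the map from value Nakajima trees to observable total mass, which is only \emph{affine} in the input distribution, nevertheless separates a pair of distinct subdistributions over an infinite state space using a single finite context), together with the bookkeeping needed so that no collateral mass from hnfs outside $A$ leaks into the observation.
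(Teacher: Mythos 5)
First, a remark on scope: the paper does not prove this statement at all. Theorem~\ref{thm: probabilistic separation} is imported as a black box from Leventis~\cite{leventis2018probabilistic} (it is the main theorem of that work), so there is no in-paper proof to compare against; your proposal is in effect an attempt to reprove Leventis' separation theorem from scratch.

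Your architecture — contrapositive, least level of disagreement via Proposition~\ref{prop: equal bohm tree at depth d implies equal at less depth}, B\"ohm-out contexts with identity arguments absorbing the $\eta$-expansion, and a finiteness/cylinder decomposition of the supports — matches the classical skeleton and the outer shape of Leventis' argument. The genuine gap is the central claim that the probe $\mathcal{E}_A$ ``reduces to $\mathbf{I}$ precisely when the hnf's value tree lies in $A$ and to $\mathbf{\Omega}$ otherwise,'' so that $\sum\inter{\mathcal{E}_A[M]}$ equals $PT^\eta_{\ell+1}(M)(A)$ up to bookkeeping. This is exactly what fails in the probabilistic setting. The weight that $\mathcal{E}_A$ assigns to an hnf $H$ in the sum $\sum_H \inter{M}(H)\cdot\sum\inter{\mathcal{E}_A[H]}$ is not the indicator of $VT^\eta_{\ell+1}(H)\in A$: the recursive level-$\ell$ probes applied to the arguments $M_1,\dots,M_m$ of $H$ are themselves only probabilistic observations valued in $[0,1]$ (the $M_i$ are probabilistic terms whose evaluation is a subdistribution), so the composite probe converges on $H$ with some probability strictly between $0$ and $1$ in general, and converges with nonzero probability also on hnfs outside $A$ whose arguments merely overlap behaviourally with those prescribed by $A$. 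Hence $\sum\inter{\mathcal{E}_A[M]}$ is an affine functional of $\inter{M}$ with non-indicator coefficients, and it is not automatic --- it is precisely the content of Leventis' theorem --- that the functionals realizable by contexts separate any two terms inducing distinct level-$(\ell+1)$ tree distributions. Leventis resolves this with substantial machinery your sketch omits: a finite family of tests whose matrix of observation values on a finite set of pairwise distinct value trees is shown to be invertible (an algebraic argument, not a routing argument), combined with an $\epsilon$-approximation step cutting the countable supports down to finite ones while controlling the leaked mass. Your closing paragraph correctly names this affine-versus-indicator mismatch as the main difficulty, but naming the obstruction is not overcoming it, and as written the key identity on which your conclusion rests is false.
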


 \subsection{The Completeness Theorem}\label{chap 4 sec 5 subsec 1}

In the previous subsection probabilistic Nakajima trees have been inductively presented by introducing \enquote{level-by-level} their finite representations. To recover the full quantitative information of a Nakajima tree we shall need a notion of approximation  together with some general properties.
\begin{defn} Let $r, r' \in \mathbb{R}$ and $\epsilon>0$. We say that $r$ \textit{$\epsilon$-approximates} $r'$, and we write $r \approx_\epsilon r'$, if $\vert r-r' \vert < \epsilon$.
\end{defn} 
\begin{fact} \label{prop: 2epsilon} Let $r, r' , r'' \in \mathbb{R}$ and $\epsilon, \epsilon' >0$. If $r \approx_\epsilon r'$ and $r'\approx_{\epsilon'}r''$ then $r  \approx_{\epsilon +\epsilon'}r''$.
\end{fact}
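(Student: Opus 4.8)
The plan is to unwind the definition of $\approx_\epsilon$ and reduce the claim to the ordinary triangle inequality for the absolute value on $\mathbb{R}$. First I would observe that the hypotheses $r \approx_\epsilon r'$ and $r' \approx_{\epsilon'} r''$ say precisely that $\vert r - r'\vert < \epsilon$ and $\vert r' - r''\vert < \epsilon'$, and that the goal $r \approx_{\epsilon+\epsilon'} r''$ is by definition $\vert r - r''\vert < \epsilon + \epsilon'$.

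Next I would write $r - r'' = (r - r') + (r' - r'')$ and apply the triangle inequality, obtaining $\vert r - r''\vert \leq \vert r - r'\vert + \vert r' - r''\vert$. Combining this with the two strict inequalities from the hypotheses gives $\vert r - r''\vert \leq \vert r - r'\vert + \vert r' - r''\vert < \epsilon + \epsilon'$, which is exactly $r \approx_{\epsilon+\epsilon'} r''$.

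There is no real obstacle here: the statement is a direct instance of the triangle inequality, and the only "work" is the purely notational step of translating $\approx_\epsilon$ into an inequality between absolute values and back. I would present it as a two-line calculation rather than elaborating further.

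\begin{proof}
By definition, $r \approx_\epsilon r'$ means $\vert r - r'\vert < \epsilon$ and $r' \approx_{\epsilon'} r''$ means $\vert r' - r''\vert < \epsilon'$. By the triangle inequality,
\[
\vert r - r''\vert = \vert (r - r') + (r' - r'')\vert \leq \vert r - r'\vert + \vert r' - r''\vert < \epsilon + \epsilon',
\]
which is precisely $r \approx_{\epsilon + \epsilon'} r''$.
\end{proof}
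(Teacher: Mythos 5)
Your proof is correct: it is exactly the triangle inequality applied after unfolding the definition of $\approx_\epsilon$, which is the intended (and only reasonable) argument. The paper states this as a Fact without proof, so there is nothing further to compare.
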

\begin{lem} \label{fact: descending chain} Let $\lbrace A_n \rbrace_{n \in \mathbb{N}}$ be a  descending chain of countable sets of positive real numbers satisfying $\sum_{r \in A_n}r< \infty$, for all $n \in \mathbb{N}$.  Then:
\begin{equation}\label{eqn: infinite descending chain}
\sum_{r\,  \in \, \bigcap_{n\in \mathbb{N}} A_n }r = \inf_{n \in \mathbb{N}} \bigg( \sum_{r \, \in\, A_n} r \bigg)\enspace .
\end{equation}
\end{lem}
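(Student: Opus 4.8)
The plan is to recognise the claimed identity as \emph{continuity from above} for the set function $\sigma(S)\triangleq\sum_{r\in S}r\in[0,\infty]$ that assigns to a countable set of positive reals its total weight, exploiting that the largest set $A_0$ of the chain has finite weight. Write $A\triangleq\bigcap_{n\in\mathbb N}A_n$; since the chain is descending one has $\sigma(A_n)\le\sigma(A_0)<\infty$ for every $n$, and $\sigma(A)<\infty$ as well, so every quantity occurring below is a genuine real number.

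First I would dispose of the easy inequality $\sigma(A)\le\inf_n\sigma(A_n)$: because $A\subseteq A_n$ and all summands are positive, $\sigma(A)\le\sigma(A_n)$ for each $n$; moreover the sequence $(\sigma(A_n))_n$ is non-increasing, so in fact $\inf_n\sigma(A_n)=\lim_n\sigma(A_n)$, which will be convenient later.

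For the reverse inequality the idea is to pass to complements inside the fixed finite-weight set $A_0$. Put $B_n\triangleq A_0\setminus A_n$, an \emph{increasing} chain with $\bigcup_nB_n=A_0\setminus A$. Since $A_n\subseteq A_0$, the set $A_0$ is the disjoint union of $A_n$ and $B_n$, and finiteness of $\sigma(A_0)$ upgrades this to the honest equality $\sigma(A_n)=\sigma(A_0)-\sigma(B_n)$. I then invoke the elementary \emph{continuity from below} of countable sums of non-negative reals, $\sup_n\sigma(B_n)=\sigma(\bigcup_nB_n)$, which is proved by approximating $\bigcup_nB_n$ from within by a finite set $F$ with $\sigma(F)$ arbitrarily close to $\sigma(\bigcup_nB_n)$ and observing that such an $F$, being finite, already sits inside some $B_n$. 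Combining these facts, and using $A_0=A\uplus(A_0\setminus A)$ once more, I would obtain
\[
\inf_n\sigma(A_n)=\sigma(A_0)-\sup_n\sigma(B_n)=\sigma(A_0)-\sigma(A_0\setminus A)=\sigma(A),
\]
which together with the previous paragraph yields the claimed equation~\eqref{eqn: infinite descending chain}.

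The argument is entirely elementary, and the main (indeed the only real) obstacle is the step $\sigma(A_n)=\sigma(A_0)-\sigma(B_n)$: this is precisely where the hypothesis $\sum_{r\in A_n}r<\infty$ is needed, since continuity from above fails without a finite majorant. So the one point requiring care is to keep all sums finite and never manipulate an $\infty-\infty$.
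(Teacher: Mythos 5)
Your proof is correct, and it reaches the identity by a route that is organised differently from the paper's. You dualise: fixing the ambient finite-mass set $A_0$, you pass to the increasing complements $B_n = A_0\setminus A_n$, prove continuity from below ($\sup_n\sum_{r\in B_n}r=\sum_{r\in\bigcup_n B_n}r$) by exhausting the union with finite subsets, and then subtract, using $\sum_{r\in A_n}r=\sum_{r\in A_0}r-\sum_{r\in B_n}r$ --- which, as you rightly flag, is exactly where the finiteness hypothesis is indispensable. The paper instead decomposes each $A_n$ as the disjoint union of $\bigcap_m A_m$ with $B_n=A_n\setminus\bigcap_m A_m$, thereby reducing the statement to the special case of a descending chain with \emph{empty} intersection, for which it shows directly that $\inf_n\sum_{r\in A_n}r=0$: pick a finite $A_n^*\subseteq A_n$ carrying all but $\epsilon$ of the mass of $A_n$, note that each of its finitely many elements is eventually expelled from the chain, and conclude that some $A_k$ is contained in $A_n\setminus A_n^*$ and so has mass below $\epsilon$. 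The two arguments use the same essential ingredient (approximating a countable sum by a finite subsum, which is where countability and finiteness of the sums enter), but with different bookkeeping: yours leans on the standard measure-theoretic derivation of continuity from above from continuity from below, whereas the paper's stays entirely on the descending side and avoids introducing the monotone-union lemma as a separate step. Both are equally elementary and equally rigorous; your version is perhaps slightly more recognisable as a textbook pattern, while the paper's special-case reduction keeps every manipulation to a single subtraction of an infimum.
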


A consequence of Theorem~\ref{thm: probabilistic separation} is that for every contextual  equivalence class $ E \in \elam/ =_{\mathrm{cxt}}$ and for every level  $\ell \in \mathbb{N}$ there exists a \textit{unique} value Nakajima tree $t$ of that level such that $VT^\eta_\ell(H)=t$ for all $H \in E$. Let  $t_{E, \ell}$ denote such a tree.
%
\begin{lem}\label{lem: inf limit}  Let $M \in \elam$ and $E \in \elam/ =_{\mathrm{cxt}}$. We have:
\begin{enumerate}[(1)]
\item  \label{enum: inf} $\inter{M}(E)= \inf_{\ell \in \mathbb{N}} \, (PT^\eta_\ell(M)(t_{E, \ell}))$.
\item\label{enum: limit}  $\forall \epsilon>0$ $\exists \ell \in \mathbb{N}$  $\forall \ell' \geq \ell$:  $\inter{M}(E)\approx_\epsilon PT^\eta_{\ell'}(M)(t_{E, \ell'})$.
\end{enumerate}
\end{lem}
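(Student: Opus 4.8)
The plan is to recast both sides of part~(1) as the $\inter{M}$-mass of explicit sets of head normal forms and then to conclude by a continuity-from-above argument. Fix $H_0\in E$ (we may assume $E$ contains an hnf, the other case being vacuous as then $\inter{M}(E)=0$), and for $\ell\geq 1$ put $X_\ell\triangleq(VT^\eta_\ell)^{-1}(t_{E,\ell})$, the set of hnfs whose level-$\ell$ value Nakajima tree is the common tree $t_{E,\ell}$ of $E$. Unfolding the definition of $PT^\eta_\ell$ gives $PT^\eta_\ell(M)(t_{E,\ell})=\inter{M}(X_\ell)$ for $\ell\geq 1$ (the level-$0$ term does not occur, since $PT^\eta_0=\bot$), so part~(1) becomes $\inter{M}(E)=\inf_{\ell\geq 1}\inter{M}(X_\ell)$.

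First I would check that $\{X_\ell\}_{\ell\geq 1}$ is a descending chain: a level-$(\ell{+}1)$ value Nakajima tree determines the level-$\ell$ one by truncation, so $X_{\ell+1}\subseteq X_\ell$ --- formally, since $\inter{H}=H$ for an hnf $H$ (Proposition~\ref{prop: the semantics is invariant under reduction}) the tree $PT^\eta_\ell(H)$ is the point mass at $VT^\eta_\ell(H)$, hence $H\in X_{\ell+1}$ says $PT^\eta_{\ell+1}(H)=PT^\eta_{\ell+1}(H_0)$, and Proposition~\ref{prop: equal bohm tree at depth d implies equal at less depth} then yields $PT^\eta_\ell(H)=PT^\eta_\ell(H_0)$, i.e. $H\in X_\ell$. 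Next I would identify the intersection: the inclusion $E\cap\h\subseteq\bigcap_{\ell\geq 1}X_\ell$ is immediate from the very definition of $t_{E,\ell}$, while a closed hnf lying in every $X_\ell$ has the same full probabilistic Nakajima tree as $H_0$, hence --- by the soundness of the Nakajima tree semantics with respect to $=_{\mathrm{cxt}}$, the converse of Separation --- is $=_{\mathrm{cxt}}$-equivalent to $H_0$ and so lies in $E$; thus $\bigcap_{\ell\geq 1}X_\ell\cap\he=E\cap\h$. Since $\inter{M}$ is a subprobability distribution ($\sum\inter{M}\leq 1<\infty$) supported on countably many closed hnfs, Lemma~\ref{fact: descending chain} applied to the descending countable families of masses $\{\inter{M}(H)\}_{H\in X_\ell}$ gives $\inter{M}(\bigcap_{\ell\geq 1}X_\ell)=\inf_{\ell\geq 1}\inter{M}(X_\ell)$, and because $\inter{M}$ charges only closed hnfs the left-hand side equals $\inter{M}(E\cap\h)=\inter{M}(E)$. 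This settles part~(1).

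Part~(2) then follows at once: the sequence $\ell\mapsto PT^\eta_\ell(M)(t_{E,\ell})=\inter{M}(X_\ell)$ is non-increasing (by the previous paragraph) and, by part~(1), has infimum $\inter{M}(E)$, hence converges to $\inter{M}(E)$ from above, so that $\inter{M}(E)\approx_\epsilon PT^\eta_{\ell'}(M)(t_{E,\ell'})$ for all sufficiently large $\ell'$. The main obstacle is the second inclusion $\bigcap_{\ell\geq 1}X_\ell\cap\he\subseteq E$: it is \emph{not} a consequence of Separation (Theorem~\ref{thm: probabilistic separation}), which only provides the other direction, but of the soundness of the probabilistic Nakajima tree model with respect to contextual equivalence, a result that must be imported (e.g. through the full abstraction of $\mathcal D^\infty$ recalled in the Introduction). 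A secondary, purely bookkeeping point is matching the ``set of reals'' shape of Lemma~\ref{fact: descending chain} with the multiset of masses $\{\inter{M}(H)\}_{H\in X_\ell}$, which calls for an injective re-tagging of the support of $\inter{M}$ (or a direct appeal to continuity from above of a finite measure).
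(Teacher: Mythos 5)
Your proof is correct and follows essentially the same route as the paper's: the same sets $X_\ell=(VT^\eta_\ell)^{-1}(t_{E,\ell})$, the same appeal to Proposition~\ref{prop: equal bohm tree at depth d implies equal at less depth} for the descending-chain property, the same use of Lemma~\ref{fact: descending chain} to identify the infimum with the mass of the intersection, and the same monotone-convergence argument for part~(2). The one point where you are more careful than the paper is the inclusion $\bigcap_{\ell} X_\ell\cap\he\subseteq E$: the paper asserts the identification $E\cap\he=\bigcap_{\ell} X_\ell$ as an ``if and only if'' without comment, whereas you rightly observe that this direction is not given by Theorem~\ref{thm: probabilistic separation} but by its converse (soundness of Nakajima-tree equality with respect to $=_{\mathrm{cxt}}$), which indeed has to be imported from the results recalled in the Introduction.
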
 
\begin{proof}
Let $E_{\mathsf{V}} \triangleq E\cap\he$, notice that $\inter{M}(E)=\inter{M}(E_{\mathsf{V}})$. As for point~\ref{enum: inf}, we have  $H \in E_{\mathsf{V}}$ if and only if  $\forall \ell \in \mathbb{N}$ $ VT^\eta_\ell(H)=t_{E, \ell}$  if and only if $\forall \ell \in \mathbb{N}$ $H \in (VT^\eta_\ell)^{-1}(t_{E, \ell})$, so that  $E_{\mathsf{V}}  = \bigcap _{\ell \in \mathbb{N}} (VT^\eta_\ell)^{-1}(t_{E, \ell})$. Moreover, by Proposition~\ref{prop: equal bohm tree at depth d implies equal at less depth}, for all $\ell \in \mathbb{N}$ it holds that:
\begin{equation}\label{eqn: monotone decreasing}
\begin{split}
 (VT^\eta_{\ell+1})^{-1}(t_{E, \ell+1})&= \lbrace H \in \he \ \vert \ VT^\eta_{\ell+1}(H)=t_{E, \ell+1} \rbrace \\
 &\subseteq \lbrace H \in \he \ \vert\ VT^\eta_{\ell }(H)=t_{E, \ell}\rbrace\\
 &= (VT^\eta_\ell)^{-1}(t_{E, \ell}) \enspace . 
 \end{split}
\end{equation}
Therefore,  $((VT^\eta_\ell)^{-1}(t_{E, \ell}))_{\ell \in \mathbb{N}}$ is a descending chain, so that $\lbrace \inter{M}(H)\ \vert \ H \in (VT^\eta_\ell)^{-1}(t_{E, \ell}) \rbrace_{\ell \in \mathbb{N}}$ is. Moreover, by definition we have $\sum_{H \in(VT^\eta_\ell)^{-1}(t_{E, \ell}) }\inter{M}(H )\leq \sum \inter{M}\leq 1$, for all $\ell \in \mathbb{N}$.  Hence, by applying  Lemma~\ref{fact: descending chain} and by definition of Nakajima tree equality, we have:
\allowdisplaybreaks
\begin{align*}
\inter{M}(E)=\sum_{H \in E_{\mathsf{V}}}\inter{M}(H)&= \sum_{H \in \, \bigcap _{\ell \in \mathbb{N}} ((VT^\eta_\ell)^{-1}(t_{E, \ell}))} \inter{M}(H)\\
&= \inf _{\ell \in \mathbb{N}} \sum_{H \in (VT^\eta_\ell)^{-1}(t_{E, \ell})} \inter{M}(H)
\\
&= \inf_{\ell \in \mathbb{N}} \, (PT^\eta_\ell(M)(t_{E, \ell})).
\end{align*}
 Let us prove point~\ref{enum: limit}. On the one hand,    $(PT^\eta_{\ell}(M)(t_{E, \ell}))_{\ell \in \mathbb{N}}$   is clearly a bounded below sequence. On the other hand, from~\eqref{eqn: monotone decreasing} it is also monotone decreasing. Indeed, for all $\ell \in\mathbb{N}$:
\allowdisplaybreaks
\begin{align*}
PT^\eta_{\ell+1}(M)(t_{E, \ell+1})&= \sum_{H \in (VT^\eta_{\ell+1})^{-1}(t_{E, \ell+1})}\inter{M}(H)\\
& \leq \sum_{H \in (VT^\eta_\ell)^{-1}(t_{E, \ell})}\inter{M}(H)=PT^\eta_\ell(M)(t_{E, \ell}).
\end{align*}
 Thus,  $\lim_{\ell \rightarrow \infty} (PT^\eta_\ell(M)(t_{E, \ell}))_{\ell \in \mathbb{N}}=  \inf_{\ell \in \mathbb{N}} \, (PT^\eta_\ell(M)(t_{E, \ell}))=\inter{M}(E)$, and  point~\ref{enum: limit} follows by definition of limit. 
\end{proof} 
\begin{lem}\label{lem: simeq implies same classes with same probability} Let $M, N \in \elam$. If  $M =_{\mathrm{cxt}} N$ then  $\inter{M}(E)= \inter{N}(E)$,  for all $E \in \elam/ =_{\mathrm{cxt}}$.
\end{lem}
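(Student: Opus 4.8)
The statement is essentially a corollary of the Separation Theorem (Theorem~\ref{thm: probabilistic separation}) combined with Lemma~\ref{lem: inf limit}.\ref{enum: inf}, so the plan is short. First I would fix $M, N \in \elam$ with $M =_{\mathrm{cxt}} N$ and an arbitrary equivalence class $E \in \elam/=_{\mathrm{cxt}}$. Recall that, as observed just before Lemma~\ref{lem: inf limit}, Theorem~\ref{thm: probabilistic separation} guarantees that for each level $\ell \in \mathbb{N}$ there is a \emph{unique} value Nakajima tree $t_{E,\ell}$ with $VT^\eta_\ell(H) = t_{E,\ell}$ for all $H \in E$, so the quantities $PT^\eta_\ell(M)(t_{E,\ell})$ and $PT^\eta_\ell(N)(t_{E,\ell})$ are well defined.

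Next I would apply Theorem~\ref{thm: probabilistic separation} to $M =_{\mathrm{cxt}} N$, obtaining $M =_{\mathrm{PT}^\eta} N$, i.e.\ $PT^\eta_\ell(M) = PT^\eta_\ell(N)$ as subprobability distributions over $\mathcal{VT}^\eta_\ell$, for every $\ell \in \mathbb{N}$. Evaluating both distributions at the particular value Nakajima tree $t_{E,\ell}$ yields $PT^\eta_\ell(M)(t_{E,\ell}) = PT^\eta_\ell(N)(t_{E,\ell})$ for all $\ell$.

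Finally I would invoke Lemma~\ref{lem: inf limit}.\ref{enum: inf} for both $M$ and $N$ and take infima over $\ell$:
\[
\inter{M}(E) = \inf_{\ell \in \mathbb{N}} \bigl(PT^\eta_\ell(M)(t_{E,\ell})\bigr) = \inf_{\ell \in \mathbb{N}} \bigl(PT^\eta_\ell(N)(t_{E,\ell})\bigr) = \inter{N}(E).
\]
(One could equally well argue via Lemma~\ref{lem: inf limit}.\ref{enum: limit}: for every $\epsilon>0$ pick $\ell$ large enough that $\inter{M}(E) \approx_\epsilon PT^\eta_\ell(M)(t_{E,\ell}) = PT^\eta_\ell(N)(t_{E,\ell}) \approx_\epsilon \inter{N}(E)$, whence $\inter{M}(E) \approx_{2\epsilon} \inter{N}(E)$ by Fact~\ref{prop: 2epsilon} for all $\epsilon$, forcing equality; but the direct use of the infimum characterization is cleaner.)

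\textbf{Main obstacle.} There is no real obstacle at this stage: all the difficulty has been absorbed into the two inputs. The nontrivial work is Leventis' Separation Theorem (which supplies the uniqueness of $t_{E,\ell}$ and the equality of the Nakajima trees of contextually equivalent terms) and Lemma~\ref{lem: inf limit}, whose proof recovers the full quantitative content $\inter{M}(E)$ from the finite level-by-level approximations via the descending-chain Lemma~\ref{fact: descending chain} and Proposition~\ref{prop: equal bohm tree at depth d implies equal at less depth}. Given these, the present lemma is a one-line consequence, and it is exactly what feeds the completeness argument: together with Proposition~\ref{prop: the semantics is invariant under reduction} (for the transitions out of distinguished hnfs, which are purely syntactic) it shows that (the extension to $\dhe$ of) $=_{\mathrm{cxt}}$ satisfies the bisimulation condition~\eqref{eq:bisim_gen}, hence $=_{\mathrm{cxt}}\,\subseteq\,\sim$.
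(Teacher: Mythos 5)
Your proposal is correct and follows essentially the same route as the paper: both rest on Theorem~\ref{thm: probabilistic separation} to get $PT^\eta_\ell(M)=PT^\eta_\ell(N)$ for all $\ell$ and on Lemma~\ref{lem: inf limit} to recover $\inter{M}(E)$ from the level-$\ell$ approximations. The only (cosmetic) difference is that you invoke the infimum characterization of Lemma~\ref{lem: inf limit}.\ref{enum: inf} directly, whereas the paper argues by contradiction via the $\epsilon$-approximation of Lemma~\ref{lem: inf limit}.\ref{enum: limit} and Fact~\ref{prop: 2epsilon} --- the variant you yourself mention parenthetically.
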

\begin{proof}
Suppose toward contradiction that $\inter{M}(E)\neq \inter{N}(E)$ and consider $\epsilon>0$ such that $2\epsilon \leq  \vert \inter{M}(E)- \inter{N}(E)\vert$. By Lemma~\ref{lem: inf limit}.\ref{enum: limit}  there exist $\ell\in \mathbb{N} $ such that:
\begin{align*}
 &\inter{M}(E)  \approx_\epsilon  PT^\eta_\ell(M)(t_{E, \ell})& &\inter{N}(E) \approx _\epsilon  PT^\eta_\ell(N)(t_{E, \ell})\enspace. 
\end{align*}
By Theorem~\ref{thm: probabilistic separation}, from $M =_{\mathrm{cxt}} N$ we obtain  $M =_{\mathrm{PT}^\eta}N$, and hence $PT^\eta_{\ell}(M)= PT^\eta_\ell(N)$.  By Fact~\ref{prop: 2epsilon}, $\inter{M}(E) \approx_{2\epsilon}\inter{N}(E)$, i.e.~$\vert \inter{M}(E) -\inter{N}(E)\vert < 2\epsilon$. A contradiction.
\end{proof}

\begin{rem}
Observe that the statement of Lemma~\ref{lem: simeq implies same classes with same probability} may fail when $\Lambda_\oplus$ is endowed with a different operational semantics than head reduction. As an example, recall the terms  $M\triangleq\lambda xy.(x\oplus y) $ and $N\triangleq (\lambda xy. x)\oplus (\lambda xy.y)$ discussed in the Introduction (Eq. \eqref{ex:no_cbn_fa}). In the lazy cbn, $M$ and $N$ are contextually  equivalent \cite{dal2014coinductive}.   Moreover,  $M$  is a value for lazy cbn, while $N$ reduces with equal probability $\frac{1}{2}$ to $\mathbf T = \lambda xy.x $  and $\mathbf F = \lambda xy.y$. However, $M$, $\mathbf T$ and $\mathbf F$ are pairwise contextually inequivalent since, by setting $\mathcal{C}=[\cdot]\mathbf I\mathbf  \Omega$, we have that  $\mathcal{C}[M]$, $\mathcal{C}[\mathbf T]$,  and $\mathcal{C}[\mathbf F]$ converge with probability  $\frac{1}{2}$, $1$, and $0$, respectively. Therefore, by setting $E$ as the lazy cbn contextual equivalence class containing $M$, we have $\inter M (E)=1$, while $\inter N(E)=0$.
\end{rem}

\begin{thm}[Full abstraction] \label{thm: completeness} For all $M, N \in \plam$:
\begin{equation*}
M =_{\mathrm{cxt}} N\Leftrightarrow M \sim N \enspace . 
\end{equation*}
\end{thm}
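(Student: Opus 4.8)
The plan is to prove the two inclusions separately. The inclusion ${\sim}\subseteq{=_{\mathrm{cxt}}}$ is precisely the Soundness Theorem~\ref{thm: soundness new}.\ref{eqn: soundness equal}, so nothing remains to do there. For the converse, completeness, I would show that $=_{\mathrm{cxt}}$, suitably extended to the states of the $\plam$-Markov chain, is a probabilistic applicative bisimulation; since $\sim$ is by definition the union of all such bisimulations (Eq.~\eqref{enum: bisimilarity}), this yields ${=_{\mathrm{cxt}}}\subseteq{\sim}$ on closed terms. The general (open-term) case then follows from the closure definition~\eqref{eqn: open term bisimil} together with the fact that $\lambda$-abstraction is a congruence for $=_{\mathrm{cxt}}$ — a consequence of Lemma~\ref{lem: abstraction congruence for obs and app} and the Context Lemma~\ref{lem: context lemma}.

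Concretely, I would define a relation $\mathcal{R}$ on $\elam\uplus\dhe$ by: $P\,\mathcal{R}\,Q$ iff $P =_{\mathrm{cxt}} Q$ for $P,Q\in\elam$, and $\nu x.H\,\mathcal{R}\,\nu x.H'$ iff $\lambda x.H =_{\mathrm{cxt}}\lambda x.H'$ for distinguished hnfs, with no pair relating a term to a distinguished hnf. This is manifestly an equivalence relation. To verify condition~\eqref{eq:bisim_gen} I treat the two kinds of states. For a closed term $M$, the only non-zero transitions are labelled $\tau$ and land in $\dhe$; an $\mathcal{R}$-class $E$ of distinguished hnfs is exactly $\{\nu x.H \mid \lambda x.H\in C\}$ for a unique $C\in\elam/{=_{\mathrm{cxt}}}$ (using that a closed hnf is always an abstraction, so every state of $\dhe$ is some $\nu x.H$), and therefore $\mathcal{P}_\oplus(M,\tau,E)=\sum_{\lambda x.H\in C}\inter{M}(\lambda x.H)=\inter{M}(C)$. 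Hence $M =_{\mathrm{cxt}} N$ forces $\mathcal{P}_\oplus(M,\tau,E)=\mathcal{P}_\oplus(N,\tau,E)$, which is exactly Lemma~\ref{lem: simeq implies same classes with same probability}. For a distinguished hnf $\nu x.H$ the only non-zero transitions are the applicative ones $\mathcal{P}_\oplus(\nu x.H,P,H[P/x])=1$ (labelled by closed $P$), so it suffices to know that $\lambda x.H =_{\mathrm{cxt}}\lambda x.H'$ implies $H[P/x] =_{\mathrm{cxt}} H'[P/x]$ for every closed $P$; this places $H[P/x]$ and $H'[P/x]$ in the same $\mathcal{R}$-class, so the corresponding transition probabilities (which are just $0/1$ indicators) agree.

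The substitutivity statement just invoked is the technical heart. I would derive it as follows: from $\lambda x.H =_{\mathrm{cxt}}\lambda x.H'$, Lemma~\ref{lem: abstraction congruence for obs and app}.\ref{eqn: abstraction congruence for app} gives $H =_{\mathrm{cxt}} H'$ as open terms; then for an arbitrary context $\mathcal{C}$, plugging into the context $\mathcal{C}[(\lambda x.[\cdot])P]$ — which legitimately captures $x$ — gives $\sum\inter{\mathcal{C}[(\lambda x.H)P]}=\sum\inter{\mathcal{C}[(\lambda x.H')P]}$, and since $\inter{(\lambda x.H)P}=\inter{H[P/x]}$ by Proposition~\ref{prop: the semantics is invariant under reduction}.\ref{enum: invariance beta}, Lemma~\ref{lem: operational semantics monotonicity contexts} lets one replace $(\lambda x.H)P$ by $H[P/x]$ inside $\mathcal{C}$ without changing the observed mass, so $H[P/x] =_{\mathrm{cxt}} H'[P/x]$. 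Once this is in place, $\mathcal{R}$ is a probabilistic applicative bisimulation, hence contained in $\sim$, so $M =_{\mathrm{cxt}} N$ for closed $M,N$ gives $M\,\mathcal{R}\,N$ and thus $M\sim N$; the open case follows as indicated above. I expect the main obstacle to be this capture-aware substitution argument together with the bookkeeping that identifies $\mathcal{R}$-classes of distinguished hnfs with contextual equivalence classes of closed terms, so that Lemma~\ref{lem: simeq implies same classes with same probability} applies verbatim; everything else is routine.
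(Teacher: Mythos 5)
Your proposal is correct and follows essentially the same route as the paper: soundness for one direction, and for the other the relation $\mathcal{R}$ pairing contextually equivalent closed terms and distinguished hnfs whose abstractions are contextually equivalent, verified to be a bisimulation via Lemma~\ref{lem: simeq implies same classes with same probability} for $\tau$-transitions and a substitutivity argument for applicative transitions, with the open case handled by Lemma~\ref{lem: abstraction congruence for obs and app} and the Context Lemma. The only cosmetic difference is that you re-derive the step $\lambda x.H =_{\mathrm{cxt}} \lambda x.H' \Rightarrow H[P/x] =_{\mathrm{cxt}} H'[P/x]$ by hand with a capturing context, where the paper invokes Lemma~\ref{lem: abstraction congruence for obs and app}.\ref{eqn: application congruence for obs} together with Propositions~\ref{prop: the semantics is invariant under reduction} and~\ref{prop: same small-step implies context equivalent}; both are fine.
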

\begin{proof}
The right-to-left direction is Theorem~\ref{thm: soundness new}.\ref{eqn: soundness equal}. Concerning  the converse,  we first consider the case of closed terms. So, let $M, N \in \elam$ be such that $M =_{\mathrm{cxt}}N$. We prove that there exists probabilistic applicative bisimulation $\mathcal{R}$ containing $=_{\mathrm{cxt}}$. We define $\mathcal{R}$ as follows:
\begin{multline*}
\lbrace (P,Q) \in \elam \times \elam \ \vert \  P =_{\mathrm{cxt}} Q \rbrace\\
\cup  \lbrace (\nu x.H,\nu x. H')\in \dhe \times \dhe   \ \vert \ \lambda x.H =_{\mathrm{cxt}}\lambda x.H' \rbrace 
 \enspace .
\end{multline*}
Let us prove that $\mathcal{R}$ is a probabilistic applicative bisimulation. Since   $=_{\mathrm{cxt}}$ is an equivalence relation, then $\mathcal{R}$ is. Now, let $(\nu x. H, \nu x. H'), (P, Q) \in \mathcal{R}$, $E \in (\elam \cup \dhe )/ \mathcal{R}$, and let $l \in \elam \cup \lbrace \tau \rbrace$. We have to show that:
\begin{enumerate}[(1)]
\item  \label{enum: term bisimulation}$\mathcal{P}_\oplus(P, l, E)=\mathcal{P}_\oplus(Q,l, E)$,
\item \label{enum: value bisimulation} $\mathcal{P}_\oplus(\nu x. H, l, E)=\mathcal{P}_\oplus(\nu x. H',l, E)$.
\end{enumerate} 
Let us  prove point~\ref{enum: term bisimulation}.  If  $l \in \elam$ then $\mathcal{P}_\oplus(P, l, E)=0=\mathcal{P}_\oplus(Q, l, E)$. If $l=\tau$ we define $\widehat{E}\triangleq \lbrace \lambda  x.H \in \he \ \vert \ \nu x.H \in E \rbrace \cup \lbrace P' \in \elam \ \vert \ P' \in E \rbrace$. Then, by definition: 
\allowdisplaybreaks
\begin{align*}
&\mathcal{P}_\oplus(P, \tau, E)=\inter{P}(\widehat{E}) \enspace \qquad  \mathcal{P}_\oplus(Q, \tau, E)= \inter{Q}(\widehat{E})\enspace .
\end{align*}
Since $(P, Q)\in  \mathcal{R}$ and $E \in (\elam \cup \dhe )/ \mathcal{R}$, it holds that $P =_{\mathrm{cxt}}Q$ and  $\widehat{E} \in \elam/_{=_{\mathrm{cxt}}}$. By applying  Lemma~\ref{lem: simeq implies same classes with same probability} we have $\inter{P}(\widehat{E})=\inter{Q}(\widehat{E})$, and hence $\mathcal{P}_\oplus(P, \tau, E)=\mathcal{P}_\oplus(Q,\tau, E)$.\\
Let us now prove  point~\ref{enum: value bisimulation}. If $l = \tau$ then $P_\oplus(\nu x. H,\tau,  E)=0=P_\oplus(\nu x. H', \tau, E)$. Otherwise, let  $l= L \in \elam$.  Since  ${=_{\mathrm{cxt}}}  $ is $  {\leq _{\mathrm{cxt}}}\cap  {(\leq_{\mathrm{cxt}})}^{op}$, by  Lemma~\ref{lem: abstraction congruence for obs and app}.\ref{eqn: application congruence for obs} we have that $\lambda x. H =_{\mathrm{cxt}} \lambda x. H'$ implies   $(\lambda x. H)L =_{\mathrm{cxt}} (\lambda x. H')L$. From Proposition~\ref{prop: the semantics is invariant under reduction}.\ref{enum: invariance beta} and Proposition~\ref{prop: same small-step implies context equivalent} we have:
\begin{equation*}
H[L/x]  =_{\mathrm{cxt}}  (\lambda x. H)L   =_{\mathrm{cxt}} (\lambda x. H')L =_{\mathrm{cxt}}  H'[L/x] \enspace .
\end{equation*}
 Therefore,  $H[L/x] \in E$ if and only if $H'[L/x] \in E$, and hence $\mathcal{P}_\oplus(\nu x.H, L, E)=\mathcal{P}_\oplus(\nu x. H',L, E)$.\\
 Now,  let $M, N \in \plam^{\lbrace x_1, \ldots, x_n\rbrace}$ be such that $M =_{\mathrm{cxt}}N$. Since ${=_{\mathrm{cxt}}} $ is   ${\leq _{\mathrm{cxt}}}\cap {(\leq_{\mathrm{cxt}})}^{op}$, by repeatedly applying  Lemma~\ref{lem: abstraction congruence for obs and app}.\ref{eqn: abstraction congruence for obs} and Lemma~\ref{lem: context lemma}.\ref{eqn: context lemma leq},   $\lambda x_1\ldots x_n. M =_{\mathrm{cxt}}\lambda x_1\ldots x_n. N$. Since these terms are closed, we obtain $\lambda x_1\ldots x_n. M \sim \lambda x_1\ldots x_n. N$. Finally, from~\eqref{eqn: open term bisimil} we conclude  $M \sim N$.
\end{proof}

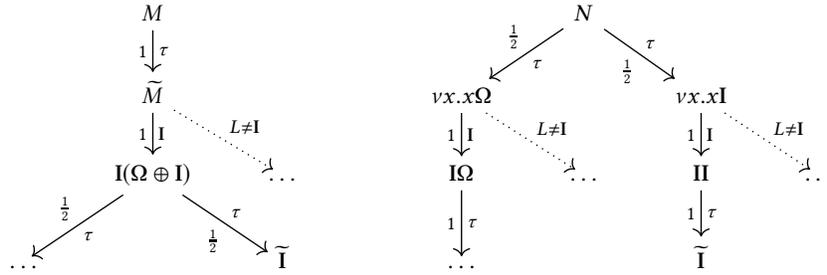
\begin{figure*}[th]
 \centering
 \begin{framed}
\[
\begin{tikzcd}[ampersand replacement = \&]
               \&  M   \arrow[d,  "1 " swap,   "\tau "]            \&    \& \&                               \&  N  \arrow[ld,  "\frac{1}{2} "swap,   "\tau " ]  \arrow[rd,  "\frac{1}{2} " swap,   "\tau "] \& \& \\
               \&  \wt{M}  \arrow[rd, dotted, "L\not =\mathbf I"]   \arrow[d,  "1" swap , "\mathbf I "]     \&    \& \&        \nu x.x\mathbf\Omega \arrow[rd,dotted, "L \not =\mathbf I"]  \arrow[d, "\mathbf I", "1" swap] \&      \& \nu x.x\mathbf I \arrow[d, "\mathbf I", "1" swap] \arrow[rd, dotted, "L\not =\mathbf I"] \&\\
               \& \mathbf I(\mathbf\Omega \oplus\mathbf I)  \arrow[ld,  "\frac{1}{2} " swap,   "\tau " ]  \arrow[rd,  "\frac{1}{2} " swap,   "\tau "] \&   \ldots    \& \&           \mathbf I\mathbf\Omega   \arrow[d, "1 " swap,"\tau"]   \&    \ldots     \&          \mathbf I\mathbf I              
             \arrow[d, "1 " swap,"\tau"]         \& \ldots  \\
\ldots \&                                  \& \wt{\mathbf I}  \& \&     \ldots                 \&         \&\wt{\mathbf I} \&
\end{tikzcd}
\]
 \vspace{-.4cm}
 \caption{Markov chain for $M=\lambda x.x (\mathbf\Omega \oplus\mathbf I)$ and $N=\lambda x.(x\mathbf\Omega \oplus x\mathbf I)$.}
 \label{fig: counterexample similarity}
 \end{framed}
\end{figure*}
\subsection{$\mathrm{PAS}$ is Not Complete}\label{chap 4 sec 5 subsec 2}
 Theorem~\ref{thm: completeness} establishes a precise correspondence between $\mathrm{PAB}$ and   contextual equivalence. But  what about $\mathrm{PAS}$ and   contextual preorder? The soundness theorem (Theorem~\ref{thm: soundness new}) states that the former implies the latter, so that it is natural to wonder whether the converse holds as well. Surprisingly enough, as in the case of the lazy reduction strategies (see~\cite{dal2014coinductive} and~\cite{crubille2014probabilistic}), the answer is negative.

A counterexample to PAS completeness is given by:
\begin{align}\label{eqn: counterexample similarity}
M&\triangleq \lambda x.x (\mathbf\Omega \oplus\mathbf I), &N&\triangleq \lambda x.(x\mathbf\Omega \oplus x\mathbf I).
\end{align}
whose Markov chain is sketched in Figure~\ref{fig: counterexample similarity}. First, observe that $M$ and $N$ are incomparable with respect to PAS:
\begin{lem}\label{lem: counterexample to similarity, M N not similar}  
Neither $M \precsim N$ nor $N \precsim M$ hold.
\end{lem}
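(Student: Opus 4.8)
The plan is to refute both inequalities by a short ``chase'' in the $\plam$-Markov chain (Definition~\ref{def:plam_markov}, sketched in Figure~\ref{fig: counterexample similarity}), using only that any probabilistic simulation $\mathcal{R}$ must, by \eqref{eq:sim_gen}, preserve the total $\tau$-mass reaching a set of distinguished hnfs and must send every applicative $L$-transition into $\mathcal{R}$-related states. The semantic data I need are all immediate from Proposition~\ref{prop: the semantics is invariant under reduction}: since $M$ is already a hnf, $\inter{M}=M$, so the only $\tau$-successor of $M$ is $\wt M\triangleq\nu x.x(\mathbf\Omega\oplus\mathbf I)$ with probability $1$; on the other hand $\inter{N}=\tfrac12\,\lambda x.x\mathbf\Omega+\tfrac12\,\lambda x.x\mathbf I$, so $N$ splits its $\tau$-mass equally between $\nu x.x\mathbf\Omega$ and $\nu x.x\mathbf I$. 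Likewise $\inter{\mathbf I(\mathbf\Omega\oplus\mathbf I)}=\tfrac12\,\mathbf I$ has mass $\tfrac12$, $\inter{\mathbf I\mathbf I}=\mathbf I$ has mass $1$, and $\inter{\mathbf I\mathbf\Omega}=\inter{\mathbf\Omega}=\bot$ has mass $0$; applying the label $\mathbf I$ to $\wt M$, $\nu x.x\mathbf\Omega$, $\nu x.x\mathbf I$ produces respectively $\mathbf I(\mathbf\Omega\oplus\mathbf I)$, $\mathbf I\mathbf\Omega$, $\mathbf I\mathbf I$, each with probability $1$.

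For $M\not\precsim N$, suppose $\mathcal{R}$ is a probabilistic simulation with $(M,N)\in\mathcal{R}$. From $\mathcal{P}_\oplus(M,\tau,\{\wt M\})=1$ and \eqref{eq:sim_gen}, the total $N$-mass that $\tau$ sends into $\mathcal{R}(\{\wt M\})$ is $1$; since $N$ puts mass only $\tfrac12$ on each of $\nu x.x\mathbf\Omega$, $\nu x.x\mathbf I$, this forces both to lie in $\mathcal{R}(\{\wt M\})$, in particular $(\wt M,\nu x.x\mathbf\Omega)\in\mathcal{R}$. Feeding the label $\mathbf I$ then forces $(\mathbf I(\mathbf\Omega\oplus\mathbf I),\mathbf I\mathbf\Omega)\in\mathcal{R}$. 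But now $\mathcal{P}_\oplus(\mathbf I(\mathbf\Omega\oplus\mathbf I),\tau,\{\nu x.x\})=\tfrac12$ while $\mathcal{P}_\oplus(\mathbf I\mathbf\Omega,\tau,Y)=0$ for every $Y$, contradicting \eqref{eq:sim_gen}. For $N\not\precsim M$, suppose $(N,M)\in\mathcal{R}$ for a simulation $\mathcal{R}$. From $\mathcal{P}_\oplus(N,\tau,\{\nu x.x\mathbf I\})=\tfrac12$ and the fact that $\wt M$ is $M$'s only $\tau$-successor (probability $1$), \eqref{eq:sim_gen} forces $(\nu x.x\mathbf I,\wt M)\in\mathcal{R}$; feeding $\mathbf I$ forces $(\mathbf I\mathbf I,\mathbf I(\mathbf\Omega\oplus\mathbf I))\in\mathcal{R}$; and finally $\mathcal{P}_\oplus(\mathbf I\mathbf I,\tau,\{\nu x.x\})=1$ while $\mathcal{P}_\oplus(\mathbf I(\mathbf\Omega\oplus\mathbf I),\tau,Y)\le\tfrac12$ for every $Y$, again contradicting \eqref{eq:sim_gen}.

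There is no real obstacle; the one thing to be careful about is which branch one keeps at the first step — one must use $\nu x.x\mathbf\Omega$ for $M\not\precsim N$ and $\nu x.x\mathbf I$ for $N\not\precsim M$, since the other branch leads to no contradiction — together with respecting the two representations of a closed hnf, as a term and as a distinguished state $\nu x.H$, when reading off transitions from Definition~\ref{def:plam_markov}. Intuitively both inequalities fail because $M$ keeps the choice $\mathbf\Omega\oplus\mathbf I$ suspended under its binder whereas $N$ resolves it before exposing its head variable: after supplying $\mathbf I$ as argument and evaluating once, one side carries mass $\tfrac12$ and the other mass $0$ or $1$, which no simulation can reconcile in either direction.
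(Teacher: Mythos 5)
Your proof is correct and follows essentially the same chase as the paper's: in both directions you force the same chain of related pairs ($\wt M$ with $\nu x.x\mathbf\Omega$, then $\mathbf I(\mathbf\Omega\oplus\mathbf I)$ with $\mathbf I\mathbf\Omega$, and symmetrically with $\nu x.x\mathbf I$ and $\mathbf I\mathbf I$) and derive the same mass contradiction $\tfrac12\le 0$ (resp.\ $1\le\tfrac12$) at $\nu x.x$. Working with an arbitrary simulation $\mathcal{R}$ rather than with $\precsim$ itself is an immaterial difference, since by Proposition~\ref{prop: properties dal lago bisimilarity} the similarity $\precsim$ is itself the largest simulation.
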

\begin{proof}
Let $M \precsim N$. Then,   $\mathcal{P}_\oplus(M, \tau, \wt{M})\leq \mathcal{P}_\oplus(N, \tau, {\precsim}(\wt{M}))$, so that $ \nu x.x\mathbf \Omega \in  {\precsim}(\wt{M})$, and $\wt{M}\precsim \nu x.x\mathbf\Omega$.  Hence,  $ \mathcal{P}_\oplus(\wt{M}, \mathbf I, \mathbf I(\mathbf\Omega \oplus\mathbf I))\leq \mathcal{P}_\oplus(\nu  x. x\mathbf\Omega, \mathbf I , {\precsim} (\mathbf I(\mathbf\Omega \oplus\mathbf I)))$. This means that $\mathbf I\mathbf \Omega \in  {\precsim} (\mathbf I(\mathbf\Omega \oplus\mathbf I))$, so that $\mathbf I(\mathbf\Omega \oplus\mathbf I)\precsim\mathbf I \mathbf\Omega$. So  $\frac{1}{2}= \mathcal{P}_\oplus(\mathbf I(\mathbf\Omega \oplus\mathbf I), \tau, \wt{\mathbf I})\leq \mathcal{P}_\oplus(\mathbf I \mathbf\Omega, \tau, {\precsim}( \wt{\mathbf I}))=0$. A contradiction.

Now, suppose $N \precsim M$. Then we have   $\mathcal{P}_\oplus(N, \tau, \nu  x.x\mathbf I)\leq \mathcal{P}_\oplus(M, \tau, {\precsim}(\nu x.x\mathbf I))$, so that $\wt{M}\in  {\precsim} (\nu x.x\mathbf I)$, and  $\nu x.x\mathbf I \precsim \wt{M}$. Hence, $\mathcal{P}_\oplus(\nu  x.x\mathbf I, \mathbf I, \mathbf I\mathbf I)\leq \mathcal{P}_\oplus(\wt{M},\mathbf I, {\precsim}(\mathbf I\mathbf I))$. This means that $\mathbf I(\mathbf \Omega \oplus\mathbf I)\in  {\precsim} (\mathbf I\mathbf I)$, so that $\mathbf I\mathbf I \precsim \mathbf I(\mathbf\Omega \oplus\mathbf I)$. Therefore,  $1=\mathcal{P}_\oplus(\mathbf I\mathbf I, \tau,  \wt{\mathbf I})\leq \mathcal{P}_\oplus(\mathbf I(\mathbf\Omega \oplus\mathbf I), \tau, {\precsim}( \wt{\mathbf I}))= \frac{1}{2}$. A contradiction.
\end{proof}

However, the two terms can be compared through the  contextual preorder relation:
\begin{lem} \label{lem: counterexample M and N context preorder } It holds that $M \leq_{\mathrm{cxt}} N$.
\end{lem}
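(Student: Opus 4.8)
The plan is to route through the Context Lemma and then reduce everything to a single quantitative inequality about head‑reduction termination probabilities. By Lemma~\ref{lem: context lemma}.\ref{eqn: context lemma leq} it suffices to prove $M\leq_{\mathrm{app}}N$, i.e. $\sum\inter{(\lambda x_1\ldots x_n.M)P_1\ldots P_m}\le\sum\inter{(\lambda x_1\ldots x_n.N)P_1\ldots P_m}$ for all closed $P_1,\dots,P_m$ and all $n,m\in\mathbb N$. Since $M$ and $N$ are closed, Proposition~\ref{prop: the semantics is invariant under reduction}.\ref{enum: invariance beta} collapses the leading $\lambda$‑prefix: if $m\le n$ both sides have mass $1$, and if $m>n$ the statement reduces, writing $R\triangleq P_{n+1}$ and letting $\vec S\triangleq P_{n+2}\ldots P_m$ (possibly empty), to $\sum\inter{MR\vec S}\le\sum\inter{NR\vec S}$. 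Using Proposition~\ref{prop: the semantics is invariant under reduction}.\ref{enum: invariance beta} together with Lemma~\ref{lem: operational semantics monotonicity contexts} gives $\inter{MR\vec S}=\inter{R(\mathbf\Omega\oplus\mathbf I)\vec S}$ and $\inter{NR\vec S}=\inter{(R\mathbf\Omega\oplus R\mathbf I)\vec S}$, and a routine induction on the length of $\vec S$ from Proposition~\ref{prop: the semantics is invariant under reduction}.\ref{lem: invariance beta general case}--\ref{enum: invariance sum} yields $\inter{(R\mathbf\Omega\oplus R\mathbf I)\vec S}=\tfrac12\inter{R\mathbf\Omega\vec S}+\tfrac12\inter{R\mathbf I\vec S}$. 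So it is enough to establish, for every closed term $R$ and every finite sequence $\vec S$ of closed terms,
\[
\textstyle\sum\inter{R(\mathbf\Omega\oplus\mathbf I)\vec S}\ \le\ \tfrac12\sum\inter{R\mathbf\Omega\vec S}+\tfrac12\sum\inter{R\mathbf I\vec S}\enspace .
\]

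To prove this inequality I would switch, via Theorem~\ref{thm: equivalence head e head spine nel paper}, to the head‑reduction presentation, so that $\sum\inter{U}$ is exactly the probability that the head reduction of $U$ terminates, and then couple the head reductions of the three terms $U_W\triangleq R\,W\,\vec S$ for $W\in\{\mathbf\Omega,\mathbf I,\mathbf\Omega\oplus\mathbf I\}$. Running head reduction while marking the descendants of the distinguished occurrence of $W$, the three reductions perform the same steps and consume the same random bits (those coming from $\oplus$‑subterms of $R$ and $\vec S$) up to the first moment the head‑redex search focuses on a marked subterm; from then on a marked $\mathbf\Omega$ diverges, a marked $\mathbf I$ is reduced and the computation continues, and a marked $\mathbf\Omega\oplus\mathbf I$ tosses a fresh fair bit choosing between these two behaviours. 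Let $q_j$ be the probability, over the bits of the $\oplus$‑subterms of $R$ and $\vec S$, that the $\mathbf I$‑coupled run terminates and focuses on a marked subterm exactly $j$ times. Then $\sum\inter{R\mathbf I\vec S}=\sum_{j\ge0}q_j$; $\sum\inter{R\mathbf\Omega\vec S}=q_0$ (if a marked subterm ever reaches the focus in the $\mathbf I$‑run, the $\mathbf\Omega$‑run has already diverged there, and otherwise the two runs coincide); and $\sum\inter{R(\mathbf\Omega\oplus\mathbf I)\vec S}=\sum_{j\ge0}q_j\,2^{-j}$ (the $\mathbf\Omega\oplus\mathbf I$‑run terminates iff all $j$ fresh bits read ``$\mathbf I$'', an independent event of probability $2^{-j}$, in which case it follows the $\mathbf I$‑run). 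The claim is now arithmetic: $\sum_{j}q_j2^{-j}=q_0+\sum_{j\ge1}q_j2^{-j}\le q_0+\tfrac12\sum_{j\ge1}q_j=\tfrac12q_0+\tfrac12\sum_{j\ge0}q_j$, which is precisely the required inequality.

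The main obstacle is making this coupling rigorous. Concretely, I would fix a small‑step presentation of the head reduction (the matrix $\mathcal H$ of Section~\ref{sec2}) extended to an annotated calculus that records the descendants of the distinguished occurrence, check that annotations are preserved by $\mathcal H$, and then prove the three identities for $\sum\inter{U_W}$ by induction on the number of $\mathcal H$‑steps — equivalently, prove by induction on $n$ a single statement comparing $\mathcal H^n$ applied to the annotated term with all marks reading $\mathbf\Omega\oplus\mathbf I$, with $\mathbf\Omega$, and with $\mathbf I$. The only delicate case is when the next step focuses on a marked subterm; every other step is uniform in $W$ and dispatched by a case analysis on the head‑context decomposition $U=\cont{\cdot}$. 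Everything else in the argument is bookkeeping, and no Howe‑style construction or testing‑equivalence machinery is needed.
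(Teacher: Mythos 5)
Your proposal is correct in outline and shares the paper's skeleton --- Context Lemma, restriction to applicative contexts, and a quantitative comparison of the three instantiations of the distinguished occurrence by $\mathbf\Omega$, $\mathbf I$ and $\mathbf\Omega\oplus\mathbf I$ --- but it proves the crucial estimate by a genuinely different argument. The paper shows $M\le_{\mathrm{app}}N$ by induction on the number of applied arguments, the inductive step resting on the two auxiliary facts \eqref{enum: preliminary counterex 1} and \eqref{enum: preliminary counterex 2}; the latter, which is your key inequality stated for an arbitrary substitution instance $L[W/x]$, is obtained by a syntactic induction on $(n,\vert L\vert)$ over the term-based semantics $\mathcal H^n$, using \eqref{enum: preliminary counterex 1} in the case where $x$ is the head variable. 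Your reduction is in fact tighter: since $\inter{(A\oplus B)\vec S}=\frac12\inter{A\vec S}+\frac12\inter{B\vec S}$ holds exactly, the whole lemma is \emph{equivalent} to the single inequality for $R(\cdot)\vec S$, and no induction on the number of arguments is needed. Your coupling, once formalized, also yields strictly more than the paper's bounds: the exact identities $\sum\inter{R\mathbf I\vec S}=\sum_j q_j$, $\sum\inter{R\mathbf\Omega\vec S}=q_0$ and $\sum\inter{R(\mathbf\Omega\oplus\mathbf I)\vec S}=\sum_j q_j2^{-j}$ subsume both \eqref{enum: preliminary counterex 2} and (at the level of masses) \eqref{enum: preliminary counterex 1}, via $q_0\le\sum_j q_j$. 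What you buy is an exact, conceptually transparent formula; what you pay is the formalization of the coupling, which is where essentially all the remaining work sits.

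One point deserves explicit care there: the claim that every non-focusing step ``is uniform in $W$ and consumes the same random bits'' is not literally true for the matrix $\mathcal H$, because of the degenerate clause $\mathcal H(\cont{P\oplus P},\cont{P})=1$. For instance the marked term $z\oplus\mathbf I$ instantiates to $\mathbf I\oplus\mathbf I$, which steps deterministically, and to $\mathbf\Omega\oplus\mathbf I$, which flips a coin; so the three runs do not perform bit-for-bit identical transitions, and the substitution property of head reduction fails in exactly this edge case (the weight jumps from $\frac12$ to $1$). The identities survive, since merging two equal branches changes no termination probability, but the induction on $\mathcal H^n$ must be carried out on the generic term $Rz\vec S$ with $z$ fresh, always charging a coin to every $\oplus$, and only then transported to the three instantiations. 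This is more than bookkeeping, though it does not invalidate your plan.
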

\begin{proof}[Proof (sketch)]
By Lemma~\ref{lem: context lemma} it is enough to show that $M \leq_{\mathrm{app}}N$.  Since $M, N \in \elam$, this amounts to check that for all finite sequences $L_1, \ldots, L_n \in \elam$, it holds that $  \sum \inter{M L_1\ldots L_n}\leq \sum \inter{NL_1\ldots L_n}$. The proof easily follows once one has:
\begin{equation}
  \inter{L[\mathbf\Omega/x]}\leq_\dist    \inter{L[\mathbf I/x]}, \label{enum: preliminary counterex 1}
\end{equation}
\begin{equation}
\textstyle \sum \inter{L[(\mathbf\Omega \oplus\mathbf I)/x]}\leq \frac{1}{2}\cdot \sum \inter{L[\mathbf\Omega/x]}+ \frac{1}{2}\cdot \sum \inter{L[\mathbf I/x]} ,\label{enum: preliminary counterex 2}
\end{equation}
for any term $L$. The first inequation is an easy consequence of Proposition \ref{prop: same small-step implies context equivalent}, while the second one can be proven by induction on an approximation of $\inter{\cdot}$.
\end{proof}
\begin{thm}  \label{thm: pas is not fully abstract}$\mathrm{PAS}$ is not complete (hence fully abstract) with respect to contextual preorder.
\end{thm}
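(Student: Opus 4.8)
The plan is to let the two preceding lemmas do all the work. By Theorem~\ref{thm: soundness new} we already know that $\mathrm{PAS}$ is sound, i.e.\ ${\precsim} \subseteq {\leq_{\mathrm{cxt}}}$, so it remains only to exhibit a pair of terms witnessing that the converse inclusion ${\leq_{\mathrm{cxt}}} \subseteq {\precsim}$ fails.

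The pair is exactly the one introduced in~\eqref{eqn: counterexample similarity}, namely the closed terms $M = \lambda x.x(\mathbf\Omega\oplus\mathbf I)$ and $N = \lambda x.(x\mathbf\Omega\oplus x\mathbf I)$. First I would invoke Lemma~\ref{lem: counterexample M and N context preorder} to obtain $M \leq_{\mathrm{cxt}} N$. Then Lemma~\ref{lem: counterexample to similarity, M N not similar} yields $M \not\precsim N$ (it even gives the stronger fact that $M$ and $N$ are $\precsim$-incomparable). Hence ${\leq_{\mathrm{cxt}}} \not\subseteq {\precsim}$, so $\mathrm{PAS}$ is not complete; since it is sound, it is not fully abstract, which is the statement of the theorem.

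Thus the theorem itself has a one-line proof: all the mathematical content is packed into the two lemmas. If one wanted to reprove it from scratch, the genuine obstacle would be Lemma~\ref{lem: counterexample M and N context preorder}, and inside it the inequality~\eqref{enum: preliminary counterex 2}: substituting $\mathbf\Omega\oplus\mathbf I$ for $x$ in any term $L$ converges with probability at most the average of the convergence probabilities of $L[\mathbf\Omega/x]$ and $L[\mathbf I/x]$. This is precisely the failure, in pure call-by-name, of the ``copy after evaluation'' capability discussed in the Introduction, and it is what makes $M$ lie below $N$ in the contextual preorder even though no applicative simulation relates them. I would prove that inequality by induction on a finite big-step approximation $L[(\mathbf\Omega\oplus\mathbf I)/x]\Downarrow\mathscr{D}$, keeping track of the independent choices introduced by the copies of $\mathbf\Omega\oplus\mathbf I$ along the derivation; everything downstream — the monotonicity inequality~\eqref{enum: preliminary counterex 1}, the reduction to applicative contexts via the Context Lemma~\ref{lem: context lemma}, and the bookkeeping over the argument list $L_1,\dots,L_n$ — is routine.
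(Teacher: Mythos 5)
Your proposal is correct and is exactly the paper's argument: the theorem is obtained by combining Lemma~\ref{lem: counterexample M and N context preorder} ($M \leq_{\mathrm{cxt}} N$) with Lemma~\ref{lem: counterexample to similarity, M N not similar} ($M \not\precsim N$) for the pair in~\eqref{eqn: counterexample similarity}, and your sketch of how the supporting lemmas would be proved (the Context Lemma reduction and the induction establishing~\eqref{enum: preliminary counterex 2}) matches the paper's proofs as well.
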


\section{Conclusion}\label{sec5}
We have considered the untyped probabilistic $\lambda$-calculus $\plam$ endowed with an operational semantics based on the head spine reduction, a variant of the head reduction strategy giving rise to the same big-step semantics (Theorem~\ref{thm: equivalence head e head spine nel paper}). We have proven that probabilistic applicative bisimilarity is fully abstract with respect to contextual equivalence (Theorem~\ref{thm: completeness}). The soundness part is a consequence of a Context Lemma (Lemma~\ref{lem: context lemma}). The completeness proof relies on the Separation Theorem, introducing probabilistic Nakajima trees~\cite{leventis2018probabilistic}.  

Our result completes the picture about fully abstract descriptions of the probabilistic head reduction contextual equivalence, adding finally a coinductive characterisation. To the best of our knowledge, this picture can be resumed by the equivalences of all the following items, for $M$ and $N$ probabilistic $\lambda$-terms: 
\begin{enumerate}
\item\label{item:cxt} $M$ and $N$ are contextually equivalent,
\item\label{item:separation} $M$ and $N$ have the same probabilistic Nakajima tree \cite{leventis2018probabilistic,leventis2019strong},
\item\label{item:game} $M$ and $N$ have the same denotation in the reflexive arena $\mathcal U$  of the cartesian closed category of probabilistic concurrent game semantics \cite{ClairambaultP18},
\item\label{item:pcoh} $M$ and $N$ have the same denotation in the reflexive object $\mathcal D^\infty$ of the cartesian closed category of probabilistic coherence spaces or of the $\mathbb R^+$-weighted relations \cite{ClairambaultP18,leventis2019strong},
\item\label{item:bisimilarity} $M$ and $N$ are applicatively bisimilar (this paper),
\item\label{item:testing} $M$ and $N$ are testing equivalent according to the testing language $\mathtt{T}_0$ (a consequence of \cite{VANBREUGEL2005} and this paper).
\end{enumerate}



Last, we have shown a counterexample to the full abstraction problem for probabilistic applicative similarity (Equation~\eqref{eqn: counterexample similarity}).  We conjecture that  extending the calculus with Plotkin’s parallel disjunction \cite{plotkin1977lcf}, as done in~\cite{crubille2015applicative}, is enough to restore this property. This is left to future work.

\begin{acks}                            
We would like to thank T. Ehrhard, R. Crubill\'e, V. Vignudelli and the anonymous reviewers for useful comments and discussions. 
This material is based upon work supported by the French  \grantsponsor{ANR}{{\it ANR : ``Agence National de Recherche''}}{https://anr.fr} under Grant ``PPS: Probabilistic Program Semantics'',
 No.~\grantnum{ANR-19-CE48-0014}{ANR-19-CE48-0014}.

\end{acks}

\bibliography{main}

\newpage
\appendix
\section{Proofs of Section~\ref{sec2}}
This  proves that the set $\lbrace \mathscr{D}\in \dist{(\h)}\ \vert \ M \Downarrow \mathscr{D} \rbrace$ is directed for all $M \in \plam$. 
\begin{lem}\label{lem: direct set} For every $M\in \Lambda_\oplus$, if $M\Downarrow \mathscr{D}$ and $M \Downarrow \mathscr{E} $ then  there exits $\mathscr{F} \in \dist(\h)$  such that $M \Downarrow \mathscr{F}$ and $\mathscr{D},\mathscr{E}  \leq_\dist \mathscr{F}$.
\end{lem}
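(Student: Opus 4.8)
The statement to prove is Lemma~\ref{lem: direct set}: if $M \Downarrow \mathscr{D}$ and $M \Downarrow \mathscr{E}$, then there exists $\mathscr{F}$ with $M \Downarrow \mathscr{F}$ and $\mathscr{D}, \mathscr{E} \leq_\dist \mathscr{F}$. The plan is to proceed by induction on the structure of $M$, simultaneously recursing on the derivations of $M \Downarrow \mathscr{D}$ and $M \Downarrow \mathscr{E}$. The key preliminary observation is that for a fixed $M$, the shape of any derivation of $M \Downarrow \cdot$ is essentially determined by $M$ up to uses of the ``give up'' rule $s1$: if $M$ is a variable, either the derivation is $s1$ (giving $\bot$) or $s2$ (giving $M$); if $M = \lambda x. M'$, either it is $s1$ or it ends in $s3$ applied to a subderivation of $M' \Downarrow \cdot$; if $M = M_1 \oplus M_2$, either it is $s1$ or it ends in $s5$; and if $M = M_1 N$, either it is $s1$ or it ends in $s4$.

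For the inductive step I would do a case analysis on $M$. The base cases ($M$ a variable) and the $s1$-cases are trivial: $\bot \leq_\dist$ everything, and if both derivations give the same normal form we take $\mathscr{F}$ to be that. For $M = \lambda x. M'$: from the subderivations $M' \Downarrow \mathscr{D}'$ and $M' \Downarrow \mathscr{E}'$ (with $\mathscr{D} = \lambda x. \mathscr{D}'$, $\mathscr{E} = \lambda x.\mathscr{E}'$) get $\mathscr{F}'$ by induction hypothesis, and set $\mathscr{F} = \lambda x. \mathscr{F}'$, using that $\lambda x.(\cdot)$ is monotone for $\leq_\dist$. For $M = M_1 \oplus M_2$: apply the induction hypothesis to the two pairs of subderivations of $M_1 \Downarrow \cdot$ and $M_2 \Downarrow \cdot$, obtaining $\mathscr{F}_1, \mathscr{F}_2$, and set $\mathscr{F} = \tfrac12 \mathscr{F}_1 + \tfrac12 \mathscr{F}_2$; monotonicity of the barycentric sum gives the bound. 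For $M = M_1 N$: here both derivations end in $s4$, using subderivations $M_1 \Downarrow \mathscr{D}_1$ and $M_1 \Downarrow \mathscr{E}_1$, plus families $\{H[N/x] \Downarrow \mathscr{D}_{H,N}\}$ and $\{H[N/x] \Downarrow \mathscr{E}_{H,N}\}$ indexed over the respective supports. By the induction hypothesis applied to $M_1$, get $\mathscr{G}_1$ with $\mathscr{D}_1, \mathscr{E}_1 \leq_\dist \mathscr{G}_1$ and $M_1 \Downarrow \mathscr{G}_1$; then for each $\lambda x. H \in \mathrm{supp}(\mathscr{G}_1)$, apply the induction hypothesis to $H[N/x]$ (which is a strict subterm case in the sense that we already have a suitable auxiliary induction — see below) to combine $\mathscr{D}_{H,N}$ and $\mathscr{E}_{H,N}$ (taking $\bot$ where one side is absent) into some $\mathscr{F}_{H,N}$ with $M \Downarrow$ the corresponding $s4$-instance $\mathscr{F}$, and check $\mathscr{D}, \mathscr{E} \leq_\dist \mathscr{F}$ by comparing the two $s4$-sums coefficient-wise.

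The main obstacle is that in the $s4$ case the recursive call is on $H[N/x]$, which is \emph{not} a subterm of $M_1 N$, so a naive structural induction on $M$ does not close. The standard fix, which I would adopt, is to strengthen the induction: prove instead that the family $\{\mathscr{D} \mid M \Downarrow \mathscr{D}\}$ is directed by induction on the \emph{height} of the pair of derivation trees (or, equivalently, prove the statement by induction on the sum of the sizes of the two given derivations), since in rule $s4$ every premise derivation — including those of $H[N/x] \Downarrow \mathscr{E}_{H,N}$ — is strictly shorter than the conclusion derivation of $MN \Downarrow \cdot$. With that induction measure, all recursive calls are legitimate and the coefficient-wise monotonicity checks in each case are routine, using only that $\leq_\dist$ is the pointwise order and that the operations $\lambda x.(\cdot)$, barycentric sum, and the $s4$-combination are monotone in each argument.
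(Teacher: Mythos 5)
Your proposal is correct and follows essentially the same route as the paper: the paper's proof is by induction on (the structure of) the two derivations of $M \Downarrow \mathscr{D}$ and $M \Downarrow \mathscr{E}$, with a case analysis on $M$, dispatching the $\bot$ cases first and, in the $s4$ case, padding the family over $\mathrm{supp}(\mathscr{F}')$ with $\bot$ exactly as you describe. Your observation that a naive structural induction on $M$ fails at $H[N/x]$ and must be replaced by an induction on the derivations is precisely the measure the paper uses.
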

\begin{proof}
By induction on the structure of the derivations of $M\Downarrow \mathscr{D}$ and $M \Downarrow \mathscr{E} $. If  $\mathscr{D}= \bot$ then $\mathscr{F}\triangleq\mathscr{E}$. Similarly, if  $\mathscr{E}= \bot$ then $\mathscr{F}\triangleq \mathscr{D}$. Otherwise, we consider the structure of $M$. If $M$ is a variable, say $x$, then the last rule of both  $M\Downarrow \mathscr{D}$ and $M \Downarrow \mathscr{E} $  is $s2$,  and  we set $\mathscr{F}\triangleq x$.
If $M$ is an abstraction, say $\lambda x. M'$, then  the last rule of both  $M\Downarrow \mathscr{D}$ and $M \Downarrow \mathscr{E} $ is $s3$:
\begin{center}
\AxiomC{$M' \Downarrow \mathscr{D}'$}
\RightLabel{$s3$}
\UnaryInfC{$\lambda x. M' \Downarrow \mathscr{D}$}
\DisplayProof 
\qquad  \qquad 
\AxiomC{$M' \Downarrow \mathscr{E}'$}
\RightLabel{$s3$}
\UnaryInfC{$\lambda x. M' \Downarrow \mathscr{E}$}
\DisplayProof
\end{center}
By induction hypothesis, there exists $\mathscr{F}'$ such that $M' \Downarrow \mathscr{F}'$ and $\mathscr{D}', \mathscr{E}' \leq_\dist \mathscr{F}'$, so that we set  $\mathscr{F}\triangleq \lambda x. \mathscr{F}'$. 
If $M$ is an application, say $M'N$, then the last rule of both $M\Downarrow \mathscr{D}$ and $M \Downarrow \mathscr{E} $  is $s4$:
\begin{mathpar}
\inferrule*[Right=$s4$]{M' \Downarrow \mathscr{D}'\\  \lbrace H[N/x]\Downarrow \mathscr{D}''_{H, N} \rbrace_{\lambda x. H \,  \in \,\mathrm{ supp}(\mathscr{D}')}}{M'N \Downarrow \mathscr{D}}\\
\inferrule*[Right=$s4$]{M' \Downarrow \mathscr{E}'\\  \lbrace H[N/x]\Downarrow \mathscr{E}''_{H, N} \rbrace_{\lambda x. H \,  \in \,\mathrm{ supp}(\mathscr{E}')}}{M'N \Downarrow \mathscr{E}}
\end{mathpar}
 By induction hypothesis, there exist $\mathscr{F}'$ such that $M' \Downarrow \mathscr{F}'$  and $\mathscr{D}', \mathscr{E}'\leq_\dist \mathscr{F}'$.  Moreover, for all $H \in \mathrm{ supp}(\mathscr{F}')$, if  $ H \in  \mathrm{ supp}(\mathscr{D}')\, \cap    \mathrm{ supp}(\mathscr{E}')$ then, by induction hypothesis, there exists  $ \mathscr{G}''_{H, N}$ such that $H[L/x]\Downarrow \mathscr{G}''_{H, N}$ and  $ \mathscr{D}''_{H, N} ,  \mathscr{E}''_{H, N}\leq_\dist   \mathscr{G}''_{H, N}$. Hence, we set:
\begin{equation*}
\mathscr{F}''_{H, N}\triangleq \begin{cases}
\mathscr{D}''_{H, N}&\text{if }H \in  \mathrm{ supp}(\mathscr{D}') \text{ and }H \not \in  \mathrm{ supp}(\mathscr{E}'),\\
\mathscr{E}''_{H, N}&\text{if }H \in  \mathrm{ supp}(\mathscr{E}') \text{ and }H \not \in  \mathrm{ supp}(\mathscr{D}'),\\
 \mathscr{G}''_{H, N}&\text{if } H \in  \mathrm{ supp}(\mathscr{D}')\, \cap    \mathrm{ supp}(\mathscr{E}'),\\
 \bot &\text{otherwise} .
\end{cases}
\end{equation*}
Then, we define $ \mathscr{F}$ as:
\begin{equation*}
 \sum_{\lambda x. H\,  \in\, \mathrm{ supp}(\mathscr{F}')} \mathscr{F}'(\lambda x. H)\cdot \mathscr{F}''_{H, N} + \sum_{\substack{ H  \in\, \mathrm{ supp}(\mathscr{F}')\\  \cap\, \neu}} \mathscr{F}'(H)\cdot H N .
\end{equation*}
 The last case is when $M$ is a probabilistic sum, say $M' \oplus M''$. Then the last rule of both $M\Downarrow \mathscr{D}$ and $M \Downarrow \mathscr{E} $ is $s5$:
 \begin{mathpar}
\inferrule*[Right=$s5$]{M' \Downarrow \mathscr{D}'\\  M'' \Downarrow \mathscr{D}''}{M'\oplus M'' \Downarrow \mathscr{D}}\and
\inferrule*[Right=$s5$]{M' \Downarrow \mathscr{E}'\\  M'' \Downarrow \mathscr{E}''}{M'\oplus M'' \Downarrow \mathscr{E}}
\end{mathpar}
By induction hypothesis, there exist $\mathscr{F}'$ and $\mathscr{F}''$ such that $M' \Downarrow \mathscr{F}'$ and $\mathscr{D}',\mathscr{E}' \leq_\dist \mathscr{F}'$, as well as  $ M'' \Downarrow \mathscr{F}''$ and  $\mathscr{D}'', \mathscr{E}'' \leq_\dist \mathscr{F}''$. Then, it suffices to define $\mathscr{F}\triangleq  \frac{1}{2}\cdot \mathscr{F}'+ \frac{1}{2}\cdot \mathscr{F}''$.
\end{proof}
\noindent
\textbf{Proposition~\ref{prop: the semantics is invariant under reduction}.}  For every $M, N \in \plam$ and $H \in \h$:
\begin{enumerate}[(1)]
\item  \label{lem: invariance beta general case}  $\inter{MN}$ is equal to the following distribution:
 \begin{equation*}
\begin{split}
& \sum_{\lambda x. H\,  \in\, \mathrm{ supp}(\inter{M})} \inter{M}(\lambda x.H)\cdot \inter{H[N/x]} \\
& + \sum_{ H \, \in\, \mathrm{ supp}(\inter{M})\,\cap\, \neu} \inter{M}(H)\cdot H N .
\end{split}
\end{equation*}
\item  \label{enum: invariance beta}$\llbracket (\lambda x. H)N \rrbracket = \llbracket H[N/x] \rrbracket$.
\item \label{enum: invariance abs} $\inter{\lambda x. M}= \lambda x. \inter{M}$.
\item \label{enum: invariance sum} $\llbracket M \oplus N \rrbracket = \frac{1}{2}\llbracket M \rrbracket + \frac{1}{2} \llbracket N \rrbracket$.
\end{enumerate}
Moreover, for every $H \in\h$,  $\inter{H}=H$.
\begin{proof}
 First, we prove point~\ref{lem: invariance beta general case}.  Let $\mathscr{D}$ be such that $MN \Downarrow \mathscr{D}$.  The case $\mathscr{D}=\bot$  is trivial, so suppose $\mathscr{D}\not =\bot$. Then,   $MN \Downarrow \mathscr{D}$ must be obtained by applying the rule $s4$ to the premises $M \Downarrow \mathscr{E}$ and $\lbrace H[N/x] \Downarrow  \mathscr{F}_{H, N} \rbrace_{\lambda x. H\, \in\, \mathrm{supp}(\mathscr{E})}$,  so that $\mathscr{D}$ is of the form:
\begin{equation}\label{eqn: structure of D if applied rules s4}
 \sum_{\lambda x. H\,  \in \, \mathrm{ supp}(\mathscr{E})} \mathscr{E}(\lambda x. H)\cdot \mathscr{F}_{H, N}\ + \sum_{\substack{ H \, \in\, \mathrm{ supp}(\mathscr{E})\\ \cap\, \neu}} \mathscr{E}(H)\cdot H N
\end{equation}
This proves the $\leq_\dist$ direction. For the converse, suppose that   $\mathscr{E}$ is a head distribution such that $M \Downarrow \mathscr{E}$  and,  for all  $\lambda x. H\,  \in \, \mathrm{ supp}(\mathscr{E})$, suppose $\mathscr{F}_{H, N}$ is a head distribution such that  $H[N/x]\Downarrow \mathscr{F}_{H, N}$. By applying rule $s4$, we get  $MN\Downarrow \mathscr{D}$, where $\mathscr{D}$ is as in~\eqref{eqn: structure of D if applied rules s4}, and the result follows.\\
Point~\ref{enum: invariance beta} is a special case of point~\ref{lem: invariance beta general case} where  $M = \lambda x. H$.  So, let us prove point~\ref{enum: invariance abs}.  As for  the $\leq_\dist$ direction, suppose $\lambda x. M \Downarrow \mathscr{D}$. The case $\mathscr{D}=\bot$ is trivial, so suppose $\mathscr{D}\not =\bot$.  Then,  $\lambda x. M \Downarrow \mathscr{D}$ must be obtained from $M \Downarrow \mathscr{D}'$ by applying rule $s3$, where $\mathscr{D}= \lambda  x. \mathscr{D}'$, so that $\inter{\lambda x. M}\leq_\dist \lambda x. \inter{M}$. For the converse, suppose $\mathscr{D}$ is a  head distribution such that $M \Downarrow \mathscr{D}$. By applying rule $s3$ we get $\lambda x. M\Downarrow \lambda x. \mathscr{D}$, so that $\lambda x. \inter{M}\leq_\dist \inter{\lambda x. M}$. Point~\ref{enum: invariance sum} is similar. \\
Finally,  for all $H \in \h$,  we prove $\inter{H}=H$  by induction on the structure of $H$. If $H$ is a variable, say $x$, then $\inter{x}=x$. If $H$ is an abstraction, say $\lambda x. H'$, then  $H'$ is a head normal form. By  induction hypothesis, $\inter{ìH}=H'$. By point~\ref{enum: invariance abs} we have $\inter{\lambda x. H'}= \lambda x. \inter{H'}= \lambda x. H'$. Last, if $H$ is an application, say $MN$, then $M$ must be of the form $x P_1\ldots P_n$. By point~\ref{lem: invariance beta general case}, we have  $\inter{MN}= \inter{x P_1\ldots P_n}(x P_1\ldots P_n)\cdot x P_1\ldots P_n N= x P_1\ldots P_n N$. 
\end{proof}
\noindent
\textbf{Lemma~\ref{lem: operational semantics monotonicity contexts}.} Let $M, N \in \Lambda_\oplus$. If $\inter{M}\leq_{\dist}\inter{N}$ then  $\forall \mathcal{C}\in \mathsf{C}\Lambda_\oplus$ $\inter{\mathcal{C}[M]}\leq_{\dist}\inter{\mathcal{C}[N]}$.
\begin{proof}
By structural induction on the context $\mathcal{C}\in \mathsf{C}\Lambda_\oplus$. The case $\mathcal{C}=[\cdot]$ is trivial. Let $\mathcal{C}=\lambda x. \mathcal{C}'$  and let  $\mathscr{D}$ be such that $\lambda x. \mathcal{C}'[M] \Downarrow  \mathscr{D}$. By Proposition~\ref{prop: the semantics is invariant under reduction}.\ref{enum: invariance abs} there exists $\mathscr{D}'$ such that $\mathcal{C}'[M]\Downarrow  \mathscr{D}'$ and $\mathscr{D}\leq_\dist \lambda x. \mathscr{D}'$. By induction  hypothesis, there exists $\mathscr{E}'$ such that $\mathcal{C}'[N]\Downarrow \mathscr{E}'$ and $\mathscr{D}'\leq_\dist \mathscr{E}'$.  We define $\mathscr{E}\triangleq  \lambda x. \mathscr{E}'$, so that  $\lambda x. \mathcal{C}'[N]\Downarrow \mathscr{E}$ and  $\mathscr{D}\leq_\dist  \lambda x. \mathscr{D}'  \leq_\dist \lambda x. \mathscr{E}' =  \mathscr{E}$. \\
 We consider the case  $\mathcal{C}= \mathcal{C}'L$ (the case $\mathcal{C}=L\mathcal{C}'$ is similar). Let $\mathscr{D}$ be such that $\mathcal{C}'[M]L \Downarrow \mathscr{D}$.   By Proposition~\ref{prop: the semantics is invariant under reduction}.\ref{lem: invariance beta general case}, there exist head  distributions $\mathscr{D}'$ and $\lbrace  \mathscr{D}_{H, L} \rbrace_{\lambda x. H\,  \in\, \mathrm{ supp} (\mathscr{D}')}$ such that $\mathcal{C}'[M]\Downarrow \mathscr{D}'$,    $\lbrace H[L/x]\Downarrow \mathscr{D}_{H, L}\rbrace_{\lambda x. H\,  \in\, \mathrm{ supp} (\mathscr{D}')}$, and:
\begin{equation*}
\mathscr{D}\leq_\dist \sum_{\lambda x. H\,  \in\, \mathrm{ supp}(\mathscr{D}')} \mathscr{D}'(\lambda x. H)\cdot \mathscr{D}_{H, L}+ \sum_{\substack{H\, \in \, \mathrm{ supp}(\mathscr{D}')\\ \ \ \ \ \  \cap \,\neu}} \mathscr{D}'(H)\cdot H L \ .
\end{equation*}
 By induction hypothesis,  there exists a head distribution $\mathscr{E}'$  such that $\mathcal{C}'[N]\Downarrow \mathscr{E}'$ and $\mathscr{D}' \leq_\dist \mathscr{E}'$. For all $\lambda x. H \,\in\, \mathrm{ supp} (\mathscr{E}')$, we set:
\allowdisplaybreaks 
\begin{align*}
&\mathscr{E}_{H, L}\triangleq  \begin{cases} \mathscr{D}_{H, L} &\text{if } \lambda x. H \in \mathrm{ supp} (\mathscr{D}') \\
\bot &\text{otherwise},
\end{cases}\\
&\mathscr{E}\triangleq  \sum_{\lambda x. H\,  \in\,  \mathrm{ supp}(\mathscr{E}')} \mathscr{E}'(\lambda x. H)\cdot \mathscr{E}_{H, L}+ \sum_{\substack{H\, \in\, \mathrm{ supp}(\mathscr{E}')\\ \ \ \ \ \ \cap \,\neu}} \mathscr{E}'(H)\cdot HL .
\end{align*}
Therefore, $\mathcal{C}'[N]L\Downarrow \mathscr{E}$ and $\mathscr{D}\leq_{\dist} \mathscr{E}$. \\
 We now consider the case $\mathcal{C}= \mathcal{C}' \oplus L$ (the case $\mathcal{C}= L \oplus  \mathcal{C}'$ is symmetric). Let  $\mathscr{D}$ be such that $\mathcal{C}'[M]\oplus L \Downarrow \mathscr{D}$. By Proposition~\ref{prop: the semantics is invariant under reduction}.\ref{enum: invariance sum}, there exist $\mathscr{D}'$ and $\mathscr{D}''$ such that  $\mathcal{C}'[M]\Downarrow \mathscr{D}'$, $L \Downarrow \mathscr{D}''$ and $\mathscr{D}\leq_\dist \frac{1}{2}\cdot \mathscr{D}'+ \frac{1}{2}\cdot \mathscr{D}''$.  By induction hypothesis, there exists $\mathscr{E}'$ such that $\mathcal{C}'[N]\Downarrow \mathscr{E}'$ and $\mathscr{D}' \leq_\dist \mathscr{E}'$.  We define $\mathscr{E}\triangleq \frac{1}{2}\cdot \mathscr{E}'+ \frac{1}{2}\cdot \mathscr{D}''$, so that $\mathcal{C}'[N]\oplus L \Downarrow \mathscr{E}$ and $\mathscr{D}\leq_\dist \frac{1}{2}\cdot \mathscr{D}'+ \frac{1}{2}\cdot \mathscr{D}''\leq_\dist \frac{1}{2}\cdot \mathscr{E}'+ \frac{1}{2}\cdot \mathscr{D}''= \mathscr{E}$.
\end{proof}

\section{The head spine reduction is equivalent to the head reduction}\label{appequiv}
\paragraph{Equivalence in a term-based setting.}
In Section~\ref{sec2} we endow the probabilistic $\lambda$-calculus with  the big-step operational semantics  $\inter{\cdot}$ introduced \textit{via} the head spine  reduction. This semantics   is often  called  \enquote{distribution-based} (see~\cite{borgstrom2016lambda}), since it involves a relation between terms and distributions, and it is opposed to the so-called  \enquote{term-based} semantics (see~\cite{di2005probabilistic}), which considers relations between terms weighted with probabilities.    In what follows,   we show that  the head and  head spine reductions have the same observational behaviour.   First, we prove this property in a \enquote{term-based} setting  (Theorem~\ref{thm: head  equal to heas spine}), in which we shall give an even stronger result:  the probability that a term converges to a given head normal form in $n$ steps is the same for both reduction strategies.  Then we  prove that  the reduction relation corresponding to the head spine evaluation  generates exactly the  distribution-based  semantics $\inter{\cdot}$   (Theorem~\ref{thm: equivalence of head reduction and head spine reduction}).  

To begin with, we define probabilistic transition relations, that is to say, relations weighted with probabilities.
\begin{defn}[Probabilistic transition relations] \label{defn: probabailistic transition relation} A \textit{probabilistic transition relation over a set $X$} is a relation $\mathcal{R} \subseteq X \times [0, 1]\times X $ such that, for all $x \in X$:
\begin{equation*}
 \sum_{\substack{p,\,  y \text{ s.t.}\\(x,p, y)\in\,  \mathcal{R}}} p\leq 1 .
\end{equation*}
If $\mathcal{R} \subseteq X \times [0, 1]\times X$ is a relation, we shall write  $x \ \mathcal{R}_p \ y$ in place of  $(x,p, y)\in \mathcal{R}$. \\
Given  $\mathcal{R}$ a probabilistic transition relation over $X$, we define the  relation $\mathcal{R}^n\subseteq X \times [0, 1]\times X $ by induction on $n\in \mathbb{N}$:
\begin{align*}
x\ \mathcal{R}^0 _p  \ y &\Leftrightarrow  x=y \, \wedge \, p=1\\
x\ \mathcal{R}^{n+1} _p  \ y &\Leftrightarrow   \exists y'\,\exists p', p''\,  (x \  \mathcal{R}^{n}_{p'}\ y' \, \wedge \, y'\  \mathcal{R}_{p''} \  y \, \wedge\,  p=p' p'') .  
\end{align*}
\end{defn}
\begin{prop} \label{prop: prob trans} Let   $\mathcal{R}$ be a probabilistic transition relation over $X$. For all $n \in \mathbb{N}$, $\mathcal{R}^n$ is a probabilistic transition relation.
\end{prop}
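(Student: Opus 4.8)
The plan is to prove the statement by induction on $n\in\mathbb{N}$. For the base case $n=0$, the definition of $\mathcal{R}^0$ gives $x\,\mathcal{R}^0_p\,y$ iff $x=y$ and $p=1$, so for every $x\in X$ the sum $\sum_{p,y\,:\,x\,\mathcal{R}^0_p\,y} p$ equals $1$, hence is $\le 1$; thus $\mathcal{R}^0$ is a probabilistic transition relation.

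For the inductive step, assume $\mathcal{R}^n$ is a probabilistic transition relation and fix $x\in X$. Unfolding the definition, every triple $(x,p,y)\in\mathcal{R}^{n+1}$ is witnessed by at least one \emph{composable pair}, i.e.\ some $y'$, $p'$, $p''$ with $x\,\mathcal{R}^n_{p'}\,y'$, $y'\,\mathcal{R}_{p''}\,y$ and $p=p'p''$; moreover two distinct triples with source $x$ differ in their endpoint or in their total weight, so choosing one witness per triple gives an injection from $\{(p,y)\mid x\,\mathcal{R}^{n+1}_p\,y\}$ into the set of all composable pairs starting at $x$, sending $(p,y)$ to a pair whose product of weights is $p$. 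Since all weights are nonnegative (and only countably many are nonzero, so rearrangements are harmless), this injection together with the obvious factorisation of the sum over \emph{all} composable pairs yields
\begin{align*}
\sum_{p,y\,:\,x\,\mathcal{R}^{n+1}_p\,y} p
&\;\le\; \sum_{p',y'\,:\,x\,\mathcal{R}^n_{p'}\,y'} p' \!\!\sum_{p'',y\,:\,y'\,\mathcal{R}_{p''}\,y}\!\! p'' \\
&\;\le\; \sum_{p',y'\,:\,x\,\mathcal{R}^n_{p'}\,y'} p' \;\le\; 1,
\end{align*}
where the second inequality uses that $\mathcal{R}$ is a probabilistic transition relation (applied at each state $y'$, the inner sum is $\le 1$) and the third uses the induction hypothesis (applied at $x$). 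Hence $\sum_{p,y\,:\,x\,\mathcal{R}^{n+1}_p\,y} p\le 1$ for every $x\in X$, i.e.\ $\mathcal{R}^{n+1}$ is a probabilistic transition relation, which completes the induction.

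The argument is essentially routine; the one point deserving care is that $\mathcal{R}^{n+1}$ is, by construction, a \emph{set} of triples rather than a multiset of two-step paths, so identifying distinct paths that produce the same triple can only decrease the weighted sum --- and this is precisely the first inequality in the display above. Everything else is a straightforward rearrangement of a nonnegative (possibly infinite) sum using the two defining inequalities for $\mathcal{R}^n$ and $\mathcal{R}$.
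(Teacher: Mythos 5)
Your proof is correct and follows essentially the same route as the paper: induction on $n$, with the inductive step bounding the sum over triples of $\mathcal{R}^{n+1}$ by the sum over all composable two-step witnesses, factorising that sum, and applying the defining inequality of $\mathcal{R}$ at each intermediate state followed by the induction hypothesis at $x$. The point you flag about collapsing distinct witnessing paths into a single triple is exactly the first inequality in the paper's displayed chain, so nothing is missing.
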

\begin{proof}
By induction on $n\in \mathbb{N}$. The case $n=0$ is trivial, so let us consider $n>0$. By using the induction hypothesis, we have:
\begin{equation*}
\begin{split}
 \sum_{\substack{p,\, y \text{ s.t.}\\ x\   \mathcal{R}^n_p \ y}} p &=  \sum_{\substack{p,\, y \text{ s.t.}\\
  \exists y'\, \exists p', p''\\  (x   \ \mathcal{R}^{n-1}_{p'}  y'\\  \wedge  \,  y' \  \mathcal{R}_{p''} \, y   \\ \wedge\  p\, =\, p' p'')}} p \leq  \sum_{\substack{p',\,p'',\, y, \, y'  \text{ s.t.}\\    x   \ \mathcal{R}^{n-1}_{p'}  y'  \\  \wedge\ y' \  \mathcal{R}_{p''} \, y  }} p '  p'' \leq   \sum_{\substack{p', \,y' \text{ s.t.}\\ x\   \mathcal{R}^{n-1}_{p'} \, y'}} p'\leq 1 .
 \end{split}
\end{equation*}
\end{proof}
Both  the head and  head spine reduction strategies can be introduced as probabilistic transition relations.
\begin{defn} [Head and head spine reductions] A  \textit{head context} is a context of the form $\lambda x_1\ldots x_n.[\cdot]L_1\ldots L_m$, also written $\contrat{[\cdot]}$, where $n, m \geq 0$ and $L_i \in \Lambda_\oplus$. Head contexts are ranged over by $\mathcal{E}$. \\
The probabilistic transition relations   $\rightarrow$ (\textit{head reduction}) and $\dashrightarrow$ (\textit{head spine reduction})  over $\Lambda_\oplus$ are defined as follows:
\allowdisplaybreaks
\begin{align*}
M \rightarrow_p N&\triangleq  \begin{cases}M=\cont{(\lambda y. P)Q}, \, N=\cont{P[Q/y]},  \, p=1,\\
\hspace{3cm }\text{or}\\
M=\cont{P_1\oplus P_2}, \, P_1 \not = P_2,\, N=\cont{P_i}, \, p=\frac{1}{2},\\
\hspace{3cm }\text{or}\\
M=\cont{P\oplus P},\, N=\cont{P}, \, p=1.
 \end{cases}\\ \\
M \dashrightarrow_p N&\triangleq \begin{cases} M=\cont{(\lambda y. H)Q}, \,   N= \cont{H[Q/y]},\, p=1,\\ 
\hspace{3cm }\text{or}\\
M=\cont{(\lambda y. P)Q}, \,  P \dashrightarrow_p P', \, N= \cont{(\lambda y. P')Q},\\
\hspace{3cm }\text{or}\\
M=\cont{P_1\oplus P_2}, \, P_1\not =P_2, \, N= \cont{P_i}, \, p=\frac{1}{2},\\
\hspace{3cm }\text{or}\\
M=\cont{P\oplus P},\, N=\cont{P}, \, p=1.
 \end{cases} 
 \end{align*}
\end{defn}
For all $n \in \mathbb{N}$,  the relations $\rightarrow^n$ and $\dashrightarrow^n$  can be constructed  using Definition~\ref{defn: probabailistic transition relation}, and they are  probabilistic transition relations by Proposition~\ref{prop: prob trans}. 

Let us state some remarkable properties concerning both the head and  head spine reductions:
\begin{lem}[Reduction properties]\label{lem: reduction properties head spine} Let $M, N, L \in \Lambda_\oplus$.  The following statements hold:
\begin{enumerate}[(1)]
\item \emph{Application}: If $M \dashrightarrow_p N$ then $ML \dashrightarrow_p NL$.\label{enum: application head spine}
\item \emph{Substitution}:  If $M \rightarrow_p N$ then $M[L/x]\rightarrow_p N[L/x]$.  \label{enum: substitution head spine}
\item \emph{Abstraction}: If $M\  \mathcal{R}_p\ N$ then $\lambda x. M\  \mathcal{R}_p \ \lambda x. N$, where $\mathcal{R}\in \lbrace \rightarrow, \dashrightarrow \rbrace$. \label{enum: abstraction head spine}
\end{enumerate}
\end{lem}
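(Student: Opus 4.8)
The plan is to establish the three clauses independently, each by unwinding the inductive definitions of $\rightarrow$ and $\dashrightarrow$. Two elementary observations about head contexts $\mathcal{E}=\lambda x_1\dots x_n.[\cdot]L_1\dots L_m$ will be used throughout: prepending a $\lambda$ to a head context again yields a head context, namely $\lambda x.\mathcal{E}=\lambda x\,x_1\dots x_n.[\cdot]L_1\dots L_m$; and appending an argument to a head context with \emph{no} leading abstraction again yields a head context, namely $([\cdot]L_1\dots L_m)\,L=[\cdot]L_1\dots L_m\,L$, whereas appending an argument to a head context that does have a leading abstraction does \emph{not}. With these in hand, the \emph{Abstraction} and \emph{Substitution} clauses will be routine bookkeeping, while the \emph{Application} clause for $\dashrightarrow$ is where the spine rule genuinely intervenes.

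For \emph{Abstraction} I would replay the given derivation of $M\ \mathcal{R}_p\ N$: if it instantiates one of the clauses that contract at head position ($\beta$ or $\oplus$) with head context $\mathcal{E}$, I reinstantiate the same clause with head context $\lambda x.\mathcal{E}$; if it is the recursive spine clause $\mathcal{E}[(\lambda y.P)Q]\dashrightarrow_p\mathcal{E}[(\lambda y.P')Q]$ justified by $P\dashrightarrow_p P'$, I reinstantiate it with $\lambda x.\mathcal{E}$ and the \emph{same} sub-reduction $P\dashrightarrow_p P'$. Either way this exhibits $\lambda x.M\ \mathcal{R}_p\ \lambda x.N$. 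For \emph{Substitution} I would write $M=\mathcal{E}[\rho]$, $N=\mathcal{E}[\rho']$ with $\rho$ the head redex and $\rho'$ its contractum, choose (up to $\alpha$-conversion) the binders of $\mathcal{E}$ and of $\rho$ fresh for $x$ and for $\mathrm{FV}(L)$, and note $(\mathcal{E}[\rho])[L/x]=\mathcal{E}'[\rho[L/x]]$, where $\mathcal{E}'$ is the head context obtained by substituting $L$ for $x$ in the arguments of $\mathcal{E}$. Then $\rho[L/x]$ is a redex of the same shape whose contractum is $\rho'[L/x]$: in the $\beta$-case this is the substitution lemma $P[Q/y][L/x]=P[L/x][Q[L/x]/y]$, and in the $\oplus$-cases it is immediate. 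The one point needing attention is that $[L/x]$ may collapse $P_1\neq P_2$ to $P_1[L/x]=P_2[L/x]$; then the $\tfrac12$-step is realised through the $\mathcal{E}[P\oplus P]$ clause as a deterministic step that still contracts to $N[L/x]$, which is all the downstream uses of the property require.

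The heart of the argument is the \emph{Application} clause for $\dashrightarrow$. Given $M\dashrightarrow_p N$, I would split on whether $M$ is an abstraction. If $M$ is not an abstraction then, since $M$ reduces, the step has the form $\mathcal{E}[\sigma]\dashrightarrow_p\mathcal{E}[\sigma']$ for a head context $\mathcal{E}=[\cdot]L_1\dots L_m$ with no leading $\lambda$; appending $L$ to $\mathcal{E}$ and replaying the very same clause yields $ML\dashrightarrow_p NL$. If $M=\lambda y.M_0$, a one-line inversion on the clauses of $\dashrightarrow$ — each of which, when the reduced term is an abstraction, must factor through a head context beginning with $\lambda y$ — gives $N=\lambda y.N_0$ with $M_0\dashrightarrow_p N_0$ (in particular $M_0$ is not a head normal form); now $ML=(\lambda y.M_0)L$ matches the \emph{left-hand side of the spine clause} with empty head context $[\cdot]$, binder $y$, argument $L$, body $M_0$ and sub-reduction $M_0\dashrightarrow_p N_0$, so $ML=(\lambda y.M_0)L\dashrightarrow_p(\lambda y.N_0)L=NL$.

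I expect the only real subtlety to be this last case: the extra argument $L$ appended to a term with leading abstractions cannot be absorbed into a head context, and it is precisely the spine clause — reducing inside a $\lambda$-body that has just acquired an argument — that carries the step through. This is the defining feature separating $\dashrightarrow$ from $\rightarrow$; indeed the Application property fails outright for $\rightarrow$, since there $ML=(\lambda y.M_0)L$ would fire the outer $\beta$-redex instead of proceeding inside $M_0$. Beyond that single inversion step no induction is required.
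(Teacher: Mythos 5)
Your proof is correct, and since the paper dismisses this lemma with the single word ``Straightforward'' there is no competing argument to compare it against; your elaboration is the natural one, and the key step --- the inversion showing that a reducing abstraction $\lambda y.M_0$ must reduce inside its body, so that $ML=(\lambda y.M_0)L$ fires the recursive spine clause with empty head context --- is exactly the content that makes the Application property hold for $\dashrightarrow$ while failing for $\rightarrow$. One point is worth stating more bluntly than you do: the collapse case you flag in the Substitution clause is a defect of the \emph{statement}, not of your proof. If $P_1\neq P_2$ but $P_1[L/x]=P_2[L/x]$, then $M\rightarrow_{1/2}N$ while $M[L/x]\rightarrow_{1/2}N[L/x]$ is literally false (the $\frac{1}{2}$-clause requires distinct summands); only $M[L/x]\rightarrow_{1}N[L/x]$ holds. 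As you observe, the downstream uses (the decomposition of $\mathcal{H}^n(M[N/x],H)$ and the commutation diagram) still yield the intended equalities because the two collapsed branches contribute identical summands, but strictly the lemma should be weakened to ``$M[L/x]\rightarrow_{p'}N[L/x]$ for some $p'\geq p$'' or the collapse case excluded; your proof handles this more carefully than the paper does.
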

\begin{proof}
Straightforward.
\end{proof}
Observe  that the application property does not hold for the head reduction. For example, $\lambda x.\mathbf I \mathbf I\rightarrow_p \lambda x.\mathbf I$, but $(\lambda x. \mathbf I\mathbf I)\mathbf I\rightarrow_p\mathbf I \mathbf I \not = (\lambda x.\mathbf I)\mathbf I$. Also,  the substitution property does not hold for the head spine reduction. For example, if $M\triangleq (\lambda x. y)\mathbf I$ then $M \dashrightarrow_p y$ but $M[\mathbf \Omega/y]\dashrightarrow_p M[\mathbf \Omega/y]\not = y[\mathbf \Omega/y]$. 

The following definition introduces the probability of convergence  for both reduction strategies.
\begin{defn}[$\mathcal{H}^\infty$ and $\mathcal{S}^\infty$] \label{defn: Hn and Sn} Let $M\in \Lambda_\oplus$, $H \in \h$ and $n \in \mathbb{N}$.  We define the probability $\mathcal{H}^n(M, H)$ (resp.~$\mathcal{S}^n(M, H)$) that $M$ converges to  $H$  in exactly $n$ steps of  head reduction (resp.~of head spine reduction) as follows:
\allowdisplaybreaks
\begin{align*}
&\mathcal{H}^n(M, H)\triangleq \sum_{\substack{(M_0, \ldots, M_n)\text{ s.t. } M_0=M,  \\ M_n=H, \, \forall i<n\, M_i \rightarrow_{p_{i+1}}M_{i+1} }} \prod^n_{i=1}p_i\\
&\mathcal{S}^n(M, H)\triangleq  \sum_{\substack{(M_0, \ldots, M_n)\text{ s.t. } M_0=M,  \\ M_n=H, \, \forall i<n\, M_i \dashrightarrow_{p_{i+1}}M_{i+1} }}  \prod^n_{i=1}p_i  \, .
\end{align*}
The probability $\mathcal{H}^\infty(M, H)$ (resp.~$\mathcal{S}^\infty(M, H)$) that $M$ converges to   $H$ in an arbitrary number of steps of  head reduction (resp.~of head spine reduction) is defined as follows:
\allowdisplaybreaks
\begin{align*}
&\mathcal{H}^\infty(M, H)\triangleq \sum_{n=0}^\infty \mathcal{H}^n(M, H) \quad \mathcal{S}^\infty(M, H)\triangleq \sum_{n=0}^\infty \mathcal{S}^n(M, H)  .
\end{align*}
\end{defn}
We now state and prove some basic properties about $\mathcal{H}^n$ and $\mathcal{S}^n$.
\begin{lem} \label{lem: properties of S infty for one direction} Let $M, N \in \plam$ and  $H \in \h$. The following statements hold:
\begin{enumerate}[(1)]
\item \label{enum: S infty sum} If either $\mathcal{X}= \mathcal{H}$ and $\mathcal{R}=\, \rightarrow$, or $\mathcal{X}= \mathcal{S}$ and $\mathcal{R}=\,  \dashrightarrow$, then:
\begin{itemize}
\item If $n=0$ and $M=H$ then $\mathcal{X}^n(M , H)=1$. 
\item If $n>0$ and $M\ \mathcal{R}_1 \ M'$ then $\mathcal{X}^n(M , H)=\mathcal{X}^{n-1}(M', H)$.
\item If $n>0$,  $M\  \mathcal{R}_{\frac{1}{2}}\, M'$,  and $M\ \mathcal{R}_{\frac{1}{2}}\, M''$, then $\mathcal{X}^n(M , H)= \frac{1}{2}\cdot \mathcal{X}^{n-1}(M', H)+ \frac{1}{2}\cdot \mathcal{X}^{n-1}(M'', H)$.
\end{itemize}
\item \label{enum: S infty abst}   For all $ n\in \mathbb{N}$:
\begin{equation*}
\begin{split}
  \mathcal{H}^n(\lambda x. M, \lambda x. H)&= \mathcal{H}^n(M, H)\\
   \mathcal{S}^n(\lambda x. M, \lambda x. H)&= \mathcal{S}^n(M, H).
   \end{split}
\end{equation*}
\item \label{enum: H infty app} For all $ n\in \mathbb{N}$:
\begin{multline*}
 \mathcal{H}^n (M[N/x], H)=\\
  \sum_{l+l'=n}  \sum_{H' \in \h} \mathcal{H}^{l}(M, H')\cdot \mathcal{H}^{l'}(H'[N/x], H) .
\end{multline*}
\item \label{enum: S infty app} For all $ n\in \mathbb{N}$:
\begin{multline*}
\mathcal{S}^n (MN, H)= \sum_{l+l'=n}\sum_{H' \in \h} \mathcal{S}^l(M, H')\cdot \mathcal{S}^{l'}(H'N, H).
\end{multline*}
\end{enumerate}
\end{lem}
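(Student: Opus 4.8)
The plan is to prove the four items one after the other: items~\ref{enum: S infty sum} and~\ref{enum: S infty abst} are immediate unfoldings of Definition~\ref{defn: Hn and Sn} together with the elementary structure of the two reduction relations, whereas items~\ref{enum: H infty app} and~\ref{enum: S infty app} are proved by induction on $n$, with item~\ref{enum: S infty sum} and the substitution (resp.\ application) clause of Lemma~\ref{lem: reduction properties head spine} as the main tools. For item~\ref{enum: S infty sum} I would unfold $\mathcal{X}^n(M,H)=\sum_{(M_0,\dots,M_n)}\prod p_i$ and use that, for $\mathcal{R}\in\{\rightarrow,\dashrightarrow\}$ and $M\notin\h$, the head (resp.\ head spine) redex of $M$ is uniquely located, so the set $\{(p,M')\mid M\ \mathcal{R}_p\ M'\}$ is either a singleton $\{(1,M')\}$ — when the redex is a $\beta$-redex or a sum $P\oplus P$ — or the two-element set $\{(\tfrac12,M'),(\tfrac12,M'')\}$ — when the redex is a genuine sum $P_1\oplus P_2$ with $P_1\neq P_2$ — and is empty when $M\in\h$. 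Partitioning the length-$n$ paths ($n>0$) according to their first step then yields exactly the three displayed identities, the case $n=0$ being the empty-product convention. For item~\ref{enum: S infty abst} I would exhibit the weight-preserving bijection $(M_0,\dots,M_n)\mapsto(\lambda x.M_0,\dots,\lambda x.M_n)$ between length-$n$ $\mathcal{R}$-paths from $M$ to $H$ and those from $\lambda x.M$ to $\lambda x.H$: the forward direction is Lemma~\ref{lem: reduction properties head spine}.\ref{enum: abstraction head spine}, and for the backward direction one notes that any head context matching $\lambda x.M$ must begin with the abstraction $\lambda x$, so every step out of $\lambda x.M$ contracts a redex inside the body and is of the form $\lambda x.M\ \mathcal{R}_p\ \lambda x.M'$ with $M\ \mathcal{R}_p\ M'$.

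For item~\ref{enum: H infty app} the induction on $n$ proceeds as follows. If $M\in\h$ then $\mathcal{H}^l(M,H')=[l=0]\cdot[M=H']$ and the right-hand side collapses to $\mathcal{H}^n(M[N/x],H)$, so there is nothing to prove. If $M\notin\h$ then $M=\mathcal{E}[\rho]$ for a head context $\mathcal{E}$ and a redex $\rho$; substituting $N$ produces $M[N/x]=\mathcal{E}'[\rho']$ with $\mathcal{E}'$ again a head context and $\rho'$ a redex of the same kind, hence $M[N/x]\notin\h$, and the head step(s) of $M[N/x]$ are the $[N/x]$-images of the head step(s) of $M$ by Lemma~\ref{lem: reduction properties head spine}.\ref{enum: substitution head spine}. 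I would then rewrite $\mathcal{H}^n(M[N/x],H)$ via item~\ref{enum: S infty sum} in terms of its successors, apply the induction hypothesis to each successor, and reassemble the sum over $l+l'=n$ and $H'\in\h$ using item~\ref{enum: S infty sum} once more on $M$ (all sums here are finite); the $l=0$ contribution to the right-hand side vanishes because $M\notin\h$. The base case $n=0$ reduces to the remark that $M[N/x]\in\h$ forces $M\in\h$, since a head context together with its redex is stable under substitution.

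Item~\ref{enum: S infty app} is the exact analogue, now with the argument $N$ played against the function part $M$ and with the application clause Lemma~\ref{lem: reduction properties head spine}.\ref{enum: application head spine} in place of the substitution clause. The structural facts to check are that $M\notin\h$ implies $MN\notin\h$, and that the head spine redex of $MN$ lies inside $M$ (the spine of $MN$ runs through its function part, whether $M$ is an abstraction whose body is being evaluated or a redex/sum-headed term), so the head spine step(s) of $MN$ are the images under $(-)N$ of those of $M$; then item~\ref{enum: S infty sum}, the induction hypothesis, and item~\ref{enum: S infty sum} again close the step, the base case using that an application in $\h$ is neutral and hence so is its function part.

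I expect the only delicate point, in both items~\ref{enum: H infty app} and~\ref{enum: S infty app}, to be the edge case where a genuine sum $P_1\oplus P_2$ with $P_1\neq P_2$ becomes, after the substitution (resp.\ inside the enclosing term), a sum $P_1'\oplus P_1'$ with coinciding branches, so that a two-branch step of weight $\tfrac12,\tfrac12$ collapses to one deterministic step of weight $1$. This does not invalidate the recombination: the two collapsed successors are then literally the same term, so the quantity $\tfrac12\,\mathcal{S}^{n-1}(\cdot)+\tfrac12\,\mathcal{S}^{n-1}(\cdot)$ produced by applying the induction hypothesis to the two branches of $M$ equals $\mathcal{S}^{n-1}$ of the common successor, which is precisely what item~\ref{enum: S infty sum} returns for the contracted term. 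It is nevertheless the step that must be handled explicitly, and it is the reason the proof is organised around item~\ref{enum: S infty sum} rather than around a direct bijection between reduction paths.
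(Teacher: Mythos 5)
Your proof is correct and follows essentially the same route as the paper's: unfolding the path-sum definition of $\mathcal{X}^n$ and partitioning by the first step for item~(1), a weight-preserving bijection on reduction paths for item~(2), and induction on $n$ combining item~(1) with the substitution (resp.\ application) clause of Lemma~\ref{lem: reduction properties head spine} for items~(3) and~(4). Your explicit treatment of the edge case where a sum $P_1\oplus P_2$ with $P_1\neq P_2$ collapses to equal branches after substitution is in fact more careful than the paper, which invokes the substitution clause directly even though, as literally stated, that clause fails in precisely this case; as you observe, the recombination still goes through because the two half-weight contributions of the induction hypothesis add up to the single full-weight step of the collapsed term.
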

\begin{proof}
Concerning point~\ref{enum: S infty sum}, we just prove the case where $n>0$, $M \rightarrow_{\frac{1}{2}}M'$ and $ M \rightarrow_{\frac{1}{2}}M'' $:
\allowdisplaybreaks
\begin{align*}
   \mathcal{H}^n (M, H) &=   \sum_{\substack{(M_0, \ldots, M_n)\text{ s.t. } M_0=M,  \\ M_n=H, \, \forall i<n\, M_i \rightarrow_{p_{i+1}}M_{i+1} }} \prod^n_{i=1}p_i \\
  &=  \frac{1}{2}\cdot \Bigg(  \sum_{\substack{(M_0, \ldots, M_{n-1})
  \text{ s.t. } M_0=M',  \\ M_{n-1}=H,\, \forall i<n-1\, M_i \rightarrow_{p_{i+1}}M_{i+1} }} \prod^{n-1}_{i=1}p_i \Bigg) \\
  &\phantom{=\ }+   \frac{1}{2}\cdot \Bigg(   \sum_{\substack{(M_0, \ldots, M_{n-1}) \text{ s.t. } M_0=M'', \\  \,M_{n-1}=H, \, \forall i<n-1\, M_i \rightarrow_{p_{i+1}}M_{i+1} }} \prod^{n-1}_{i=1}p_i \Bigg) \\
&  =\frac{1}{2}\cdot     \mathcal{H}^{n-1} (M',H) + \frac{1}{2}\cdot  \mathcal{H}^{n-1}  (M'', H)   .
\end{align*}
 Concerning point~\ref{enum: S infty abst}, for all $n \in \mathbb{N}$ we have:
\allowdisplaybreaks
\begin{align*}
  \mathcal{H}^n (M, H)& =   \sum_{\substack{(M_0, \ldots, M_n)\text{ s.t. } M_0=M,  \\ M_n=H, \, \forall i<n\, M_i \rightarrow_{p_{i+1}}M_{i+1} }} \prod^n_{i=1}p_i \\
  &=    \sum_{\substack{(\lambda x. M_0, \ldots, \lambda x. M_n)\text{ s.t. } \lambda x.M_0=\lambda x. M,  \\ \lambda x.M_n=\lambda x.H, \, \forall i<n\, \lambda x.M_i \rightarrow_{p_{i+1}}\lambda x.M_{i+1} }} \prod^n_{i=1}p_i \\
  &=      \mathcal{H}^n (\lambda x. M,\lambda x. H)   .
\end{align*}
We prove the equation $\mathcal{S}^n (M, H)=  \mathcal{S}^n (\lambda x. M,\lambda x. H) $ in a similar way. \\
Let us now prove point~\ref{enum: H infty app} by induction on $n \in \mathbb{N}$.   We have three cases:
\begin{enumerate}[(a)]
\item If $M$ is a head normal  form,  then $\mathcal{H}^{l}(M,H')\neq 0$ just when $l=0$ and $H'=M$. In all cases, the equation holds.
\item Suppose $M \rightarrow_\frac{1}{2} M_1$ and $M \rightarrow_\frac{1}{2}M_2$. If $n=0$ then the equation trivially holds. Otherwise, by Lemma~\ref{lem: reduction properties head spine}.\ref{enum: substitution head spine} we have  $M[N/x] \rightarrow_\frac{1}{2} M_1[N/x]$ and $M[N/x] \rightarrow_\frac{1}{2}M_2[N/x]$. Therefore, by using point~\ref{enum: S infty sum} and the induction hypothesis:
\allowdisplaybreaks
\begin{align*}
\ \ \ \ \ \ \ \ &\mathcal{H}^n (M[N/x], H)=\\
&= \frac{1}{2}\cdot \mathcal{H}^{n-1}(M_1[N/x], H)+ \frac{1}{2}\cdot \mathcal{H}^{n-1}(M_2[N/x], H)  \\
&=\frac{1}{2}  \sum_{l+l'=n-1} \sum_{H' \in \h}\mathcal{H}^l(M_1,H')\cdot \mathcal{H}^{l'}(H'[N/x],H ) \\
&\phantom{= \ }+ \frac{1}{2}  \sum_{l+l'=n-1} \sum_{H' \in \h}\mathcal{H}^l(M_2,H')\cdot \mathcal{H}^{l'}(H'[N/x],H ) \\
&=\sum_{l+l'=n-1} \sum_{H' \in \h} \mathcal{H}^{l+1}(M,H')  \cdot \mathcal{H}^{l'}(H'[N/x],H ) \\
&= \sum_{l+l'=n} \sum_{H' \in \h} \mathcal{H}^{l}(M,H')  \cdot \mathcal{H}^{l'}(H'[N/x],H ) .
\end{align*} 
\item If $M \rightarrow_1 M'$ then we proceed  similarly.
\end{enumerate}
Finally we prove point~\ref{enum: S infty app}  by induction  on $n \in \mathbb{N}$. We have three cases:
\begin{enumerate}[(a)]
\item If $M$ is a head normal form, then $\mathcal{S}^l (M, H')\neq 0$ whenever $l=0$ and $H'=M$. In all cases,  the equation holds.
\item Suppose $M \dashrightarrow_{\frac{1}{2}} M_1$ and $M \dashrightarrow_\frac{1}{2} M_2$. If $n=0$ then the equation trivially holds.
Otherwise, by  Lemma~\ref{lem: reduction properties head spine}.\ref{enum: application head spine} we have $MN \dashrightarrow_{\frac{1}{2}} M_1N$ and $MN \dashrightarrow_\frac{1}{2} M_2N$. Therefore,  by using point~\ref{enum: S infty sum} and the induction hypothesis:
\allowdisplaybreaks
\begin{align*}
&\mathcal{S}^n (MN, H)=\\
&= \frac{1}{2}\cdot \mathcal{S}^{n-1}(M_1N, H)+ \frac{1}{2}\cdot \mathcal{S}^{n-1}(M_2N, H)   \\
&=\frac{1}{2}  \sum_{l+l'=n-1} \sum_{H' \in \h}\mathcal{S}^l(M_1,H')\cdot \mathcal{S}^{l'}(H'N,H ) \\
&\phantom{= \ }+ \frac{1}{2}  \sum_{l+l'=n-1} \sum_{H' \in \h}\mathcal{S}^l(M_2,H')\cdot \mathcal{S}^{l'}(H'N,H )  \\
&=\sum_{l+l'=n-1} \sum_{H' \in \h} \mathcal{S}^{l+1}(M,H')  \cdot \mathcal{S}^{l'}(H'N,H ) \\
&= \sum_{l+l'=n} \sum_{H' \in \h} \mathcal{S}^{l}(M,H')  \cdot \mathcal{S}^{l'}(H'N,H ) .
\end{align*} 
\item If $M \dashrightarrow_1 M'$, we proceed similarly.
\end{enumerate}
\end{proof} 
Before stating the main theorem, relating the head and  head spine reduction strategies, we need a further technical lemma.
\begin{lem} 	\label{lem: commutation diagram head spine} If $M \dashrightarrow_p M'$ then there exists $n_0\in \mathbb{N}$ and $M_0\in \Lambda_\oplus$ such that $M \rightarrow^{n_0+1}_p M_0$ and $M' \rightarrow_1^{n_0}M_0$. Diagrammatically:
\begin{center}
\begin{tikzcd}[ampersand replacement = \&]
M \arrow[rr, dashed,  "p"swap , pos=1]\arrow[rrdd,  "p"swap,  pos=1, "n_0+1" pos=0.5]\&  \& M' \arrow[dd, "1"  swap, pos=0.9 , "n_0" pos=0.5]\\ \\
\& \& M_0
\end{tikzcd}
\end{center}
\end{lem}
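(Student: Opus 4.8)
I would prove the lemma by induction on the derivation of $M \dashrightarrow_p M'$ in the four-clause system defining $\dashrightarrow$.

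The three ``contracting'' clauses are immediate: when $\dashrightarrow$ fires a head $\beta$-redex $(\lambda y.H)Q$ with $H$ already an hnf, or one of the two sum redexes, the contracted redex occupies the head position inside the head context $\mathcal{E}$, so it is precisely the head redex used by $\rightarrow$. Hence $M \rightarrow_p M'$ in a single step, and one takes $n_0 = 0$, $M_0 = M'$, so that $M \rightarrow_p^{0+1} M_0$ and $M' \rightarrow_1^{0} M_0$ hold trivially.

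The work is in the recursive clause $M = \mathcal{E}[(\lambda y.P)Q]$, $P \dashrightarrow_p P'$, $M' = \mathcal{E}[(\lambda y.P')Q]$, with $\mathcal{E} = \lambda x_1\dots x_n.[\cdot]L_1\dots L_m$. Here $(\lambda y.P)Q$ (resp.\ $(\lambda y.P')Q$) is the head redex of $M$ (resp.\ $M'$), so $M \rightarrow_1 \mathcal{E}[P[Q/y]]$ and $M' \rightarrow_1 \mathcal{E}[P'[Q/y]]$. The induction hypothesis on $P \dashrightarrow_p P'$ supplies $n_1, P_0$ with $P \rightarrow_p^{n_1+1} P_0$ and $P' \rightarrow_1^{n_1} P_0$, and iterating the Substitution property (Lemma~\ref{lem: reduction properties head spine}.\ref{enum: substitution head spine}) pushes the argument $Q$ through, giving $P[Q/y] \rightarrow_p^{n_1+1} P_0[Q/y]$ and $P'[Q/y] \rightarrow_1^{n_1} P_0[Q/y]$. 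The leading abstractions $\lambda x_1\dots x_n$ of $\mathcal{E}$ are then transported for free by iterating the Abstraction property (Lemma~\ref{lem: reduction properties head spine}.\ref{enum: abstraction head spine}).

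The delicate step is propagating these reductions through the trailing arguments $L_1\dots L_m$, since head reduction is \emph{not} closed under application (the counterexample is recorded just after Lemma~\ref{lem: reduction properties head spine}): wrapping a head reduction $R \rightarrow^{*} R'$ inside $[\cdot]L_1\dots L_m$ is step-by-step faithful only while the focus stays a non-abstraction, but as soon as $P[Q/y]$ head-reduces to some $\lambda z.R$, the head redex of $(\lambda z.R)L_1\dots L_m$ becomes the outer redex $(\lambda z.R)L_1$, which consumes $L_1$ and ``outruns'' the head reduction of $P$ itself. I would handle this by a secondary induction on $m$ that dissects the two sequences $P[Q/y] \rightarrow_p^{n_1+1} P_0[Q/y]$ and $P'[Q/y] \rightarrow_1^{n_1} P_0[Q/y]$: run them until the focus first becomes an abstraction $\lambda z.R$, fire the now-available outer redex $(\lambda z.R)L_1$ on both the $M$-side and the $M'$-side --- this one extra step per side is exactly what separates the two strategies --- substitute $L_1$ for the bound variable through the remaining tails by the Substitution property, and recurse with one fewer trailing argument. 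Summing the step counts over this nested process produces the required $M_0$ and $n_0$. The main obstacle is precisely this accounting: one must check that at every level the $M$-side performs exactly one more $\rightarrow$-step than the $M'$-side, so that the invariant $M \rightarrow_p^{n_0+1} M_0$, $M' \rightarrow_1^{n_0} M_0$ is maintained --- this is where the ``$+1$'' of the statement is consumed.
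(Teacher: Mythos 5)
Your proposal is correct in its overall skeleton (the three non-recursive clauses give $n_0=0$ immediately, and all the work is in the clause $\mathcal{E}[(\lambda y.P)Q]\dashrightarrow_p\mathcal{E}[(\lambda y.P')Q]$), but in that crucial case you take a genuinely different and considerably more laborious route than the paper. You apply the induction hypothesis to $P\dashrightarrow_p P'$ alone and are then forced to confront the failure of closure of $\rightarrow$ under trailing application, which you propose to repair by a secondary induction on $m$ that dissects the two head-reduction sequences, fires the outer redex $(\lambda z.R)L_1$ the first time each focus becomes an abstraction, and checks that both sides are padded by the same number of extra steps so the difference of one is preserved. This plan does work — the generalized invariant you need is that any two terms with a common head reduct reached in $a$ and $b$ steps, once postfixed with $\vec L$, again have a common head reduct reached in $a+c$ and $b+c$ steps for some common $c$ — but you leave precisely this accounting, which you yourself flag as the main obstacle, unverified. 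The paper avoids the entire issue with one move: since the head \emph{spine} reduction \emph{is} closed under application (Lemma~\ref{lem: reduction properties head spine}.\ref{enum: application head spine}), it first forms $P\vec L\dashrightarrow_p P'\vec L$ and applies the induction hypothesis to \emph{that} reduction (which is why the paper inducts on the structure/size of $M$ rather than on the derivation, as you do — $P\vec L$ is smaller than $M$ but its reduction is not a subderivation). The trailing arguments are thus absorbed before head reduction ever enters the picture, and only the substitution $[Q/y]$ and the abstraction prefix remain to be transported, both handled by Lemma~\ref{lem: reduction properties head spine}.\ref{enum: substitution head spine} and \ref{enum: abstraction head spine}. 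Your approach buys nothing over this except independence from the Application property; if you keep it, you must actually state and prove the common-reduct invariant above, since it is the entire content of the lemma.
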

\begin{proof}
By induction on the structure of $M$. $M$ cannot be a head normal form, so that we have three cases:
\begin{enumerate}[(1)]
\item$M= \mathcal{E}[(\lambda y. H)Q]$, where $\mathcal{E}=\contrat{[\cdot]}$ and $H\in\h$. Then, $M'= \cont{H[Q/y]}$, and we set $n_0\triangleq 0$ and $M_0\triangleq M'$.
\item $M= \mathcal{E}[(\lambda y. P)Q]$, where $\mathcal{E}=\contrat{[\cdot]}$ and $P \dashrightarrow_p P'$.  Then, $M'= \cont{(\lambda y. P')Q}$.  By  applying Lemma~\ref{lem: reduction properties head spine}.\ref{enum: application head spine},  $P\vec{L}\dashrightarrow_p P'\vec{L}$. By induction hypothesis, there exists $n'_0$ and $P_0$ such that $P\vec{L} \rightarrow^{n'_0+1}_p P_0$ and $P'\vec{L}\rightarrow^{n'_0}_1P_0$. By repeatedly applying Lemma~\ref{lem: reduction properties head spine}.\ref{enum: substitution head spine}, we have that $P[Q/y]\vec{L} \rightarrow^{n'_0+1}_p   \! P_0[Q/y] $ and $P'[Q/y]\vec{L}\rightarrow^{n'_0}_1  P_0[Q/y]$, since $y$ is not free in $\vec{L}$.  Moreover, by repeatedly applying Lemma~\ref{lem: reduction properties head spine}.\ref{enum: abstraction head spine}, $\cont{P[Q/y]} \rightarrow^{n'_0+1}_p  \lambda \vec{x}.P_0[Q/y] $ and $\cont{P'[Q/y]}\rightarrow^{n'_0}_1  \lambda \vec{x}. P_0[Q/y]$. We set $n_0\triangleq n_0'+1$ and $M_0\triangleq \lambda \vec{x}.P_0[Q/y]$. On the one hand,    $\cont{(\lambda y.P)Q}\rightarrow_1 \cont{P[Q/y]} \rightarrow^{n'_0+1}_p \lambda \vec{x}.P_0[Q/y]$ and, on the other hand, $\cont{(\lambda y. P')Q}\rightarrow_1 \cont{P'[Q/y]}\rightarrow^{n'_0}_1\lambda \vec{x}.P_0[Q/y]$.
\item $M= \mathcal{E}[P_1\oplus P_2]$, where $\mathcal{E}=\contrat{[\cdot]}$. Then, $M'=\cont{P_i}$. We set $n_0\triangleq 0$ and $M_0\triangleq M'$. \qedhere
\end{enumerate}
\end{proof}
\begin{thm}[$\mathcal{H}^n= \mathcal{S}^n$]\label{thm: head  equal to heas spine} Let $M \in \Lambda_\oplus$ and  $H \in \h$. Then, for all $n \in \mathbb{N}$:
\begin{equation*}
\mathcal{S}^n(M, H)=\mathcal{H}^n(M, H) .
\end{equation*}
\end{thm}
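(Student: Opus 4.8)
The plan is to prove the equality $\mathcal{S}^n(M,H) = \mathcal{H}^n(M,H)$ for all $n \in \mathbb{N}$, $M \in \Lambda_\oplus$ and $H \in \h$, by a double induction that factors the head spine reduction through the head reduction, exploiting Lemma~\ref{lem: commutation diagram head spine} and the compositionality properties of $\mathcal{H}^n$ and $\mathcal{S}^n$ recorded in Lemma~\ref{lem: properties of S infty for one direction}. The key structural observation is that $\dashrightarrow$ and $\rightarrow$ differ only in \emph{when} the body of an abstraction applied to an argument is reduced: head spine normalizes the body first, head reduction fires the outer redex first. Lemma~\ref{lem: commutation diagram head spine} already captures this at the level of a single $\dashrightarrow$ step; the task is to lift it to $n$-step convergence probabilities.

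First I would fix $H \in \h$ and argue by induction on $n$. The case $n=0$ is immediate: both $\mathcal{S}^0(M,H)$ and $\mathcal{H}^0(M,H)$ equal $1$ if $M=H$ and $0$ otherwise. For $n>0$, I would do an inner induction (or case analysis) on the structure of $M$. If $M \in \h$, then no reduction step is possible, so for $n>0$ both quantities are $0$. If $M$ is a probabilistic sum $\mathcal{E}[P_1 \oplus P_2]$ in head position, then both strategies perform the same probabilistic transition, and Lemma~\ref{lem: properties of S infty for one direction}.\ref{enum: S infty sum} reduces both sides to $n-1$-step quantities on the (common) reducts, so the outer induction hypothesis on $n$ closes the case (note the structure of the reducts need not be smaller, but $n$ is). If $M = \mathcal{E}[(\lambda y.H')Q]$ with $H' \in \h$ already a head normal form inside, then $M \dashrightarrow_1 \mathcal{E}[H'[Q/y]]$ and also $M \rightarrow_1 \mathcal{E}[H'[Q/y]]$, so again Lemma~\ref{lem: properties of S infty for one direction}.\ref{enum: S infty sum} and the outer IH on $n$ suffice.

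The remaining — and main — case is $M = \mathcal{E}[(\lambda y.P)Q]$ where $P$ is not yet a head normal form, so $P \dashrightarrow_p P'$. Here head spine steps into $P$ while head reduction first fires the outer redex. I would use Lemma~\ref{lem: properties of S infty for one direction}.\ref{enum: S infty app} (and its abstraction counterpart, part~\ref{enum: S infty abst}) to peel off the context $\mathcal{E} = \lambda\vec{x}.[\cdot]\vec{L}$ and reduce the claim to comparing $\mathcal{S}^n((\lambda y.P)Q\vec{L}, H)$ with $\mathcal{H}^n((\lambda y.P)Q\vec{L}, H)$. Expanding $\mathcal{S}^n$ by the application decomposition (part~\ref{enum: S infty app}) on $(\lambda y.P)Q$ forces a reduction first inside $\lambda y.P$, i.e. inside $P$, whereas $\mathcal{H}^n$ via part~\ref{enum: H infty app} fires the $\beta$-redex and continues in $P[Q/y]\vec{L}$. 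Bridging the two requires knowing that the head spine reduction of $P$ to some hnf $H''$ in $\ell$ steps matches, after substitution, a head reduction of $P[Q/y]$ to $H''[Q/y]$ in $\ell$ steps — which follows by iterating the substitution property Lemma~\ref{lem: reduction properties head spine}.\ref{enum: substitution head spine} of $\rightarrow$ — together with the inner IH applied to $P$ (a structurally smaller term) to convert $\mathcal{S}^\ell(P,H'')$ into $\mathcal{H}^\ell(P,H'')$. The bookkeeping is a reindexing of the sums: on the spine side the $n$ steps split as ``$\ell$ steps to bring $P$ to an hnf $H''$, then reduce $(\lambda y.H'')Q\vec{L}$'', and on the head side as ``one $\beta$ step, then reduce $P[Q/y]\vec{L}$'', and one checks $\ell$-step spine-normalization of $P$ to $H''$ corresponds to $\ell$-step head-normalization of $P$ to $H''$ (inner IH) which transports along substitution to $P[Q/y] \rightarrow^\ell H''[Q/y]$.

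The step I expect to be the main obstacle is exactly this last reindexing: making the step-counts line up precisely so that one obtains $\mathcal{S}^n = \mathcal{H}^n$ \emph{for the same $n$}, not merely $\mathcal{S}^\infty = \mathcal{H}^\infty$. Lemma~\ref{lem: commutation diagram head spine} gives the off-by-$n_0$ shift between the two strategies on a single spine step, and the delicate part is verifying that these shifts cancel globally when a redex is fired: the ``extra'' $n_0+1$ head steps needed to catch up with one spine step into $P$ are exactly the head steps spent normalizing $P$ (to the point where the outer $\beta$ can fire), so no steps are gained or lost overall. I would handle this either by a careful induction on the length of the spine reduction of $P$ using Lemma~\ref{lem: commutation diagram head spine} directly, or — cleaner — by proving the matching identity at the level of the recursive equations (Lemma~\ref{lem: properties of S infty for one direction}.\ref{enum: H infty app} vs.~\ref{enum: S infty app}) and invoking the inner induction hypothesis on the proper subterm $P$, which is the route the above case analysis takes.
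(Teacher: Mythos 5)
Your overall strategy is sound, and one of the two routes you sketch for the hard case is in fact the paper's entire proof: the paper argues by a single induction on $n$ with no case analysis on the structure of $M$ beyond ``hnf or not''. Given any spine step $M \dashrightarrow_p M'$ (deterministic or branching), Lemma~\ref{lem: commutation diagram head spine} produces $M \rightarrow^{n_0+1}_p M_0$ and $M' \rightarrow^{n_0}_1 M_0$; since the catching-up segment $M' \rightarrow^{n_0}_1 M_0$ is deterministic, Lemma~\ref{lem: properties of S infty for one direction}.\ref{enum: S infty sum} gives $\mathcal{S}^n(M,H)=\mathcal{S}^{n-1}(M',H)=\mathcal{H}^{n-1}(M',H)=\mathcal{H}^{n-1-n_0}(M_0,H)=\mathcal{H}^n(M,H)$ when $n_0\le n-1$, and $0=0$ otherwise (the branching case needs a common deterministic prefix extracted from the two diagrams, but is the same idea). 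This is uniform and never touches the decomposition identities \ref{enum: H infty app} and \ref{enum: S infty app}. So the step-count cancellation you worry about is exactly what the $n_0$ in that lemma encodes, and it resolves all cases at once, not just $\mathcal{E}[(\lambda y.P)Q]$ with $P\notin\h$.

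Your primary route --- structural case analysis plus the recursive equations --- can be pushed through, but it has one concrete wrinkle you should be aware of: the application decomposition (Lemma~\ref{lem: properties of S infty for one direction}.\ref{enum: S infty app}) is proved only for $\mathcal{S}$, and the substitution decomposition (\ref{enum: H infty app}) only for $\mathcal{H}$; this asymmetry is forced, since $\rightarrow$ fails the application property and $\dashrightarrow$ fails the substitution property. Consequently, after the head side takes its $\beta$-step to $P[Q/y]\vec L$, you cannot peel off the trailing arguments $\vec L$ on the $\mathcal{H}$ side, and rewriting $P[Q/y]\vec L$ as $(P\vec L)[Q/y]$ and applying \ref{enum: H infty app} decomposes along the head reduction of $P\vec L$, which does not line up with the spine side (which normalizes $P$ alone before substituting). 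The fix is to first convert $\mathcal{H}^{n-1}(P[Q/y]\vec L,H)$ to $\mathcal{S}^{n-1}(P[Q/y]\vec L,H)$ by the induction hypothesis on $n$ (legitimate, as $n-1<n$), and only then decompose everything on the $\mathcal{S}$ side; the remaining identity $\mathcal{S}^{m}((\lambda y.P)Q,H_0)=\mathcal{S}^{m-1}(P[Q/y],H_0)$ follows from \ref{enum: S infty app}, \ref{enum: S infty abst}, \ref{enum: H infty app} and the induction hypothesis applied to indices $\le n-1$. Note also that all the $\mathcal{S}$-factors $\mathcal{S}^{l}(P,H'')$ arising in the decomposition have $l\le n-1$, so the outer induction on $n$ already covers them and your inner structural induction on $M$ is not actually needed. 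With that repair your argument closes, but it is substantially heavier than the paper's; the commutation-diagram route you relegate to a fallback is the one to take.
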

\begin{proof}
By induction on $n$. If $n=0$ then $\mathcal{S}^0(M, H)=\mathcal{H}^0(M, H)$ by definition. Suppose $n >0$. If $M$ is a head normal form, then $\mathcal{S}^n(M, H)=0 =  \mathcal{H}^n(M, H)$. Otherwise, we can apply a head spine reduction step to $M$. If $M \dashrightarrow_1 M'$ then, by Lemma~\ref{lem: commutation diagram head spine}, there exist $n_0 $ and $M_0$ such that:
\begin{equation*}
M \rightarrow^{n_0+1}_1 M_0\qquad \text{and}\qquad M'\rightarrow_1^{n_0}M_0  . 
\end{equation*}
Moreover, by induction hypothesis and by Lemma~\ref{lem: properties of S infty for one direction}.\ref{enum: S infty sum} we have $\mathcal{S}^n(M, H)= \mathcal{S}^{n-1}(M', H)=\mathcal{H}^{n-1}(M', H)$. If $n_0\leq n-1$ then $\mathcal{H}^{n-1}(M', H)= \mathcal{H}^{n-1-n_0}(M_0, H)= \mathcal{H}^n(M, H)$. Otherwise, $n-1<n_0$ and  $\mathcal{H}^{n-1}(M', H)=0= \mathcal{H}^{n}(M, H)$.\\
If $M \dashrightarrow_\frac{1}{2} M'$ and $M \dashrightarrow _\frac{1}{2}M''$ then, by  Lemma~\ref{lem: commutation diagram head spine}, there exist $n'_0, n''_0$ and $M'_0, M''_0$ such that:
\begin{equation*}
\begin{array}{ll}
M \rightarrow^{n'_0+1}_\frac{1}{2} M'_0,  &M'\rightarrow_1^{n'_0}M'_0\\ \\ 
  M \rightarrow^{n''_0+1}_\frac{1}{2} M''_0, & M''\rightarrow_1^{n''_0}M''_0  . 
\end{array}
\end{equation*}
Then, there exist $N$, $ N'$ and $N''$ such that:
\begin{equation*}
M \rightarrow^{t}_1 N \qquad N \rightarrow_\frac{1}{2} N' \rightarrow^{t'}_1 M'_0 \qquad N \rightarrow_\frac{1}{2} N'' \rightarrow^{t''}_1 M''_0 ,
\end{equation*}
where $n'_0= t+t'$ and  $n''_0= t+t''$. By induction hypothesis and by Lemma~\ref{lem: properties of S infty for one direction}.\ref{enum: S infty sum}: 
\begin{equation*}
\begin{split}
\mathcal{S}^n(M, H)&=\frac{1}{2}\cdot \mathcal{S}^{n-1}(M', H)+ \frac{1}{2}\cdot \mathcal{S}^{n-1}(M'', H)\\
&= \frac{1}{2}\cdot \mathcal{H}^{n-1}(M', H)+ \frac{1}{2}\cdot \mathcal{H}^{n-1}(M'', H)
   .
 \end{split}
\end{equation*}  
We have four cases:
\begin{enumerate}[(a)]
\item If $n'_0 , n''_0\leq n-1$ then, by using Lemma~\ref{lem: properties of S infty for one direction}.\ref{enum: S infty sum}:
\begin{equation*}
\begin{split}
&\mathcal{H}^n(M, H)= \\
&=\mathcal{H}^{n-t}(N, H)\\
&= \frac{1}{2}\cdot \mathcal{H}^{n-(t+1)}(N', H)+ \frac{1}{2}\cdot \mathcal{H}^{n-(t+1)}(N'', H)  \\
&=  \frac{1}{2}\cdot \mathcal{H}^{n-(n'_0+1)}(M'_0, H)+ \frac{1}{2}\cdot \mathcal{H}^{n-(n''_0+1)}(M''_0, H) \\
&=  \frac{1}{2}\cdot \mathcal{H}^{n-1}(M', H)+ \frac{1}{2}\cdot \mathcal{H}^{n-1}(M'', H)  .
\end{split}
\end{equation*}
\item If $n'_0\leq n-1$ and $n-1<n''_0$    then, by using Lemma~\ref{lem: properties of S infty for one direction}.\ref{enum: S infty sum}:
\begin{equation*}
\begin{split}
&=\mathcal{H}^n(M, H)=\\
&=\mathcal{H}^{n-t}(N, H)\\
&= \frac{1}{2}\cdot \mathcal{H}^{n-(t+1)}(N', H)+ \frac{1}{2}\cdot \mathcal{H}^{n-(t+1)}(N'', H)  \\
&= \frac{1}{2}\cdot \mathcal{H}^{n-(n'_0+1)}(M'_0, H)=  \frac{1}{2}\cdot \mathcal{H}^{n-1}(M', H)\\
&=  \frac{1}{2}\cdot \mathcal{H}^{n-1}(M', H)+ \frac{1}{2}\cdot \mathcal{H}^{n-1}(M'', H)  .
\end{split}
\end{equation*}
\item The case  where $n-1<n'_0$ and   $n''_0\leq n-1$ is similar to the previous one.
\item If $n-1<n'_0, n''_0$ then $\mathcal{H}^n(M, H)=0= \frac{1}{2}\cdot \mathcal{H}^{n-1}(M', H)+ \frac{1}{2}\cdot \mathcal{H}^{n-1}(M'', H)$.
\end{enumerate}
\end{proof}

\paragraph{The term-based and the distribution-based  semantics coincide.} What we have established so far is an equivalence between the head and  head spine reductions in a \enquote{term-based} operational semantics introduced through the notion of probabilistic transition relation. We are going to show  that the term-based and the distribution-based semantics for the head spine reduction coincide. This allows us to show that the big-step  semantics introduced in~\eqref{eq: big-step semantics} is invariant with respect to the usual head reduction steps $(\lambda x. M)N\rightarrow M[N/x]$, where $M$ is not necessarily a head normal form.

\begin{lem} \label{lem: equivalence part 1}Let $M \in \plam$. For all $H \in \h$, $\inter{M}(H) \leq \mathcal{S}^\infty(M, H)$.
\end{lem}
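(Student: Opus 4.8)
The plan is to prove the sharper statement that $\mathscr{D}(H)\le\mathcal{S}^\infty(M,H)$ for \emph{every} derivation $M\Downarrow\mathscr{D}$ in the big-step approximation system of Figure~\ref{fig: big-step approximation} and every $H\in\h$. Since $\inter{M}$ is the supremum (computed pointwise) of the directed set $\{\mathscr{D}\mid M\Downarrow\mathscr{D}\}$, we have $\inter{M}(H)=\sup\{\mathscr{D}(H)\mid M\Downarrow\mathscr{D}\}$, so the lemma follows at once. I would argue by induction on the derivation of $M\Downarrow\mathscr{D}$, with a case analysis on the last rule.

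The rules $s1$, $s2$, $s3$, $s5$ are routine. For $s1$, $\mathscr{D}=\bot$ so $\mathscr{D}(H)=0$. For $s2$, $M=x$ and $\mathscr{D}=x$, and since $x\in\h$ we get $\mathcal{S}^0(x,x)=1$. For $s3$, $M=\lambda y.M'$, $\mathscr{D}=\lambda y.\mathscr{D}'$ with $M'\Downarrow\mathscr{D}'$; the value $\mathscr{D}(H)$ is $0$ unless $H=\lambda y.H'$, in which case $\mathscr{D}(H)=\mathscr{D}'(H')\le\mathcal{S}^\infty(M',H')=\mathcal{S}^\infty(\lambda y.M',\lambda y.H')$ by the induction hypothesis and Lemma~\ref{lem: properties of S infty for one direction}.\ref{enum: S infty abst}. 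For $s5$, $M=M_1\oplus M_2$ and $\mathscr{D}=\tfrac12\mathscr{D}_1+\tfrac12\mathscr{D}_2$; as $M_1\oplus M_2$ is not an hnf we have $\mathcal{S}^0(M_1\oplus M_2,H)=0$, and a single $\dashrightarrow$ step combined with Lemma~\ref{lem: properties of S infty for one direction}.\ref{enum: S infty sum} gives $\mathcal{S}^\infty(M_1\oplus M_2,H)=\tfrac12\mathcal{S}^\infty(M_1,H)+\tfrac12\mathcal{S}^\infty(M_2,H)$ when $M_1\neq M_2$ and $\mathcal{S}^\infty(M_1\oplus M_1,H)=\mathcal{S}^\infty(M_1,H)$ otherwise; in both cases one concludes with the induction hypothesis.

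The interesting case is $s4$: $M=M_0N$, with premises $M_0\Downarrow\mathscr{E}$ and $H'[N/x]\Downarrow\mathscr{F}_{H'}$ for each $\lambda x.H'\in\mathrm{supp}(\mathscr{E})$, and $\mathscr{D}=\sum_{\lambda x.H'\in\mathrm{supp}(\mathscr{E})}\mathscr{E}(\lambda x.H')\cdot\mathscr{F}_{H'}+\sum_{H'\in\mathrm{supp}(\mathscr{E})\cap\neu}\mathscr{E}(H')\cdot H'N$. First I would sum Lemma~\ref{lem: properties of S infty for one direction}.\ref{enum: S infty app} over $n$ and reindex the resulting nonnegative double series (Tonelli) to obtain $\mathcal{S}^\infty(M_0N,H)=\sum_{H'\in\h}\mathcal{S}^\infty(M_0,H')\cdot\mathcal{S}^\infty(H'N,H)$. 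Next I would note that when $H'=\lambda x.H''$ one has $H'N\dashrightarrow_1 H''[N/x]$ (so, using $\mathcal{S}^0(H'N,H)=0$ and Lemma~\ref{lem: properties of S infty for one direction}.\ref{enum: S infty sum}, $\mathcal{S}^\infty(H'N,H)=\mathcal{S}^\infty(H''[N/x],H)$), whereas when $H'$ is neutral $H'N$ is again an hnf, hence $\dashrightarrow$-stuck, so $\mathcal{S}^\infty(H'N,H)=(H'N)(H)$, i.e.\ $1$ if $H=H'N$ and $0$ otherwise. Applying the induction hypothesis to the two kinds of premises ($\mathscr{E}(K)\le\mathcal{S}^\infty(M_0,K)$ and $\mathscr{F}_{H'}(H)\le\mathcal{S}^\infty(H'[N/x],H)$), and using that all quantities are nonnegative so that enlarging each factor only increases the sum, we bound $\mathscr{D}(H)$ by $\sum_{K\in\mathrm{supp}(\mathscr{E})}\mathcal{S}^\infty(M_0,K)\cdot\mathcal{S}^\infty(KN,H)$ (the two summands in the shape of $\mathscr{D}$ matching exactly the abstraction and the neutral subcases of $K$), which is in turn at most the full sum over $K\in\h$, namely $\mathcal{S}^\infty(M_0N,H)$.

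The main obstacle is the bookkeeping in the $s4$ case: justifying the rearrangement of the infinite double sum, aligning the two syntactic subcases of $H'N$ with the two summands defining $\mathscr{D}$, and verifying that replacing each approximant factor by its $\mathcal{S}^\infty$ upper bound preserves the inequality. Every other case is an immediate consequence of the structural properties of $\mathcal{S}^n$ collected in Lemma~\ref{lem: properties of S infty for one direction}.
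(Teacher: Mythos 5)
Your proposal is correct and follows essentially the same route as the paper: both prove the stronger claim that $\mathscr{D}(H)\leq\mathcal{S}^\infty(M,H)$ for every approximation $M\Downarrow\mathscr{D}$ by induction on the derivation, and both resolve the $s4$ case via the decomposition $\mathcal{S}^\infty(PQ,H)=\sum_{H'\in\h}\mathcal{S}^\infty(P,H')\cdot\mathcal{S}^\infty(H'Q,H)$ together with the one-step reduction $(\lambda x.H')Q\dashrightarrow_1 H'[Q/x]$ and the fact that $H'Q$ is stuck when $H'$ is neutral. Your explicit Tonelli rearrangement and the split into the abstraction/neutral subcases are exactly the bookkeeping the paper carries out.
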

\begin{proof}
We show that, for all $\mathscr{D}$ such that $M \Downarrow \mathscr{D}$ and for all $H \in \h$, it holds that $\mathscr{D}(H)\leq \mathcal{S}^\infty(M, H)$. The proof is by induction on the derivation of $M \Downarrow \mathscr{D}$ by considering the structure of $M$.  Since the case $\mathscr{D}= \bot$ is trivial, we shall assume that the last rule of $M \Downarrow \mathscr{D}$ is not $s1$. \\
If $M=x$ then $\mathscr{D}=x$ and the last rule of $M \Downarrow \mathscr{D}$ is $s2$. If  $H \neq x$ then $\mathscr{D}(H)=0$.  Otherwise,  $\mathscr{D}(x)= 1 = \mathcal{S}^\infty(x, x)$.\\
If $M= \lambda x. M'$ then $\mathscr{D}= \lambda x. \mathscr{D}'$ and the last rule of $M \Downarrow \mathscr{D}$ is the following:
\begin{prooftree}
\AxiomC{$M' \Downarrow \mathscr{D}'$}
\RightLabel{$s3$}
\UnaryInfC{$\lambda x.M' \Downarrow \lambda x. \mathscr{D}'$}
\end{prooftree}
 If  $H\in \neu$ then $\mathscr{D}(H)=0$. Otherwise,   $H= \lambda x. H'$ and, by using the induction hypothesis and Lemma~\ref{lem: properties of S infty for one direction}.\ref{enum: S infty abst}, we have:
\begin{equation*}
\begin{split}
\mathscr{D}(H)&= \lambda x. \mathscr{D}'(\lambda x. H')= \mathscr{D}'(H')\leq \mathcal{S}^\infty (M', H')\\
&= \mathcal{S}^\infty (\lambda x. M', \lambda x. H') .
\end{split}
\end{equation*}
If $M= PQ$ then $\mathscr{D}= \sum_{\lambda x. H'\,  \in \, \mathrm{ supp}(\mathscr{E})} \mathscr{E}(\lambda x. H')\cdot \mathscr{F}_{H', Q} + \sum_{  H' \, \in \,\mathrm{ supp}(\mathscr{E})\, \cap\,  \neu} \mathscr{E}(H')\cdot H' Q$, and the last rule of $M \Downarrow \mathscr{D}$ is $s4$ with premises $P \Downarrow \mathscr{E}$ and $\lbrace H'[Q/x]\Downarrow \mathscr{F}_{H', Q} \rbrace_{\lambda x. H'\, \in\,  \mathrm{ supp}(\mathscr{E})}$.
By  induction hypothesis, Lemma~\ref{lem: properties of S infty for one direction}.\ref{enum: S infty sum} and Lemma~\ref{lem: properties of S infty for one direction}.\ref{enum: S infty app}, we have:
\allowdisplaybreaks
\begin{align*}
 \mathscr{D}(H)&=\sum_{\lambda x. H' \,  \in \, \mathrm{ supp}(\mathscr{E})} \mathscr{E}(\lambda x. H')\cdot \mathscr{F}_{H', Q}(H) \\
 &\phantom{= \ }+ \sum_{  H' \, \in \,\mathrm{ supp}(\mathscr{E})\,\cap\, \neu} \mathscr{E}(H')\cdot H' Q(H)\\
&\leq \sum_{\lambda x. H' \,  \in \, \h}  \mathcal{S}^\infty(P , \lambda x. H')\cdot \mathcal{S}^\infty(H'[Q/x], H)\\
&\phantom{= \ }+\sum_{  H' \, \in \, \neu}  \mathcal{S}^\infty(P, H')\cdot  \mathcal{S}^\infty(H'Q, H')\\
&= \sum_{\lambda x. H' \,  \in \, \h}  \mathcal{S}^\infty(P , \lambda x. H')\cdot \mathcal{S}^\infty((\lambda x. H')Q, H)\\
&\phantom{= \ } +\sum_{  H' \, \in \, \neu}  \mathcal{S}^\infty(P, H')\cdot  \mathcal{S}^\infty(H'Q, H')\\
&= \sum_{ H' \,  \in \, \h}  \mathcal{S}^\infty(P ,H')\cdot \mathcal{S}^\infty(H'Q, H)=   \mathcal{S}^\infty(PQ ,H)  . 
\end{align*}
  If $M= P \oplus Q$ then $\mathscr{D}= \frac{1}{2}\cdot \mathscr{D}_1 + \frac{1}{2}\cdot \mathscr{D}_2$ and the last rule of $M \Downarrow \mathscr{D}$ is as follows:
\begin{prooftree}
\AxiomC{$P \Downarrow \mathscr{D}_1$}
\AxiomC{$Q\Downarrow \mathscr{D}_2$}
\RightLabel{$s5$}
\BinaryInfC{$P \oplus Q\Downarrow \frac{1}{2}\cdot \mathscr{D}_1+ \frac{1}{2}\cdot \mathscr{D}_2$}
\end{prooftree}
By using the induction hypothesis and by Lemma~\ref{lem: properties of S infty for one direction}.\ref{enum: S infty sum}, we have:
\begin{equation*}
\begin{split}
\mathscr{D}(H)&=\frac{1}{2}\cdot \mathscr{D}_1(H)+ \frac{1}{2}\cdot \mathscr{D}_2(H) \\
&\leq \frac{1}{2}\cdot \mathcal{S}^\infty (P, H)+ \frac{1}{2}\cdot \mathcal{S}^\infty(Q, H)= \mathcal{S}^\infty(P \oplus Q, H) .
\end{split}
\end{equation*}
\end{proof}
\begin{lem} \label{lem: to prove lemma about head contexts } Let $M \in \plam$. Then:
\begin{enumerate}[(1)]
\item \label{enum: lemma about head context 1} If $M \dashrightarrow_1 M'$ and $M'\Downarrow \mathscr{D}$, then $M\Downarrow \mathscr{D}$.
\item \label{enum: lemma about head context 2} If $M \dashrightarrow_\frac{1}{2} M_1$,  $M \dashrightarrow_\frac{1}{2} M_2$, $M_1 \Downarrow \mathscr{D}_1$ and $M_2\Downarrow \mathscr{D}_2$, then there exists $\mathscr{D}$ such that  $ \frac{1}{2}\cdot\mathscr{D}_1 + \frac{1}{2}\cdot \mathscr{D}_2\leq_\dist \mathscr{D}$ and $M\Downarrow \mathscr{D}$.
\end{enumerate}
\end{lem}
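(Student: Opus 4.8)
The plan is to establish both points by structural induction on $M$. In each case one inspects the possible head spine reduction step(s) out of $M$ and rebuilds the derivation of $M\Downarrow\mathscr{D}$ (resp. of $M\Downarrow\mathscr{D}$ with $\tfrac12\mathscr{D}_1+\tfrac12\mathscr{D}_2\leq_\dist\mathscr{D}$) from the given derivation(s) by prepending a single instance of rule $s3$, $s4$ or $s5$. Throughout I use two routine facts that are mere re-readings of Figure~\ref{fig: big-step approximation} (cf. the proof of Proposition~\ref{prop: the semantics is invariant under reduction}): first, every head normal form $H$ satisfies $H\Downarrow H$ (easy induction on $H$, iterating $s2$, $s4$, $s3$); second, $\Downarrow$ is ``invertible'' at $s3$ and $s4$, i.e. $\lambda x.P\Downarrow\mathscr{D}$ forces $\mathscr{D}=\bot$ or $\mathscr{D}=\lambda x.\mathscr{D}'$ with $P\Downarrow\mathscr{D}'$, and $PQ\Downarrow\mathscr{D}$ forces $\mathscr{D}=\bot$ or $\mathscr{D}$ to be an $s4$-conclusion for some premise $P\Downarrow\mathscr{E}$. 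The subcases where the distribution at play is $\bot$ are immediate via $s1$ and will be ignored below.

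For point~\ref{enum: lemma about head context 1}: $M$ cannot be a variable. If $M=\lambda x.M_0$ then $M'=\lambda x.M_0'$ with $M_0\dashrightarrow_1 M_0'$; invert the $s3$-step of $M'\Downarrow\mathscr{D}$, apply the induction hypothesis to the subterm $M_0$, and re-apply $s3$. If $M=M_1\oplus M_2$, the only weight-$1$ step collapses a sum $P\oplus P$, so $M_1=M_2=P$, $M'=P$, and $P\oplus P\Downarrow\tfrac12\mathscr{D}+\tfrac12\mathscr{D}=\mathscr{D}$ by $s5$. If $M=M_1M_2$, then either $M_1=\lambda y.H$ with $H\in\h$ and $M'=H[M_2/y]$ (top redex), handled directly by an $s4$ step with premises $\lambda y.H\Downarrow\lambda y.H$ and $H[M_2/y]\Downarrow\mathscr{D}$; or $M_1$ itself reduces, $M_1\dashrightarrow_1 M_1'$ and $M'=M_1'M_2$ --- this subsumes the ``reduction inside the body'' rule via Lemma~\ref{lem: reduction properties head spine}.\ref{enum: abstraction head spine}, and one checks that no other possibility remains (a neutral $M_1$ would make $M$ a head normal form). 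In the latter case, invert the $s4$-step of $M_1'M_2\Downarrow\mathscr{D}$ to extract $M_1'\Downarrow\mathscr{E}'$, apply the induction hypothesis to $M_1$ to get $M_1\Downarrow\mathscr{E}'$, and re-apply $s4$ with the unchanged second premises.

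For point~\ref{enum: lemma about head context 2}, since the head spine strategy is deterministic up to the probabilistic branch, the two steps $M\dashrightarrow_{1/2}M_1$ and $M\dashrightarrow_{1/2}M_2$ resolve the same (head-position, possibly body-nested) occurrence of a sum $P_1\oplus P_2$ with $P_1\neq P_2$, so up to swapping $M_i$ selects $P_i$. The case analysis parallels point~\ref{enum: lemma about head context 1}: the base case is $M=P_1\oplus P_2$, where $\mathscr{D}:=\tfrac12\mathscr{D}_1+\tfrac12\mathscr{D}_2$ works (with $\leq_\dist$ an equality) by $s5$; the abstraction case peels an $s3$ and invokes the induction hypothesis; in the application case $M=M_1M_2$ only the ``$M_1$ reduces'' subcase occurs (top-redex contraction has weight $1$), so $M_1\dashrightarrow_{1/2}M_1^{(i)}$, $M_i=M_1^{(i)}M_2$, and the induction hypothesis on $M_1$ yields $\mathscr{E}^\star$ with $M_1\Downarrow\mathscr{E}^\star$ and $\tfrac12\mathscr{E}^{(1)}+\tfrac12\mathscr{E}^{(2)}\leq_\dist\mathscr{E}^\star$, where $M_i\Downarrow\mathscr{D}_i$ is obtained by $s4$ from the premise $M_1^{(i)}\Downarrow\mathscr{E}^{(i)}$. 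One then rebuilds an $s4$-conclusion over $\mathscr{E}^\star$: for each $\lambda x.H\in\mathrm{supp}(\mathscr{E}^\star)$, if $\lambda x.H$ lies in $\mathrm{supp}(\mathscr{E}^{(1)})\cap\mathrm{supp}(\mathscr{E}^{(2)})$ one uses the directedness of $\{\mathscr{F}:H[M_2/x]\Downarrow\mathscr{F}\}$ (Lemma~\ref{lem: direct set}) to dominate both second premises $\mathscr{F}^{(1)}_H$ and $\mathscr{F}^{(2)}_H$; otherwise one keeps whichever of the two is available, or $\bot$.

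The main obstacle is exactly this last recombination in point~\ref{enum: lemma about head context 2}: one must verify that the rebuilt $s4$-conclusion $\mathscr{D}^\star$ satisfies $\tfrac12\mathscr{D}_1+\tfrac12\mathscr{D}_2\leq_\dist\mathscr{D}^\star$ pointwise on head normal forms even when $\mathrm{supp}(\mathscr{E}^{(1)})$ and $\mathrm{supp}(\mathscr{E}^{(2)})$ differ. This follows from $\mathscr{E}^\star\geq_\dist\tfrac12\mathscr{E}^{(1)}+\tfrac12\mathscr{E}^{(2)}$ together with $\mathscr{F}^\star_H\geq_\dist\mathscr{F}^{(i)}_H$ whenever $\lambda x.H\in\mathrm{supp}(\mathscr{E}^{(i)})$, by comparing the abstraction and the neutral summands of $s4$ term by term. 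Everything else is straightforward bookkeeping on the reduction and the big-step rules.
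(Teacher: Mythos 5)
Your proposal is correct and follows essentially the same route as the paper's proof: structural induction on $M$, inversion of the big-step rules of Figure~\ref{fig: big-step approximation}, and recombination of the second premises of $s4$ in the branching case via the directedness of $\lbrace \mathscr{F} \ \vert \ H[N/x]\Downarrow\mathscr{F}\rbrace$ (Lemma~\ref{lem: direct set}), followed by a term-by-term comparison of the abstraction and neutral summands. The only difference is organizational: you fold the paper's hardest subcase, $M=(\lambda y.P)Q$ with $P$ reducing inside the body, into the generic ``function part reduces'' subcase of the application by way of Lemma~\ref{lem: reduction properties head spine}.\ref{enum: abstraction head spine}, whereas the paper cases directly on the shape of the head context.
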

\begin{proof}
We prove both points simultaneously by induction on the structure of $M$. If $M$ is not a head normal form, then there exists a head context $\mathcal{E}$ such that $M=\mathcal{E}[P]$ and, either $P\dashrightarrow_1 P'$, or both $P \dashrightarrow_\frac{1}{2}P_1$ and $P \dashrightarrow_\frac{1}{2}P_2$. By looking at the structure of $M$ we have several cases. \\
$\bullet$ If  $\mathcal{E}=[\cdot]$ then we have three subcases:
\begin{enumerate}[(a)]
\item  If $M=(\lambda x.H)N$, then it must be that $M \dashrightarrow_1 M'=H[N/x]$. From $M' \Downarrow \mathscr{D}$ we can construct:
\begin{prooftree}
\AxiomC{}
\RightLabel{$s2$}
\UnaryInfC{$\lambda x. H \Downarrow \lambda x. H$}
\AxiomC{$H[N/x]\Downarrow \mathscr{D}$}
\RightLabel{$s4$}
\BinaryInfC{$(\lambda x. H)N\Downarrow \mathscr{D}$}
\end{prooftree}
\item\label{enum: case distribution for one side equivalence} Suppose $M= (\lambda x. Q)N$ with $Q \not \in \h$. We consider the case  $Q \dashrightarrow_\frac{1}{2} Q_1$ and $Q \dashrightarrow_\frac{1}{2} Q_2$.  W.l.o.g.~we  assume  that, for $i\in \lbrace 1,2\rbrace$,  $\mathscr{D}_i= \sum_{\lambda x. H\,  \in \, \mathrm{ supp}(\lambda x. \mathscr{E}_i)} (\lambda x.\mathscr{E}_i)(\lambda x. H)\cdot \mathscr{F}^i_{H, N} $, and the last rule of the  derivation of $(\lambda x. Q_i)N\Downarrow \mathscr{D}_i$ is $s4$ with premises $\lambda x. Q_i \Downarrow \lambda x. \mathscr{E}_i$ and $\lbrace H[N/x]\Downarrow \mathscr{F}^i_{H, N} \rbrace_{\lambda x. H\, \in\,  \mathrm{ supp}(\lambda x. \mathscr{E}_i)}$. Moreover, we can assume that the last rule of $\lambda x. Q_i \Downarrow \lambda x. \mathscr{E}_i$  is $s3$ with premise $Q_i  \Downarrow \mathscr{E}_i$.
By applying the induction hypothesis, there exists $\mathscr{E}$ such that  $ Q \Downarrow \mathscr{E}$ and  $ \frac{1}{2}\cdot \mathscr{E}_1+ \frac{1}{2}\cdot \mathscr{E}_2\leq_\dist \mathscr{E}$.   Since $\lbrace \mathscr{F} \in \dist(\h) \ \vert \ H[N/x] \Downarrow \mathscr{F} \rbrace$ is a directed set by Lemma~\ref{lem: direct set},   for all  $ H \in  \mathrm{ supp}(\mathscr{E}_1)\cap    \mathrm{ supp}(\mathscr{E}_2)$  there exists  $ \mathscr{G}_{H, N}$ such that $H[N/x]\Downarrow \mathscr{G}_{H, N}$ and  $ \mathscr{F}^1_{H, N},  \mathscr{F}^2_{H, N}\leq_\dist  \mathscr{G}_{H, N} $. We define:
\begin{equation*}
\ \ \  \mathscr{F}_{H, N}\triangleq \begin{cases}
\mathscr{F}^i_{H, N}&\text{if }H \in  \mathrm{ supp}(\mathscr{E}_i) \text{ and}\\
& H \not \in  \mathrm{ supp}(\mathscr{E}_{3-i}), \text{for }i \in \lbrace 1,2 \rbrace,\\
 \mathscr{G}_{H, N}&\text{if } H \in  \mathrm{ supp}(\mathscr{E}_1)\cap    \mathrm{ supp}(\mathscr{E}_2),\\
 \bot &\text{otherwise} .
\end{cases}
\end{equation*}
For all $H \in  \mathrm{ supp}(\mathscr{E})$, we have $H[N/x]\Downarrow \mathscr{F}_{H, N}$. Moreover, for all $i \in \lbrace 1,2\rbrace$ and $H \in  \mathrm{ supp}(\mathscr{E}_i)$, $\mathscr{F}^i_{H, N}\leq_\dist \mathscr{F}_{H, N}$. We define $\mathscr{D}\triangleq  \sum_{\lambda x. H\,  \in \, \mathrm{ supp}(\lambda x. \mathscr{E})} (\lambda x. \mathscr{E})(\lambda x. H)\cdot \mathscr{F}_{H,N}$, so that $(\lambda x. Q)N \Downarrow  \mathscr{D}$. Then:
\allowdisplaybreaks
\begin{align*}
&\frac{1}{2}\cdot  \mathscr{D}_1  + \frac{1}{2}\cdot \mathscr{D}_2=\\
&= \frac{1}{2}   \sum_{\lambda x.H\,  \in \, \mathrm{ supp}(\lambda x. \mathscr{E}_1)} (\lambda x. \mathscr{E}_1)(\lambda x. H)\cdot \mathscr{F}^1_{H, N} \\
&\phantom{= \ } +\frac{1}{2}   \sum_{\lambda x.H\,  \in \, \mathrm{ supp}(\lambda x. \mathscr{E}_2)} (\lambda x. \mathscr{E}_2)(\lambda x. H)\cdot \mathscr{F}^2_{H, N}\\
&= \frac{1}{2}   \sum_{ H\,  \in \, \mathrm{ supp}( \mathscr{E}_1)} \mathscr{E}_1(H)\cdot \mathscr{F}^1_{H, N} \\
&\phantom{= \ }+\frac{1}{2}   \sum_{ H\,  \in \, \mathrm{ supp}( \mathscr{E}_2)}  \mathscr{E}_2(H)\cdot \mathscr{F}^2_{H, N}\\
&\leq_\dist  \frac{1}{2}   \sum_{H\,  \in \, \mathrm{ supp}(\mathscr{E}_1)} \mathscr{E}_1( H)\cdot \mathscr{F}_{H, N} \\
& \phantom{= \ } +\frac{1}{2}   \sum_{ H\,  \in \, \mathrm{ supp}( \mathscr{E}_2)} \mathscr{E}_2(H)\cdot \mathscr{F}_{H, N}\\
&=   \sum_{H\,  \in \, \mathrm{ supp}(\mathscr{E})} \bigg( \frac{1}{2}\cdot  \mathscr{E}_1+ \frac{1}{2}\cdot \mathscr{E}_2 \bigg)(H)\cdot \mathscr{F}_{H, N} \\
& \leq_\dist  \sum_{H\,  \in \, \mathrm{ supp}(\mathscr{E})} \mathscr{E} (H)\cdot \mathscr{F}_{H, N} \\
&=  \sum_{\lambda x.H\,  \in \, \mathrm{ supp}(\lambda x.\mathscr{E})} (\lambda x. \mathscr{E}) (\lambda x.H)\cdot \mathscr{F}_{H, N}= \mathscr{D}  .
\end{align*}
\item Suppose $ M=P_1\oplus P_2$ then it must be that $M \dashrightarrow_\frac{1}{2} M_1=P_1$ and  $M \dashrightarrow_\frac{1}{2} M_2=P_2$, with $M_1 \Downarrow \mathscr{D}_1$ and $M_2 \Downarrow \mathscr{D}_2$. Then, it suffices to define $\mathscr{D}\triangleq  \frac{1}{2}\cdot \mathscr{D}_1+\frac{1}{2}\cdot \mathscr{D}_2$.
\end{enumerate}
$\bullet$ Suppose $\mathcal{E}= \lambda x. \mathcal{E}'$ and let us consider the case     $P \dashrightarrow_\frac{1}{2}P_1$ and $P \dashrightarrow_\frac{1}{2}P_2$. Then, for $i\in \lbrace 1,2\rbrace$,   the last rule in the derivation of $\mathcal{E}[P_i]\Downarrow \mathscr{D}_i$ is as follows:
\begin{prooftree}
\AxiomC{$\mathcal{E}'[P_i]\Downarrow \mathscr{D}'_i$}
\RightLabel{$s3$}
\UnaryInfC{$\lambda x. \mathcal{E}'[P_i] \Downarrow \lambda x. \mathscr{D}'_i$}
\end{prooftree}
By applying the induction hypothesis,  there exists $\mathscr{D'}$ such that $\mathcal{E}'[P]\Downarrow \mathscr{D'}$ and $ \frac{1}{2}\cdot \mathscr{D}'_1+ \frac{1}{2}\cdot \mathscr{D}'_2\leq_\dist \mathscr{D}'$. We define $\mathscr{D}\triangleq \lambda x. \mathscr{D}'$. Then, we have both   $\lambda x. \mathcal{E}'[P]\Downarrow \mathscr{D}$ and $ \frac{1}{2}\cdot \mathscr{D}_1+ \frac{1}{2}\cdot \mathscr{D}_2 \leq_\dist \mathscr{D}$.\\
$\bullet$ Suppose $\mathcal{E}= \mathcal{E}'L$ and let us consider the case $P \dashrightarrow_\frac{1}{2}P_1$ and $P \dashrightarrow_\frac{1}{2}P_2$.  So, for $i\in \lbrace 1,2\rbrace$, $\mathscr{D}_i =  \sum_{\lambda x. H'\,  \in \, \mathrm{ supp}(\mathscr{E}_i)} \mathscr{E}_i(\lambda x. H')\cdot \mathscr{F}^i_{H', L} + \sum_{  H' \, \in \,\mathrm{ supp}(\mathscr{E}_i)\cap \neu} \mathscr{E}_i(H')\cdot H' L $, and the last rule of the  derivation of  $\mathcal{E}[P_i]\Downarrow \mathscr{D}_i$ is $s4$ with premises $\mathcal{E}'[P_i]  \Downarrow \mathscr{E}_i$ and $\lbrace H'[L/x]\Downarrow \mathscr{F}^i_{H, L} \rbrace_{\lambda x. H'\, \in\,  \mathrm{ supp}(\mathscr{E}_i)}$.
The proof is similar to point~\ref{enum: case distribution for one side equivalence}.
\end{proof}
\begin{lem} \label{lem: equivalence part 2} Let $M \in \plam$. For all $H \in \h$, $ \mathcal{S}^\infty(M, H)\leq \inter{M}(H)$. 
\end{lem}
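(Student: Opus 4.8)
The plan is to bound, for each $N \in \mathbb{N}$, the truncated quantity $\mathscr{S}_N(M) \in \dist(\h)$ defined by $\mathscr{S}_N(M)(H) \triangleq \sum_{n=0}^N \mathcal{S}^n(M, H)$ by a \emph{single} finite big-step approximation of $M$. Since no head spine step applies to a head normal form, the sequence $\big(\mathscr{S}_N(M)(H)\big)_{N\in\mathbb{N}}$ is monotone increasing with supremum $\mathcal{S}^\infty(M, H)$, so it suffices to show that for every $N$ there exists $\mathscr{D}_N$ with $M \Downarrow \mathscr{D}_N$ and $\mathscr{S}_N(M) \leq_{\dist} \mathscr{D}_N$. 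Indeed, this yields $\mathcal{S}^\infty(M, H) = \sup_{N}\mathscr{S}_N(M)(H) \leq \sup_N \mathscr{D}_N(H) \leq \inter{M}(H)$ straight from the definition of $\inter{\cdot}$ in~\eqref{eq: big-step semantics}.

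I would prove this claim by induction on $N$. If $M \in \h$, then $\mathcal{S}^n(M, H) = 0$ for all $n \geq 1$, so $\mathscr{S}_N(M)$ is the point mass on $M$ for every $N$; by Proposition~\ref{prop: the semantics is invariant under reduction} we have $\inter{M}=M$, hence there is a finite derivation $M \Downarrow M$ and we take $\mathscr{D}_N \triangleq M$. If $N = 0$ and $M \notin \h$, then $\mathscr{S}_0(M) = \bot$ and rule $s1$ gives $M \Downarrow \bot$, so $\mathscr{D}_0 \triangleq \bot$ works.

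For the inductive step with $M \notin \h$, the head spine reduction from $M$ has exactly one of the two shapes used in Lemma~\ref{lem: properties of S infty for one direction}.\ref{enum: S infty sum}: either $M \dashrightarrow_1 M'$, or both $M \dashrightarrow_{\frac{1}{2}} M_1$ and $M \dashrightarrow_{\frac{1}{2}} M_2$. In the first case, Lemma~\ref{lem: properties of S infty for one direction}.\ref{enum: S infty sum} gives $\mathscr{S}_{N+1}(M) = \mathscr{S}_N(M')$; the induction hypothesis yields $\mathscr{D}'$ with $M' \Downarrow \mathscr{D}'$ and $\mathscr{S}_N(M') \leq_{\dist} \mathscr{D}'$, and Lemma~\ref{lem: to prove lemma about head contexts}.\ref{enum: lemma about head context 1} gives $M \Downarrow \mathscr{D}'$, so $\mathscr{D}_{N+1} \triangleq \mathscr{D}'$ does the job. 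In the second case, Lemma~\ref{lem: properties of S infty for one direction}.\ref{enum: S infty sum} gives $\mathscr{S}_{N+1}(M) = \frac{1}{2}\cdot \mathscr{S}_N(M_1) + \frac{1}{2}\cdot \mathscr{S}_N(M_2)$; applying the induction hypothesis to $M_1$ and $M_2$ produces $\mathscr{D}_1, \mathscr{D}_2$ with $M_i \Downarrow \mathscr{D}_i$ and $\mathscr{S}_N(M_i) \leq_{\dist} \mathscr{D}_i$, and then Lemma~\ref{lem: to prove lemma about head contexts}.\ref{enum: lemma about head context 2} produces $\mathscr{D}$ with $M \Downarrow \mathscr{D}$ and $\frac{1}{2}\cdot\mathscr{D}_1 + \frac{1}{2}\cdot\mathscr{D}_2 \leq_{\dist}\mathscr{D}$, whence $\mathscr{S}_{N+1}(M) = \frac{1}{2}\cdot\mathscr{S}_N(M_1)+\frac{1}{2}\cdot\mathscr{S}_N(M_2) \leq_{\dist} \mathscr{D}$.

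The genuine difficulty has in fact already been discharged in Lemma~\ref{lem: to prove lemma about head contexts}, which repairs the mismatch between the distribution-based derivations $\Downarrow$ and single head-spine steps (relying on the directedness of $\{\mathscr{D} \mid M \Downarrow \mathscr{D}\}$ from Lemma~\ref{lem: direct set}). The only subtlety remaining is that the approximant $\mathscr{D}_N$ must be allowed to depend on $N$ — there is in general no finite derivation dominating all of $\mathcal{S}^\infty(M,-)$, as the almost-surely-terminating terms of Example~\ref{exmp: almost sure termination} illustrate — so the passage from the truncations to $\inter{M}(H)$ has to be performed by taking the supremum only once the induction is complete.
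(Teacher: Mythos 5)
Your proof is correct and follows the same skeleton as the paper's argument: an induction on the number of head spine steps, a case split on whether $M$ is a hnf, reduces deterministically, or branches with probability $\frac{1}{2}$, and the two reduction cases discharged by Lemma~\ref{lem: to prove lemma about head contexts }. The one substantive difference is your choice of induction invariant, and it is an improvement. The paper proves, for each $n$ separately, that some $\mathscr{D}_n$ with $M\Downarrow\mathscr{D}_n$ dominates the single term $\mathcal{S}^n(M,\cdot)$; but $\mathcal{S}^\infty(M,H)$ is the \emph{sum} $\sum_{n}\mathcal{S}^n(M,H)$, not the supremum of its terms, so per-$n$ bounds by possibly different approximants do not by themselves yield $\mathcal{S}^\infty(M,H)\leq\sup\lbrace\mathscr{D}(H)\ \vert\ M\Downarrow\mathscr{D}\rbrace=\inter{M}(H)$. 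You instead dominate the cumulative truncations $\sum_{n\leq N}\mathcal{S}^n(M,\cdot)$ by a single approximant $\mathscr{D}_N$, which is exactly what is needed to pass to the supremum, and which forces the correct base case $\mathscr{D}_N\triangleq M$ (rather than $\bot$) when $M\in\h$. The recurrences of Lemma~\ref{lem: properties of S infty for one direction}.\ref{enum: S infty sum} commute with these partial sums, so the inductive step costs nothing extra. In short, your version repairs the induction invariant so that the final limit argument actually closes; your closing remark about the approximant necessarily depending on $N$ is also exactly the right caveat.
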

\begin{proof}
We prove by induction on $n \in \mathbb{N}$ that  there exists $\mathscr{D}$ such that $M \Downarrow \mathscr{D}$ and,  $\forall H \in \h$,    $ \mathcal{S}^n(M, H)\leq \mathscr{D}(H)$.  The case $n=0$ is trivial, so let $n>0$. If $M$ is a head normal form, then $\mathcal{S}^n(M, H)=0$ and we take $\mathscr{D}\triangleq \bot$. Otherwise, we have two cases:
\begin{enumerate}[(a)]
\item If $M\dashrightarrow_1 M'$ then we have $\mathcal{S}^n( M, H)= \mathcal{S}^{n-1}(M', H)$, by Lemma~\ref{lem: properties of S infty for one direction}.\ref{enum: S infty sum}. By induction hypothesis there exists $\mathscr{D}$ such that $M' \Downarrow \mathscr{D}$ and $\mathcal{S}^{n-1}(M', H)\leq \mathscr{D}(H)$, for all $H \in \h$. By applying Lemma~\ref{lem: to prove lemma about head contexts }.\ref{enum: lemma about head context 1}, $M \Downarrow \mathscr{D}$.  
\item If $M \dashrightarrow_\frac{1}{2}M'$ and $M \dashrightarrow_\frac{1}{2}M''$ then,   
 by Lemma~\ref{lem: properties of S infty for one direction}.\ref{enum: S infty sum}, we have $\mathcal{S}^n( M, H)=\frac{1}{2}\cdot  \mathcal{S}^{n-1}(M', H)+ \frac{1}{2}\cdot  \mathcal{S}^{n-1}(M'', H)$. By induction hypothesis there exist $\mathscr{D}'$ and $\mathscr{D}''$ such that $M' \Downarrow \mathscr{D}'$, $M''\Downarrow \mathscr{D''}$,   $\mathcal{S}^{n-1}(M', H)\leq \mathscr{D}'(H)$, and $\mathcal{S}^{n-1}(M'', H)\leq \mathscr{D}''(H)$, for all $H \in \h$. By applying Lemma~\ref{lem: to prove lemma about head contexts }.\ref{enum: lemma about head context 2}, there exists $\mathscr{D}$ such that $M \Downarrow \mathscr{D}$ and $\frac{1}{2}\cdot \mathscr{D}'+ \frac{1}{2}\cdot \mathscr{D}''  \leq_\dist \mathscr{D}$.
\end{enumerate}
\end{proof}
We are now able to prove that $\mathcal{H}^\infty$,  $\mathcal{S}^\infty$ and $\inter{\cdot}$ are all equivalent operational semantics:
\begin{thm}[Equivalence] \label{thm: equivalence of head reduction and head spine reduction}Let $M \in \plam$. For all $H \in \h$, $ \mathcal{H}^\infty(M, H)=\mathcal{S}^\infty(M, H)= \inter{M}(H)$.
\end{thm}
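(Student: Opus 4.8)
The plan is to split the double equality into two parts, each of which is now essentially an assembly of results already established. First I would prove $\mathcal{H}^\infty(M,H)=\mathcal{S}^\infty(M,H)$, comparing the two reduction strategies directly; then $\mathcal{S}^\infty(M,H)=\inter{M}(H)$, matching the term-based and the distribution-based presentations of the head spine reduction; chaining the two gives the statement.

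For the first part, Theorem~\ref{thm: head  equal to heas spine} gives $\mathcal{S}^n(M,H)=\mathcal{H}^n(M,H)$ for every $n\in\mathbb{N}$. Since $\mathcal{S}^\infty(M,H)=\sum_{n=0}^\infty\mathcal{S}^n(M,H)$ and $\mathcal{H}^\infty(M,H)=\sum_{n=0}^\infty\mathcal{H}^n(M,H)$ by Definition~\ref{defn: Hn and Sn}, the two series are termwise equal, hence have the same sum, so $\mathcal{S}^\infty(M,H)=\mathcal{H}^\infty(M,H)$.

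For the second part, Lemma~\ref{lem: equivalence part 1} gives $\inter{M}(H)\le\mathcal{S}^\infty(M,H)$ and Lemma~\ref{lem: equivalence part 2} gives $\mathcal{S}^\infty(M,H)\le\inter{M}(H)$; since $\le$ on $[0,1]$ is antisymmetric, $\mathcal{S}^\infty(M,H)=\inter{M}(H)$. Together with the first part this yields $\mathcal{H}^\infty(M,H)=\mathcal{S}^\infty(M,H)=\inter{M}(H)$, as required.

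I do not expect a genuine obstacle in this final step: all the substantive work is already contained in the preceding lemmas, namely the commutation Lemma~\ref{lem: commutation diagram head spine} together with the case analysis of Theorem~\ref{thm: head  equal to heas spine} for the $\mathcal{H}$--$\mathcal{S}$ comparison, and Lemma~\ref{lem: to prove lemma about head contexts } for the term-based/distribution-based comparison. The one point worth flagging is purely notational: the $\mathcal{H}^\infty$ of Definition~\ref{defn: Hn and Sn} sums over head reductions of length exactly $n$, whereas the matrix power $\mathcal{H}^n$ of the main text counts head reductions of length at most $n$; because head normal forms are $\rightarrow$-absorbing, these two readings agree in the limit, so Theorem~\ref{thm: equivalence head e head spine nel paper} follows as an immediate corollary of the statement just proved.
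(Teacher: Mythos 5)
Your proposal is correct and follows exactly the paper's own proof: Theorem~\ref{thm: head  equal to heas spine} gives $\mathcal{H}^\infty=\mathcal{S}^\infty$ by termwise equality of the series, and Lemmas~\ref{lem: equivalence part 1} and~\ref{lem: equivalence part 2} give the two inequalities yielding $\mathcal{S}^\infty(M,H)=\inter{M}(H)$. Your closing remark about reconciling the ``exactly $n$ steps'' summation of Definition~\ref{defn: Hn and Sn} with the ``at most $n$ steps'' matrix-power formulation of the main text, via absorption of head normal forms, is a correct and worthwhile observation.
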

\begin{proof}
Let $H \in \h$. By Theorem~\ref{thm: head  equal to heas spine}, we have $ \mathcal{H}^\infty(M, H)= \mathcal{S}^\infty(M, H)$. By Lemma~\ref{lem: equivalence part 1} and Lemma~\ref{lem: equivalence part 2}, we have $\mathcal{S}^\infty(M, H)= \inter{M}(H)$.
\end{proof}
As expected, Proposition~\ref{prop: the semantics is invariant under reduction}.\ref{enum: invariance beta} says that the operational semantics $\inter{\cdot}$ in~\eqref{eq: big-step semantics}  is invariant under the head \emph{spine} reduction step rewriting  $(\lambda x. H)N$ into $H[N/x]$, where $H \in \h$.  A consequence of Theorem~\ref{thm: equivalence of head reduction and head spine reduction}  is that  $\inter{\cdot}$ is also invariant under the usual head reduction step rewriting  $(\lambda x. M)N$ into $M[N/x]$:
\begin{cor} \label{cor: operational semantics invariant under head reduction step} Let $M, N \in \Lambda_\oplus$. Then $\inter{(\lambda x. M)N}= \inter{M[N/x]}$.
\end{cor}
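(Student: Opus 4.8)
The plan is to push everything through the head-reduction counting function $\mathcal{H}^n$ and to exploit that $(\lambda x.M)N$ is \emph{itself} a head redex under the trivial head context $[\cdot]$. Concretely, by Theorem~\ref{thm: equivalence of head reduction and head spine reduction} it is enough to prove $\mathcal{H}^\infty((\lambda x.M)N, H) = \mathcal{H}^\infty(M[N/x], H)$ for every $H \in \h$, and then translate back. First I would observe that taking $\mathcal{E}=[\cdot]$ (i.e.~$n=m=0$) in the definition of the head reduction $\rightarrow$ gives $(\lambda x.M)N \rightarrow_1 M[N/x]$, with \emph{no} requirement that $M$ be a head normal form; this is precisely the feature distinguishing head reduction from head spine reduction that we need here, since $\dashrightarrow$ would instead first reduce inside $M$.

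Next I would apply Lemma~\ref{lem: properties of S infty for one direction}.\ref{enum: S infty sum} with $\mathcal{X}=\mathcal{H}$ and $\mathcal{R}={\rightarrow}$: from $(\lambda x.M)N \rightarrow_1 M[N/x]$ we obtain $\mathcal{H}^n((\lambda x.M)N, H) = \mathcal{H}^{n-1}(M[N/x], H)$ for all $n>0$ and all $H\in\h$. For $n=0$, the term $(\lambda x.M)N$ is an application whose head is an abstraction, hence it is not a head normal form, so $\mathcal{H}^0((\lambda x.M)N, H)=0$ for every $H\in\h$.

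Summing over $n\in\mathbb{N}$ and reindexing the series then yields
\[
\mathcal{H}^\infty((\lambda x.M)N, H) \;=\; \sum_{n=1}^\infty \mathcal{H}^{n-1}(M[N/x], H) \;=\; \sum_{m=0}^\infty \mathcal{H}^m(M[N/x], H) \;=\; \mathcal{H}^\infty(M[N/x], H).
\]
Applying Theorem~\ref{thm: equivalence of head reduction and head spine reduction} once more turns this into $\inter{(\lambda x.M)N}(H) = \inter{M[N/x]}(H)$ for every $H\in\h$, which is exactly the claimed equality $\inter{(\lambda x.M)N} = \inter{M[N/x]}$. I do not anticipate any genuine obstacle: the only step needing a moment's care is the remark that $(\lambda x.M)N\notin\h$, so that the $n=0$ contribution vanishes and the reindexing is legitimate; everything else is bookkeeping over the already-established equivalence theorem and the elementary properties of $\mathcal{H}^n$.
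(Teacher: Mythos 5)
Your proof is correct and follows essentially the same route as the paper's: both invoke Lemma~\ref{lem: properties of S infty for one direction}.\ref{enum: S infty sum} on the head-reduction step $(\lambda x.M)N \rightarrow_1 M[N/x]$ to get $\mathcal{H}^n((\lambda x.M)N,H)=\mathcal{H}^{n-1}(M[N/x],H)$, pass to $\mathcal{H}^\infty$, and conclude by Theorem~\ref{thm: equivalence of head reduction and head spine reduction}. Your explicit treatment of the $n=0$ case is a minor (and welcome) extra precaution the paper leaves implicit.
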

\begin{proof}
By Lemma~\ref{lem: properties of S infty for one direction}.\ref{enum: S infty sum}, for all $n \in \mathbb{N}$ and $H \in \h$,  we have   $\mathcal{H}^n((\lambda x. M)N, H)=\mathcal{H}^{n-1}(M[N/x], H)$. This means that  $\mathcal{H}^\infty((\lambda x. M)N, H)=\mathcal{H}^{\infty}(M[N/x], H)$. We conclude by Theorem~\ref{thm: equivalence of head reduction and head spine reduction}.
\end{proof}
\section{Proofs of Section~\ref{sec3}}
\textbf{Lemma~\ref{lem: abstraction congruence for obs and app}.}  Let $M, N \in \plam^{\Gamma \cup \lbrace x \rbrace}$. Then:
\begin{enumerate}[(1)]
\item \label{eqn: abstraction congruence for obs} If $M \leq _{\mathrm{app}}N$ then  $\lambda x. M \leq_{\mathrm{app}}\lambda x. N$.
\item  \label{eqn: abstraction congruence for app} If $\lambda x.M \leq _{\mathrm{cxt}}\lambda x.N$ then $M \leq_{\mathrm{cxt}}N$.
\item \label{eqn: application congruence for obs} If $M \leq _{\mathrm{cxt}}N$ then, for all $L \in \plam$,  $ML\leq _{\mathrm{cxt}}NL$. 
\end{enumerate}
\begin{proof}
Concerning point~\ref{eqn: abstraction congruence for obs}, let us suppose  that $\lambda x.M \leq_{\mathrm{app}}\lambda x. N$ does not hold. Then, there exists an applicative context  $\mathcal{C}=(\lambda x_1\ldots x_n.[\cdot])P_1\ldots P_m$ such that  $\sum \inter{\mathcal{C}[\lambda x.N]}< \sum \inter{\mathcal{C}[\lambda x. M]}$. We consider the applicative context $\mathcal{C}'\triangleq \mathcal{C}[\lambda x.[\cdot]]$. Then $\textstyle \sum \inter{\mathcal{C}'[N]}=\textstyle \sum \inter{\mathcal{C}[\lambda x. N]}< \textstyle\sum\inter{\mathcal{C}[\lambda x. M]} =\textstyle \sum \inter{\mathcal{C}'[M]}$.
Therefore,  $M \leq _{\mathrm{app}}N$ does not hold.\\
Let us now  prove point~\ref{eqn: abstraction congruence for app}. 
Suppose that  $M \leq_{\mathrm{cxt}}N$ does not hold. Then, there exists $\mathcal{C}\in \clam$  such that $\sum \inter{\mathcal{C}[N]}< \sum \inter{\mathcal{C}[M]}$. We consider the context $\mathcal{C}'\triangleq \mathcal{C}[ [\cdot ]x]$. By applying Corollary~\ref{cor: operational semantics invariant under head reduction step} twice and Lemma~\ref{lem: operational semantics monotonicity contexts}, we can conclude  $\sum \inter{\mathcal{C}'[\lambda x. N]}= \sum \inter{\mathcal{C}[(\lambda x. N)x]}= \sum \inter{\mathcal{C}[N]}< \sum\inter{\mathcal{C}[M]} =\sum \inter{\mathcal{C}[(\lambda x. M)x]}=\sum \inter{\mathcal{C}'[\lambda x. M]}$.
Hence,  $\lambda x. M \leq _{\mathrm{cxt}}\lambda x.N$ does not hold.\\
Last, we prove point~\ref{eqn: application congruence for obs}. Suppose $M \leq_{\mathrm{cxt}}N$ and let $\mathcal{C}\in \clam$. By defining $\mathcal{C}'\triangleq \mathcal{C}[[\cdot]L]$ we have $\sum \inter{\mathcal{C}[ML]}= \sum \inter{\mathcal{C}'[M]}\leq \sum \inter{\mathcal{C}'[N]}= \sum \inter{\mathcal{C}[NL]}$. 
Therefore, $ML \leq_{\mathrm{cxt}}NL$.
\end{proof}
\noindent
\textbf{Lemma~\ref{lem: fundamental step toward context lemma}.} Let $M ,N \in \elam$ be such that  $M \leq_{\mathrm{app}} N$. Then  $\sum \inter{\mathcal{C}[M]}\leq \sum \inter{\mathcal{C}[N]}$, for all $ \mathcal{C} \in \genlam$. 
\begin{proof}
By  Theorem~\ref{thm: equivalence of head reduction and head spine reduction},  it is enough to show that, for all $n \in \mathbb{N}$ and for all  contexts $\mathcal{C}\in \genlam$:
\begin{equation}\label{eqn: equantion to prove for context lemma}
 \sum_{H \in \h} \mathcal{H}^n (\mathcal{C}[M], H)\leq   \sum_{H \in \h}\mathcal{H}^\infty(\mathcal{C}[N], H)  .
\end{equation}
Henceforth, we  write $\sum \mathcal{H}^n (\mathcal{C}[M])$ (resp.~$\sum \mathcal{H}^\infty (\mathcal{C}[M])$) in place of $\sum_{H \in \h} \mathcal{H}^n (\mathcal{C}[M], H)$ (resp.~$\sum_{H \in \h} \mathcal{H}^\infty (\mathcal{C}[M], H)$). 
The proof is by induction on  $(n, \vert \mathcal{C}\vert )$, where $n \in \mathbb{N}$ and $\vert \mathcal{C}\vert$ is the size  of $\mathcal{C}\in \genlam$, i.e.~the number of nodes in the syntax tree of $\mathcal{C}$. First, note that $\mathcal{C}$ must be of the form  $\mathcal{C}_0\mathcal{C}_1\ldots \mathcal{C}_k$, for some $k \in \mathbb{N}$.  We have several cases depending on the structure of $\mathcal{C}_0$:
\begin{enumerate}[(a)]
\item \label{enum: cases context lemma a} $\mathcal{C}_0=x$ then both $\mathcal{C}[M]$ and $\mathcal{C}[N]$ are head normal forms, and the inequation in~\eqref{eqn: equantion to prove for context lemma} is straightforward.
\item  \label{enum: cases context lemma b}  If $\mathcal{C}_0= \lambda x. \mathcal{C}'$ then we have two cases:
\begin{enumerate}[(i)]
\item If $k=0$ then, by Lemma~\ref{lem: properties of S infty for one direction}.\ref{enum: S infty abst} and by  induction hypothesis, $\sum \mathcal{H}^n( \lambda x.\mathcal{C}'[M])=\sum \mathcal{H}^n( \mathcal{C}'[M]) \leq \sum \mathcal{H}^\infty(\mathcal{C}'[N])=\sum \mathcal{H}^\infty(\lambda x. \mathcal{C}'[N]) $.  
\item For $k>0$  we have two cases depending on $n \in \mathbb{N}$.  If $n=0$ then    $\sum \mathcal{H}^0 ((\lambda x. \mathcal{C}'[M])\mathcal{C}_1[M]\ldots \mathcal{C}_k[M])=0$ by Lemma~\ref{lem: properties of S infty for one direction}.\ref{enum: S infty sum}.  Otherwise, by  Lemma~\ref{lem: properties of S infty for one direction}.\ref{enum: S infty sum} and by using the induction hypothesis, we  have:
\allowdisplaybreaks
\begin{align*}
\ \ \ \ \ \ \ \ \ \ &\sum \mathcal{H}^n ((\lambda x. \mathcal{C}'[M])\mathcal{C}_1[M]\ldots \mathcal{C}_k[M])=\\
&= \sum \mathcal{H}^{n-1} ((( \mathcal{C}'[M])[\mathcal{C}_1[M]/x] ) \mathcal{C}_2[M]\ldots \mathcal{C}_k[M])\\
&\leq  \sum \mathcal{H}^{\infty} ((( \mathcal{C}'[N])[\mathcal{C}_1[N]/x] ) \mathcal{C}_2[N]\ldots \mathcal{C}_k[N])\\
&=\sum \mathcal{H}^{\infty} ((\lambda x. \mathcal{C}'[N])\mathcal{C}_1[N]\ldots \mathcal{C}_k[N]) .
\end{align*}
\end{enumerate}
\item \label{enum: cases context lemma c} If  $\mathcal{C}_0= \mathcal{C}' \oplus \mathcal{C}''$, then we have two cases depending on $n \in \mathbb{N}$.  If $n=0$,   Lemma~\ref{lem: properties of S infty for one direction}.\ref{enum: S infty sum} implies  $\sum \mathcal{H}^n( (\mathcal{C}'[M] \oplus \mathcal{C}''[M])\mathcal{C}_1[M]\ldots \mathcal{C}_k[M])=0$.  Otherwise,  by using the induction hypothesis and by  Lemma~\ref{lem: properties of S infty for one direction}.\ref{enum: S infty sum}, we  have:
\allowdisplaybreaks
\begin{align*}
&\sum \mathcal{H}^n( (\mathcal{C}'[M] \oplus \mathcal{C}''[M])\mathcal{C}_1[M]\ldots \mathcal{C}_k[M])=\\
&= \frac{1}{2} \sum \mathcal{H}^{n-1} (\mathcal{C}'[M] \mathcal{C}_1[M]\ldots \mathcal{C}_k[M])\\
&\phantom{= \ }+ \frac{1}{2} \sum \mathcal{H}^{n-1} ( \mathcal{C}''[M] \mathcal{C}_1[M]\ldots \mathcal{C}_k[M])\\
&\leq  \frac{1}{2}\sum  \mathcal{H}^{\infty} (\mathcal{C}'[N] \mathcal{C}_1[N]\ldots \mathcal{C}_k[N])\\
&\phantom{= \ }+ \frac{1}{2} \sum \mathcal{H}^{\infty} ( \mathcal{C}''[N] \mathcal{C}_1[N]\ldots \mathcal{C}_k[N])\\
&=\sum\mathcal{H}^{\infty} ((\mathcal{C}'[N]\oplus \mathcal{C}''[N])\mathcal{C}_1[N]\ldots \mathcal{C}_k[N]) .
\end{align*}
\item \label{enum: cases context lemma d} The last case is when $\mathcal{C}_0=[\cdot]$. First, note that $M=M_0\ldots M_h$ for some $h \in \mathbb{N}$. Since $M$ is closed, we can assume that $M_0= \lambda x. M'_0$ is an abstraction. We apply Case~\ref{enum: cases context lemma b} to the context $( \lambda x.M'_0)M_1 \ldots M_{h} \mathcal{C}_1 \ldots \mathcal{C}_{k}$, and we have  
$\sum \mathcal{H}^n (M_0M_1 \ldots M_{h}\mathcal{C}_1[M]\ldots \mathcal{C}_k[M])\leq \sum \mathcal{H}^\infty (M_0M_1 \ldots M_{h}\mathcal{C}_1[N]\ldots \mathcal{C}_k[N])$.  Since  it holds that $M \leq_{\mathrm{app}}N$,  we  obtain $\sum \mathcal{H}^\infty (M\mathcal{C}_1[N]\ldots \mathcal{C}_k[N]) \leq \sum  \mathcal{H}^\infty(N \mathcal{C}_1[N]\ldots \mathcal{C}_k[N])$.
\end{enumerate}
\end{proof}	
\noindent
\textbf{Lemma~\ref{lem: context lemma 2}.} Let $H, H'\in \h^{\lbrace x \rbrace}$. Then, the following are equivalent statements:
 \begin{enumerate}[(1)]
 \item \label{enum: context lemma 2 1}$ \lambda x. H \precsim \lambda x. H' ,$
 \item \label{enum: context lemma 2 2}$\nu x. H \precsim \nu x. H',$
 \item \label{enum: context lemma 2 3}$\forall P \in \elam, \ H[P/x]\precsim H'[P/x] .$
 \end{enumerate}
 \begin{proof} Let us first show that point~\ref{enum: context lemma 2 1} implies point~\ref{enum: context lemma 2 2}. By Proposition~\ref{prop: properties dal lago bisimilarity}, if $ \lambda x. H \precsim \lambda x. H' $ then:
\begin{equation*}
 1=\mathcal{P}_\oplus(\lambda x. H, \tau, \lbrace \nu x.H \rbrace)\leq \M{\lambda x. H', \tau, {\precsim}(\nu x. H)}.
\end{equation*} 
 Hence,  $\M{\lambda x. H',\tau, {\precsim}(\nu x. H)}=1$, so that $\nu x. H\precsim \nu x. H'$. To prove that point~\ref{enum: context lemma 2 2} implies point~\ref{enum: context lemma 2 3},  if $\nu x. H \precsim \nu x. H'$ then,   by Proposition~\ref{prop: properties dal lago bisimilarity}, we have:
\begin{equation*}
1=\M{\nu x. H, P, \lbrace H[P/x]\rbrace}\leq \M{\nu x. H', P, {\precsim} (H[P/x])} ,
\end{equation*} 
 for all $P \in \elam$. Hence,  $\M{\nu x. H', P,{\precsim} ( H[P/x])}=1$, so that $H[P/x]\precsim H'[P/x]$.  \\
 We now prove that point~\ref{enum: context lemma 2 3} implies point~\ref{enum: context lemma 2 2}. Let us consider the  relation $\mathcal{R}$ defined by:
 \begin{equation*}
  \lbrace (\nu x. H, \nu x. H' )\in \dhe^2  \ \vert \ \forall P \in \elam, \, H[P/x]\precsim H'[P/x] \rbrace  \cup  {\precsim} 
 \end{equation*}
 where $\dhe^2= \dhe \times \dhe$. Clearly, $\mathcal{R}$ is a preorder because $\precsim$ is. Now, if we show that $\mathcal{R}$ is a simulation  then  $\mathcal{R}\subseteq\  \precsim$, so that $\nu x.H \precsim \nu x. H'$ holds whenever $ H[P/x]\precsim H'[P/x]$ for all $ P \in \elam$. The only interesting case is $\R{\nu x. H}{\nu x. H'}$. Let $P\in \elam$. By definition, we have $H[P/x]\precsim H'[P/x]$, so that:
\begin{equation*}
\begin{split}
\M{\nu x. H, P, \lbrace H[P/x]\rbrace }&\leq \M{\nu x. H', P, {\precsim} (\lbrace H[P/x] \rbrace)}\\
&\leq \M{\nu x. H', P, \mathcal{R} (\lbrace H[P/x] \rbrace)} . 
\end{split}
\end{equation*} 
Finally, we prove that point~\ref{enum: context lemma 2 2} implies point~\ref{enum: context lemma 2 1}. Let us consider the following relation:
 \begin{equation*}
 \mathcal{R}\triangleq \lbrace (\lambda  x. H, \lambda x. H' )\in \h \times 	\h \ \vert \ \nu x.H \precsim \nu x. H' \rbrace  \cup  {\precsim}  .
 \end{equation*}
It is a preorder because $\precsim$ is. Now, if we show that $\mathcal{R}$ is a simulation  then  $\mathcal{R}\subseteq\  \precsim$, so that $\lambda x.H \precsim \lambda x. H'$ whenever $\nu x. H \precsim \nu x. H'$.  The only interesting case is $\R{\lambda  x. H}{\lambda x. H'}$. By definition, we have  $\nu x. H \precsim \nu x. H'$, so that  $\M{\lambda x. H, \tau, \lbrace \nu x. H  \rbrace }\leq \M{\lambda x. H', \tau,{\precsim} (\lbrace \nu x.H \rbrace)}\leq \M{\lambda  x. H', \tau, \mathcal{R} (\lbrace \nu x. H  \rbrace)}$.
  \end{proof}
\noindent
\textbf{Lemma~\ref{lem: commutation abstraction precsim}.}    Let $X \subseteq \h^{\lbrace x \rbrace}$. We have:
 \begin{align*}
{\precsim}(\lambda x.X)\cap\he&=\lambda x. {\precsim} (X)\cap\he,\\
{\precsim}(\nu x.X)&= \nu x. {\precsim} (X)\enspace.
 \end{align*}
 \begin{proof}
 Let us  prove the first equation. For all $\lambda x. H \in \he$ we have:
\allowdisplaybreaks
\begin{align*}
\lambda x. H \in  {\precsim} (\lambda x. X)&\Leftrightarrow \exists H' \in X, \ \lambda x. H' \precsim \lambda x. H\\
&\Leftrightarrow   \exists H' \in X,  \  H' \precsim H &&\text{by}~\eqref{eqn: open term simil} \\
&\Leftrightarrow \lambda x. H \in \lambda x. {\precsim} (X) .
\end{align*}
Concerning the second equation, first note that ${\precsim} (\nu x. X)$ contains only distinguished head normal forms. Indeed, suppose $M \in {\precsim} (\nu x.X)$ for some term $M\in \elam$. Then, there exists $H \in X$ such that $\nu x.H  \precsim M$.  By Proposition~\ref{prop: properties dal lago bisimilarity}, we would have $1=\M{\nu x. H, P, \lbrace H[P/x]\rbrace }\leq \M{M, P, {\precsim}(\lbrace H[P/x]\rbrace)}=0$.  Then, for all   $\nu x.H \in \dhe$, we have:
\allowdisplaybreaks
\begin{align*}
\nu x. H \in  {\precsim} (\nu x. X)&\Leftrightarrow \exists H' \in X, \ \nu x. H' \precsim \nu x. H\\
&\Leftrightarrow  \exists H' \in X, \ \lambda x. H' \precsim \lambda x. H &&\text{Lemma}~\ref{lem: context lemma 2} \\
&\Leftrightarrow   \exists H' \in X,  \  H' \precsim H&&\text{by}~\eqref{eqn: open term simil} \\
&\Leftrightarrow \nu x. H \in \nu x. {\precsim} (X) .
\end{align*}
 \end{proof}
 \noindent
 \textbf{Lemma~\ref{lem: context lemmna 1}.}  Let $M, N \in \elam$. For all $X \subseteq \he$, $\inter{M}(X) \leq \inter{N}({\precsim} (X))$ if and only if  $M \precsim N$.
 \begin{proof} 
The right-to-left direction follows from Proposition~\ref{prop: properties dal lago bisimilarity}.
Concerning the converse, we define $\mathcal{R}$ as:
\begin{equation*}
 \lbrace (P, Q)\in \elam \times \elam \ \vert \ \forall X \subseteq \he, \ \inter{P}(X) \leq \inter{Q}({\precsim}(X))\rbrace \cup  {\precsim} 
\end{equation*} 
 If we prove that $\mathcal{R}$ is a probabilistic simulation, then $\mathcal{R}\subseteq \ \precsim$, so that $M \precsim N$ whenever $\inter{M}(X) \leq \inter{N}({\precsim} (X))$, for all $X \subseteq \he$.  So, let us first prove that $\mathcal{R}$ is a preorder. On the one hand, $\mathcal{R} $ is clearly reflexive. On the other hand, let $P, Q, L \in \elam$ be   such that $\R{P}{L}$ and $\R{L}{Q}$.  By Proposition~\ref{prop: properties dal lago bisimilarity}, $\precsim$ is transitive. It follows that, for all $ X \subseteq \he$: 
 \begin{equation*}
  \inter{P}(X)\leq \inter{L}({\precsim} (X))\leq \inter{Q}({\precsim}({\precsim} (X)))\leq \inter{Q}({\precsim} (X)) ,
\end{equation*}  
Now, let $P, Q \in \elam$ be such that $\R{P}{Q}$, and let $X \subseteq  \h^{\lbrace x\rbrace}$. We have:
\allowdisplaybreaks
\begin{align*}
\mathcal{P}_{\oplus}(P, \tau, \nu x. X)&= \inter{P}(\lambda x. X)\\
& \leq \inter{Q}({\precsim}(\lambda x. X))\\
& \leq \inter{Q}({\precsim}(\lambda x. X)\cap \he)&&Q\in \elam\\
&= \inter{Q}(\lambda x.{\precsim}(X)\cap \he)&&\text{Lemma}~\ref{lem: commutation abstraction precsim}\\
&= \inter{Q}(\lambda x.{\precsim}(X))&&Q \in \elam\\
&= \M{ Q, \tau, \nu x. {\precsim} (X)}\\
&=\M{Q, \tau, {\precsim}(\nu x.X}) &&\text{Lemma}~\ref{lem: commutation abstraction precsim} \\
&\leq  \M{Q, \tau, \mathcal{R}(\nu x.X})    .
\end{align*}
Therefore, $\mathcal{R}$ is a probabilistic simulation.
 \end{proof}
\noindent
\textbf{Lemma~\ref{lem: context lemmna 3}} (Key Lemma)\textbf{.}  Let $M, N \in \elam$. If $M \precsim N$ then, for all $P \in \elam$, $MP \precsim NP$.
\begin{proof} By Lemma~\ref{lem: context lemmna 1} it suffices to  prove that, for all $X \subseteq \he$, $\inter{MP}(X)\leq \inter{NP}({\precsim} (X))$. This amounts to show that, for all $\mathscr{D}$ such that $MP \Downarrow \mathscr{D}$, it holds  $\mathscr{D}(X) \leq \inter{NP}({\precsim}(X))$. This is trivial when $\mathscr{D}= \bot$, so that we can assume that  the last rule in the derivation of $MP \Downarrow \mathscr{D}$ is the following:
\begin{prooftree}
\AxiomC{$M \Downarrow \mathscr{E}$}
\AxiomC{$\lbrace H[P/x] \Downarrow \mathscr{F}_{H, P}\rbrace_{\lambda x. H \in\, \mathrm{supp}(\mathscr{E})} $}
\RightLabel{$s4$}
\BinaryInfC{$MP \Downarrow \sum_{\lambda x. H\,  \in\, \mathrm{ supp}(\mathscr{E})} \mathscr{E}(\lambda x.H)\cdot \mathscr{F}_{H, P} $}
\end{prooftree}
 Since $\mathscr{E}$ is a finite distribution, $\mathscr{D}$ is a sum of finitely many summands. Let $\mathrm{supp}(\mathscr{E})$ be $\lbrace \lambda z. H_1, \ldots, \lambda z. H_n\rbrace\subseteq \he$.  We  define the pair $(\lbrace p_i \rbrace_{1 \leq i \leq n}, \lbrace r_I \rbrace_{I \subseteq \lbrace 1, \ldots, n \rbrace})$ as follows:
\begin{enumerate}[(a)]
\item  For all $i \leq n$, $p_i \triangleq \mathscr{E}(\lambda  z.H_i) $.
\item For all $I \subseteq \lbrace 1, \ldots,n \rbrace$:
\begin{equation*}
 r_{I}  \triangleq  \sum_{\substack{\lambda z.H'  \text{ s.t.}\\ 
\lbrace i \leq n \ \vert \ \lambda z.H' \in  {\precsim}( \lambda z. H_i)\rbrace=I}} \inter{N}(\lambda z.H')  .
\end{equation*}
\end{enumerate}
Let us show that $(\lbrace p_i \rbrace_{1 \leq i \leq n}, \lbrace r_I \rbrace_{I \subseteq \lbrace 1, \ldots, n \rbrace})$ is a probabilistic assignment by proving that  
Condition~\eqref{eqn: condition probabilistic assignment} holds. First, from $M \precsim N$ and by Lemma~\ref{lem: context lemmna 1}, we have that $\mathscr{E}(\bigcup_{i \in I} \lbrace \lambda z. H_i\rbrace   ) \leq \inter{N}(\bigcup_{i \in I} {\precsim}(\lambda z. H_i)  )$.  Then, for all $I \subseteq\, \lbrace 1, \ldots, n \rbrace$, we have:
\allowdisplaybreaks
\begin{align*}
\sum_{i \in I} p_i&= \sum_{i \in I}\mathscr{E}(\lambda z. H_i)\\
&= \mathscr{E}(\bigcup_{i \in I} \lbrace \lambda z. H_i \rbrace)\\
& \leq \inter{N}(\bigcup_{i \in I} {\precsim}(\lambda z. H_i)   )\\
&=\inter{N}(\bigcup_{i \in I} {\precsim}(\lambda z. H_i) \, \cap \, \he  ) &&\text{since } N \in \elam\\
&= \sum_{\substack{\lambda z.H' \in\\ \,  \bigcup_{i \in I} {\precsim}(\lambda z. H_i)  }} \inter{N}(\lambda z.H')  \\
&\leq \sum_{ \substack{I'\subseteq \lbrace 1, \ldots, n \rbrace  \\ \text{ s.t. }  I'\cap I \not =\emptyset}} r_{I'} \, .
\end{align*}
By applying Lemma~\ref{lem:  probabilistic assignment entanglement}, for all $I = \lbrace 1, \ldots, n \rbrace$ and for every  $k \in I $ there exists $h_{k, I }\in [0, 1]$ such that:
\allowdisplaybreaks
\begin{align}
&\forall j\leq n:& & p_j \leq \sum_{\substack{J\subseteq \lbrace 1, \ldots, n \rbrace \\ \text{s.t. } j \in J}} h_{j, J} \cdot r_{J } \label{eqn: n, h}\\
&\forall  J \subseteq \lbrace 1, \ldots, n \rbrace:&&  1 \geq \sum_{\substack{j \in \lbrace 1, \ldots, n \rbrace \\ \text{s.t. }j \in J}} h_{j, J }\label{eqn: h leq 1} \,  .
\end{align}
We now show that, for all $\lambda z.H' \in  \bigcup_{i \in I} {\precsim}(\lambda z. H_i)  $, there exist $n$ real numbers $s^{H'}_{1}$, \ldots, $s_{n}^{H'}$ such that:
\allowdisplaybreaks
\begin{align}
&\forall i\leq n:&&  \mathscr{E}(\lambda z. H_i)\leq \sum_{\substack{\lambda z.H' \in\\ \,  {\precsim}( \lambda z.H _i)}} s_{i}^{H'} \label{eqn: n, s}\\
&\forall \lambda z.H' \in   \bigcup_{i \in I} {\precsim}(\lambda z. H_i) : && \inter{N}(\lambda z.H')\geq \sum_{i=1}^n s_{i}^{H'}\label{eqn: s t leq u} \, .
\end{align}
For all $i \leq n$ and for all $\lambda z.H' \in\,  {\precsim}(\lambda z.H_i)$, we set:
\allowdisplaybreaks
\begin{equation*}
 s^{H'}_i \triangleq h_{i, \lbrace k\leq n \ \vert \ \lambda z.H' \in {\precsim}(\lambda z. H_k)\rbrace}\cdot \inter{N}(\lambda z.H') .
\end{equation*}
Concerning the inequation in~\eqref{eqn: n, s}, by using the inequation in~\eqref{eqn: n, h}  we have, for all $i \leq n$:
\allowdisplaybreaks
\begin{align*}
& \mathscr{E}(\lambda z. H_i)\leq \\
&\leq \sum_{\substack{  I \subseteq \lbrace 1, \ldots n \rbrace \\ \text{s.t. }i \in I }}h_{i, I }\cdot r_{I }\\
&= \sum_{\substack{  I \subseteq \lbrace 1, \ldots n \rbrace \\ \text{s.t. }i \in I }} h _{i, I }\cdot \Bigg(    \sum_{\substack{\lambda z.H'  \text{ s.t.}\\ 
\lbrace k \leq n \ \vert \ \lambda z.H' \in   {\precsim}( \lambda z. H_k)\rbrace=I}} \inter{N}(\lambda z.H')  \Bigg)\\
 &= \sum_{ \lambda z.H' \in {\precsim}( \lambda z.H_i)} h_{i, \lbrace k\leq n \ \vert \ \lambda z.H' \in {\precsim}( \lambda z. H_k)\rbrace}\cdot \inter{N}(\lambda z.H')\\
 &=  \sum_{\lambda z.H' \in  {\precsim}( \lambda z.H _i)} s_{i}^{H'}  .
\end{align*}
As for the inequation in~\eqref{eqn: s t leq u}, by using the inequation in~\eqref{eqn: h leq 1} we have, for all $\lambda z.H' \in  \bigcup_{i \in I} {\precsim}(\lambda z. H_i)$:
\allowdisplaybreaks
\begin{align*}
  \sum_{i=1}^n s^{H'}_i  &=  \sum^n_{i=1}h_{i,\lbrace k\leq n \ \vert \ \lambda z.H' \in\, {\precsim}(\lambda z. H_k)\rbrace}\cdot \inter{N}(\lambda z.H')\\
  &\leq \inter{N}(\lambda z.H') .
\end{align*}
We are now able to prove that $\mathscr{D}(X)\leq \inter{NP}({\precsim}(X))$.   First, by applying  Lemma~\ref{lem: context lemma 2} and Lemma~\ref{lem: context lemmna 1}, for all $i \leq n$, for all $ \lambda z. H' \in  {\precsim}( \lambda z. H_i)$, for all $P \in \elam$,  and for all $ X \subseteq \he$:
\begin{equation}\label{eqn: second inequation context lemma 3}
\mathscr{F}_{H_i, P}(X)\leq \inter{H_i[P/x]}(X)\leq \inter{H'[P/x]}({\precsim}(X))  .
\end{equation}
Therefore, for all $X \subseteq \he$:
\allowdisplaybreaks
\begin{align*}
&\mathscr{D}(X)\leq\\
 &\leq \sum^n_{i=1}\Bigg( \sum_{\lambda z.H' \in {\precsim}(\lambda z. H_i)} s_{i}^{H'} \Bigg) \cdot \mathscr{F}_{H_i, P} (X)= &&\text{by}~\eqref{eqn: n, s}  \\
& =  \sum^n_{i=1}\sum_{\lambda z.H' \in {\precsim}(\lambda z. H_i)} s_{i}^{H'}\cdot \mathscr{F}_{H_i, P} (X) \\
&\leq  \sum^n_{i=1}\sum_{\substack{\lambda z. H' \in   {\precsim}(\lambda z. H_i)}} s_{i}^{H'}\cdot \inter{H'[P/z]}({\precsim}  (X)) &&\text{by}~\eqref{eqn: second inequation context lemma 3}\\
&\leq   \sum^n_{i=1}\sum_{\substack{\lambda z.H' \in \\  \, \bigcup^n_{i=1}{\precsim}(\lambda z.H_i)}} s_{i}^{H'}\cdot \inter{H'[P/z]} ({\precsim}(X)) \\
&\leq  \sum_{\substack{ \lambda z.H' \in\\ \,\bigcup^n_{i=1}  {\precsim}(\lambda z.H_i)}} \bigg(\sum^n_{i=1} s_{i}^{H'}\bigg)\cdot   \inter{H'[P/z]} ({\precsim}(X)) \\
&\leq  \sum_{\substack{ \lambda z.H' \in\\ \,\bigcup^n_{i=1}  {\precsim}(\lambda z.H_i)}}    \inter{N}(\lambda z.H') \cdot   \inter{H'[P/z]} ({\precsim}(X))&&\text{by}~\eqref{eqn: s t leq u} \\
&\leq  \sum_{ \substack{\lambda z.H'\in\\ \, \mathrm{supp}(\inter{N})}}   \inter{N}(\lambda z.H') \cdot   \inter{H'[P/z]} ({\precsim}(X))\\
&=\inter{NP}({\precsim}(X))&& \text{Prop.}~\ref{prop: the semantics is invariant under reduction}.\ref{lem: invariance beta general case}
\end{align*}
and hence $\mathscr{D}(X)\leq \inter{NP}({\precsim}(X))$.
\end{proof}
\section{Proofs of Section~\ref{sec4}}\label{app4}
\textbf{Lemma~\ref{fact: descending chain}.}  Let $\lbrace A_n \rbrace_{n \in \mathbb{N}}$ be a  descending chain of countable sets of positive real numbers satisfying $\sum_{r \in A_n}r< \infty$, for all $n \in \mathbb{N}$.  Then:
\begin{equation}\label{eqn: infinite descending chain}
\sum_{r\,  \in \, \bigcap_{n\in \mathbb{N}} A_n }r = \inf_{n \in \mathbb{N}} \bigg( \sum_{r \, \in\, A_n} r \bigg) \, .
\end{equation}
\begin{proof}
Henceforth,  if $A$ is a subset of real numbers, we let   $\mes{A}$ denote $\sum_{r \in A}r$. First, notice that it suffices to prove the following particular situation:
\begin{equation}\label{eqn: particular case infinite descending chain}
\text{if } \bigcap_{n \in \mathbb{N}} A_n= \emptyset \text{ then } \inf_{m \in \mathbb{N}}  \mes{A_m} =0  .
\end{equation}
Let us  show that the implication in~\eqref{eqn: particular case infinite descending chain} gives us the equation in~\eqref{eqn: infinite descending chain}.  So, consider the chain $\lbrace B_n \rbrace_{n \in \mathbb{N}}$ defined by $B_n \triangleq A_n \setminus \bigcap_{m \in \mathbb{N}} A_m$. Since $\bigcap_{n \in \mathbb{N}} B_n= \emptyset$, then  $\inf_{m \in \mathbb{N}}\mes{ B_m }=0 $ by~\eqref{eqn: particular case infinite descending chain}. We have:
\allowdisplaybreaks
\begin{align*}
\mes{ \bigcap_{n \in \mathbb{N}}A_n}& =  \mes{ \bigcap_{n \in \mathbb{N}}A_n } +\inf_{m \in \mathbb{N}}\mes{ B_m}  \\
&=\inf_{m \in \mathbb{N}} ( \mes{ \bigcap_{n \in \mathbb{N}}A_n} +\mes{ B_m}  )\\
&=\inf_{m \in \mathbb{N}} ( \mes{ \bigcap_{n \in \mathbb{N}} A_n \cup  B_m}  )\\
&= \inf_{m \in \mathbb{N}} \mes{ A_m}   .
\end{align*}
 So, let us prove~\eqref{eqn: particular case infinite descending chain} and suppose $\bigcap_{n \in \mathbb{N}}A_n= \emptyset$. Since $\lbrace A_n \rbrace_{n \in \mathbb{N}}$ is a descending chain such that $\forall n\in \mathbb{N}$ $\mes{A_n}<\infty$, we have that $\mes{A_n}_{n \in \mathbb{N}}$ is a monotone  decreasing sequence of positive real numbers. This means that  $\lim_{n \to \infty} \mes{A_n}= \inf_{n \in \mathbb{N}}\mes{A_n}$. Thus, to prove the statement, it suffices to show that for all $\epsilon >0$ there exists $k \in \mathbb{N}$ such that  for all $m \geq k$  it holds that $\mes{A_m}<\epsilon$. Now, given a $A_n$ and $\epsilon>0$, there always exists a finite subset of $A_n$, let us call it $A^*_n$,  such that $\mes{A^*_n}\approx_{\epsilon}  \mes{A_n}$. Moreover, since $\bigcap_{n \in \mathbb{N}} A_n= \emptyset$,  for all $r \in A^*_n$ there exists a $n_r\in \mathbb{N}$ such that $r \not \in A_{n_r}$. By considering $A_k$ such that $k \triangleq \max_{r \in A^*_{n}}n_r$ we have $A_k \subseteq  A_n \setminus A^*_n$. Hence, $\mes{A_k}\leq \mes{A_n \setminus A^*_n}= \mes{A_n}- \mes{A^*_n}<\epsilon$.
\end{proof}
\noindent
\textbf{Inequation~\eqref{enum: preliminary counterex 1} of Lemma~\ref{lem: counterexample M and N context preorder }.} Let $M \in \plam$. Then: 
\begin{equation*}
  \inter{M[\mathbf{\Omega}/x]}\leq_\dist   \inter{M[\mathbf{I}/x]}.
\end{equation*}
\begin{proof}
Let us consider the context $(\lambda x. M)[\cdot]\in \clam$. Since $\inter{\mathbf \Omega}\leq_\dist \inter{\mathbf I}$, by applying Lemma~\ref{lem: operational semantics monotonicity contexts}  we obtain $\inter{(\lambda x.M)\mathbf \Omega}\leq_{\dist} \inter{(\lambda x. M)\mathbf I}$. From Corollary~\ref{cor: operational semantics invariant under head reduction step}, we conclude $\inter{M[\mathbf \Omega /x]}\leq _{\dist}\inter{M[\mathbf I/x]}$.
\end{proof}
\noindent
\textbf{Inequation~\eqref{enum: preliminary counterex 2} of Lemma~\ref{lem: counterexample M and N context preorder }.} Let $M \in \plam$. Then:
\begin{equation*}
 \sum \inter{M[(\mathbf{\Omega} \oplus \mathbf{I})/x]}\leq \frac{1}{2}\cdot \sum \inter{M[\mathbf{\Omega}/x]}+ \frac{1}{2}\cdot \sum \inter{M[\mathbf{I}/x]}.
\end{equation*}
\begin{proof}
By Theorem~\ref{thm: equivalence of head reduction and head spine reduction} it is enough to  prove  the following inequation for all $n \in \mathbb{N}$:
\begin{multline}\label{eqn: last equation to prove}
\sum_{H \in \h}  \mathcal{H}^{n}(M[(\mathbf{\Omega} \oplus \mathbf{I})/x], H) \\
\leq  \sum_{H \in \h} \frac{1}{2}\cdot   \mathcal{H}^{\infty}(M[\mathbf{\Omega}/x], H)
 + \frac{1}{2}\cdot  \mathcal{H}^{\infty}(M[\mathbf{I}/x], H).
\end{multline}
The proof is  by induction on $(n, \vert M \vert)$, where $n \in \mathbb{N}$ and $\vert M \vert$ is the size of $M$, i.e.~the number of nodes in the syntax tree of $M$.  We have several cases:\\
If $M= \lambda x. M'$ then, by using the induction hypothesis and Lemma~\ref{lem: properties of S infty for one direction}.\ref{enum: S infty abst}:
\allowdisplaybreaks
\begin{align*}
&\sum_{H \in \h}  \mathcal{H}^{n}(M[(\mathbf{\Omega} \oplus \mathbf{I})/x], H)=\\
&=\sum_{\lambda x. H \in \h}  \mathcal{H}^{n}(\lambda x. (M'[(\mathbf{\Omega} \oplus \mathbf{I})/x]), \lambda x.H)\\ 
&=\sum_{ H \in \h}  \mathcal{H}^{n}(M'[(\mathbf{\Omega} \oplus \mathbf{I})/x],H)\\
&\leq \sum_{H \in \h} \frac{1}{2}\cdot   \mathcal{H}^{\infty}(M'[\mathbf{\Omega}/x], H)+ \frac{1}{2}\cdot  \mathcal{H}^{\infty}(M'[\mathbf{I}/x], H)\\
&=\sum_{H \in \h} \frac{1}{2}\cdot   \mathcal{H}^{\infty}(M[\mathbf{\Omega}/x], H)+ \frac{1}{2}\cdot  \mathcal{H}^{\infty}(M[\mathbf{I}/x], H) .
\end{align*}
Suppose now that  $M$ is a head normal form. From the previous case we can assume w.l.o.g.~that $M$ is a neutral term of the form $ y\vec{P}$, where   $\vec{P}=P_1\ldots P_m$ for some $m \in \mathbb{N}$ and $P_1,\ldots, P_m\in \plam$.  If  $y \not = x$ then $y\vec{P}[(\mathbf{\Omega} \oplus \mathbf{I})/x]$, $y\vec{P}[\mathbf{\Omega}/x]$, and $y\vec{P}[\mathbf{I}/x]$ are  head normal forms, and the inequation in~\eqref{eqn: last equation to prove}  is straightforward. Otherwise, $y=x$.  If $n\geq 2$ then, by using the induction hypothesis,  Lemma~\ref{lem: properties of S infty for one direction}.\ref{enum: S infty sum}, and Equation~\eqref{enum: preliminary counterex 1}, we have:
\begin{align*}
&\sum_{H \in \h} \mathcal{H}^{n} (M[(\mathbf{\Omega} \oplus \mathbf{I})/x], H)=\\
&=\sum_{H \in \h} \mathcal{H}^{n} ( (\mathbf{\Omega} \oplus \mathbf{I})\vec{P}[(\mathbf{\Omega} \oplus \mathbf{I})/x] , H)\\
&=\sum_{H \in \h}\frac{1}{2}\cdot \mathcal{H}^{n-1} ( \mathbf{\Omega} \vec{P}[(\mathbf{\Omega} \oplus \mathbf{I})/x] , H)\\
&\phantom{=\ }+\frac{1}{2}\cdot \mathcal{H}^{n-1} ( \mathbf{I} \vec{P}[(\mathbf{\Omega} \oplus \mathbf{I})/x] , H) \\
 &=\frac{1}{2}  \sum_{H \in \h}\mathcal{H}^{n-2} (  \vec{P}[(\mathbf{\Omega} \oplus \mathbf{I})/x] , H)\\
 &\leq  \frac{1}{2}  \sum_{H \in \h}  \frac{1}{2}\cdot   \mathcal{H}^{\infty} ( \vec{P}[\mathbf{\Omega} /x], H)+ \frac{1}{2}\cdot \mathcal{H}^{\infty} ( \vec{P}[\mathbf{I} /x], H)\\
 &\leq  \frac{1}{2}  \sum_{H \in \h}  \frac{1}{2}\cdot   \mathcal{H}^{\infty} ( \vec{P}[\mathbf{I} /x], H)+ \frac{1}{2}\cdot \mathcal{H}^{\infty} ( \vec{P}[\mathbf{I} /x], H)\\
 &=  \sum_{H \in \h}  \frac{1}{2}\cdot   \mathcal{H}^{\infty} ( \vec{P}[\mathbf{I} /x], H)\\
 &=\sum_{H \in \h}   \frac{1}{2}\cdot  \mathcal{H}^{\infty} ( \mathbf{\Omega}\vec{P}[\mathbf{\Omega} /x] , H)+ \frac{1}{2}\cdot  \mathcal{H}^{\infty} ( \mathbf{I} \vec{P}[\mathbf{I}/x] , H) \\
 &=\sum_{H \in \h}  \frac{1}{2}\cdot \mathcal{H}^{\infty} (M[\mathbf{\Omega}/x], H)+ \frac{1}{2}\cdot  \mathcal{H}^{\infty} (M[\mathbf{I}/x], H) .
\end{align*}
If $n< 2$ then $\mathcal{H}^{n} (M[(\mathbf{\Omega} \oplus \mathbf{I})/x])=0 $. \\
Last, suppose  that $M$ is not a head normal form. By using the induction hypothesis,   Lemma~\ref{lem: properties of S infty for one direction}.\ref{enum: S infty sum} and Lemma~\ref{lem: properties of S infty for one direction}.\ref{enum: H infty app}, we have:
\allowdisplaybreaks
\begin{align*}
& \sum_{H \in \h} \mathcal{H}^n(M[(\mathbf{\Omega} \oplus \mathbf{I})/x], H)=\\
 &= \sum_{H \in \h}    \sum_{l+l'=n} \sum_{H' \in \h}\mathcal{H}^{l}(M, H')\cdot \mathcal{H}^{l'}(H'[(\mathbf{\Omega} \oplus \mathbf{I})/x], H) \\
 &= \sum_{l+l'=n}   \sum_{H' \in \h}  \mathcal{H}^{l}(M, H')\cdot \Bigg( \sum_{H \in \h}  \mathcal{H}^{l'}(H'[(\mathbf{\Omega} \oplus \mathbf{I})/x], H)  \Bigg) \\
  &=    \sum_{\substack{l+l'=n\\  l'<n}} \sum_{H' \in \h} \mathcal{H}^{l}(M, H')\cdot \Bigg( \sum_{H \in \h}  \mathcal{H}^{l'}(H'[(\mathbf{\Omega} \oplus \mathbf{I})/x], H)  \Bigg) \\
 &\leq  \sum_{H' \in \h}  \mathcal{H}^{\infty}(M, H')\cdot \Bigg(  \sum_{H \in \h} \frac{1}{2}\cdot   \mathcal{H}^{\infty}(H'[\mathbf{\Omega}/x], H)\\
 &\phantom{\leq  \sum_{H' \in \h}  \mathcal{H}^{\infty}(M, H')\cdot \Bigg(  \sum_{H \in \h} \ }+ \frac{1}{2}\cdot  \mathcal{H}^{\infty}(H'[\mathbf{I}/x], H)\Bigg)\\
  &= \frac{1}{2}\cdot \sum_{H \in \h}  \mathcal{H}^{\infty}(M[\mathbf{\Omega}/x], H) + \frac{1}{2}\cdot \sum_{H \in \h} \mathcal{H}^{\infty}(M[\mathbf{I}/x], H)  .
\end{align*}
\end{proof}
\noindent
\textbf{Lemma~\ref{lem: counterexample M and N context preorder }.}   It holds that $M \leq_{\mathrm{cxt}} N$.
\begin{proof} By Lemma~\ref{lem: context lemma} it is enough to show that $M \leq_{\mathrm{app}}N$.  Since $M, N \in \elam$, this amounts to check that, for all $n \in \mathbb{N}$ and for all $L_1, \ldots, L_n \in \elam$, it holds that $  \sum \inter{M L_1\ldots L_n}\leq \sum \inter{NL_1\ldots L_n}$.
The proof is by induction on $n \in \mathbb{N}$.  \\
 If $n=0$ then, by Proposition~\ref{prop: the semantics is invariant under reduction}.\ref{enum: invariance abs} and Proposition~\ref{prop: the semantics is invariant under reduction}.\ref{enum: invariance sum}, we have: $ \sum \inter{M} =1 =  \frac{1}{2}\cdot \sum \inter{x\mathbf{\Omega}}+ \frac{1}{2}\cdot \sum \inter{x\mathbf{I}} 
 =\sum \inter{x \mathbf{\Omega} \oplus x \mathbf{I}}=\sum \inter{N}$.\\
Suppose $n=1$, and let us define $H^M \triangleq H[M /x]$, for all $M \in \plam$ and $H\in \h$. We have:
\allowdisplaybreaks
\begin{align*}
&\sum \inter{ML}=  \sum \inter{(\lambda x.x (\mathbf{\Omega} \oplus \mathbf{I}))L}\\
&= \sum \inter{L(\mathbf{\Omega} \oplus \mathbf{I})}&&\text{Prop.}~\ref{prop: the semantics is invariant under reduction}.\ref{enum: invariance beta}\\
&=\sum_{\lambda x.H \in\, \text{supp}(\inter{L})}  \inter{L}(\lambda x.H)\cdot \sum \inter{H^{\mathbf{\Omega} \oplus \mathbf{I}}}&&\text{Prop.}~\ref{prop: the semantics is invariant under reduction}.\ref{lem: invariance beta general case}\\
&\leq \frac{1}{2}\cdot \sum_{\lambda x.H \in\, \text{supp}(\inter{L})}  \inter{L}(\lambda x.H)\cdot  \sum  \inter{H^{\mathbf{\Omega}}} \\
& \phantom{\leq \ }+\frac{1}{2}\cdot \sum_{\lambda x.H \in\, \text{supp}(\inter{L})}  \inter{L}(\lambda x.H)\cdot \sum \inter{H^\mathbf{I}}&&\text{Eq.}~\eqref{enum: preliminary counterex 2}\\
&=\frac{1}{2}\cdot \sum\inter{L\mathbf{\Omega}}+ \frac{1}{2}\cdot\sum \inter{L\mathbf{I}}&&\text{Prop.}~\ref{prop: the semantics is invariant under reduction}.\ref{lem: invariance beta general case}\\
&=\frac{1}{2}\cdot \sum\inter{(x\mathbf{\Omega})^L}+ \frac{1}{2}\cdot \sum\inter{(x\mathbf{I})^L}\\
&= \sum_{\substack{H \in\, \text{supp}(\inter{x \mathbf{\Omega}}) \\  \cup\,   \text{supp}(\inter{ x \mathbf{I}})}}  \frac{1}{2} \cdot \Big(  \inter{x\mathbf{\Omega}} + \inter{x\mathbf{I}}\Big)(H)\cdot \sum \inter{H^L}\\
&= \sum_{H \in\,  \text{supp}\inter{x \mathbf{\Omega} \oplus x \mathbf{I}})} \inter{x\mathbf{\Omega} \oplus x\mathbf{I}}(H)\cdot \sum \inter{H^L} &&\text{Prop.}~\ref{prop: the semantics is invariant under reduction}.\ref{enum: invariance sum}\\
&= \sum_{\lambda x.H \in\, \text{supp}(\inter{N})}\inter{N}(\lambda x.H)\cdot\sum \inter{H^L}&&\text{Prop.}~\ref{prop: the semantics is invariant under reduction}.\ref{enum: invariance abs}\\
&=\sum\inter{NL} &&\text{Prop.}~\ref{prop: the semantics is invariant under reduction}.\ref{lem: invariance beta general case}  .
\end{align*}  
Finally, suppose $n>1$. We define:
\begin{align*}
P&\triangleq ML_1\ldots L_{n-1}\\
Q&\triangleq NL_1\ldots L_{n-1}\\
r& \triangleq \sum_{\lambda x. H \in \, \text{supp}(\inter{Q})}   \inter{Q}(\lambda x. H)\cdot \sum \inter{H[L/x]}\\
r'&=\sum_{\lambda x. H \in \, \text{supp}(\inter{P})}  \inter{P}(\lambda x. H)\cdot \sum \inter{H[L/x]} .
\end{align*}
Since by induction hypothesis $0 \leq \sum  \inter{Q} - \sum \inter{P}$, we have that $r-r'$ is positive. By  Proposition~\ref{prop: the semantics is invariant under reduction}.\ref{lem: invariance beta general case} this quantity is $\sum\inter{QL_n}- \sum\inter{PL_n}$. Therefore, $\sum \inter{PL_n}\leq \sum \inter{QL_n}$.
\end{proof}

\end{document}